\newcommand{\macrospath}{macros}
    \newtheorem{theorem}{Theorem}[section]
    \newtheorem{lemma}[theorem]{Lemma}
    \newtheorem{corollary}[theorem]{Corollary}
    \newtheorem{proposition}[theorem]{Proposition}
    \newtheorem{definition}[theorem]{Definition}
    \newtheorem{remark}[theorem]{Remark}
\newcommand{\sem}[1]{\interp{#1}}
\newcommand{\ignore}[1]{}
\newcommand{\colspace}{@{\hspace{.5cm}}}
\newcommand{\myinput}[1]{\ifthenelse{\boolean{withimages}}{\input{#1}}{}}
\newcommand{\reflemma}[1]{Lemma~\ref{l:#1}}
\newcommand{\refthm}[1]{Thm.~\ref{thm:#1}}
\newcommand{\refthmp}[2]{Thm.~\ref{thm:#1}.\ref{p:#1-#2}}
\newcommand{\refprop}[1]{Prop.~\ref{prop:#1}}
\newcommand{\refsect}[1]{Sect.~\ref{sect:#1}}
\newcommand{\refapp}[1]{Appendix~\ref{app:#1} (p.~\pageref{app:#1})}
\newcommand{\refdef}[1]{Definition~\ref{def:#1}}
\newcommand{\ie}{\textit{i.e.}\xspace}
\newcommand{\eg}{\textit{e.g.}\xspace}
\newcommand{\ih}{\textit{i.h.}\xspace}
\newcommand{\RED}[1]{{\color{red} {#1}}}
\newcommand{\blue}[1]{{\color{blue} {#1}}}
\newcommand{\adr}[1]{{\blue{#1}}} 
\newcommand{\cadr}[2]{{\blue{#2}}} 
\renewcommand{\adr}[1]{#1} 
\renewcommand{\cadr}[2]{#2} 
\newcommand{\defeq}{\coloneqq} 
\newcommand{\grameq}{\Coloneqq} 
\newcommand{\set}[1]{\{#1\}}
\newcommand{\size}[1]{|#1|}
\newcommand{\vsym}{\mathsf{v}}
\renewcommand{\l}{\lambda}
\newcommand{\isub}[2]{\{#1/#2\}}
\renewcommand{\isub}[2]{\{#1{\shortleftarrow}#2\}}
\newcommand{\esub}[2]{[#1/#2]}
\renewcommand{\esub}[2]{[#1{\shortleftarrow}#2]}
\newcommand{\fv}[1]{{\tt fv}(#1)}
\newcommand{\enf}[1]{{\esym}(#1)} 
\newcommand{\nf}{\mathsf{nf}} 
\newcommand{\nfwh}{\mathsf{nf}_{wh}} 
\newcommand{\rootRew}[1]{\mapsto_{#1}}
\newcommand{\Rew}[1]{\rightarrow_{#1}}
\newcommand{\lRew}[1]{\; \mbox{}_{#1}{\leftarrow}\ }
\newcommand{\mult}{\mathsf{m}} 
\newcommand{\expo}{\mathsf{e}} 
\newcommand{\expoabs}{\expo_{\abssym}} 
\newcommand{\expovar}{\expo_{\varsym}} 
\newcommand{\rtom}{\rootRew{\mult}} 
\newcommand{\rtoe}{\rootRew{\expo}} 
\newcommand{\rtobv}{\rootRew{\betav}} 
\newcommand{\tob}{\Rew{\beta}}
\newcommand{\betav}{{\beta_v}} 
\newcommand{\abssym}{\lambda} 
\newcommand{\varsym}{y}
\newcommand{\tobv}{\Rew{\betav}} 
\newcommand{\esym}{{\mathtt e}}
\newcommand{\isym}{i}
\newcommand{\msym}{{\mathtt m}}
\newcommand{\subsym}{{\mathsf{sub}}}
\newcommand{\wsym}{w} 
\newcommand{\shufeqext}{\shufeqext} 
\newcommand{\tom}{\Rew{\mult}}
\newcommand{\toe}{\Rew{\expo}}
\newcommand{\toeabs}{\Rew{\expoabs}}
\newcommand{\tm}{t}
\newcommand{\tmtwo}{u}
\newcommand{\tmthree}{s}
\newcommand{\tmfour}{r}
\newcommand{\tmrone}{a}
\newcommand{\tmrtwo}{b}
\newcommand{\tmronep}{\tmrone'}
\newcommand{\tmrtwop}{\tmrtwo'}
\newcommand{\tmrthree}{p}
\newcommand{\tmrfour}{q}
\newcommand{\tmrthreep}{\tmrthree'}
\newcommand{\tmrfourp}{\tmrfour'}
\newcommand{\tmfirst}{\tm_1}
\newcommand{\tmp}{\tm'}
\newcommand{\tmtwop}{\tmtwo'}
\newcommand{\tmfourp}{\tmfour'}
\newcommand{\tmpfirst}{\tmp_1}
\newcommand{\var}{x}
\newcommand{\vartwo}{y}
\newcommand{\varthree}{z}
\newcommand{\varfour}{w}
\newcommand{\vartwop}{\vartwo'}
\newcommand{\val}{v}
\newcommand{\valtwo}{\val'}
\newcommand{\valof}[1]{\val_#1} 
\newcommand{\valt}{v_t}
\newcommand{\valttwo}{\valt'}
\newcommand{\ctxholep}[1]{\langle #1\rangle}
\newcommand{\ctxhole}{\ctxholep{\cdot}}
\newcommand{\ctx}{C}
\newcommand{\ctxtwo}{\ctx'}
\newcommand{\ctxp}[1]{\ctx\ctxholep{#1}}
\newcommand{\ctxtwop}[1]{\ctxtwo\ctxholep{#1}}
\newcommand{\sctx}{L}
\renewcommand{\sctx}{S}
\newcommand{\sctxtwo}{\sctx'}
\newcommand{\sctxp}[1]{\sctx\ctxholep{#1}}
\newcommand{\sctxtwop}[1]{\sctxtwo\!\ctxholep{#1}}
\newcommand{\sctxONE}{L_{1}} 
\newcommand{\sctxONEp}[1]{\sctxONE\ctxholep{#1}}
\newcommand{\isctx}{L_\isym} 
\newcommand{\isctxtwo}{\isctx'}
\newcommand{\isctxp}[1]{\isctx\ctxholep{#1}}
\newcommand{\isctxtwop}[1]{\isctxtwo\ctxholep{#1}}
\newcommand{\isctxONE}{L_{\isym 1}} 
\newcommand{\isctxONEtwo}{\isctxONE'}
\newcommand{\isctxONEp}[1]{\isctxONE\ctxholep{#1}}
\newcommand{\isctxONEtwop}[1]{\isctxONEtwo\ctxholep{#1}}
\newcommand{\wctx}{W}
\newcommand{\wctxp}[1]{\wctx\ctxholep{#1}}
\newcommand{\arbctxp}[1]{\arbctxp{#1}}
\newcommand{\arbctxtwop}[1]{\arbctxtwop{#1}}
\newcommand{\tctx}{T}
\newcommand{\tctxtwo}{\tctx'}
\newcommand{\tctxp}[1]{\tctx\ctxholep{#1}}
\newcommand{\tctxtwop}[1]{\tctxtwo\ctxholep{#1}}
\newcommand{\evctx}{E}
\newcommand{\evctxp}[1]{\evctx\ctxholep{#1}}
\newcommand{\evctxONE}{\evctx_1}
\newcommand{\evctxONEp}[1]{\evctxONE\ctxholep{#1}}
\newcommand{\levctx}{L}
\newcommand{\levctxtwo}{\levctx'}
\newcommand{\levctxp}[1]{\levctx\ctxholep{#1}}
\newcommand{\levctxtwop}[1]{\levctxtwo\ctxholep{#1}}
\newcommand{\itm}{i}
\newcommand{\itmtwo}{\itm'}
\newcommand{\itmONE}{i_1}
\newcommand{\itmONEtwo}{\itmONE'}
\newcommand{\itmTWO}{i_2}
\newcommand{\itmTWOtwo}{\itmTWO'}
\newcommand{\itmapp}{\mathbb{i}}
\newcommand{\itmapptwo}{\itmapp'}
\newcommand{\itmappONE}{\itmapp_1}
\newcommand{\itmappONEtwo}{\itmappONE'}
\newcommand{\fire}{f}
\newcommand{\firetwo}{g}
\newcommand{\firep}{\fire'}
\newcommand{\vsub}{{\vsym\subsym}} 
\newcommand{\tow}{\Rew{\wsym}} 
\newcommand{\la}[1]{\lambda #1.}
\newcommand{\myproof}[1]{
\ifthenelse{\boolean{omitproofs}}{\begin{IEEEproof} Proof available but omitted for readability. \end{IEEEproof}}{#1}}
\newcommand{\withproofs}[1]{\ifthenelse{\boolean{withproofs}}{#1}{}}
\newcommand{\withoutproofs}[1]{\ifthenelse{\boolean{withproofs}}{}{#1}}
\newcommand{\vsubterms}{\Lambda_\vsub}
\newcommand{\doubt}[1]{}
\newcommand{\letexp}{\sf{let}}
\newcounter{numberone}
\newcounter{numberoneroman}
\newcounter{numberonealph}
\newcommand{\cbn}{CbN\xspace}
\newcommand{\cbv}{CbV\xspace}
\newcommand{\ntype}{L}
\newcommand{\hastype}{\!:\!}
\newsavebox{\@brx}
\newcommand{\llangle}[1][]{\savebox{\@brx}{\(\m@th{#1\langle}\)}%
  \mathopen{\copy\@brx\kern-0.7\wd\@brx\usebox{\@brx}}}
\newcommand{\rrangle}[1][]{\savebox{\@brx}{\(\m@th{#1\rangle}\)}%
  \mathclose{\copy\@brx\kern-0.7\wd\@brx\usebox{\@brx}}}
\newcommand{\symfont}[1]{\mathtt{#1}}
\newcommand{\vscsym}{\symfont{vsc}}
\newcommand{\ntm}{n}
\newcommand{\ntmtwo}{\ntm'}
\newcommand{\ntmthree}{\ntm''}
\newcommand{\bsvsctax}{{\Downarrow}\textsc{-ax}}
\newcommand{\bsvsctapm}{{\Downarrow}\textsc{-@m}}
\newcommand{\bsvsctapi}{{\Downarrow}\textsc{-@i}}
\newcommand{\bsvsctapvar}{{\Downarrow}\textsc{-@}\var}
\newcommand{\bsvsctese}{{\Downarrow}\textsc{-[e]}}\newcommand{\bsvsctesi}{{\Downarrow}\textsc{-[i]}}
\newcommand{\bswax}{{\Downarrow}\textsc{-ax}}
\newcommand{\bswbeta}{{\Downarrow}\textsc{-@}\betav}
\newcommand{\bswappnf}{{\Downarrow}\textsc{-@nf}}
\newcommand{\ntmONE}{n_1}
\newcommand{\ntmONEtwo}{\ntmONE'}
\newcommand{\ntmONEthree}{\ntmONE''}
\newcommand{\ntmTWO}{n_2}
\newcommand{\ntmTWOtwo}{\ntmTWO'}
\newcommand{\ntmTWOthree}{\ntmTWO''}
\newcommand{\lctx}{L}
\newcommand{\lctxtwo}{\lctx'}
\newcommand{\lctxp}[1]{\lctx\ctxholep{#1}}
\newcommand{\lctxtwop}[1]{\lctxtwo\ctxholep{#1}}
\newcommand{\rctx}{R}
\newcommand{\rctxp}[1]{\rctx\ctxholep{#1}}
\newcommand{\tolw}{\Rew{l}}
\newcommand{\torw}{\Rew{r}}
\newcommand{\Id}{\symfont{I}}
\newcommand{\letin}[3]{{\sf let}\ #1=#2\ {\sf in}\ #3}
\renewcommand{\fire}{\ntm}
\renewcommand{\firetwo}{\ntmtwo}
\let\cal\undefined
\newcommand{\cal}[1]{\mathcal{#1}}
\newcommand{\relsym}{{\cal R}}
\newcommand{\rel}{~\relsym~}
\newcommand{\lassenop}[1]{{#1}^{\cal L}}
\newcommand{\howeop}[1]{{#1}^{\cal H}}
\newcommand{\lasrelsym}{\lassenop\relsym  }
\newcommand{\lasrel}{\, \lasrelsym\,  }
\newcommand{\mlassenopsym}{{\cal L}^M}
\newcommand{\mlassenop}[1]{{#1}^{\mlassenopsym}}
\newcommand{\mlasrelsym}{\mlassenop\relsym  }
\newcommand{\mlasrel}{\, \mlasrelsym\,  }
\newcommand{\ievctx}{S}
\newcommand{\ievctxp}[1]{\ievctx\ctxholep{#1}}
\newcommand{\subred}{\textsc{s}}
\newcommand{\rtos}{\rootRew{\subred}} 
\newcommand{\tos}{\Rew{\subred}} 
\newcommand{\tmn}{n}
\newcommand{\red}[1]{\rightarrow}
\newcommand{\enft}{\mathit{enf}}
\newcommand{\renft}{\mathit{renf}}
\newcommand{\nafet}{\mathit{nafex}}
\newcommand{\nafext}{\mathit{vscx}}
\newcommand{\openfp}[1]{\ctxholep{#1}_{\enft}}
\newcommand{\relenf}{\openfp\relsym}
\newcommand{\isubst}[2]{\isub#2#1}
\newcommand{\opnafep}[1]{\ctxholep{#1}_{\nafet}}
\newcommand{\opnafexp}[1]{\ctxholep{#1}_{\nafext}}
\newcommand{\opnafex}{\ctxhole_{\nafext}}
\newcommand{\relnafex}{\opnafexp\relsym}
\newcommand{\mlasrelnafexsym}{\opnafexp\mlasrelsym}
\newcommand{\mlasrelnafex}{\, \mlasrelnafexsym \,}
\newcommand{\enfbisim}{\simeq_{\enft}}
\newcommand{\nafebisim}{\simeq_{\nafet}}
\newcommand{\tovsce}{\Rew{\equivsone}}
\newcommand{\symequivsone}{@\symfont{l}}
\newcommand{\equivsone}{\equiv_{\symequivsone}}
\newcommand{\equivsthree}{\equiv_{v@\symfont{r}}}
\newcommand{\equivexsthree}{\equiv_{@\symfont{r}}}
\renewcommand{\symequivsone}{@l}
\renewcommand{\equivsthree}{(\equiv_{@r})_{|ltr}}
\renewcommand{\equivexsthree}{\equiv_{@r}}
\newcommand{\equivcom}{\equiv_{{\mathtt{com}}}}
\renewcommand{\equivcom}{\equiv_{com}}
\newcommand{\equivx}{\equiv_{M}}
\newcommand{\equivlid}{\equiv_{lid}}
\newcommand{\equivass}{\equiv_{ass}}
\newcommand{\equivrad}{\equiv_{rad}}
\newcommand{\equivlad}{\equiv_{lad}}
\newcommand{\equivexrad}{\equiv_{exrad}}
\newcommand{\equivom}{\equiv_{\Omega}}
\newcommand{\equivomv}{\equiv_{\Omega_v}}
\newcommand{\equivdup}{\equiv_{dup}}
\newcommand{\streq}{\equiv_{str}}
\newcommand{\leqcbn}{\precsim_{cbn}}
\newcommand{\eqcbn}{\simeq_{cbn}}
\newcommand{\leqncbv}{\precsim_{nai}}
\newcommand{\eqncbv}{\simeq_{nai}}
\newcommand{\leqenf}{\precsim_{\enft}}
\newcommand{\leqrenf}{\precsim_{\renft}}
\newcommand{\eqenf}{\simeq_{\enft}}
\newcommand{\leqnafex}{\precsim_{\nafext}}
\newcommand{\tovsc}{\Rew{\vscsym}}
\newcommand{\lassenopsym}{{\cal L}}
\newcommand{\sclift}{\lassenopsym_{\textsc{lift}}}
\newcommand{\scvar}{\lassenopsym_{\textsc{var}}}
\newcommand{\scabs}{\lassenopsym_{\l}}
\newcommand{\scapp}{\lassenopsym_{@}}
\newcommand{\scsub}{\lassenopsym_{\textsc{sub}}}
\newcommand{\scesub}{\lassenopsym_{\textsc{esub}}}
\newcommand{\scequivx}{\lassenopsym_{\equivx}}
\newcommand{\msclift}{\mlassenopsym_{\textsc{lift}}}
\newcommand{\mscvar}{\mlassenopsym_{\textsc{var}}}
\newcommand{\mscabs}{\mlassenopsym_{\l}}
\newcommand{\mscapp}{\mlassenopsym_{@}}
\newcommand{\mscesub}{\mlassenopsym_{\textsc{esub}}}
\newcommand{\mscsub}{\mlassenopsym_{\textsc{sub}}}
\newcommand{\mscequivx}{\mlassenopsym_{\equivx}}
\newcommand{\leqc}{\precsim_C}
\newcommand{\eqcv}{\simeq_C^v}
\newcommand{\leqcv}{\precsim_C^v}
\newcommand{\curryfix}{Y_v}
\newcommand{\curryfixaux}{\Xi_v}
\newcommand{\turingfix}{\Theta_v}
\newcommand{\curryfixn}{Y}
\newcommand{\curryfixauxn}{\Xi}
\newcommand{\turingfixn}{\Theta}
\newcommand{\bssym}{\Downarrow} 
\newcommand{\bs}[1]{\bssym^{#1}}
\newcommand{\bswh}{\Downarrow_{wh}\,} 
\newcommand{\bswhdiv}{\not\bswh} 
\newcommand{\bsweak}{\Downarrow_{w}\,} 
\newcommand{\bsweakdiv}{\not\bsweak} 
\newcommand{\bswleftsym}{\Downarrow_{l}} 
\newcommand{\bswleft}[1]{\bswleftsym^{#1}}
\newcommand{\bswlefts}{\bswleftsym\,} 
\newcommand{\bswleftdiv}{\not\bswlefts}
\newcommand{\bswsym}{\Downarrow_{{w}}} 
\newcommand{\bsws}{\bswsym\,}
\newcommand{\bsw}[1]{\bswsym^{#1}}
\newcommand{\bsvsctsym}{\Downarrow_{\vscsym}}
\newcommand{\bsvscts}{\bsvsctsym\,}
\newcommand{\bsvsct}[1]{\bsvsctsym^{#1}}
\newcommand{\bsvsctdiv}{\not\bsvscts} 
\newcommand{\bsvscsym}{\Downarrow_{\vscsym}}
\newcommand{\bsvscs}{\bsvscsym\,}
\newcommand{\VSC}{\text{VSC}\xspace}
\newcommand{\bohm}{B{\"o}hm }
\renewcommand{\enf}{$\mathsf{enf}$\xspace}
\newcommand{\Enf}{$\mathsf{Enf}$\xspace}
\newcommand{\nafe}{$\mathsf{nafex}$\xspace}
\newcommand{\nafex}{$\mathsf{nafex}$\xspace}
\newcommand{\Nafex}{$\mathsf{Nafex}$\xspace}
\newcommand{\equivetav}{\equiv_{\eta_v}}
\newcommand{\equivbetav}{=_{\betav}}
\newcommand{\cbnfp}[1]{\ctxholep{#1}_{cbn}}
\newcommand{\relcbn}{\cbnfp\relsym}
\newcommand{\ncbvfp}[1]{\ctxholep{#1}_{nai}}
\newcommand{\relncbv}{\ncbvfp\relsym}
\newcommand{\lasrelncbvsym}{\ncbvfp\lasrelsym}
\newcommand{\lasrelncbv}{\, \lasrelncbvsym \,}
\newcommand{\etav}{\eta_v}
\newcommand{\contextualsym}{\mathcal{C}}
\newcommand{\equivcp}[1]{\simeq_\contextualsym^{#1}}
\newcommand{\leqcp}[1]{\precsim_\contextualsym^{#1}}
\newcommand{\mtype}{M}
\newcommand{\mtypetwo}{N}
\newcommand{\emptytype}{[~]}
\newcommand{\multitype}[2]{[{#2}_{1},\ldots,{#2}_{#1}]}
\newcommand{\ltype}{L}
\newcommand{\ltypetwo}{\ntype'}
\newcommand{\vartype}{X}
\newcommand{\typectx}{\Gamma}
\newcommand{\typectxtwo}{\Delta}
\newcommand{\typeder}{\pi}
\newcommand{\typederp}{\typeder'}
\newcommand{\typedertwo}{\sigma}
\newcommand{\typederthree}{\rho}
\newcommand{\typederthreep}{\typederthree'}
\newcommand{\typingruleApp}{@}
\newcommand{\typingruleAx}{\mathsf{ax}}
\newcommand{\typingruleAbs}{\lambda}
\newcommand{\typingruleMany}{\mathsf{many}}
\newcommand{\typingruleES}{\mathsf{es}}
\newcommand{\types}{\vdash}
\newcommand{\leqtype}{\precsim_{type}}
\newcommand{\equivtype}{\simeq_{type}}
\newcommand{\interp}[1]{\llbracket #1 \rrbracket}
\newcommand{\vscx}{\symfont{mir}_{M}}
\newcommand{\leqvscx}{\precsim_{\vscx}}
\newcommand{\opvscxp}[1]{\ctxholep{#1}_{\vscx}}
\newcommand{\relvscx}{\opvscxp\relsym}
\newcommand{\net}{$\mathsf{net}$\xspace}
\newcommand{\Net}{$\mathsf{Net}$\xspace}
\newcommand{\tderiv}{\Phi}
\newcommand{\derive}[2]{#1 \derives #2}
\newcommand{\concl}[4]{\derive{#1}{#2 \vdash #3 \hastype #4}}
\newcommand{\derives}{\vartriangleright}
\newcommand{\xmark}{\text{\ding{55}}}
\newcommand{\towh}{\Rew{wh}}
\renewcommand{\nafet}{\mathit{net}}
\renewcommand{\nafext}{\mathit{nafex}}
\newcommand{\nett}{\mathit{net}}
\newcommand{\leqnet}{\precsim_{\nett}}
\newcommand{\eqnet}{\simeq_{\nett}}
\renewcommand{\nafex}{mir$_{M}$\xspace}
\renewcommand{\Nafex}{Mir$_{M}$\xspace}
\newcommand{\nafexp}[1]{mir$_{#1}$\xspace}
\renewcommand{\leqnafex}{\precsim_{\vscx}}
\renewcommand{\nafext}{\vscx}
\renewcommand{\net}{net\xspace}
\renewcommand{\Net}{Net\xspace}
\begin{document}

\title{Normal Form Bisimulations by Value}

\author{Beniamino Accattoli}
\affiliation{%
  \institution{Inria \& LIX, École Polytechnique, UMR 7161}
  \country{France}}
\email{beniamino.accattoli@inria.fr}

\author{Adrienne Lancelot}
\affiliation{%
	\institution{Inria \& LIX, École Polytechnique, UMR 7161}
	\country{France}}
\email{adrienne.lancelot@inria.fr}

\author{Claudia Faggian}
\affiliation{%
  \institution{ IRIF, CNRS, Université Paris Cité, F-75013 Paris}
  \country{France}}
\email{faggian@irif.fr}

\renewcommand{\shortauthors}{Accattoli, Faggian, and Lancelot}

\begin{abstract}
Normal form bisimilarities are a natural form of program equivalence resting on open terms, first introduced by Sangiorgi in call-by-name. The literature contains a normal form bisimilarity for  Plotkin's call-by-value $\lambda$-calculus, Lassen's \emph{enf bisimilarity}, which validates all of Moggi's monadic laws and can be extended to validate $\eta$. It does not validate, however, other relevant principles, such as the identification of meaningless terms---validated instead by Sangiorgi's bisimilarity---or the commutation of $\letexp$s. These shortcomings are due to issues with open terms of Plotkin's calculus. We introduce a new call-by-value normal form bisimilarity, deemed \emph{net bisimilarity}, closer in spirit to Sangiorgi's and satisfying the additional principles. We develop it on top of an existing formalism designed for dealing with open terms in call-by-value. It turns out that enf and net bisimilarities are \emph{incomparable}, as net bisimilarity does not validate Moggi's laws nor $\eta$. Moreover, there is no easy way to merge them. To better understand the situation, we provide an analysis of the rich range of possible call-by-value normal form bisimilarities, relating them to Ehrhard's relational model.
\end{abstract}

\begin{CCSXML}
\end{CCSXML}

\keywords{Lambda calculus, program equivalence, bisimulations, call-by-value}


\maketitle

\section{Introduction}

The study of program equivalences for $\l$-calculi is an important topic where semantical and operational techniques meet. Properties of program equivalences are notoriously difficult to prove.  Even the equivalence of two terms might be challenging to establish, if the notion of equivalence is \citeauthor{morris1968lambda}' contextual equivalence \citeyearpar{morris1968lambda}, or some variant still involving a universal quantification, such as \citeauthor{abramsky-lazy-90}'s applicative bisimilarity \citeyearpar{abramsky-lazy-90}. Another difficulty is the fact that properties of program equivalences are brittle, as they are not preserved by extensions of the calculus under study, nor by restrictions, and not even by changing the evaluation strategy within the same calculus.

It is well-known that applicative bisimilarity is fully abstract for contextual equivalence in the untyped call-by-name weak $\l$-calculus (where \emph{weak} stands for reduction only out of abstractions, which is standard in functional languages) \cite{abramsky-lazy-90}, as well as in Plotkin's call-by-value weak $\l$-calculus \cite{DBLP:journals/fuin/EgidiHR92,DBLP:books/cu/12/Pitts12}. Therefore, one might be led to think that the call-by-name/call-by-value switch is quite robust in the weak setting, even if it is known that its robustness breaks in a probabilistic setting, as shown by \citet{DBLP:conf/popl/LagoSA14}. 

This paper stems from the observation that another natural program equivalence, \citeauthor{SANGIORGI-normal-form-bisimulation}'s \emph{normal form bisimilarity} \citeyearpar{SANGIORGI-normal-form-bisimulation} (shortened to nf-bisimilarity), behaves differently in call-by-name (shortened to \cbn) and call-by-value (\cbv), already in the untyped effect-free weak case.

\paragraph{Normal Form Bisimilarity} Normal form bisimulations are program equivalences that, instead of comparing terms \emph{externally}, depending on how they behave \emph{in contexts}, compare them \emph{internally}, by looking at the structure of their (infinitary) \emph{normal forms}. A distinctive feature of nf-bisimulations is that they directly manipulate \emph{open terms}, to the point that \citeauthor{SANGIORGI-normal-form-bisimulation} rather used to call them \emph{open bisimulations} in his seminal paper \citeyearpar{SANGIORGI-normal-form-bisimulation}.

\cadr{It is known that Sangiorgi's}{Sangiorgi's} \cbn nf-bisimilarity is not fully abstract for contextual equivalence, being sound but not complete. 
 The failure of full abstraction is compensated by the fact that \cbn nf-bisimilarity is easier to establish than applicative similarity, because of the absence of quantification over arguments. Typically, it is easy to show that different fix-points combinators---which are the paradigmatic terms with infinitary normal forms---are nf-bisimilar, while it is hard to show that they are applicative bisimilar.

There exists a \cbv nf-bisimilarity, \citeauthor{LassenEnf}'s \emph{enf bisimilarity} $\eqenf$ \citeyearpar{LassenEnf}\cadr{, which---as Sangiorgi's---is  sound but not complete for \cbv contextual equivalence $\eqcv$, and which is considered the \cbv nf-bisimilarity of reference.}{, which is considered the \cbv nf-bisimilarity of reference. Like Sangiorgi's, it is sound but not complete for \cbv contextual equivalence $\eqcv$.} It is not \cadr{the}{an} off-the-shelf adaptation of Sangiorgi's to \cbv, as it has an \emph{extra clause} about evaluation contexts, exploiting the well-known issues of Plotkin's \cbv $\l$-calculus with open terms; issues that are discussed at length by \citet{accattoli+guerrieri-opencbv,Accattoli-Guerrieri-TypesFireballs,DBLP:journals/pacmpl/AccattoliG22}. 

The incompleteness of both \cbn and \cbv nf-bisimulations for the weak $\l$-calculus at first sight suggests that they are robust with respect to the \cbn/\cbv switch. In fact, they are \emph{not}. \cadr{The ways in which the two bisimilarities are incomplete, indeed, are very different. }{The two bisimilarities are indeed incomplete for very different reasons.}

\paragraph{Incompleteness of Enf.} \cadr{There are in fact various ways in which enf bisimilarity is incomplete. We can mention at least the following four.}{Here are four important reasons why enf bisimilarity is incomplete.}
\begin{enumerate}
\item \emph{Eta}: \cadr{it}{enf bisimilarity} does not validate\cadr{ (the \cbv variant $\etav$ of) }{ $\etav$, the \cbv variant of }$\eta$-equivalence, even if Lassen himself shows how to \cadr{correct this aspect}{refine enf to correct this aspect}. 

\item \emph{\cbn duplication}: enf bisimilarity does not equate duplications of terms that are not \emph{by value}: terms such as $(\la\var\vartwo\var\var)\tm$ and $\vartwo\tm\tm$, are contextually equivalent for any $\tm$, also for terms $\tm$ that are \emph{not} values, while in general they are not enf bisimilar. In richer \cbv settings with state or probability, contexts discriminate more, and those terms are not contextually equivalent, but in the pure case they are. This aspect, referred here to as \emph{cost-sensitiveness}, is not necessarily a drawback, as it keeps the program equivalence closer to \cbv intuition. 

\item \emph{Commutation of (independent) $\letexp$s}: in the pure case---\cadr{as well as}{or} in \adr{the }presence of commutative effects---contextual equivalence validates the following commutation $\equivcom$ (presented with $\letexp$s for readability, but easily presentable without them, expanding $\letexp$s as $\beta$-redexes):
\begin{center}
$\letin\var\tmtwo{\letin\vartwo\tmthree\tm} \ \ \equivcom\ \ \letin\vartwo\tmthree{\letin\var\tmtwo\tm}$\ \ \ \  if $\var\notin\fv\tmthree$ and $\vartwo\notin\fv\tmtwo$.
\end{center}
Enf bisimulations do not equate these terms.

\item \emph{$\Omega$-terms/meaningless terms}: enf bisimilarity does not equate terms contextually equivalent to the paradigmatic looping/meaningless $\l$-term $\Omega$, referred here to as \emph{$\Omega$-terms}. This is connected to the mentioned issues of Plotkin's calculus with open terms, as they cause \emph{false normal forms}, that is, $\Omega$-terms that are normal. In \cbn, instead, all $\Omega$-terms diverge.
\end{enumerate}
Point 2 is acceptable, because a bisimilarity validating \cbn duplication in \cbv needs to go beyond comparing the structure of normal forms, as the example shows, thus necessarily departing from the format of nf-bisimilarities. Thus, we shall consider cost-sensitivity as \emph{inherent} to \cbv nf-bisimilarities, which need not be corrected.
Point 3 is specific to \cbv, because in \cbn $\letexp$s disappear during evaluation, while in \cbv with open terms they might end up in normal forms. It is thus disappointing that commuting $\letexp$s are not captured. More precisely, it would be desirable to have a notion of nf-bisimilarity where the validation of commuting $\letexp$s can be added/removed modularly, as to adapt to different \cbv settings (pure/(non-)commutative-effects). Point 4 is a stark difference between \cbv and \cbn, because Sangiorgi's bisimilarity \emph{does equate} the $\Omega$-terms of weak \cbn. 

\paragraph{$\Omega$-terms} The failure of enf bisimilarity with respect to $\Omega$-terms is relevant, as all natural denotational models and program equivalences identify $\Omega$-terms (of the modelled notion of evaluation). Roughly, different semantics are built by distinguishing various partitions of non-$\Omega$-terms, but they put all terms contextually equivalent to the idle loop $\Omega$ in the same class---indeed, what would be the advantage of partitioning them in classes? For strong \cbn evaluation, in particular, the equational identification of $\Omega$-terms is the concept around which \citeauthor{Barendregt84}'s classic book \citeyearpar{Barendregt84} is built (therein $\Omega$-terms are called \emph{unsolvable terms}, see \refsect{preliminaries}). 
 
\paragraph{Two Possible Approaches.} To overcome some of the mentioned incompleteness via a nf-bisimilarity, one can act on two different levels: \emph{changing the underlying calculus}, which provides the normal forms to compare, or \emph{changing the nf-bisimilarity}, that is, the notion of comparison. For instance, \citet{DBLP:conf/fossacs/BiernackiLP19} address the problem of $\Omega$-terms (called therein \emph{deferred diverging terms}) by providing an alternative definition of enf bisimilarity. Commuting $\letexp$s, however, do not seem to be capturable by a similar tweak of enf-bisimilarity, as the commutation does not behave well in Plotkin's \cbv \adr{ calculus}. At first sight, changing the underlying calculus is less viable, as in general it  changes the notion of contextual equivalence. 

\paragraph{The Value Substitution Calculus.} The starting point of this paper is the \emph{value substitution calculus} (shortened to VSC), a \cbv $\l$-calculus due to \citet{accattoli+paolini-vsc} and related to linear logic proof nets \cite{Accattoli-proofnets}. The VSC solves the issues  of Plotkin's calculus with open terms via an extension of the rewriting rules, while--crucially--retaining the same notion of contextual equivalence \cite{DBLP:journals/pacmpl/AccattoliG22}. Moreover, the proof nets inception of the calculus makes it easy to deal with the commutation of $\letexp$s, which can be modularly added via a notion of \emph{structural equivalence}, compatible  with the rewriting rules by design. Therefore, the VSC is a natural candidate for designing a \cbv nf-bisimilarity improving on some of the incompleteness of Lassen's by changing the underlying calculus.

\paragraph{Net Bisimilarity.} \cadr{By using}{Using} the VSC, we introduce a \cbv nf-bisimilarity validating the commutation of $\letexp$s and identifying $\Omega$-terms. The obtained \emph{net bisimilarity} $\eqnet$ and the proof of its \emph{compatibility} (that is, its stability by context closure)---the challenging property to prove for bisimilarities---are the main contributions of this paper. Compatibility implies soundness with respect to contextual equivalence, and it is proved adapting \citeauthor{lassen1999bisimulation}'s variant \citeyearpar{lassen1999bisimulation} for nf-bisimilarities of Howe's method. As it is often the case for nf-bisimilarities, ours is sound but not complete. In particular---as for enf bisimilarity---it is \emph{cost-sensitive}. 

The crafting of net bisimilarity rests on a sophisticated analysis of \cbv and the VSC. We start with the off-the-shelf adaptation of Sangiorgi's nf-bisimilarity to the VSC, \emph{without Lassen's extra clause}, as it is unclear how to adapt the clause to the VSC. We then refine the adaptation by comparing normal forms modulo the \adr{(proof nets)} structural equivalence of the VSC---which includes commuting $\letexp$s---whence the name \emph{net} bisimilarity. We actually go further, introducing a \emph{parametric} nf-bisimilarity, where parts of the structural equivalences can be turned off and on at will---because some (such as commuting $\letexp$s) fail in extensions of \cbv with non-commutative effects---thus defining a \emph{family} of \cbv nf-bisimilarities, all proved compatible via a \emph{single abstract proof}. 

Our result is however more a new beginning than the end of the story: \net bisimilarity, indeed, is \emph{not} a refinement of Lassen's $\eqenf$. In fact, the two are \emph{incomparable}, because \net bisimilarity is incomplete in yet some other ways, which are instead validated by Lassen's. 

\paragraph{Moggi's Laws} 
It is well-known that Plotkin's \cbv $\l$-calculus is defective, and not just because of open terms. Plotkin himself showed the incompleteness of his continuation-passing translation \cite{PLOTKIN1975}. To both solve the issue and modeling extensions with effects, Moggi extended Plotkin's calculus with equations corresponding to laws for monads \cite{Moggi88tech,DBLP:conf/lics/Moggi89}, that are sound for contextual equivalence. Lassen's enf bisimulations verify these laws, showing that it is possible to capture the laws in the notion of nf-bisimulation rather than by changing the underlying calculus. In particular, it is Lassen's extra clause that allows enf to capture Moggi's laws.

Moggi's laws, however, are not rules of the VSC, and are not captured by net bisimilarity. Additionally, it is unclear how to extend net bisimulations as to satisfy Moggi's laws. Once more, there are two options: extending the underlying calculus (the VSC) or the nf-bisimilarity ($\eqnet$). Both options however break  properties that are crucial for the proof that net bisimilarity is compatible. Another disappointing fact is that the addition of $\etav$ to net bisimilarity requires one of Moggi's laws, so it is also unclear how to add $\etav$ to net bisimulations.

\paragraph{Impasse and Beyond} Summing up, it seems that one cannot have the cake (a framework for nf-bisimilarity satisfying either commuting $\letexp$s and $\Omega$-terms, or Moggi's laws and $\etav$) and eat it too (extend the framework as to capture the missing half), or at least it is far from evident how to do it. The exploration of such an \emph{impasse} is the other main contribution of the paper. We study two further program equivalences, a naive bisimilarity $\eqncbv$ and the program equivalence $\equivtype$ induced by a model, which are sort of the intersection and the union of enf and net bisimilarities.

\paragraph{Naive Bisimilarity $\eqncbv$ = No Cake and No Eating} We consider the off-the-shelf adaptation of Sangiorgi's \cbn nf-bisimulations to 
Plotkin's \cbv, obtaining \emph{naive nf-bisimulations}, which are strictly weaker than both enf and net bisimulations, as they do not identify commuting $\letexp$s, $\Omega$-terms, Moggi's laws, nor $\etav$. The experiment is instructive because we show that naive bisimulations, despite their weakness, are enough to provide easy proofs of bisimilarity for fix-point combinators---Lassen's extra clause plays no role in that. Naive bisimilarity also gives us the opportunity to gently introduce the proof technique that we use for net bisimilarity.

\paragraph{Type Equivalence $\equivtype$ = Cake and Eating, Universally} At the other end of the spectrum, we investigate the program equivalence given by the equational theory of Ehrhard's \cbv relational model \cite{DBLP:conf/csl/Ehrhard12}. We call it \emph{type equivalence} because the model is presented as a multi type system (a variant of intersection types). Such a model was already extensively studied in connection with the VSC by \citet{Accattoli-Guerrieri-TypesFireballs,DBLP:journals/pacmpl/AccattoliG22}. Its equational theory does not have a presentation via nf-bisimulations, nor any other characterization, but it is nonetheless possible to study it via the multi type system. It turns out that type equivalence, similarly to nf-bisimilarities, is compatible and sound, but not complete for contextual equivalence, because it is cost-sensitive. It is not an easily usable equivalence, as it is based on a \emph{universal} quantification over the typings for a term, but it provides interesting insights.

Our results are that both enf and net bisimilarities are \emph{included} in type equivalence. Therefore, the two bisimilarities are joinable. Since both are sound, they are obviously joinable in a cost-insensitive setting, as they are both included in contextual equivalence. Our results show that they are also joinable in a \emph{cost-sensitive} program equivalence, thus suggesting that a nf-bisimilarity joining the two might be possible. Crafting it, and especially proving that it is compatible, is left to future work.


%
%
%
%

The following table sums up the situation.

\begin{center}
	\begin{tabular}{ |r@{\hspace{.3cm}}|c@{\hspace{.3cm}}c@{\hspace{.3cm}}c@{\hspace{.3cm}}c@{\hspace{.3cm}}c@{\hspace{.3cm}}| } 
		\hline
		&  $\eqncbv$ & $\eqenf$ & $\eqnet$ & $\equivtype$ & $\eqcv$\\
		\hline
		Moggi's left identity law $\Id \tm \equivlid \tm$ ? & \RED{\xmark} & $\blue\checkmark$ & \RED{\xmark} & $\blue\checkmark$& $\blue\checkmark$ \\ 

		\hline
		Identification of $\Omega$-terms and commuting $\letexp$s? & \RED{\xmark} &  \RED{\xmark} & $\blue\checkmark$ & $\blue\checkmark$& $\blue\checkmark$ \\ 

		\hline
		Universal quantification in the definition? & \RED{\xmark}  & \RED{\xmark} & \RED{\xmark}  & $\blue\checkmark$& $\blue\checkmark$ \\ 
		\hline
		CbN duplication $(\la\var\vartwo\var\var)\tm\equivdup\vartwo\tm\tm$ ? & \RED{\xmark}  & \RED{\xmark} & \RED{\xmark}  & \RED{\xmark}& $\blue\checkmark$ \\ 
		\hline
	\end{tabular}
\end{center}

\paragraph{Related Work} Beyond the already cited papers, nf-bisimilarity is studied in variants and extensions of the $\l$-calculus by \citet{lassen1999bisimulation,DBLP:journals/entcs/Lassen06,DBLP:conf/lics/Lassen06}, \citet{DBLP:conf/csl/LassenL07,DBLP:conf/lics/LassenL08}, \citet{DBLP:conf/flops/BiernackiL12} and \citet{DBLP:journals/taosd/JagadeesanPR09}, and in relationship to game semantics by \citet{DBLP:conf/csl/LevyS14}, \citet{DBLP:conf/lics/JaberM21}, and \citet{DBLP:conf/csl/JaberS22}. The presence of state in \cite{DBLP:conf/birthday/StovringL09,DBLP:conf/fossacs/BiernackiLP19} makes enf-like bisimilarities fully abstract (as it makes contextual equivalence cost-sensitive and commutation of $\letexp$s invalid). Lassen's enf bisimilarity is studied with respect to $\eta$-equivalence by \citet{DBLP:journals/lmcs/BiernackiLP19}, extensions with effects by \citet{DBLP:conf/esop/LagoG19}, and the $\pi$-calculus by \citet{DBLP:journals/tcs/DurierHS22}. 

\cbv multi types are also used by Kesner and co-authors \cite{DBLP:conf/flops/BucciarelliKRV20,DBLP:conf/fscd/KesnerP22,DBLP:conf/csl/KesnerV22,DBLP:journals/pacmpl/ArrialGK23} and \citet{DBLP:conf/lfcs/Diaz-CaroMP13}.

A notion of \cbv B\"ohm tree, inducing a program equivalence similar to nf-bisimilarity is proposed by \citet{bohmtree-cbv}. Their equivalence is in between our naive and net bisimilarities. They conjecture that it characterizes type equivalence. Our results \emph{refute} such a conjecture: with respect to the benchmarks of \refsect{benchmarks}, their equivalence equates all $\Omega$-terms but not commuting $\letexp$s---which are validated by net bisimilarity---nor Moggi's left identity law $\equivlid$.

Very recently, \citet{Nikos2023} introduced a \emph{complete} \cbv bisimilarity, presented as a \emph{nf-bisimilarity}. It is however different from Sangiorgi's and Lassen's, as Koutaval et al.'s bisimilarity considers more than the structure of normal forms, which---as already pointed out---is mandatory to validate \cbn duplication and be complete in \cbv. Their definition, indeed, is rather based on game semantics tools and environmental bisimulations. Moreover, it addresses the different setting of \cbv PCF, which is typed, and it is not clear whether the result smoothly adapts to the untyped pure \cbv $\l$-calculus.


\paragraph{Proofs} Omitted proofs are in the additional material associated with this submission on HotCRP. In case of acceptance, a version of this paper with all proofs will be put on arXiv.org.

\section{Preliminaries}
\label{sect:preliminaries}
\paragraph{Contexts} All along the
paper we use (many notions of) \emph{contexts}, \ie terms with exactly one hole, noted $\ctxhole$. Plugging a term $\tm$ in a context $\ctx$, noted $\ctxp{\tm}$, possibly~captures free variables of $\tm$. For instance $(\la\var\ctxhole)\ctxholep\var = \la\var\var$, while $(\la\var\vartwo)\isub\vartwo\var = \la\varthree\var$.

\paragraph{Preorders} We shall mostly deal with simulations, rather than bisimulations, since the result for equivalences shall always follow by simply considering the symmetric notions. Given a preorder/similarity $\precsim_X$ for some $X$, we  denote with $\simeq_X$ the corresponding equivalence/bisimilarity.

\paragraph{(In)Equational Theories and Compatibility} Good program preorder/equivalences are \emph{(in)equational theories}, that is, they contain the reduction of the calculus and they are \emph{compatible}, defined as: if $\tm\precsim\tmp$ then $\ctxp\tm \precsim\ctxp\tmp$ for all contexts $\ctx$, that is, that they are stable by context closure. Reduction is usually trivially included in similarities while compatibility is usually non-trivial to prove.

\paragraph{Contextual Equivalence} The standard of reference for program equivalences is contextual equivalence, that can be defined abstractly as follows.
\begin{definition}[Contextual Preorder and Equivalence] Given a language of terms $\mathcal{T}$ with its associated notion of contexts $\ctx$ and predicate stating the termination of evaluation $\bs{}_{e}$, we define the associated \emph{contextual preorder} $\leqcp{e}$ and \emph{contextual equivalence} $\equivcp{e}$ as follows:
	\begin{itemize}
		\item $\tm \leqcp{e} \tmp$ if $\ctxp\tm ~{\bs{}_{e}}$ implies $\ctxp\tmp ~{\bs{}_{e}}$ for all contexts $\ctx$ such that $\ctxp{\tm}$ and $\ctxp\tmp$ are closed terms. 
		\item $\tm \equivcp{e} \tmp$ is the equivalence relation induced by $\leqc$, that is, $\tm \equivcp{e} \tmp \iff \tm \leqcp{e} \tmp$ and $\tmp \leqcp{e} \tm$.
	\end{itemize}
\end{definition}

A relation $\relsym$ is sound for the contextual preorder $\leqcp{e}$ when $\relsym \subseteq \leqcp{e}$. Soundness follows from compatibility and \emph{adequacy for $\bs{}_e$}, defined as: if $\tm\rel\tmp$  then $\tm \bs{}_e$ implies $\tmp \bs{}_e$.

\begin{proposition}
	\label{prop:congruence-included-contextual-equivalence}
	Let $\precsim$ be a compatible and adequate preorder. Then $\precsim \subseteq \leqcp{e}$.
\end{proposition}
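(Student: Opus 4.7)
The plan is to unfold the definition of $\leqcp{e}$ and reduce the inclusion to a direct two-line argument combining compatibility and adequacy. This is the standard textbook chain of implications, so no clever step is needed — the value of writing it out is mainly to make explicit why the universal quantification over contexts in the definition of contextual preorder is absorbed by the compatibility hypothesis.

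Concretely, I would assume $\tm \precsim \tmp$ and fix an arbitrary context $\ctx$ such that both $\ctxp{\tm}$ and $\ctxp{\tmp}$ are closed. By the compatibility hypothesis on $\precsim$, instantiated at this particular $\ctx$, I obtain $\ctxp{\tm} \precsim \ctxp{\tmp}$. Then I would assume $\ctxp{\tm} \bs{}_e$ and apply the adequacy hypothesis, which is the implication $\tm' \precsim \tmp' \Rightarrow (\tm' \bs{}_e \Rightarrow \tmp' \bs{}_e)$, to the pair $(\ctxp{\tm},\ctxp{\tmp})$. This yields $\ctxp{\tmp} \bs{}_e$. Since $\ctx$ was arbitrary, we conclude $\tm \leqcp{e} \tmp$, which is what we wanted.

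There is no genuine obstacle: compatibility exactly matches the quantification in the definition of $\leqcp{e}$, and adequacy exactly matches the implication to be verified. The only care needed is to be precise about the side condition that $\ctx$ must make both terms closed — this is harmless because compatibility is stated unrestrictedly over all contexts, so we may simply restrict attention to those contexts and the inherited $\precsim$-relation still applies. I would therefore present the proof in three short sentences, with no case analysis and no auxiliary lemma.
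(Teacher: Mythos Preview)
Your proposal is correct and follows essentially the same approach as the paper's own proof: assume $\tm \precsim \tmp$, take an arbitrary closing context $\ctx$, apply compatibility to get $\ctxp{\tm} \precsim \ctxp{\tmp}$, then adequacy to conclude that $\ctxp{\tm} \bs{}_e$ implies $\ctxp{\tmp} \bs{}_e$. The paper's proof is four lines and yours is the same argument with slightly more commentary on why the closing-context side condition is harmless.
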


\begin{proof}
	Suppose $\tmtwo \precsim \tmtwop$.
	Let $\ctx$ any closing context for $\tmtwo$ and $\tmtwop$.
	By compatibility, $\ctxp{\tmtwo} \precsim \ctxp{\tmtwop}$.
	By adequacy, $\ctxp{\tmtwo} \bs{}_e$ implies $\ctxp{\tmtwop} \bs{}_e$, that is, $\tmtwo \leqcp e \tmtwop$.
\end{proof}

We shall see that normal form simulations are defined in such a way that they are adequate, so that soundness follows directly from compatibility. Note that soundness without compatibility is useless: the relation $\relsym \defeq \set{(\var\vartwo,\var\vartwo)}$ is sound but not compatible. 

\paragraph{Diamond} A rewriting notion that shall play a role is the \emph{diamond property}, which is the following one-step strengthening of confluence for a reduction $\to$: if $\tm \to \tmtwo_1$, $\tm\to\tmtwo_2$, and $\tmtwo_1 \neq \tmtwo_2$ then exists $\tmthree$ such that $\tmtwo_1 \to \tmthree$ and $\tmtwo_2 \to \tmthree$. Some well-known facts: the diamond property implies confluence but not vice-versa; if $\to$ is diamond and there is a terminating reduction from $\tm$ then there are no diverging reductions from $\tm$; all reductions to normal form, if any, have the same length. Roughly, the diamond property is a relaxed form of determinism, where non-deterministic choices have no impact on the result nor on the length of the evaluation leading to it.

\paragraph{$\Omega$-terms, Solvability, and Scrutability} A cornerstone of the theory of the (\cbn) $\l$-calculus is the study of what here we call \emph{$\Omega$-terms}, that is, terms that are contextually equivalent to $\Omega$, the paradigmatic looping term. Such a notion has been extensively studied for \emph{head} (\cbn) contextual equivalence, for which $\Omega$-terms are better known as \emph{unsolvable terms}, and often labeled as \emph{meaningless terms}. Actually, the definition of unsolvable term (here omitted) is different, as it does not mention $\Omega$ nor contextual equivalence, but equivalent to the one of $\Omega$-term. Equational theories that identify all unsolvable terms were first studied by \citet{Wad:SemPra:71,DBLP:journals/siamcomp/Wadsworth76} and \citet{DBLP:books/daglib/0016519,solvability-barendregt}, and it is the leading theme of \citeauthor{Barendregt84}'s book \citeyearpar{Barendregt84}. Unsolvable terms can be characterized as those terms that diverge with respect to head reduction, as proved by Wadsworth. 

Adopting \emph{weak} head (\cbn) contextually equivalence gives a different set of $\Omega$-terms. As for the head case, these $\Omega$-terms coincide with a natural set of terms defined without mentioning $\Omega$, namely \emph{(\cbn) inscrutable terms} (for the definition see \refapp{app-vsc}), first studied by \citet{parametricBook} (under the name \emph{not potentially valuable terms}), who provide the following useful characterization, akin to Wadsworth's for unsolvable terms. 
\begin{theorem}[Diverging characterization of \cbn scrutability]
\label{thm:cbn-scrutability-characterization}
A term $\tm$ is \cbn inscrutable (thus a weak \cbn $\Omega$-term) if and only if the weak head reduction of $\tm$ diverges.
\end{theorem}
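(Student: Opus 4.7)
The plan is to prove the two implications separately, following the template of Wadsworth's characterization of unsolvable terms but adapted to the weak head case. Throughout, I assume the definition (imported from the appendix) that $\tm$ is \cbn inscrutable iff there is no context $\ctx$ and no value $\val$ such that $\ctxp\tm$ weak head reduces to $\val$; equivalently, $\tm$ is not potentially valuable.

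\textbf{Direction ($\Leftarrow$): divergence implies inscrutability.} The key ingredient is a standardization-style lemma: if $\tm$ has an infinite weak head reduction and $\ctx$ is any context, then $\ctxp\tm$ also has an infinite weak head reduction (hence no weak head normal form, and in particular no value). The argument goes by cases on the shape of $\ctx$ and on how the weak head redex of $\ctxp\tm$ interacts with the occurrence of $\tm$. The only delicate case is when $\ctx$ puts $\tm$ in applicative position $\tm\tmtwo_1\dots\tmtwo_n$ under a binder: but since weak head reduction does not enter abstractions, every weak head step of the surrounding term either leaves $\tm$ untouched, or is exactly the propagation of one of $\tm$'s own weak head steps. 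Iterating $\tm$'s infinite reduction inside $\ctxp\tm$ yields an infinite weak head reduction from $\ctxp\tm$, so $\ctxp\tm$ never reaches a value. Therefore $\tm$ is inscrutable.

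\textbf{Direction ($\Rightarrow$): inscrutability implies divergence.} I argue the contrapositive: if the weak head reduction of $\tm$ terminates, then $\tm$ is potentially valuable. Every weak head normal form is either (i) a value $\la\var\tmtwo$, or (ii) a neutral term of the form $\var\,\tmtwo_1\dots\tmtwo_n$ with a free head variable. In case (i), $\tm$ itself reduces to a value, so it is trivially potentially valuable. In case (ii), I construct a closing substitution that turns the normal form into a value. Concretely, substituting the head variable $\var$ by a ``discarding'' term $\la\vartwo_1\dots\la\vartwo_n.\Id$ (where $\Id = \la\varthree.\varthree$) leads, in $n$ weak head $\beta$-steps, to $\Id$, which is a value. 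Any remaining free variables are substituted by $\Id$ for closedness. This substitution is realized by a context (a sequence of applied abstractions, i.e.\ the standard substitution-by-context encoding), witnessing that $\tm$ is potentially valuable. Subject reduction of weak head reduction ensures that from $\tm$ itself the same chosen context reaches the value.

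\textbf{From inscrutability to $\Omega$-term.} The parenthetical ``thus a weak \cbn $\Omega$-term'' is an immediate corollary: by the just-proved characterization, an inscrutable $\tm$ weak head diverges, and by the first direction applied uniformly, $\ctxp\tm$ weak head diverges for every closing $\ctx$; the same holds for $\Omega$, hence $\tm \equivcp{e} \Omega$ for the termination predicate ${\bs{}_e}$ of weak head evaluation.

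\textbf{Expected main obstacle.} The genuinely subtle step is the standardization-style lemma in direction ($\Leftarrow$): one must check that a weak head reduction of the surrounding context $\ctxp\tm$ cannot ``overtake'' the diverging subreduction of $\tm$ by reducing something outside of it to a value while leaving $\tm$ dormant. This is handled by a careful case analysis on the shape of $\ctx$ relative to the weak head evaluation contexts, exploiting the fact that weak head evaluation is deterministic and does not enter abstractions, so any weak head step of $\ctxp\tm$ is forced to be either one of $\tm$'s own steps (after possibly unfolding head $\beta$-redexes of the context that have $\tm$ in head applicative position) or a step entirely disjoint from the hole.
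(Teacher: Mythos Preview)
The paper does not actually prove this theorem: it is cited from Paolini and Ronchi Della Rocca. So there is no paper proof to compare against; only the definition in the appendix is available.

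That definition differs from the one you assume, and the difference breaks your argument. \cbn scrutability is defined via \emph{testing contexts} $\tctx \grameq \ctxhole \mid \tctx\,\tm \mid (\la\var\tctx)\tm$ and full $\beta$-reduction $\tob^*$, not arbitrary contexts and weak head reduction. Under your version (``no context $\ctx$ such that $\ctxp\tm$ weak head reduces to a value''), every term would be scrutable: take $\ctx = \la\var\ctxhole$, and $\ctxp\tm = \la\var\tm$ is already a value.

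The same counterexample refutes your key lemma in the ($\Leftarrow$) direction. You claim that if $\tm$ has an infinite weak head reduction then $\ctxp\tm$ does too, for \emph{any} context $\ctx$; but $\la\var\Omega$ is weak head normal. Your case analysis glosses over this because you only discuss ``$\tm$ in applicative position under a binder'', whereas the failure is precisely when the hole sits under an \emph{un-applied} abstraction. Restricting to testing contexts repairs the lemma: they never place the hole under a bare $\lambda$, and the remaining case $(\la\var\tctx')\tmtwo$ reduces, after one $\towh$-step, to a smaller testing context around $\tm\isub\var\tmtwo$, so you also need substitutivity of $\towh$-divergence, which should be stated explicitly. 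Your ($\Rightarrow$) direction is essentially correct and already builds testing-context-shaped witnesses; the ``close remaining free variables'' step is unnecessary, since the definition does not require closedness.

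The $\Omega$-term corollary inherits the same flaw: you assert that $\ctxp\tm$ and $\ctxp\Omega$ weak head diverge for \emph{every} closing $\ctx$, which is false (again $\la\var\Omega$). Contextual equivalence with $\Omega$ means agreeing with $\Omega$ on termination in every closing context, not diverging in all of them.
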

For instance, $\Omega$ is inscrutable but $\la\var\Omega$ and $\var\Omega$ are not (while $\la\var\Omega$ is unsolvable). From the characterization, it follows that every inscrutable term is unsolvable, but not vice-versa.

In \cbv, there are analogous notions of unsolvable and inscrutable terms, but they lack analogous \emph{diverging characterizations} in Plotkin's \cbv $\l$-calculus. They have been thoroughly studied by \citet{DBLP:journals/pacmpl/AccattoliG22} in the VSC, where instead they admit diverging characterizations. Surprisingly, an equational theory identifying \cbv unsolvable terms is necessarily inconsistent. The right notion of meaningless term in \cbv is actually given by \cbv inscrutable terms. In \refapp{app-vsc}, one can find the technical definition of \cbv inscrutable terms as well as the proof that they coincide with \cbv $\Omega$-terms (relying on the VSC introduced in \refsect{vsc}). 

In this paper, we shall not consider unsolvable terms, as we only consider \emph{weak} reductions. Thus, \cbn/\cbv \emph{$\Omega$-terms} shall always be relative to weak \cbn/\cbv contextually equivalence, and coinciding with \cbn/\cbv inscrutable terms.

\section{Background About Normal Form Bisimulations}
Normal form bisimulations are program equivalences that, instead of comparing terms \emph{externally}, depending on how they behave \emph{in contexts}, they compare them \emph{internally}, by looking at the structure of their (infinitary) \emph{normal forms}. Let us give the simplest possible example.

Let $\towh$ be weak head reduction, also known as \cbn evaluation, which is a deterministic reduction and it is defined as  $(\la\var\tm)\tmtwo \tmthree_1 \ldots \tmthree_k\towh \tm\isub\var\tmtwo \tmthree_1 \ldots \tmthree_k$ with $k\geq 0$. Normal form simulations are usually based on a big-step presentation of $\towh$: we write $\tm\bswh\tmtwo$ if the $\towh$-evaluation of $\tm$ terminates on $\tmtwo$, and $\tm\bswhdiv$ otherwise.
The following notion of simulation was first considered by \citet{SANGIORGI-normal-form-bisimulation}. Our presentation is slightly different but equivalent.
\begin{definition}[\cbn normal form simulations, \cite{SANGIORGI-normal-form-bisimulation}]
\label{def:cbn-nfs}
	A relation $\relsym$ is a \cbn normal form simulation if $\relsym\subseteq\relcbn$, where $\tm \relcbn \tmp$ holds whenever $\tm,\tmp$ satisfy one of the following clauses:
	\begin{center}
		$\begin{array}{r@{\hspace{.3cm}}r@{\hspace{.3cm}}l@{\hspace{.3cm}}l@{\hspace{.3cm}}lll}
	\textup{(cbn 1)} & && \tm\bswhdiv & \ie ~ \text{has no} \towh \text{-normal form.}
	\\
	\textup{(cbn 2)} & \tm \bswh \var & \text{and} & \tmp \bswh \var &
	\\
	\textup{(cbn 3)} & \tm \bswh \la\var\tm_1 & \text{and} &\tmp \bswh \la\var\tmp_1 & \text{with}~ \tm_1 \rel \tmp_1
	\\
	\textup{(cbn 4)} & \tm \bswh \ntm = \ntmONE\tmtwo&\text{and}& \tmp \bswh \ntmtwo = \ntmONEtwo \tmtwop &\text{with}~\ntmONE\rel\ntmONEtwo ~ \text{and}~ \tmtwo\rel\tmtwop
	\end{array}$
	\end{center}
	\emph{\cbn normal form similarity} $\leqcbn$ is the largest \cbn normal form simulation, that is, it is the union of all normal form simulations. We shorten \emph{normal form} as \emph{nf}.
	\end{definition}
Nf-bisimulations and bisimilarity are the symmetric variants of simulations and similarity, defined as expected.
A simple way of proving soundness of $\leqcbn$ is to show compatibility via the variant of Howe's method developed by \citet{lassen1999bisimulation}, technique that we shall recall in \refsect{naive-compatibility}. 

\paragraph{Partiality and Divergence.} Nf-simulations rely on a partial notion of evaluation (with respect to full $\beta$-reduction), such as weak head reduction $\towh$. The key point is that the partial reduction leaves some sub-terms not evaluated (arguments and abstraction bodies for $\towh$). The derived simulations compare the $\towh$-normal forms $\nfwh(\tm)$ and $\nfwh(\tmtwo)$ of $\tm$ and $\tmtwo$ by: 
\begin{itemize}
\item Checking that they have the same structure for the partially evaluated part of the term, and
\item Asking that the non-evaluated sub-terms of $\nfwh(\tm)$ and $\nfwh(\tmtwo)$ in the same positions are pairwise nf-similar.
\end{itemize}
The use of a partial notion of evaluation is crucial, as it allows fine discriminations related to divergence, which would be blurred if one would only consider full normal forms. \cbn nf-similarity, indeed, discriminates between the following forms of divergence:
\begin{enumerate}
\item Looping as $\Omega$;
\item Looping only after having received an argument, as for $\la\var\Omega$;
\item Having a looping argument, as for $\var \Omega$;
\item The finite iterations of 2 and 3, and of their combination;
\item The \emph{infinite} iteration of 2 or 3, that never actually loop as $\Omega$ as they keep producing an infinite amount of head variables and/or abstractions. There exists terms $\tm$, indeed, the $\towh$ normal form of which is, for instance, $\la\var\tmtwo$ or $\var\tmtwo$ or $\var \la\vartwo\tmtwo$, and such that $\tmtwo$ has the same property. Therefore, they give rise to infinite normal forms such as $\la\var\la\var\la\var\ldots$ or $\var (\var (\var \ldots$ or $\var (\la\vartwo \var (\la\vartwo \ldots$.
\end{enumerate}

%

\paragraph{Open Terms.} Nf-simulations are different from most other notions of program equivalence in that they have to deal with open terms, because, even when the terms to compare are closed, the sub-terms on which the comparison is iterated might be open.

\paragraph{Easier to Use.} With respect to other notions of equivalence such as applicative bisimilarity, nf-bisimilarity is often simpler to establish, because it removes the quantification over arguments. A typical example is the proof of the equivalence of the Curry and Turing fix-point combinators $\curryfixn$ and $\turingfixn$, which is particularly simple with \cbn nf-bisimilarity, as Lassen explained in his first article relating \bohm tree equivalence and head nf-bisimulations \cite{lassen1999bisimulation}, a variant of Sangiorgi's. We can easily adapt his argument on head nf-bisimulations to weak head nf-bisimulations.
Let $\curryfixn = \la\var{\curryfixauxn\curryfixauxn}\text{, where } \curryfixauxn= \la\varthree{\var({\varthree\varthree})}$ and $ \turingfixn = (\la\varthree{\la\var{\var({\varthree\varthree\var})}})(\la\varthree{\la\var{\var({\varthree\varthree\var})}})$. It is easy to check that the following relation $\relsym$ is a \cbn nf-(bi)simulation relating $\curryfixn$ and $\turingfixn$.
\begin{center}
$\rel \defeq \{(\curryfixn,\turingfixn),(\curryfixauxn\curryfixauxn,\var(\turingfixn\var)), (\var(\curryfixauxn\curryfixauxn),\var(\turingfixn\var)),(\curryfixauxn\curryfixauxn,\turingfixn\var) \mid \var\text{ a variable}\}$
\end{center}

\paragraph{Equational Benchmarks} We say that a notion of program equivalence $\simeq_X$ \emph{validates} an equivalence $\equiv_Y$ if $\tm\equiv_Y\tmtwo$ implies $\tm\simeq_X\tmtwo$. Program equivalences in (weak/head/strong) \cbn roughly can differ only along two axes:
\begin{enumerate}
\item \emph{$\eta$-equivalence}: the amount of $\eta$ equivalence that they validate;
\item \emph{$\Omega$-terms}: the amount of identifications among (variants of) $\Omega$-terms.
\end{enumerate}

Sangiorgi's \cbn nf-bisimilarity does not validate $\eta$-equivalence, because his bisimilarity is \emph{rigid}, it cannot equate different normal forms such as $\var$ and $\la\vartwo\var\vartwo$. Since weak \cbn contextual equivalence validates some cases of $\eta$ equivalence, $\eqcbn$ is not fully abstract \cite{SANGIORGI-normal-form-bisimulation}. More generally, nf-bisimilarities tend to not be fully abstract. Intuitively, $\tm$ and $\tmp$ might be externally equivalent, that is, behave the same in all contexts, and yet be internally different, by having different (potentially infinitary) normal forms.

About $\Omega$-terms, it follows from their diverging characterization (\refthm{cbn-scrutability-characterization}) that \cbn nf-bisimilarity equates all \cbn $\Omega$-terms, that is, it validates the following equivalence. 
\begin{itemize}
\item \emph{$\Omega$-equivalence}: $\tm \equivom \tmtwo$ if $\tm$ and $\tmtwo$ are \cbn $\Omega$-terms.
\end{itemize}
Instead, $\eqcbn$ does not equate all unsolvable terms, as it distinguishes $\Omega$ and $\la\var\Omega$ (namely $\Omega \leqcbn \la\var\Omega$ but $\la\var\Omega \not\leqcbn \Omega$). We shall see that in \cbv the reference \cbv nf-bisimulation does not equate all \cbv $\Omega$-terms.


\section{Plotkin's Call-by-Value and Open Terms}
\label{sect:plotkin}
Here we recall Plotkin's \cbv $\l$-calculus, and pay attention to some aspects that are often neglected, as they shall be relevant in the following sections. 

\paragraph{The \cbv $\l$-Calculus.} A \emph{value} $\val$ is a variable or an abstractions. At the rewriting level, we consider only weak evaluation, that is, out of abstractions. We define it in three variants, from left to right, from right to left, and in an unspecified order, which are discussed next.
\begin{center}
\begin{tabular}{c@{\hspace{.7cm}}c}
	$\begin{array}{r@{\hspace{.4cm}}rlll}
	\textsc{Terms} & \tm, \tmtwo, \tmthree & \grameq & \val \mid \tm\tmtwo 
	\\
	\textsc{Values} & \val  & \grameq  & \var \mid \la\var\tm

	\end{array}$
	
	&
	
	$\begin{array}{r@{\hspace{.4cm}}rlll}
	\textsc{(General) Contexts} & \ctx & \grameq &  \ctxhole\mid \tm\ctx\mid \ctx\tm \mid \la\var\ctx
	\\
	\textsc{Weak Contexts} & \wctx & \grameq &  \ctxhole\mid \tm\wctx\mid \wctx\tm
	\\
	\textsc{Left Contexts} & \lctx & \grameq &  \ctxhole\mid \val\lctx\mid \lctx\tm
	\\
	\textsc{Right Contexts} & \rctx & \grameq &  \ctxhole\mid \tm\rctx\mid \rctx\val
	\end{array}$
\end{tabular}
\end{center}

\begin{definition}[Reductions]
Let root $\betav$ reduction $\rtobv$ defined as $(\la\var\tm)\val \rtobv \tm\isub\var\val$. Then 
	$\betav$, weak, left (to right), and right (to left) reduction, noted $\tobv$, $\tow$, $\tolw$, and $\torw$, are defined as follows: if $\tm \rtobv \tmtwo$ then
	\begin{center}
		$\begin{array}{c@{\hspace{1cm}}c@{\hspace{1cm}}c@{\hspace{1cm}}c}
		\ctxp \tm \tobv \ctxp \tmtwo 
		&\wctxp \tm \tow \wctxp \tmtwo 
		&
		\lctxp \tm \tolw \lctxp \tmtwo 
		 &
		\rctxp \tm \torw \rctxp \tmtwo
		\end{array}
		$\end{center}
\end{definition}
It is standard that $\tolw$ and $\torw$ are deterministic and contained in $\tow$, which is non-deterministic but diamond, while $\tobv$ is non-deterministic and confluent but not diamond.

\paragraph{Closed Terms.} If terms are closed and evaluation is weak, which are a standard assumption in the study of functional languages, then call-by-value evaluation has some very nice properties, summed up by the next proposition.

\begin{proposition}
Let $\tm$ be a closed term.
\begin{enumerate}
\item $\tm$ is a weak/left/right normal form if and only if $\tm$ is an abstraction.
\item On $\tm$, left and right reduction are full with respect to weak evaluation, that is, if $\tm \tow \tmtwo$ then there exist $\tmthree$ and $\tmfour$ such that $\tm \tolw \tmthree$ and $\tm \torw \tmfour$.
\end{enumerate}
\end{proposition}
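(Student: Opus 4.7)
The plan is to prove both parts by straightforward structural induction on $\tm$, leveraging closedness at every step (subterms of a closed application remain closed, so the induction hypothesis applies). The two parts are intertwined: part 1 gives the shape of normal forms, and part 2 uses that shape to force the side conditions on left/right contexts.

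For part 1, the right-to-left direction is immediate from the grammars: $\wctx$, $\lctx$, $\rctx$ all lack an abstraction production, so no redex can be plugged under a $\lambda$. For the left-to-right direction, I would induct on $\tm$. The variable case is ruled out by closedness, the abstraction case is trivial. For $\tm = \tmtwo_1\tmtwo_2$ closed, both subterms are closed. Assume $\tm$ is a weak normal form. Then $\tmtwo_1$ is a weak normal form (any step $\tmtwo_1 \tow \tmtwo_1'$ arising from $\tmtwo_1 = \wctxp{s}$ extends to the weak context $\wctx\tmtwo_2$, contradicting that $\tm$ is normal); likewise $\tmtwo_2$. By the inductive hypothesis both are abstractions, so $\tm = (\la\var s)\tmtwo_2$ with $\tmtwo_2$ a value, i.e.\ a $\betav$-redex, contradicting normality. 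The same argument handles $\lctx$ and $\rctx$, with the extra observation that when we need to reduce under the right (resp.\ left) subterm, the other subterm is already in normal form and hence an abstraction by the inductive hypothesis, i.e.\ a value, which is exactly the side condition required by the productions $\val\lctx$ (resp.\ $\rctx\val$).

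For part 2, suppose $\tm \tow \tmtwo$. Then $\tm$ is not a weak normal form, hence by part 1 not an abstraction, and not a variable since it is closed, so $\tm = \tmtwo_1\tmtwo_2$. I would argue by structural induction that $\tm$ admits a left reduction; the right case is symmetric. If $\tmtwo_1$ is not a weak normal form, by the inductive hypothesis $\tmtwo_1 \tolw \tmtwo_1'$, and placing the corresponding redex in $\lctx\tmtwo_2$ yields $\tm \tolw \tmtwo_1'\tmtwo_2$. Otherwise, by part 1, $\tmtwo_1$ is an abstraction, hence a value. If $\tmtwo_2$ is not a weak normal form, the inductive hypothesis gives $\tmtwo_2 \tolw \tmtwo_2'$, and since $\tmtwo_1$ is a value, $\tmtwo_1\lctx$ is a left context, so $\tm \tolw \tmtwo_1\tmtwo_2'$. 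Otherwise $\tmtwo_2$ too is an abstraction, so $\tm$ is a root $\betav$-redex and reduces by $\tolw$ at the hole. The symmetric cases for $\torw$ use $\rctx\tmtwo_2$ when $\tmtwo_2$ is a value and $\tmtwo_1\rctx$ unconditionally.

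The only real subtlety — and the place where the proof would fail without the closedness hypothesis — is the interplay between the side conditions on left/right contexts ($\val\lctx$ and $\rctx\val$) and the characterization of normal forms: we only know the inert subterm is a value because, being closed and weakly normal, it must be an abstraction by part 1. Apart from this, everything reduces to routine case analysis on the application case of the induction.
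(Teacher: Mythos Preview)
Your proof is correct. The paper does not actually give a proof of this proposition---it is stated as a standard, well-known fact about closed \cbv evaluation and left unproved. Your structural induction, with the key observation that closedness propagates to the subterms of an application and that part~1 supplies the value side-condition needed to form left/right contexts, is exactly the expected folklore argument.
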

Intuitively, left and right reduction are two equivalent ways of turning weak reduction into a deterministic reduction, on closed terms. 
In \cbv, contextual preorder $\leqcv$ is defined with respect to evaluation to a value, which can be equivalently expressed as termination of weak, left, or right reduction. Importantly, for $\leqcv$ one considers only the reduction of closed terms, as the definition of $\leqcv$ is based on contexts closing the terms to compare.

\paragraph{Open Terms, Stuck Redexes, and The Inequivalence of the Three Strategies.} As soon as one considers open terms, the good properties of the system break. The only ones which survive are the diamond property of weak reduction and the fact that left and right reduction are deterministic. But weak normal forms now have a complex shape, as there can be \emph{stuck redexes} such as $(\la\var\tm) (\vartwo\vartwo)$ which cannot be reduced because their argument is normal and not a value.

Stuck redexes unfortunately break the equivalence of left, right, and weak reduction: left, right, and weak normal forms are all different notions, because left and right reduction are no longer full with respect to weak reduction. For instance, $\tm \defeq \var\var(\Id\Id)$ is a left normal form which is not a weak/right normal form, because $\tm \tow \var\var\Id$, which is weak/right normal. Similarly, $\Id\Id(\var\var)$ is a right normal form which is not left/weak normal, and $\var\var(\Id\Id)(\var\var)$ is a left/right normal form which is not weak normal. Perhaps more worrying is the fact that stuck redexes introduce suspicious distinctions between contextually equivalent terms: $\Omega$ and $\Omega^L \defeq (\la\var\delta)(\vartwo\varthree)\delta$ are contextually equivalent, but the first one diverges while the second one is normal, because of the stuck redex.
\section{Equational Benchmarks for \cbv Program Equivalences}
\label{sect:benchmarks}

In \cbn, there are only two 'equational benchmarks', or degrees of freedom, for program equivalences, namely $\eta$-equivalence and the $\Omega$-equivalence $\equivom$. In \cbv, the situation is richer, there are various equivalences that can be validated or not. Here we list the most relevant ones. We start by discussing the \cbv variants of $\eta$-equivalence and $\Omega$-equivalence, and then present the equivalences that are found in known extensions of Plotkin's calculus.

For all the equivalences, we simply give the root axioms defining them, assuming that they are closed by all contexts. All the given equivalences (but \cbn erasure) are validated by \cbv contextual equivalence $\eqcv$. 
The meaning of some of the equivalences from extended \cbv calculi might seem obscure. They shall make more sense after the introduction of explicit substitutions in \refsect{vsc}.

In \cbv, $\eta$-equivalence has to be restricted, otherwise it turns non-values into values. At first sight, the by value version of $\eta$ seems to be $\val \equivetav \la\var{\val\var}$ if $\var\notin\fv\val$. But since any \cbv program equivalence validates $\betav$-reduction, the case $\val=\la\vartwo\tm$ is actually caught by $\betav$-reduction  (because $\la\var (\la\vartwo\tm)\var \tobv \la\var\tm\isub\vartwo\var =_\alpha \la\vartwo\tm$), so that $\equivetav$ simply amounts to the variable case.
\begin{itemize}
\item \emph{$\etav$ equivalence}: $\vartwo \equivetav \la\var{\vartwo\var}$ for every variable $\vartwo$.
\end{itemize}

\paragraph{\cbv $\Omega$-terms} $\Omega$-terms adapt to \cbv by simply considering \cbv contextual equivalence.
The equivalence to be validated here is the following one:
\begin{itemize}
\item \emph{$\Omega$-equivalence by value}: $\tm \equivomv \tmtwo$ if $\tm$ and $\tmtwo$ are \cbv $\Omega$-terms.
\end{itemize}
In Plotkin's calculus, \cbv $\Omega$-terms cannot have a diverging characterization akin to that of \cbn $\Omega$-terms (\refthm{cbn-scrutability-characterization})---they shall have one in the VSC. For instance, an $\Omega$-term such as $\Omega^L \defeq (\la\var\delta)(\vartwo\varthree)\delta$ is normal for Plotkin, while it should diverge if a good characterization existed. 

The equivalence $\equivomv$ has a special role among those listed here because whether a term is an $\Omega$-term is \emph{undecidable}, so that the equivalence cannot be seen as computational principle to be tested via a rewriting rule. It is then all the more relevant that a program equivalence validates it.

\paragraph{Moggi} We now turn to equivalences found in extensions of Plotkin's calculus. The equivalences enriching $\betav$-conversion in Moggi's untyped computational $\l$-calculus are the following ones, here reformulated without $\letexp$-expressions:
\begin{itemize}
	\item \emph{Left identity}: $\Id \tm \equiv_{lid} \tm$, where $\Id=\la\var\var$ is the identity combinator;
	\item \emph{Associativity of lets}: $(\la\var\tm)((\la\vartwo\tmtwo)\tmthree) \equivass (\la\vartwo(\la\var\tm)\tmtwo)\tmthree)$ if $\vartwo \not \in \fv{\adr\tm}$;
	\item \emph{Left decomposition of applications}: $\tm\tmtwo \equivlad (\la\var\var\tmtwo)\tm$ if $\var\not\in\fv\tmtwo$;
	\item \emph{Right decomposition of applications}: $\val\tm \equivrad (\la\var\val\var)\tm$ if $\var\not\in\fv\val$. This one exists also in an extended form: $\tmtwo\tm \equivexrad (\la\var\tmtwo\var)\tm$ if $\var\not\in\fv{\adr{\tmtwo}}$
\end{itemize}
Of them, the most interesting one for our study is $\equivlid$, which in Plotkin's calculus holds only for values, as for values it is an instance of $\betav$-conversion, while in Moggi's it holds for every term $\tm$.

\paragraph{Proof Nets} The \cbv translation of $\l$-calculus in linear logic proof nets, studied in detail by \citet{Accattoli-proofnets}, equates various pairs of terms. The induced equivalences are better expressed with explicit subsitution, as we shall see in \refsect{benchmarks-vsc}, but we anticipate them here anyway. They subsume Moggi's $\equivass$ rule and other presentations of proof nets equivalences such as the $\sigma$-rules of the shuffling calculus of \citet{shufflingcalculus}. Moreover, they include the following equivalences.
\begin{itemize}
	\item Left (Applied-$\l$) Application: $(\la\var\tm)\tmtwo\tmthree \equivsone (\la\var\tm\tmthree)\tmtwo$ if $\var\not\in\fv\tmthree$;
	
	\item Right (Applied-$\l$) Application: $\tm((\la\vartwo\tmtwo)\tmthree) \equivexsthree (\la\vartwo\tm\tmtwo)\tmthree$ if  $\vartwo\not\in\fv{\tm}$;
	\item Commutativity: $(\la\vartwo(\la\var\tm)\tmtwo)\tmthree \equivcom (\la\var(\la\vartwo\tm)\tmthree)\tmtwo$ if $\var\notin\fv\tmthree$ and $\vartwo\notin\fv\tmtwo$.
\end{itemize}
The first two rules correspond to possible commutations between applications and applied lambdas--which shall correspond to commutations between applications and $\letexp$s in \refsect{benchmarks-vsc}. The second one also can be seen as a generalization of Moggi's $\equiv_{ass}$ replacing the value $\la\var\tm$ with whatever term $\tm$. Commutativity swaps adjacent and unrelated redexes, and it is the equivalence that in the introduction is formulated with $\letexp$s. It is a special equivalence, for at least two reasons. 
\begin{enumerate}
\item \emph{Effects}: commutativity holds in the pure \cbv setting but it often fails in extensions of \cbv with effects, because many effects are order-dependent (think of the order of writes on a memory cell). Therefore, it is an equivalence that one might want to be able to modularly add or remove from a notion of bisimilarity, rather than always validate it. 
\item \emph{Unorientable}: being symmetric, commutativity cannot be oriented as a rewriting rule. Therefore, any nf-bisimulation validating it needs to be able to compare normal forms up to some deformation of terms.
\end{enumerate}
The reason why some equivalences at times appear in restricted forms is also related to effects. With non-commutative effects one has to fix a deterministic evaluation strategy, typically left-to-right, and this constrains the shape of equivalences forcing a sub-term to be a value $\val$ (resp. $\la\var\tm$), as in $\equivrad$ (resp. $\equiv_{ass}$). Similarly for proof nets equivalence, evaluating left-to-right would lead us to a restricted version of $\equivexsthree$ where the sub-term $\tm$ has to be a value. Proceeding right-to-left would relax them but force other dual constraints (such as $\tm$ being a value in $\equivlad$), and adopting a non-specified order induces the extended versions of Moggi's laws and the unrestricted version of the proof nets equivalences.

\paragraph{\cbn Duplication and \cbn Erasure} The last equivalences that we consider are \cbn duplication and \cbn erasure. We do not actually know how to characterize \cbn duplication independently of erasure with an axiom, or a set of axioms, but we discuss a specific case, to illustrate the idea.
\begin{itemize}
\item \emph{\cbn Duplication}: $(\la\var \vartwo \var \var) \tmtwo \equiv_{dup} \vartwo \tmtwo\tmtwo$;
\item \emph{\cbn Erasure}: $(\la\var \tm ) \tmtwo \equiv_{era} \tm$ with $\var\notin\fv\tm$.
\end{itemize}
In \cbv, duplication and erasure are included in $\betav$-conversion if $\tmtwo$ is a value, but not otherwise. For arbitrary terms, erasure is unsound in \cbv, because erasing a sub-term might turn divergence into termination: for instance $\tm\defeq (\la\var\varthree)(\vartwo\vartwo)$ is not contextually equivalent to $\tmtwo\defeq \varthree$ in \cbv, because $\ctxp\tm$ diverges while $\ctxp\tmtwo$ terminates with respect to the context $\ctx\defeq (\la\vartwo\ctxhole)\delta$. 

Duplication, instead, is sound for arbitrary terms, the idea being that terminating (resp. diverging) once, or terminating (resp. diverging) twice does not affect termination (resp. divergence). It is however a principle somewhat clashing with the nature of \cbv. The cornerstone of \cbv is the idea that one should reduce arguments \emph{before} substituting them, and \cbn duplication does exactly the opposite. Intuitively, a \cbv program equivalence validating \cbn duplication is \emph{qualitative} or \emph{cost-insensitive}, as it only observes termination, while one rejecting it is somehow \emph{cost-sensitive}, as it distinguishes between differently efficient ways of realizing the same qualitative behavior.

\paragraph{Summing Up} Of all the discussed equivalences, the most relevant ones for our study are the $\Omega$-equivalence $\equivomv$ and left identity $\equivlid$. Commutativity $\equivcom$ shall also be a source of inspiration for the modular \emph{mirror} approach of \refsect{net}. \cbn Duplication is validated by \cbv contextual equivalence while none of the program equivalences studied in this paper validates it. This fact shows that they may be sound but are \emph{not complete} with respect to contextual equivalence.
\section{Naive \cbv Bisimilarity}
\label{sect:naive}
If one takes Sangiorgi's \cbn nf-similarity $\leqcbn$ (\refdef{cbn-nfs}) and simply replaces weak head reduction with one of the weak \cbv reductions (weak, left, or right) then one obtains notions of \cbv nf-similarity. 

We define the similarity induced by weak reduction. Left or right reductions induce different similarities but with the same pros and cons that are discussed below. Let $\bsweak $ and $\bsweakdiv$ be big-step termination and divergence with respect to weak \cbv reduction $\tow$.
\begin{definition}[Naive nf-bisimulation for Call-by-Value]
A relation $\relsym$ is a \emph{naive (\cbv) nf-simulation} if $\relsym\subseteq\relncbv$, where $\tm \relncbv \tmp$ holds whenever $\tm,\tmp$ satisfy one of the following clauses:
	\begin{center}
	$\begin{array}{r@{\hspace{.3cm}}r@{\hspace{.3cm}}l@{\hspace{.3cm}}l@{\hspace{.3cm}}lll}
	\textup{(nai 1)} & && \tm\bsweakdiv & \ie ~ \text{has no} \tow \text{-normal form.}
	\\
	\textup{(nai 2)} & \tm \bsweak \var & \text{and} & \tmp \bsweak \var &
	\\
	\textup{(nai 3)} & \tm \bsweak \la\var\tm_1 & \text{and} &\tmp \bsweak \la\var\tmp_1 & \text{with}~ \tm_1 \rel \tmp_1
	\\
	\textup{(nai 4)} & \tm \bsweak \ntm = \ntmONE\ntmTWO&\text{and}& \tmp \bsweak \ntmtwo = \ntmONEtwo\ntmTWOtwo &\text{with}~\ntmONE\rel\ntmONEtwo ~ \text{and}~ \ntmTWO\rel\ntmTWOtwo
	\end{array}$
\end{center}
	\emph{Naive nf-similarity} $\leqncbv$ is the largest naive nf-simulation.
	\end{definition}

Naive (bi)similarity seems defined very naturally, and yet it does not validate \emph{any} of the equivalences of the previous section. Let's discuss $\equivomv$ and $\equivlid$.
\begin{enumerate}
\item
 \emph{$\Omega$-equivalence}: $\equivomv$ is not validated by $\leqncbv$, and $\Omega$-terms are not minimal for $\leqncbv$, in contrast with the fact that \cbn $\Omega$-terms are minimal for Sangiorgi's similarity $\leqcbn$. For instance, $\Omega$ is $\leqncbv$-minimal, but $\Omega^L$ is an $\Omega$-term and one has $\Omega \leqncbv \Omega^L$, because $\Omega\bsweakdiv$, but not $\Omega^L \not\leqncbv \Omega$, because $\Omega^L$ is $\tow$-normal.

\item \emph{Left identity}: the equivalence $\equivlid$ is not validated by $\leqncbv$. In Plotkin's calculus $\Id (\var\tm)$ does not reduce to $\var\tm$, because $\var\tm$ is not a value---more generally this happens for all normal open terms that are not values. Therefore, $\Id (\var\tm)$ and $\var \tm$ are $\leqncbv$-incomparable.
\end{enumerate}

\adr{\paragraph{$\beta_v$-Conversion}
	From the definition of $\leqncbv$, it immediately follows that naive bisimilarity contains the $\rtobv$ root rule, that is, that if $\tm \rtobv \tmtwo$ then $\tm\eqncbv \tmtwo$, simply because $\tm$ and $\tmtwo$ have the same left normal form. Since $\eqncbv$ is a compatible equivalence relation, it turns out that $\eqncbv$ contains the whole of $\betav$-conversion $=_{\betav}$\cadr{, thus it contains left reduction as well as weak and right reductions.}{.}
	
	\begin{proposition}[$\betav$-conversion is validated by naive bisimilarity]
		If $\tm =_{\betav} \tmtwo$ then $\tm \eqncbv \tmtwo$.
\end{proposition}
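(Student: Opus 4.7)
The plan is to carry out the two-step argument sketched in the paragraph just above the statement. First I would establish the root case $\rtobv \subseteq \eqncbv$; then lift it to $\tobv \subseteq \eqncbv$ via compatibility of $\eqncbv$; and then extend to all of $=_{\betav}$ using the equivalence-relation structure of $\eqncbv$.

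For the root step, suppose $\tm \rtobv \tmtwo$. The root is a weak position, so $\tm \tow \tmtwo$. Since $\tow$ is diamond, by the uniform-normalization fact recalled in \refsect{preliminaries}, either both $\tm$ and $\tmtwo$ diverge under $\tow$, or both terminate; and if they both terminate, confluence forces them to terminate at the same weak normal form. Define $\relsym$ to consist of every pair of terms that either both $\tow$-diverge or both $\tow$-converge to the same weak normal form. By construction $\relsym$ is reflexive, symmetric, and contains $(\tm,\tmtwo)$. To see that $\relsym$ is a naive nf-bisimulation, I would inspect the common behaviour of an arbitrary pair $(\tmp,\tmtwop) \in \relsym$: divergent pairs fall under clause (nai 1); for pairs converging to a common $\tow$-normal form $\ntm$, the shape of $\ntm$ selects exactly one of (nai 2)--(nai 4), and the sub-term obligations are all of the form $\ntmONE \rel \ntmONE$ or $\ntmTWO \rel \ntmTWO$ (same-term-to-same-term), which hold by the reflexivity of $\relsym$. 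Hence $(\tm,\tmtwo) \in \eqncbv$, establishing $\rtobv \subseteq \eqncbv$.

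Compatibility of $\eqncbv$---the non-trivial meta-property, established via the Howe-style technique to be developed later in the paper---then lifts the root case through arbitrary contexts, yielding $\tobv \subseteq \eqncbv$. Finally, $=_{\betav}$ is by definition the reflexive, symmetric, and transitive closure of $\tobv$, and since $\eqncbv$ is already an equivalence relation (being the symmetric kernel of the largest naive nf-simulation $\leqncbv$), this closure is absorbed, giving $=_{\betav} \subseteq \eqncbv$. The main obstacle is the compatibility of $\eqncbv$; once granted, the rest follows immediately from the diamond property of $\tow$ and the very shape of the bisimulation clauses.
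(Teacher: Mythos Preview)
Your proposal is correct and follows essentially the same approach as the paper: the paper's argument (given in the paragraph preceding the proposition) is exactly that $\rtobv$-related terms share the same normal form and hence are naively bisimilar, and that compatibility of $\eqncbv$ then lifts this to all of $=_{\betav}$. You simply spell out the first step in more detail by naming the concrete bisimulation (pairs that both diverge or both converge to the same weak normal form), whereas the paper leaves this implicit.
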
}

\paragraph{Naive Similarity and Fix-Points} \cadr{It is nonetheless }{Despite its naivety, it is }possible to prove that the usual \cbv variants of Curry's and Turing's fix-point combinators $\curryfix$ and $\turingfix$ are naively similar, as we now show.
\begin{center}
$\begin{array}{c\colspace |\colspace  c}
\textsc{Curry's fix-point} & \textsc{Turing's fix-point}
\\
 \curryfix \defeq \la\var{\curryfixaux\curryfixaux}\text{, where } \curryfixaux= \la\varthree{\var\la\vartwo{\varthree\varthree\vartwo}}
 &
 \turingfix \defeq (\la\varthree{\la\var{\var\la\vartwo{\varthree\varthree\var\vartwo}}})(\la\varthree{\la\var{\var\la\vartwo{\varthree\varthree\var\vartwo}}}) 
 \end{array}$
 \end{center}
Let's build a naive (bi)simulation $\relsym$ relating $\curryfix$ and $\turingfix$. The relation $\relsym$ must contain the pair $(\curryfix,\turingfix)$.  Both terms $\tow$-evaluate to an abstraction ($\turingfix \to \la\var{\var\la\vartwo{\turingfix\var\vartwo}}$). Hence their weak normal forms are (abstractions) $\la\var{\curryfixaux\curryfixaux}$ and $\la\var{\var\la\vartwo{\turingfix\var\vartwo}}$ which must satisfy the third clause for $\leqncbv$, that is, their bodies under the abstraction $(\curryfixaux\curryfixaux,\var\la\vartwo{\turingfix\var\vartwo})$ must appear in $\relsym$. Since $\curryfixaux\curryfixaux \tow \var \la\vartwo{\curryfixaux\curryfixaux\vartwo}$, $\relsym$ must contain the fourth clause requirements, that is, $\var \rel \var$ and $\la\vartwo{\curryfixaux\curryfixaux\vartwo} \rel \la\vartwo{\turingfix\var\vartwo}$. As a result, $\var \rel \var$ needs to be added to the simulation (for all possible choice of variables when opening the abstraction at the first step), and we continue on adding $(\la\vartwo{\curryfixaux\curryfixaux\vartwo}, \la\vartwo{\turingfix\var\vartwo})$ to $\relsym$, then adding any pair of terms needed so that $\relsym\subseteq\relncbv$. By repeating this process, we eventually fall back to $\var\la\vartwo{\curryfixaux\curryfixaux\vartwo} \rel \var\la\vartwo{\turingfix\var\vartwo}$ which is already in the built relation $\relsym$, which means it now satisfies $\relsym \subseteq \relncbv$, that is, $\relsym$ is a naive simulation. Since similarly the symmetric relation $sym(\relsym)$ also satisfies $sym(\relsym) \subseteq \ncbvfp{sym(\relsym)}$, $\relsym$ is actually a naive \emph{bi}simulation.

The full relation proving the next proposition is:
\begin{center}$
\relsym \defeq \{(\var,\var) \mid \var \text{ a variable}\} ~\cup~ 
\{~~(\curryfix,\turingfix), (\la\var\curryfixaux\curryfixaux,\la\var{\var\la\vartwo{\turingfix\var\vartwo}}),\} ~~\cup $
\end{center} \begin{center}$
\{ (\curryfixaux\curryfixaux,{\var\la\vartwo{\turingfix\var\vartwo}}), ~~ (\var\la\vartwo{\curryfixaux\curryfixaux\vartwo},{\var\la\vartwo{\turingfix\var\vartwo}}),~~ (\la\vartwo{\curryfixaux\curryfixaux\vartwo},\la\vartwo{\turingfix\var\vartwo})\mid \var\text{ a variable}\}~~\cup
$\end{center}\begin{center}$
 \{({\curryfixaux\curryfixaux\vartwo},{\turingfix\var\vartwo}), ((\var\la\vartwop{\curryfixaux\curryfixaux\vartwop})\vartwo,({\var\la\vartwop{\turingfix\var\vartwop}})\vartwo)\mid \var\text{ and }\vartwo \text{ variables}\}$
\end{center} 

\begin{proposition}
\label{prop:naive-fix-points-equiv}
$\curryfix \eqncbv \turingfix$, that is, Curry's and Turing's fix-points are naive bisimilar.
\end{proposition}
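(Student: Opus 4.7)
The plan is to verify that the relation $\relsym$ exhibited just above the statement is a naive nf-bisimulation containing the pair $(\curryfix, \turingfix)$. Once this is done, since naive similarity is the largest naive nf-simulation, we get $\curryfix \leqncbv \turingfix$ and, by considering $\symmetric\relsym$ (which is essentially $\relsym$ read the other way), also $\turingfix \leqncbv \curryfix$, yielding $\curryfix \eqncbv \turingfix$.

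First, I would compute once and for all the relevant weak reductions for the terms appearing in $\relsym$: $\curryfix$ is already a $\tow$-normal value; $\turingfix \tow \la\var{\var\la\vartwo{\turingfix\var\vartwo}}$; and $\curryfixaux\curryfixaux \tow \var\la\vartwo{\curryfixaux\curryfixaux\vartwo}$. All other terms in $\relsym$ are either abstractions, variables, or applications with a head variable, and are thus already $\tow$-normal. In particular, no pair in $\relsym$ contains a divergent term, so clause (nai 1) never triggers.

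Second, I would go through each class of pairs in $\relsym$ and assign to it the clause of $\relncbv$ under which it falls, checking that the required witnesses are themselves in $\relsym$. The variable pairs $(\var,\var)$ are immediate by clause (nai 2). Each abstraction pair (such as $(\curryfix,\turingfix)$ once $\turingfix$ is unfolded, or $(\la\vartwo{\curryfixaux\curryfixaux\vartwo}, \la\vartwo{\turingfix\var\vartwo})$) is handled by (nai 3), and one checks that the pair of bodies required by this clause has been pre-included in $\relsym$. Each application pair (such as $({\curryfixaux\curryfixaux\vartwo},{\turingfix\var\vartwo})$) is handled by (nai 4) after first reducing the left-hand side to its weak normal form $(\var\la\vartwop{\curryfixaux\curryfixaux\vartwop})\vartwo$; the pair then splits as a head pair and an argument pair, both of which again belong to $\relsym$.

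The main (and essentially only) obstacle is ensuring closure of the decomposition: each invocation of (nai 3) or (nai 4) demands that certain sub-pairs already lie in $\relsym$, and $\relsym$ must be large enough to satisfy all such demands simultaneously. This is exactly why $\relsym$ was built to contain the auxiliary pairs $(\curryfixaux\curryfixaux, \var\la\vartwo{\turingfix\var\vartwo})$, $(\la\vartwo{\curryfixaux\curryfixaux\vartwo},\la\vartwo{\turingfix\var\vartwo})$, $({\curryfixaux\curryfixaux\vartwo},{\turingfix\var\vartwo})$ and $((\var\la\vartwop{\curryfixaux\curryfixaux\vartwop})\vartwo, (\var\la\vartwop{\turingfix\var\vartwop})\vartwo)$; with these included, every decomposition loops back into $\relsym$ and the verification closes. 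The same check applied to $\symmetric\relsym$ (which requires only swapping the roles of the two sides in each clause) yields the bisimulation property.
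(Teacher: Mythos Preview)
Your proposal is correct and follows exactly the approach taken in the paper: the paper constructs the same relation $\relsym$ in the paragraphs preceding the proposition and verifies, pair by pair, that each decomposition demanded by clauses (nai~2)--(nai~4) lands back in $\relsym$, then notes that $\symmetric\relsym$ satisfies the symmetric check. There is nothing to add.
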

%

\section{Compatibility and Lassen's method for naive bisimilarity}
\label{sect:naive-compatibility}
In this section, we prove compatibility for the naive bisimilarity introduced in the previous section. The aim is to provide a gentle introduction to \citeauthor{lassen1999bisimulation}'s variant \citeyearpar{lassen1999bisimulation} of Howe's method \cite{Howe1996method,DBLP:books/cu/12/Pitts12} for proving the compatibility of similarities, and to delay some of the technicalities that we shall need to address for the similarity we are really interested in.

We prove compatibility for weak naive similarity, but the proof technique easily adapts to left and right naive similarity.

\paragraph{From Small-Step to Big-Step.} Nf-similarities look at normal forms, and the crucial proof in Howe's method proceeds by induction on a big-step formulation of evaluation, where \emph{big-step} means that it relates a terminating term directly with its normal form, hiding the intermediate steps. Therefore, we need to reformulate the small-step reduction $\tm\tow^k\ntm$ where $\ntm$ is $\tow$-normal, in a big-step manner as $\tm \bsws \ntm$. 

For the technical development, we need to keep the information about the number $k$ of small steps, that is, we shall rather write $\tm \bsw k \ntm$. Such a quantitative information is needed both to prove the equivalence with small-step evaluation and for the crucial proof in the method. 

\begin{definition}[Big-step weak evaluation $\bsws$]
The big-step weak evaluation predicate $\tm \bsw k \ntm$, read \emph{$\tm$ (weak)-converges in $k$ steps to  a normal form $\ntm$}, is defined as follows.
\begin{center}
\begin{tabular}{cccc}
	\begin{tabular}{c\colspace cc}
	\infer[(\bswax)]{\val \bsw 0 \val}{}
	& 
	\infer[(\bswbeta)]{\tm\tmtwo \bsw {k+h+i+1} \ntm}{
		\tm \bsw k \la\var\tmthree
		&
		\tmtwo \bsw h \val
		&
		{\tmthree\isub\var\val} \bsw i \ntm}
&\infer[(\bswappnf)]{\tm\tmtwo \bsw {k+h} \ntmONE=\ntm\ntmtwo}{
	\tm \bsw k \ntm
	&
	\tmtwo \bsw h \ntmtwo}
	\end{tabular}
	\\[12pt]


\end{tabular}
\end{center}
\end{definition}

Notation: $\tm \bsws \ntm$ abbreviates \emph{there exists a $k$ such that $\tm \bsw k \ntm$}.
\begin{toappendix}
\begin{proposition}[Equivalence of small-step and big-step in weak]
	\label{l:ss-bs-equivalence_weak}
	$\tm \bsw k \ntm$ if and only if $\tm \tow^k \ntm$ with $\ntm$ normal.
\end{proposition}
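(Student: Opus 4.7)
The plan is to prove the two implications separately, each by induction. For the forward direction --- $\tm \bsw k \ntm$ implies $\tm \tow^k \ntm$ with $\ntm$ normal --- I would induct on the derivation of $\tm \bsw k \ntm$. The $(\bswax)$ case is immediate because values are weak normal forms. In the $(\bswbeta)$ case, the three inductive hypotheses produce reduction sequences for $\tm$, $\tmtwo$, and $\tmthree\isub\var\val$; composing them by closure of $\tow$ under weak contexts and inserting a single root $\betav$-step gives the required $k+h+i+1$ small steps to $\ntm$. The $(\bswappnf)$ case is analogous but shorter, composing the reductions of the two subterms via weak contexts, with normality of the resulting $\ntm\ntmtwo$ ensured by the implicit side condition of the rule (visible from the normal-form metavariables).

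For the backward direction, the plan is to first establish the key \emph{head-step lemma}: if $\tm \tow \tm'$ and $\tm' \bsw j \ntm$, then $\tm \bsw{j+1} \ntm$. Given this lemma, the direction is immediate by induction on $k$, using the base case $\ntm \bsw 0 \ntm$ for every weak normal form $\ntm$; the base case is itself proved by structural induction on $\ntm$, handling values via $(\bswax)$ and applications via $(\bswappnf)$, noting that the head of any normal application is necessarily a non-abstraction normal form so that both premises of $(\bswappnf)$ are discharged by the sub-IHs.

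The head-step lemma itself is proved by induction on the derivation of $\tm \tow \tm'$, case-analyzing on the position of the contracted redex. The root case $\tm = (\la\var\tmthree)\val \tow \tmthree\isub\var\val$ applies $(\bswbeta)$ directly, with two zero-step axiom premises on $\la\var\tmthree$ and $\val$ and the hypothesis as the third premise. The context cases --- reduction inside the left or right component of an application --- proceed by inverting the given big-step derivation for $\tm' \bsw j \ntm$: in both rules $(\bswbeta)$ and $(\bswappnf)$, the premise concerning the reduced subterm absorbs the extra small step via the induction hypothesis, and the rule is then reapplied with its step count incremented by one. The main obstacle is precisely this rule-by-rule case analysis, as every big-step rule must be shown to tolerate prepending a small step in each possible subposition; once the cases are laid out correctly the arithmetic of step indices is routine.
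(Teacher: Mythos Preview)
Your proposal is correct and mirrors the paper's approach: the backward direction is reduced to the head-step lemma (stated in the paper as the auxiliary lemma immediately preceding the proposition), after which both directions are routine inductions. One minor inaccuracy that does not affect the argument: in open \cbv the head of a weak-normal application need \emph{not} be a non-abstraction---e.g., $(\la\var\tm)(\vartwo\vartwo)$ is normal---but rule $(\bswappnf)$ only requires the whole application to be normal, so your base case still goes through unchanged.
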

\end{toappendix}
%
%
Let us stress an important point. We recall that $\tow$ is non-deterministic but diamond. The diamond property is here crucial, in order to make sense---at the big-step level---of the number of steps $k$, which for a diamond reduction does not depend on the reduction path to normal form.

\ignore{DELETE, not needed
\begin{toappendix}
	\begin{lemma}
		\label{l:grammar-nf-weak-cbv}
	$\tm$ is $\tow$-normal if and only if there exists $\ntm$, given by the following grammar, $\tm = \ntm$.
	\begin{center}
		$\begin{array}{r@{\hspace{.5cm}}rlll}
		\textsc{Inert terms}  &
		\itm,\itmtwo & \grameq &  \var\val \mid \val\itm \mid \itm\ntm
		\\
		{\tow}\textsc{-normal forms}  &
		\ntm,\ntmtwo & \grameq &  \val \mid \itm
		\end{array}
		$\end{center}
\end{lemma}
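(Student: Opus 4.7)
The statement is a characterization of $\tow$-normal forms by a grammar. My plan is to prove both directions separately, with a symmetric structure.

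\paragraph{Soundness of the grammar (grammar $\Rightarrow$ normal).} The plan is to show by simultaneous induction on the mutually-defined grammar of inert terms $\itm$ and normal forms $\ntm$ that every such $\ntm$ contains no weak $\betav$-redex. The key observation is: in order for $\wctxp{(\la\var\tm')\val}$ to occur in $\ntm$, there must be some subterm of $\ntm$ that is an application whose left component reduces (in a weak context) to an abstraction or whose right component reduces to a value. I would examine each production case:
\begin{itemize}
\item $\var\val$: the head is a variable, not an abstraction, so no root $\betav$-redex; the argument $\val$ is a value and therefore $\tow$-normal since $\tow$ does not enter abstractions; no weak redex is possible.
\item $\val\itm$: the argument $\itm$ is inert, hence by induction not a value, so no root $\betav$-redex; and neither $\val$ nor $\itm$ contains a weak redex.
\item $\itm\ntm$: the head $\itm$ is an inert term, hence by induction an application, not an abstraction, so no root $\betav$-redex; the two immediate subterms are normal by induction.
\end{itemize}

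\paragraph{Completeness of the grammar (normal $\Rightarrow$ grammar).} The plan is to proceed by structural induction on $\tm$. The cases $\tm = \var$ and $\tm = \la\var\tm'$ are immediate, since both are values and fit $\ntm = \val$. The interesting case is $\tm = \tm_1\tm_2$. Since the contexts $\wctx\tm_2$ and $\tm_1\wctx$ are weak, any $\tow$-step inside $\tm_1$ or $\tm_2$ lifts to a $\tow$-step of $\tm$; thus $\tm_1$ and $\tm_2$ must each be $\tow$-normal, and by the induction hypothesis each fits the grammar, i.e.\ each is a value or inert. I would then split:
\begin{itemize}
\item $\tm_1 = \itm_1$ inert: then $\tm = \itm_1\ntm_2$ fits the production $\itm\ntm$.
\item $\tm_1 = \val_1$ value: if additionally $\tm_2$ is a value $\val_2$, then the case $\val_1 = \la\var\tm'$ would make $\tm = (\la\var\tm')\val_2$ a $\betav$-redex in the empty weak context, contradicting normality; so either $\val_1 = \var$ and $\tm = \var\val_2$ fits $\var\val$, or $\tm_2$ is inert, giving $\tm = \val_1\itm_2$ which fits $\val\itm$.
\end{itemize}

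\paragraph{Main obstacle.} The proof is essentially routine case analysis; the subtle point is making sure the grammar cases are not only sufficient but exhaustive, in particular that the case $\tm_1 = \la\var\tm'$ with $\tm_2$ a value is genuinely ruled out by normality (the sole place where the root $\betav$-rule is used). A minor bookkeeping point is that the productions $\var\val$ and $\val\itm$ overlap when $\val = \var$: this is harmless for the $\Leftarrow$ direction, and for the $\Rightarrow$ direction the case analysis above naturally selects one production, so uniqueness of derivation is not needed. No novel idea is required; the characterization is a straightforward inversion on the shape of normal forms under weak reduction.
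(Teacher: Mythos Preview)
Your proof is correct and follows the standard approach for this kind of grammar characterization. Note, however, that in the paper this lemma sits inside an \verb|\ignore{DELETE, not needed ...}| block, so the paper does not actually state or prove it; there is nothing to compare against. Your argument by mutual induction for soundness and structural induction for completeness is exactly what one expects, and your case analysis is clean. One small point of presentation: in the soundness direction you should explicitly cover the base case $\ntm = \val$ (trivial since weak reduction does not go under abstractions), not only the inert productions.
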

\end{toappendix}

	\paragraph{Making Inert Terms Explicit in the Clauses.} Case (nai 4) in the definition of naive simulations can be refined according to the grammar of weak normal forms: for normal forms of the shape $\ntmONE\ntmTWO$, there are three different possibilities, $\var\val$, $\val\itm$ or $\itm\ntm$.  The clause can then be equivalently replaced by the three different following clauses.
	\begin{center}
		$\begin{array}{r@{\hspace{.3cm}}r@{\hspace{.3cm}}l@{\hspace{.3cm}}l@{\hspace{.3cm}}lll}
		\text{(nai 4a)} & \tm \bsws \var\val &\textit{and}& \tmp \bsws \var\valtwo
		& \textit{with} ~ \val \rel \valtwo
		\\
		\text{(nai 4b)} & \tm \bsws \val\itm &\textit{and}& \tmp \bsws \valtwo\itmtwo
		& \textit{with} ~ \val \rel \valtwo ~\textit{and}~ \itm \rel \itmtwo
		\\
			\text{(nai 4c)} & \tm \bsws \itm \ntm &\textit{and}& \tmp \bsws \itmtwo \ntmtwo 
		& \textit{with} ~ \itm \rel \itmtwo ~\textit{and}~ \ntm \rel \ntmtwo
		\end{array}
		$\end{center}}
	
\paragraph{(Howe-)Lassen's Method} Proving that a behavioral preorder $\precsim$ is compatible often cannot be done directly, that is, just by induction on the contextual closure. The idea of Howe's method is that, instead of proving compatibility of $\precsim$, one introduces a derived preorder $\howeop\precsim$ where the compatible closure is enforced in the definition, and then proves that $\precsim$ and $\howeop\precsim$ coincide. \citeauthor{Howe1996method} introduced his method \citeyearpar{Howe1996method} to deal with \emph{applicative} similarities, \citeauthor{lassen1999bisimulation} adapted it \citeyearpar{lassen1999bisimulation} for \emph{normal form} similarities. The general idea is the same, but Lassen considers a different closure operation $\lassenop\precsim$.

\paragraph{Lassen's Closure.} The difficulty in proving directly that a similarity $\precsim$ is compatible comes from the applicative contextual closure, which may introduce a $\beta$-redex (when applying an abstraction to a term), that in turn can substitute over $\precsim$-related terms. The idea is to define the preorder $\lassenop\precsim$ as the compatible, substitutive, and reflexive closure of $\precsim$. 

\begin{definition}[Lassen closure]
The \emph{Lassen closure} $\lasrelsym$ of a relation $\relsym$ on terms is given by:
	\begin{center}
\begin{tabular}{cccccc} 
\begin{tabular}{c\colspace\colspace c\colspace\colspace cccc} 
	\infer[\sclift ]{\tmrone \lasrel \tmrtwo} {\tmrone \rel \tmrtwo}
	&
	\infer[\scvar]{\var \lasrel \var}	{}
	&
	\infer[\scabs ]{\la\var\tmrone \lasrel \la\var\tmrtwo} {\tmrone \lasrel \tmrtwo}
	
\end{tabular}
\\[10pt]
\begin{tabular}{c\colspace\colspace ccccc}
	
	\infer[\scapp ] {\tmrone\tmrthree  \lasrel  \tmrtwo\tmrfour} {\tmrone  \lasrel \tmrtwo & \tmrthree \lasrel \tmrfour }  
&
	\infer[\scsub ]{\tmrone\isub\var{\val} \lasrel \tmrtwo\isub\var{\valtwo}{}} {\tmrone \lasrel \tmrtwo & \val \lasrel \valtwo }	
\end{tabular}
\end{tabular}		
	\end{center}
\end{definition}
Note rule ($\scsub$): only values can be substituted, as this is Call-by-Value's mantra.

\paragraph{Lassen's Closure Preserves Simulations} The proof of equivalence of $\leqncbv$ and $\lassenop\leqncbv$ reduces to proving that the closure operator $\lassenop\cdot$ preserves $\leqncbv$ simulations, that is, that $\lassenop\relsym$ is a naive simulation if $\relsym$ is---it is often referred to as the \emph{main lemma} of the method. The proof is delicate and rests on two key intermediate properties. The first one concerns the evaluation level, and, when expressed at the big-step level, it is a sort of factorization property with respect to meta-level substitutions. In fact, it is nothing else but the substitutivity of small-step evaluation, rephrased at the big-step level.

\begin{proposition}[Small-step substitutivity]
	\label{l:stability_weak}
	If $\tm\tow\tmp$ then $\tm\isubst\val\var \tow \tmp\isubst\val\var$
\end{proposition}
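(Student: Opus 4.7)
The plan is to proceed by induction on the derivation of $\tm \tow \tmp$, i.e.\ on the structure of the weak context $\wctx$ such that $\tm = \wctxp{(\la\vartwo\tmthree)\val'}$ and $\tmp = \wctxp{\tmthree\isub\vartwo{\val'}}$. Up to $\alpha$-conversion I may assume that the bound variable $\vartwo$ differs from $\var$ and does not occur free in $\val$, so substitutions commute in the standard way.

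First I would treat the root case, where $\wctx = \ctxhole$. Here $\tm = (\la\vartwo\tmthree)\val'$ and $\tmp = \tmthree\isub\vartwo{\val'}$. Pushing the outer substitution inside, $\tm\isub\var\val = (\la\vartwo\,\tmthree\isub\var\val)\,\val'\isub\var\val$. The crucial observation is that \emph{values are closed under substitution of values}: $\val'\isub\var\val$ is again a value (a variable, a different variable, or an abstraction), so the term on the right is still a $\betav$-redex. Firing it and applying the standard substitution-composition lemma yields $\tmthree\isub\var\val\isub\vartwo{\val'\isub\var\val} = (\tmthree\isub\vartwo{\val'})\isub\var\val = \tmp\isub\var\val$, as required.

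For the inductive step I would rely on a small auxiliary definition: substitution on weak contexts, $\wctx\isub\var\val$, defined by the obvious induction on $\wctx$ (leaving the hole alone, and substituting otherwise). Two routine lemmas then do all the work: (i) if $\wctx$ is a weak context then $\wctx\isub\var\val$ is still a weak context---this is immediate because the grammar of weak contexts never goes under a binder, so no $\alpha$-subtlety arises; and (ii) the context-substitution commutation $(\wctxp\tmfour)\isub\var\val = (\wctx\isub\var\val)\ctxholep{\tmfour\isub\var\val}$. Applying (ii) to $\tm = \wctxp{(\la\vartwo\tmthree)\val'}$ and to $\tmp = \wctxp{\tmthree\isub\vartwo{\val'}}$, and invoking the root case inside the plugged position, gives a $\tow$-step from $\tm\isub\var\val$ to $\tmp\isub\var\val$ performed under the weak context $\wctx\isub\var\val$.

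The only mildly delicate point is the root case, specifically the preservation of values under substitution of a value, which is exactly why the Call-by-Value restriction of substitutability in Lassen's closure (rule $\scsub$) is to values only; everything else is bookkeeping with $\alpha$-conversion and the substitution lemma. Note that the statement implicitly uses the diamond property of $\tow$ only in spirit: substitutivity is a one-step property that does not depend on confluence.
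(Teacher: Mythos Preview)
Your proposal is correct and follows the same approach as the paper, which dispatches the result in one line: ``By induction on $\tm\tow\tmp$ (induction on contexts).'' You have simply spelled out that induction in full, including the key observation that values are closed under substitution of a value and the routine commutation of substitution with weak contexts (which indeed never go under a binder). The closing remark about the diamond property is unnecessary---nothing in this one-step statement appeals to it, even ``in spirit''---but it does no harm.
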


\begin{proof}
	By induction on $\tm\tow\tmp$ (induction on contexts).
\end{proof}

\begin{lemma}[Big-step substitutivity]
	\label{l:splitting_weak}
	If $\tm\isubst\val\var \bsw k \ntm$ then there exist $k'$ and $\ntmtwo$ such that $ \tm \bsw {k'} \ntmtwo$ and $\ntmtwo\isubst\val\var\bsw {k-k'} \ntm$.
\end{lemma}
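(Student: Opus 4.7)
The plan is to proceed by induction on the derivation of $\tm\isubst\val\var \bsw k \ntm$. A key preliminary observation, verified by case analysis on $\tm$, is that if $\tm\isubst\val\var$ is an application then $\tm$ itself is an application, since variables substitute to values and abstractions remain abstractions. Thus the induction cases align with the shape of $\tm$. For the base case ($\bswax$), $\tm\isubst\val\var$ is a value, so $\tm$ is also a value; taking $\ntmtwo\defeq\tm$ and $k'\defeq 0$ works immediately, since $\tm\bsw 0 \tm$ and $\ntmtwo\isubst\val\var = \ntm \bsw 0 \ntm$.

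For the inductive cases, $\tm = \tm_1\tm_2$. Apply the IH to each premise about $\tm_i\isubst\val\var$ to obtain $\tm_i\bsw{k_i'}\ntmtwo_i$ with $\ntmtwo_i\isubst\val\var\bsw{k_i-k_i'}\ntm_i$, where $\ntm_i$ denotes the normal form from the premise. The natural candidate for $\ntmtwo$ is $\ntmtwo_1\ntmtwo_2$ combined via $\bswappnf$, which works provided this is not itself a $\beta$-redex---that is, whenever $\ntmtwo_1$ is not an abstraction, or $\ntmtwo_2$ is not a value. In that situation a short step-count verification (distinguishing whether the original derivation ended with $\bswappnf$ or $\bswbeta$) completes the case. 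The remaining case is $\bswbeta$ with $\ntmtwo_1 = \la\vartwo{\tmthree'}$ an abstraction and $\ntmtwo_2 = \valtwo'$ a value: here $\ntmtwo_1\ntmtwo_2$ is itself a redex and we must fire it, invoking the IH once more on the body derivation $\tmthree\isub\vartwo\valtwo \bsw{k_3}\ntm$ after rewriting its subject as $(\tmthree'\isub\vartwo{\valtwo'})\isubst\val\var$ via the substitution-commutation identity (valid under the standard $\alpha$-assumptions $\vartwo\neq\var$ and $\vartwo\notin\fv\val$). Combining the result with $\bswbeta$ then yields the desired $\ntmtwo$ and $k'$.

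The main obstacle is the asymmetry created by substitution: it can both \emph{create} redexes (when $\ntmtwo_1 = \var$ and $\val$ is an abstraction, so $\tm$ stops at the inert normal form $\var\ntmtwo_2$ while $\ntmtwo\isubst\val\var$ continues with an extra $\beta$-step) and \emph{reveal} redexes that demand a further recursive call (the $\la\vartwo{\tmthree'}\,\valtwo'$ case). Making the step counts balance exactly in all subcases---most crucially checking that the extra $\beta$-step introduced by substitution is precisely absorbed by the $+1$ from the $\bswbeta$ premise---is the central arithmetic of the proof.
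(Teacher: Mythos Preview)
Your induction on the big-step derivation is correct, but the paper takes a quite different and much shorter route. Rather than unfolding the big-step rules, the paper argues via the small-step side: since $\tm\isubst\val\var$ normalizes, $\tm$ must normalize too (otherwise small-step substitutivity would make $\tm\isubst\val\var$ diverge), say $\tm\bsw{k'}\ntmtwo$; then small-step substitutivity gives $\tm\isubst\val\var \tow^{k'} \ntmtwo\isubst\val\var$, and the \emph{diamond property} of $\tow$ forces the remaining reduction $\ntmtwo\isubst\val\var \tow^{k-k'} \ntm$ to have exactly the right length. The big-step/small-step equivalence converts this back into the required $\bsw{}$ statements.

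The trade-off is instructive. The paper's argument is three lines but leans on two external ingredients---the diamond property and the big-step/small-step correspondence---so it would not transfer to a setting where step counts to normal form are not path-independent. Your direct induction is self-contained at the big-step level and computes the split $k = k' + (k-k')$ explicitly in each case, without ever invoking diamond; the price is the case analysis you describe (redex created by substitution, redex revealed at $\ntmtwo_1\ntmtwo_2$, and the substitution-commutation identity for the body premise). Both are sound; the paper's version is the one that generalizes most easily to the VSC variant later on, where the same diamond-based argument is reused verbatim.
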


\begin{proof}
If  $\tm\isubst\val\var \bsw k \ntm$, then $ \tm \bsws \ntmtwo$ because if $\tm$ diverges then $\tm\isubst\val\var$ diverges as well by substitutivity of $\tow$ {(\reflemma{stability_weak})}.
	Then there exists $k'$ such that $\tm \bsw {k'} \ntmtwo$. Note that by substitutivity we have $ \tm\isubst\val\var \tow^{k'} \ntmtwo\isubst\val\var$, and so $\ntmtwo\isubst\val\var\bsw {k-k'} \ntm$ because the reduction is diamond, hence all normalizing reduction sequences have the same length.
\end{proof}

The second key intermediate property is the coherence of naive simulations with respect to reduction and substitution.
\begin{toappendix}
\begin{proposition}[{Coherence of simulation, reduction, and substitution}]
\label{prop:ncbv-coherence}
Let $\rel$ be a naive simulation, $\ntm \lasrelncbv \ntmtwo$, and $\val\lasrelncbv\valtwo$.
\begin{enumerate}
\item \emph{Normal forms}: if $\ntm\isub\var\val$ is $\tow$-normal then $\ntmtwo\isub\var\valtwo$ is $\tow$-normal and\\ $\ntm\isub\var\val \lasrelncbv \ntmtwo\isub\var\valtwo$.
\item \emph{Steps}: if $\ntm\isub\var{\val} \tow \tm$
	then $\ntmtwo\isub\var{\valtwo}  \tow \tmtwo$ and $\tm \lasrel \tmtwo$.
\end{enumerate}
\end{proposition}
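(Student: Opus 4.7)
The plan is to prove both parts simultaneously by induction on the derivation of $\ntm \lasrelncbv \ntmtwo$, dispatching on the last Lassen closure rule applied. The rules $\scvar$, $\scabs$, $\scapp$, and $\scsub$ give structural sub-cases: here one applies the induction hypothesis to the premises and re-combines via the same closure rules. The $\scapp$ case in part 1 needs an additional analysis: $\tow$-normality of $(\tm_1\isub\var\val)(\tm_3\isub\var\val)$ prevents a weak $\betav$-redex at the root, which constrains the shape of the normal forms of the components and transfers to $(\tm_2\isub\var\valtwo)(\tm_4\isub\var\valtwo)$ via the IH; the $\scsub$ case requires reassociating nested substitutions before applying the IH twice.

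The crucial case is the base $\sclift$, where $\ntm \rel \ntmtwo$ with $\relsym$ a naive simulation. For part 1, small-step substitutivity of $\tow$ (\refprop{stability_weak}) implies that any reduction from $\ntm$ lifts to a reduction from $\ntm\isub\var\val$; thus $\tow$-normality of the substituted term forces $\ntm$ itself to be $\tow$-normal, in particular ruling out clause (nai 1). The appropriate clause among (nai 2)--(nai 4) then applies, pinning down the shape of $\ntm$ and giving matching weak normal form data for $\ntmtwo$ together with pairwise $\rel$-related sub-components. Big-step substitutivity (\reflemma{splitting_weak}), combined with the diamond property of $\tow$ to keep step counts consistent, transports these normal forms through substitution; the resulting normal form of $\ntmtwo\isub\var\valtwo$ is assembled into a Lassen relation with $\ntm\isub\var\val$ by re-applying $\scvar$, $\scabs$, $\scapp$ on matched components, $\sclift$ at the leaves, and $\scsub$ to inject the value substitution $\val \lasrelncbv \valtwo$. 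Part 2 follows an analogous pattern: a reduction $\ntm\isub\var\val \tow \tm$ decomposes via big-step substitutivity into an inner reduction of $\ntm$, mirrored in $\ntmtwo$ by the simulation clauses, and a substitution phase mirrored via $\val \lasrelncbv \valtwo$.

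The main obstacle is precisely this $\sclift$ base case, which is the heart of the proof: the simulation clauses describe only the weak normal forms of the original $\ntm$ and $\ntmtwo$, whereas the claim concerns their substituted versions. Bridging the two views requires the coordinated use of small-step substitutivity, big-step substitutivity, and the diamond property of $\tow$, so that normality and reductions can be reliably transported through substitution while preserving the step-counting on which the big-step judgement rests. Once this case is under control, the remaining structural sub-cases reduce to careful bookkeeping.
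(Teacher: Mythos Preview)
There is a basic confusion in your plan about what $\lasrelncbvsym$ is. It is \emph{not} the Lassen closure $\lasrelsym$; rather, $\lasrelncbvsym = \ncbvfp{\lasrelsym}$, the naive operator applied to the Lassen closure. Thus $\ntm \lasrelncbv \ntmtwo$ means that one of the clauses (nai~1)--(nai~4) holds with the sub-terms related by $\lasrelsym$. There is no Lassen derivation ending in $\scvar$, $\scabs$, $\scapp$, $\scsub$, or $\sclift$ to dispatch on here, so your proposed case split does not match the definition. Even under the charitable reading---first convert to $\ntm \lasrel \ntmtwo$ via \reflemma{lasrelncbv-normal-forms-lasrel-left-to-right} and then induct on that Lassen derivation---the $\sclift$ case gives only $\ntm \rel \ntmtwo$, and unpacking that via the simulation clauses yields sub-relations that are \emph{not} sub-derivations, so your IH does not apply; you would have to redo the full case analysis on the shape of $\ntm$ inside that ``base'' case.

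The paper proceeds exactly by that case analysis, directly: structural induction on the normal form $\ntm$. Because $\ntm$ is normal, $\ntm \lasrelncbv \ntmtwo$ immediately fixes which clause applies and hence the shape of $\ntmtwo$; in the compound case $\ntm=\ntmONE\ntmTWO$ the sub-relations $\ntmONE \lasrel \ntmONEtwo$ and $\ntmTWO \lasrel \ntmTWOtwo$ are promoted back to $\lasrelncbvsym$ via \reflemma{lasrelncbv-normal-forms-lasrel-right-to-left} before invoking the IH. Part~2 similarly notes that a step from $\ntm\isub\var\val$ forces $\ntm=\ntmONE\ntmTWO$ and splits on whether the step is at the root (both substituted components normal, a fresh $\betav$-redex) or inside one component. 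No big-step substitutivity or diamond is needed: those tools live one level up, in the $\scsub$ case of \refprop{main-lemma_naive}, which \emph{calls} the present lemma. Your sketch inverts this dependency---and your Part~2 description speaks of ``an inner reduction of $\ntm$'', but $\ntm$ is already normal; the step in $\ntm\isub\var\val$ comes solely from redexes created by the substitution.
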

\end{toappendix}
{Note that the second point has $\lasrel$ rather than $\lasrelncbv$ in the conclusion. This is because in general $\tm$ and $\tmtwo$ are not normal. In the proof of the next proposition, it is shown that the normal forms of $\tm$ and $\tmtwo$ are in fact $\lasrelncbv$-related.}

We can now prove the crucial property of Lassen's closure.
\begin{toappendix}
\begin{proposition}
	\label{prop:main-lemma_naive}
		Let $\relsym$ be a naive simulation.
		\begin{enumerate}
		\item \emph{Technical auxiliary statement}: if $\tmrone\lasrel\tmrtwo$ and $\tmrone \bsw k \ntm$ then $\tmrtwo\bsws \ntmtwo$ and $\ntm \lasrelncbv \ntmtwo$.		
		\item \emph{Lassen's closure preserves naive simulations}:  $\lassenop\relsym$ is a naive simulation.
		\end{enumerate}
\end{proposition}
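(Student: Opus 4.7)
The plan is to prove the two statements together, since the second follows routinely from the first by case analysis on the shape of the normal form. The first statement is the genuine content: a substitutive-closure statement for big-step evaluation under Lassen's closure.

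For part (1), I would proceed by strong induction on $k$, with an inner case analysis on the last rule used to derive $\tmrone \lasrel \tmrtwo$. The base cases $\sclift$ and $\scvar$ are immediate: for $\sclift$, $\tmrone \rel \tmrtwo$ forces $\tmrtwo$ to converge to some $\ntmtwo$ with $\ntm \relncbv \ntmtwo$ (hence $\lasrelncbv$ by $\sclift$), unless $\tmrone$ diverges---which contradicts $\tmrone \bsw k \ntm$; for $\scvar$ we have $\tmrone = \tmrtwo = \var$, with $k = 0$. The $\scabs$ case is also immediate, as $\tmrone = \la\var\tmrone'$ is already a value, so $k=0$, $\ntm = \tmrone$, and we just use $\scabs$ to conclude.

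The interesting cases are $\scapp$ and $\scsub$. For $\scapp$, suppose $\tmrone = \tmrone_1\tmrone_2 \lasrel \tmrtwo_1\tmrtwo_2 = \tmrtwo$ with $\tmrone_i \lasrel \tmrtwo_i$. The derivation of $\tmrone \bsw k \ntm$ splits into either $\bswappnf$ (normal-form case, straightforward by induction on both premises), or $\bswbeta$, where $\tmrone_1 \bsw{k_1} \la\var\tmthree$, $\tmrone_2 \bsw{k_2} \val$, and $\tmthree\isub\var\val \bsw{k_3} \ntm$. By induction, $\tmrtwo_1 \bsws \la\var\tmthree'$ with $\tmthree \lasrel \tmthree'$ (using rule $\scabs$ to unpack), and $\tmrtwo_2 \bsws \valtwo$ with $\val \lasrel \valtwo$. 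A short sub-argument shows that $\val \lasrel \valtwo$ implies $\valtwo$ is a value (by inspecting the rules that can build $\lasrel$-pairs with a value on the left). Now we want to conclude $\tmthree'\isub\var\valtwo \bsws \ntmtwo$ with $\ntm \lasrelncbv \ntmtwo$, but the induction hypothesis is on $\tmthree\isub\var\val$, not on $\tmthree'\isub\var\valtwo$. Here the rule $\scsub$ saves us: $\tmthree\isub\var\val \lasrel \tmthree'\isub\var\valtwo$, so by IH on $k_3$ we get the desired conclusion. The $\scsub$ case uses the same ingredients together with big-step substitutivity (\reflemma{splitting_weak}), which lets us decompose the evaluation of $\tmrone_1\isub\var{\val_1}$ into the evaluation of $\tmrone_1$ followed by substitution, making the induction applicable.

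For part (2), given $\tmrone \lasrel \tmrtwo$, we must verify that one of the clauses (nai 1)--(nai 4) holds for $\lassenop\relsym$. If $\tmrone$ diverges, we use clause (nai 1)---here we need that if $\tmrone \lasrel \tmrtwo$ and $\tmrone$ diverges then $\tmrtwo$ diverges, which follows from the contrapositive of part (1). Otherwise $\tmrone \bsws \ntm$, and by part (1), $\tmrtwo \bsws \ntmtwo$ with $\ntm \lasrelncbv \ntmtwo$. A final case analysis on the shape of $\ntm$ (variable, abstraction, or application normal form), combined with inspection of the $\lassenop\relsym$ rules that can relate $\ntm$ to $\ntmtwo$, forces $\ntmtwo$ to have the same outermost shape, and supplies the required sub-relations to match clauses (nai 2), (nai 3), or (nai 4). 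The main obstacle is the $\scapp$ case of part (1), where aligning the induction hypothesis with the substitution produced by $\bswbeta$ requires precisely the design choice of including $\scsub$ in Lassen's closure.
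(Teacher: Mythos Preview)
Your overall approach is right, and the $\scapp$-with-$\bswbeta$ case you work out is exactly how the paper handles it. But there are two genuine gaps.

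\textbf{The induction scheme is too weak.} Strong induction on $k$ alone fails already in the $\bswappnf$ sub-case of $\scapp$: from $\tmrone_1\tmrone_2 \bsw{k_1+k_2} \ntmONE\ntmTWO$ you need the IH on $\tmrone_1 \bsw{k_1}\ntmONE$ and $\tmrone_2 \bsw{k_2}\ntmTWO$, but if (say) $k_2=0$ then $k_1=k$ and you have nothing to induct on. The paper fixes this by inducting lexicographically on $(k,d)$ where $d$ is the size of the derivation of $\tmrone \lasrel \tmrtwo$: when $k$ does not strictly drop, $d$ does (the premises of $\scapp$ are strict sub-derivations). The same issue bites in the first IH application of the $\scsub$ case.

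\textbf{The $\scsub$ case needs the coherence lemma, not just splitting.} Big-step substitutivity gives you $\tmrone_1 \bsw{k_1}\ntm_{\tmrone_1}$ and $\ntm_{\tmrone_1}\isub\var{\val_1}\bsw{k_2}\ntm$. When $k_1>0$ you can recurse on $k_2<k$, fine. But when $k_1=0$ (i.e.\ $\tmrone_1$ is already normal), you would like to apply the IH to $\ntm_{\tmrone_1}\isub\var{\val_1} \lasrel \ntm_{\tmrtwo_1}\isub\var{\val_2}$, except that this new $\scsub$-derivation need not be smaller in either $k$ or $d$ than the original one. The paper handles this via \refprop{ncbv-coherence}: if the substituted normal form is still normal, Point 1 gives the conclusion directly; if it creates a redex, Point 2 produces a single $\tow$ step on both sides with the reducts related by $\lasrel$, so the IH now applies at $k-1$. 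Your sketch does not mention this lemma, and without it the argument does not close. You correctly flag $\scapp$ as needing the $\scsub$ rule, but the genuinely hard case is $\scsub$ itself.

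A minor point on Part 2: clause (nai~1) only requires $\tmrone\bsweakdiv$, so you do not need $\tmrtwo$ to diverge (and the contrapositive of Part 1 would go the other direction anyway). The paper simply observes that unfolding Part 2 is exactly Part 1.
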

\end{toappendix}
\begin{proof} 
	\hfill
	\begin{enumerate}
	\item \emph{Sketch} (complete proof in Appendix \ref{chapter:proof-compatibility-naive} of the additional material on HotCRP):	by induction on $(k,d)$ where $d$ is the size of the derivation of $\tmrone \lasrel \tmrtwo$. We proceed by case analysis on the last rule of the derivation $\tmrone \lasrel \tmrtwo$. Cases ($\sclift$), ($\scvar$), and ($\scabs$) are immediate by definition. Case ($\scapp$) relies on a second case analysis (on the last rule of the $\tmrone \bsw k \ntm$ derivation). The sub-cases are routine and may depend on the ($\scsub$) rule. Case ($\scsub$) is the core of the proof. It starts by applying big-step substitutivity (\reflemma{splitting_weak}) to $\tmrone = \tm\isub\var\val$ and then, depending on whether the obtained $\ntmtwo\isubst\val\var$ is normal, it applies the corresponding coherence property of naive simulations with respect to evaluation (\refprop{ncbv-coherence}).
	\item Unfolding the statement one obtains exactly the statement of point 1.\qedhere
	\end{enumerate}
\end{proof}
Finally, we can use the preservation property to prove the redundancy of the closure, from which the compatibility and the soundness of naive similarity follows.
\begin{theorem}[Compatibility and soundness of $\leqncbv$]
	\hfill
	\begin{enumerate}
	\item \emph{Redundancy of Lassen's closure}: $\leqncbv \,= \lassenop \leqncbv$.
	\item Naive similarity is compatible and included in the \cbv contextual preorder $\leqcv$.
	\end{enumerate}
\end{theorem}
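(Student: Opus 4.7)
The plan is to derive both facts from the preservation property of the Lassen closure already in hand. The proof is essentially two short steps: a maximality argument for the fixpoint, followed by a direct application of \refprop{congruence-included-contextual-equivalence}.

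\textbf{Step 1: redundancy of the closure.} The inclusion $\leqncbv \subseteq \lassenop \leqncbv$ is immediate from the lifting rule $(\sclift)$. For the converse, I observe that $\leqncbv$ is itself a naive simulation, since it is defined as the union of all naive simulations, and this class is closed under unions. The preceding preservation property then gives that $\lassenop \leqncbv$ is a naive simulation as well. Since $\leqncbv$ is the \emph{largest} naive simulation, $\lassenop \leqncbv \subseteq \leqncbv$ follows at once by maximality.

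\textbf{Step 2: compatibility and soundness.} Compatibility of $\leqncbv$ follows from Step 1 together with the contextual closure already built into $\lassenop \cdot$. Concretely, given $\tm \leqncbv \tmp$ and a context $\ctx$, a straightforward induction on $\ctx$, using $(\scvar)$ for the hole/variable base case and $(\scabs)$, $(\scapp)$ for the inductive cases, yields $\ctxp\tm \lassenop \leqncbv \ctxp\tmp$; by Step 1 this is exactly $\ctxp\tm \leqncbv \ctxp\tmp$. Soundness then reduces, via \refprop{congruence-included-contextual-equivalence}, to adequacy, which is immediate from the shape of the defining clauses: if $\tm \leqncbv \tmp$ and $\tm \bsweak$, then clause $(\text{nai 1})$ cannot apply, so one of $(\text{nai 2})$, $(\text{nai 3})$, $(\text{nai 4})$ does, and each of these explicitly forces $\tmp$ to converge as well.

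\textbf{Main obstacle.} Essentially no difficulty remains at this point: the real work has already been absorbed into the preceding preservation proposition, whose delicate $(\scsub)$ case relies on big-step substitutivity (\reflemma{splitting_weak}) and the coherence of simulations with reduction and substitution (\refprop{ncbv-coherence}). Once preservation is in place, the final theorem is purely formal. The only minor care point is noting that contexts in this calculus are built from variables, abstractions, and applications only, so the compatibility induction is exhausted by the rules $(\scvar)$, $(\scabs)$, $(\scapp)$; the substitutive rule $(\scsub)$ plays no role here, being reserved for the preservation proof itself, where it is needed to absorb the $\beta$-redex fired when an abstraction meets its argument during big-step evaluation.
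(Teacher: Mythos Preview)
Your proof is correct and follows essentially the same approach as the paper: lift/$\sclift$ for one inclusion, preservation (\refprop{main-lemma_naive}) plus maximality of $\leqncbv$ for the other, then compatibility from the closure rules and soundness via \refprop{congruence-included-contextual-equivalence} and adequacy. The only minor imprecision is that the base case of the context induction ($\ctx=\ctxhole$) uses $(\sclift)$, not $(\scvar)$; the rule $(\scvar)$ is rather what provides reflexivity of $\lassenop{\leqncbv}$, which you need in the inductive application cases to relate the unchanged subterm to itself.
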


\begin{proof}
\hfill
\begin{enumerate}
\item By construction of $\lassenop\ctxhole$, $\leqncbv \subseteq \lassenop\leqncbv$ (by rule $\sclift$). Preservation of naive simulations by Lassen's closure (\refprop{main-lemma_naive}) and the fact that $\leqncbv$ is a naive simulation give that $\lassenop\leqncbv$ is a naive simulation. By definition, $\leqncbv$ is the maximal naive simulation hence $\lassenop\leqncbv \subseteq \leqncbv$. 
\item Compatibility follows from point 1, because $\lassenop\leqncbv$ is compatible by definition. Inclusion in $\eqcv$ follows by \refprop{congruence-included-contextual-equivalence} and by adequacy of $\leqncbv$, which is trivial.\qedhere
\end{enumerate}
\end{proof}

%

\section{Lassen's Eager Normal Form Simulation}
\label{sect:enf}
The \cbv nf-simulation of reference in the literature is due to Lassen \cite{LassenEnf}.  Lassen's simulation is interesting because it is not defined by simply changing the notion of reduction in Sangiorgi's. Lassen indeed exploits stuck redexes in open terms, and defines a simulation that \emph{unstucks} them, a mechanism which we shall refer to as \emph{stop-and-go}. 

\paragraph{Grammar of Left Normal Forms} Lassen's similarity is built using \emph{left reduction} $\tolw$. The starting point is the observation that, despite the limits of left reduction, it admits a description of its normal forms via left contexts which is extremely simple and elegant.
\begin{lemma}[Unique decomposition, \cite{LassenEnf}]
\label{l:las-unique-dec}
	Any (possibly open) term is either a value or admits a unique decomposition $\levctxp{\val\valtwo}$. In particular, left normal forms can be described as follows:
	\begin{center}
	$\begin{array}{c@{\hspace{.5cm}}rcc}
	\textsc{Left normal forms}  &
	\ntm,\ntmtwo & \grameq &  \val\mid \levctxp{\var\val}
	\end{array}
	$\end{center}

\end{lemma}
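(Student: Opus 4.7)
The plan is structural induction on $\tm$, from which the left normal form grammar follows as an immediate corollary. The proof splits into existence and uniqueness.

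For existence in the inductive step $\tm = \tm_1\tm_2$, I would apply the inductive hypothesis to both $\tm_1$ and $\tm_2$. Three subcases arise. First, if $\tm_1 = \val$ and $\tm_2 = \valtwo$ are both values, take the empty left context so that $\tm = \ctxhole[\val\valtwo]$. Second, if $\tm_1$ is a value $\val$ but $\tm_2 = \lctx'[\val'\valtwo']$ is not a value, use the production $\val\lctx$ to get $\tm = (\val\lctx')[\val'\valtwo']$. Third, if $\tm_1 = \lctx'[\val'\valtwo']$ is not a value, use the production $\lctx\tm$ to get $\tm = (\lctx'\tm_2)[\val'\valtwo']$. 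In each case, the IH-provided decomposition packages directly via one of the three $\lctx$ productions.

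For uniqueness, again by induction on $\tm$, the key observation is that any term of the form $\lctx[\val\valtwo]$ is always an application (never a value), since each $\lctx$ production either returns $\val\valtwo$ directly at the hole or wraps it in an application. Hence the two disjuncts of the lemma are mutually exclusive. At the application step $\tm = \tm_1\tm_2$, the three productions of $\lctx$ are discriminated by the value/non-value status of $\tm_1$ and $\tm_2$: the outermost constructor must be $\ctxhole$ iff both are values, $\val\lctx$ iff $\tm_1$ is a value and $\tm_2$ is not, and $\lctx\tm$ iff $\tm_1$ is not a value. These three options are mutually exclusive (using that $\lctx[\val\valtwo]$ is never a value), which pins down the outermost choice; uniqueness of the rest then reduces to the inductive hypothesis applied to the subterm the hole enters.

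Finally, the left normal form characterization is immediate from unique decomposition. A value is always $\tolw$-normal since weak reduction does not go under abstractions; and a non-value $\lctx[\val\valtwo]$ is $\tolw$-normal exactly when the exposed $\val\valtwo$ is not a $\betav$-redex, which by cases on $\val$ means $\val$ must be a variable. Combining, left normal forms are exactly the values together with the terms of the shape $\lctx[\var\val]$. I do not anticipate any real obstacle: the only delicate point is making uniqueness rigorous, but the value/non-value dichotomy at the top application cleanly separates the three productions and makes the argument routine.
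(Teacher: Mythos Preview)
Your proposal is correct. The paper does not give its own proof of this lemma: it is cited from \cite{LassenEnf} and used as a known result, so there is no in-paper proof to compare against. Your structural-induction argument is the standard one; the only point worth tightening is the normal-form direction, where you should make explicit that any $\tolw$-redex position is a left-context decomposition of the form $\lctx[\val\valtwo]$ (since a root $\betav$-redex is itself a pair of values), and hence by uniqueness coincides with \emph{the} decomposition---so the term is left-normal iff that unique $\val\valtwo$ is not a redex, i.e., $\val$ is a variable.
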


The simple structure of left normal forms is then used by Lassen to define eager nf-simulation. The crucial clause is the fourth one, which realizes the stop-and-go mechanism. \emph{Notation}: let $\bswlefts $ and $\bswleftdiv$ be big-step termination and divergence with respect to left \cbv reduction $\tolw$.

\begin{definition}[\Enf simulation, \cite{LassenEnf}]
	A relation $\relsym$ is an \emph{eager normal form (\enf) simulation} if $\relsym\subseteq\relenf$, where $\tm \relenf\tmp$ holds whenever $\tm,\tmp$ satisfy one of the following clauses:
	\begin{center}
		$\begin{array}{r@{\hspace{.3cm}}r@{\hspace{.3cm}}l@{\hspace{.3cm}}l@{\hspace{.3cm}}lll}
		\textup{(enf 1)} & &&\tm\bswleftdiv & \ie ~ \text{has no} \tolw \text{-normal form.}
		\\
		\textup{(enf 2)} & \tm \bswlefts \var  &\text{and}& \tmp \bswlefts \var
		\\
		\textup{(enf 3)} & \tm \bswlefts \la\var\tmfirst &\text{and}& \tmp \bswlefts \la\var\tmpfirst 
		& \text{with} ~ \tmfirst \rel \tmpfirst
		\\
		\textup{(enf 4)} & \tm \bswlefts \levctxp{\var\val}  &\text{and}& \tmp \bswlefts \levctxtwop{\var\valtwo} 
		&
		\textup{with} ~ \val\rel\valtwo ~ \text{and}  ~ \levctxp{\varthree}\rel\levctxtwop{\varthree} 
		\\
		&& &&\text{where} ~ \varthree ~ \text{is not free in} ~ \levctx ~ \text{or} ~ \levctxtwo
		\end{array}
		$\end{center}
	\emph{\Enf similarity}, written $ \leqenf $, is defined by co-induction as the largest \enf simulation, that is, it is the union of all \enf simulations. We say that $\tm$ is \emph{\enf similar} to $\tmp$ if $\tm \leqenf \tmp$.
\end{definition}

\paragraph{Stop-and-Go, Double Task, and Left Identity.} The stop-and-go clause (enf 4) cleverly does two tasks at the same time, subsuming clause (nai 4) of naive similarity in a bottom-up way and capturing the left identity equivalence $\equivlid$. 

About (nai 4), consider for instance comparing $\tm \defeq \var\vartwo\varfour$ with itself via enf simulations. Clause (enf 4) reduces it to compare $\varthree\varfour$ and $\vartwo$ with themselves, since $\tm = \lctxp{\var\vartwo}$ with $\lctx\defeq \ctxhole\varfour$. Then, it reduces the first one to compare $\varthree'$ and $\varfour$ with themselves. In contrast, (nai 4) (or Sangiorgi's (cbn 4)) proceeds \emph{top-down}, by splitting $\var\vartwo\varfour$ into $\var\vartwo$ and $\varfour$, and then splitting $\var\vartwo$.

About the left identity equivalence, consider showing that $\tm \defeq \Id(\var\var)$ and $\tmtwo\defeq \var\var$ are enf similar. Evaluation is stuck on $\tm$, that is, it \emph{stops} because $\var\var$ is not a value. Note that $\tm$ has shape $\lctxp{\var\val}$ with $\lctx=\Id\ctxhole$ and $\val=\var$. The idea is that a term $\tmtwo$ that is $\leqenf$-similar to $\tm$ has to get stuck as well, or anyway decompose in a similar way. Now, $\tmtwo$ is not stuck, because there are no blocked redexes, but it has nonetheless shape $\lctxtwop{\var\valtwo}$ by taking $\lctxtwo = \ctxhole$ and $\valtwo = \var$. The comparison between $\tm$ and $\tmtwo$ is then reduced to compare the two pairs $\lctxp\varthree = \Id \varthree$ and $\lctxtwop\varthree=\varthree$, and $\val=\var$ and $\valtwo=\var$. The second pair trivially matches, because the identity relation is an enf simulation. About the first pair, note that $\Id\varthree$ is no longer stuck, that is, it can \emph{go}. And for the next round of comparison (of $\Id\varthree$ and $\varthree$) we have to first $\tolw$ reduce the terms, so that $\Id\varthree \tolw \varthree$ and thus also the first pair trivially matches.

Summing up, the enf simulation relating $\Id(\var\var)$ and $\var\var$ is the following  simulation $\relsym$: \[\relsym =\{(\Id(\var\var),\var\var), (\var,\var), (\Id\varthree,\varthree), (\varthree,\varthree)\}\]
This is  just an instance, but the following more general result holds.
\begin{proposition}[Enf bisimilarity validates left identity]
$\Id \tm \eqenf \tm$ for any term $\tm$.
\end{proposition}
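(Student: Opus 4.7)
The plan is to exhibit an explicit eager normal form bisimulation that witnesses $\Id\tm \eqenf \tm$ for every term $\tm$. Define
\[
\relsym \;\defeq\; \{(\Id\tm,\tm),(\tm,\Id\tm) \mid \tm \text{ a term}\} \;\cup\; \{(\tm,\tm) \mid \tm \text{ a term}\}.
\]
The goal is to check that $\relsym \subseteq \relenf$, i.e.\ $\relsym$ is an \enf simulation (and, by symmetry, an \enf bisimulation). The pairs $(\tm,\tm)$ trivially satisfy one of the four clauses in all cases, so the interesting verification is for pairs of the form $(\Id\tm,\tm)$. This is done by case analysis on the behaviour of $\tm$ under $\tolw$, using the observation that $\Id$ is a value, so $\Id\ctxhole$ and, more generally, $\Id\levctx$ are left contexts.

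First I would dispatch the easy cases. If $\tm\bswleftdiv$ then $\Id\tm\bswleftdiv$ as well (reduction inside a left context of shape $\val\lctx$), so clause (enf 1) applies. If $\tm\bswlefts\var$, then $\Id\tm\bswlefts\Id\var\tolw\var$, and clause (enf 2) applies. If $\tm\bswlefts\la\var\tm_1$, then $\Id\tm\bswlefts\Id(\la\var\tm_1)\tolw\la\var\tm_1$, and clause (enf 3) requires $\tm_1\rel\tm_1$, which holds by the reflexive fragment of $\relsym$.

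The main step, which I expect to be the crux of the argument, is the stuck case: suppose $\tm\bswlefts\levctxp{\var\val}$ is the unique decomposition of the left normal form of $\tm$ given by \refl{las-unique-dec}. Since $\Id$ is a value, $\Id\levctx$ is also a left context, and $\Id(\levctxp{\var\val}) = (\Id\levctx)\ctxholep{\var\val}$ is $\tolw$-normal (the outer $\Id$ cannot fire because its argument $\levctxp{\var\val}$ is not a value). Hence $\Id\tm\bswlefts(\Id\levctx)\ctxholep{\var\val}$, so clause (enf 4) fires. It reduces the verification to two obligations: $\val\rel\val$, which holds by reflexivity, and $(\Id\levctx)\ctxholep\varthree\,\rel\,\levctxp\varthree$ for a fresh variable $\varthree$, that is, $\Id(\levctxp\varthree)\,\rel\,\levctxp\varthree$—precisely a pair of the form $(\Id\tmtwo,\tmtwo)$, already in $\relsym$.

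Finally, the symmetric pairs $(\tm,\Id\tm)$ are handled identically by swapping the roles, establishing that $\relsym$ is an \enf bisimulation. Since $\leqenf$ (resp.\ $\eqenf$) is the largest \enf (bi)simulation, $\relsym\subseteq{\eqenf}$, and in particular $\Id\tm\eqenf\tm$. The only subtle point to make rigorous is the claim that the unique decomposition of $\Id(\levctxp{\var\val})$ as a left normal form is exactly $(\Id\levctx)\ctxholep{\var\val}$, which follows immediately from \refl{las-unique-dec} together with the fact that $\Id\levctx$ is itself a left context and $\var\val$ is the same stuck application as inside $\tm$.
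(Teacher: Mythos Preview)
Your proposal is correct and follows exactly the approach the paper indicates: define $\relsym$ as the identity relation together with the pairs $(\Id\tm,\tm)$ and $(\tm,\Id\tm)$, then verify it is an \enf bisimulation by case analysis on the left normal form of $\tm$, with the key observation that $\Id\levctx$ is again a left context so the stop-and-go clause (enf~4) feeds back a pair already in $\relsym$. This is precisely what the paper has in mind when it writes that ``one only needs to write the right relation (the identity relation $\cup$ the equivalence to validate) and show that it is indeed an \enf bisimulation.''
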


\paragraph{Left/Right/Weak Non-Equivalent Variants} Replacing left reduction with right reduction, one obtains a unique decomposition lemma such as \reflemma{las-unique-dec} with respect to right contexts, and, accordingly, a notion of \emph{right Lassen similarity}---let us denote it with $\leqrenf$. It turns out that $\leqenf$ and $\leqrenf$ are different, incomparable similarities. For instance, $\Omega$ and $\Omega (\var\var)$ are enf bisimilar (because they are both $\tolw$-divergent) but not renf bisimilar, because $\leqrenf$ stops on $\Omega (\var\var)$ which is $\torw$ normal. Similarly, $\Omega$ and $\var\var\Omega $ are not enf bisimilar while they are renf bisimilar.

Replacing left reduction with weak reduction is instead problematic for another reason. Since $\tow$ is non-deterministic, the unique decomposition lemma (\reflemma{las-unique-dec}) fails for it. It is not clear then what would be the right definition of weak enf similarity, as the stop-and-go clause can be generalized in more than one way. An appropriate definition for weak enf similarity, as a generalization, should include terms related by enf similarity. However, it is also unclear (to us) how to prove the compatibility of some of such generalizations (we tried but failed\footnote{The definitions we can come up with for weak enf similarities that could be compatible are not able to relate as much terms as enf similarity does.}). 

The next paragraphs discuss the principles that are (in)validated by enf similarity.

\paragraph{$\beta_v$-Conversion}\cadr{
From the definition of $\leqenf$, it immediately follows that enf bisimilarity contains the $\rtobv$ root rule, that is, that if $\tm \rtobv \tmtwo$ then $\tm\eqenf \tmtwo$, simply because $\tm$ and $\tmtwo$ have the same left normal form. Since $\eqenf$ is a compatible equivalence relation, it turns out that $\eqenf$ contains the whole of $\betav$-conversion $=_{\betav}$, thus it contains left reduction as well as weak and right reductions.}{As for naive similarity, enf similarity contains $\beta_v$-conversion, thus it contains left reduction as well as weak and right reductions.}

\begin{proposition}[$\betav$-conversion is validated by enf bisimilarity]
If $\tm =_{\betav} \tmtwo$ then $\tm \eqenf \tmtwo$.
\end{proposition}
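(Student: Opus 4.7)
The plan is to split the proof into two parts: first establishing that the root reduction $\rtobv$ is included in $\eqenf$, and then lifting this to all of $\betav$-conversion using compatibility of $\eqenf$ (which is proven separately in the paper, analogously to the compatibility result established earlier for $\leqncbv$) together with the fact that bisimilarity is an equivalence relation.

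For the root inclusion, I would argue as follows. Suppose $\tm \rtobv \tmtwo$, so $\tm = (\la\var\tmthree)\val$ and $\tmtwo = \tmthree\isub\var\val$. Then $\tm \tolw \tmtwo$ at the root, and by determinism of left reduction, $\tm$ and $\tmtwo$ have \emph{exactly} the same big-step left-evaluation behavior: either both satisfy $\tm, \tmtwo \bswleftdiv$, or both converge to the same left normal form $\ntm$. Consequently, if one inspects the four clauses defining $\relenf$, the pair $(\tm,\tmtwo)$ satisfies whichever clause their common evaluation outcome dictates, using pairs of the form $(\tmfirst,\tmfirst)$, $(\val,\val)$ and $(\levctxp\varthree,\levctxp\varthree)$ in the sub-conditions. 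Hence the relation
\[
\relsym \defeq \rtobvsym \cup \rtobvsym^{-1} \cup \{(\tm',\tm') : \tm' \text{ a term}\}
\]
is an enf bisimulation: the identity part is trivially an enf bisimulation (by reflexivity of each clause), and the $\rtobv$-pairs reduce to identity pairs on subterms. Therefore $\rtobvsym \subseteq \eqenf$.

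For the second part, recall that $=_{\betav}$ is defined as the smallest equivalence relation that contains $\rtobv$ and is closed under arbitrary contexts. Since $\eqenf$ is, by assumption (from the compatibility theorem for enf, proved analogously to the naive case in \refsect{naive-compatibility}), a compatible equivalence relation containing $\rtobvsym$, it must contain $=_{\betav}$ by minimality. The last step is a one-line induction on the construction of $=_{\betav}$: reflexivity, symmetry, transitivity, and context closure are all preserved by $\eqenf$.

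The only delicate step is the root case. The potential pitfall is overlooking that clause (enf 4) requires \emph{matching decompositions} $\levctxp{\var\val} $ and $\levctxtwop{\var\valtwo}$; this is not a real obstacle here because we are comparing $\tm$ with itself (after one $\tolw$ step), so the decomposition and the sub-components coincide and the required sub-relations are all identity pairs. Since the compatibility of $\eqenf$ is the genuinely hard theorem and is treated elsewhere, the present proposition is essentially a corollary of compatibility together with the observation that left reduction is deterministic and contained in $\eqenf$.
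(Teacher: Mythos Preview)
Your proof is correct and follows essentially the same approach as the paper. The paper argues tersely that $\rtobv \subseteq \eqenf$ holds ``simply because $\tm$ and $\tmtwo$ have the same left normal form'' and then invokes compatibility and the equivalence-relation structure of $\eqenf$; your explicit construction of the bisimulation $\relsym = \rtobvsym \cup \rtobvsym^{-1} \cup \mathrm{Id}$ is just an unpacking of that first step.
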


\paragraph{Curry and Turing fix-Point Combinators are Enf Bisimilar.} \cadr{It is easy to check that the relation $\relsym$ proving the naive similarity of Curry's and Turing's \cbv fix-point combinators is also a \enf bisimulation.}{The relation $\relsym$ proving that $\curryfix \eqncbv \turingfix$ is not an enf bisimulation. Nonetheless, it is easy to build an enf bisimulation that relates Curry's and Turing's \cbv fix-point combinators.} The fact that those combinators are naive bisimilar means that the \emph{\cadr{unstacking}{unstucking}} aspect of the stop-and-go clause plays no role for their equivalence. 

\begin{proposition}
$\curryfix \eqenf \turingfix$, that is, Curry's and Turing's fix-points are enf bisimilar.
\end{proposition}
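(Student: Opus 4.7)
My plan is to exhibit a small explicit enf bisimulation $\relsym$ containing $(\curryfix, \turingfix)$, verify $\relsym \subseteq \relenf$ by a case check, and do the same for its converse. As the preceding remark warns, the naive-bisimulation relation is not enf: the stop-and-go clause (enf 4) decomposes a normal form $\lctxp{\var\val}$ bottom-up along a left context, rather than top-down as (nai 4) does, so I must regenerate the obligation pairs along (enf 4).

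\textbf{Unrolling the obligations.} I start from $(\curryfix,\turingfix)$. Since $\curryfix$ is an abstraction we have $\curryfix \bswlefts \la\var{\curryfixaux\curryfixaux}$, and a direct computation gives $\turingfix \bswlefts \la\var{\var\la\vartwo{\turingfix\var\vartwo}}$; clause (enf 3) then asks for $\curryfixaux\curryfixaux \rel \var\la\vartwo{\turingfix\var\vartwo}$. Here the left-hand side left-reduces to $\var\la\vartwo{\curryfixaux\curryfixaux\vartwo}$ while the right-hand side is already normal, both of shape $\lctxp{\var\val}$ with $\lctx = \ctxhole$. Clause (enf 4) then generates the value obligation $\la\vartwo{\curryfixaux\curryfixaux\vartwo} \rel \la\vartwo{\turingfix\var\vartwo}$ together with the context obligation $\varthree \rel \varthree$. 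Applying (enf 3) once more, I need $\curryfixaux\curryfixaux\vartwo \rel \turingfix\var\vartwo$; both sides left-reduce to normal forms of shape $(\var\val)\vartwo$, that is, $\lctxp{\var\val}$ with $\lctx = \ctxhole\vartwo$, and clause (enf 4) then asks for $\la\vartwo{\curryfixaux\curryfixaux\vartwo} \rel \la\vartwo{\turingfix\var\vartwo}$ (which is the earlier pair, up to $\alpha$-renaming) together with $\varthree\vartwo \rel \varthree\vartwo$. The process thus closes.

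\textbf{The bisimulation and the main obstacle.} Hence I take $\relsym$ to be the union of the identity relation on all terms, which is trivially an enf bisimulation, with the finite set
\[\{(\curryfix,\turingfix),\ (\curryfixaux\curryfixaux,\,\var\la\vartwo{\turingfix\var\vartwo}),\ (\la\vartwo{\curryfixaux\curryfixaux\vartwo},\,\la\vartwo{\turingfix\var\vartwo}),\ (\curryfixaux\curryfixaux\vartwo,\,\turingfix\var\vartwo)\}\]
ranging over all choices of variables $\var,\vartwo$. Together with its converse, this is the candidate enf bisimulation; verifying that each non-identity pair is discharged by either clause (enf 3) or clause (enf 4), as computed above, is the content of the proof, and the converse direction is perfectly symmetric. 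The main obstacle I anticipate is purely bookkeeping: I must track $\alpha$-renamings through the left-reductions carefully so that the head variable $\var$ in clause (enf 4) matches on both sides, and so that the fresh $\varthree$ used to witness the context obligation $\lctxp\varthree \rel \lctxtwop\varthree$ is genuinely not free in either $\lctx$ or $\lctxtwo$. Once the renaming discipline is fixed, the case check is routine.
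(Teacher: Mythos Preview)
Your proposal is correct and matches the paper's approach: the paper simply asserts that it is easy to build an enf bisimulation relating $\curryfix$ and $\turingfix$ without spelling it out, and you have carried out precisely that construction, correctly regenerating the obligations along clause (enf~4) and closing with the identity relation to absorb the context obligations $\varthree\rel\varthree$ and $\varthree\vartwo\rel\varthree\vartwo$ that (enf~4) produces but (nai~4) does not.
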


\paragraph{$\Omega$-Equivalence} Enf bisimilarity does not validate the $\Omega$-equivalence $\equivomv$. For instance, the same counter-example used for naive similarity works for enf, as we have $\Omega \leqenf \Omega^L$ but $\Omega^L \not\leqenf \Omega$. In fact, $\eqenf$ equates some $\Omega$-terms that are separated by $\eqncbv$ and vice-versa. For instance, let $\delta_3 \defeq \la\var\var\var\var$. The $\Omega$-term $\Omega_3 \defeq \delta_3\delta_3$ is divergent. We have that $\Omega^L_3 \defeq (\la\var\delta_3)(\vartwo\varthree)\delta_3$ is enf bisimilar to $\Omega^L$, because they stop similarly and then both go to diverge. They are instead unrelated with respect to $\leqncbv$, because when compared as normal forms they do not have the same structure. For the vice-versa, consider $\var\var\Omega$ and $\Omega$, which are equated by $\eqncbv$ but separated by $\eqenf$, because $\var\var\Omega$ is left normal but weak divergent. 

\begin{proposition}
Enf bisimilarity does not validate $\Omega$-equivalence $\equivomv$.
\end{proposition}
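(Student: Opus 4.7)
The plan is to exhibit two terms that are both $\Omega$-terms (hence $\equivomv$-related) but fail to be enf bisimilar. The natural candidates are $\Omega = \delta\delta$ and the false normal form $\Omega^L \defeq (\la\var\delta)(\vartwo\varthree)\delta$ introduced in \refsect{plotkin}.

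First, I would verify $\Omega \not\eqenf \Omega^L$ by direct inspection of the clauses. The term $\Omega$ is $\tolw$-divergent, so the only enf clause available for a pair containing $\Omega$ is (enf 1). The term $\Omega^L$, on the other hand, is $\tolw$-normal because its only redex is stuck: its argument $\vartwo\varthree$ is not a value. By \reflemma{las-unique-dec}, $\Omega^L$ decomposes as $\lctxp{\vartwo\varthree}$ with $\lctx \defeq (\la\var\delta)\ctxhole\delta$, so any enf simulation relating $\Omega^L$ to another term must do so via clause (enf 4). Since clauses (enf 1) and (enf 4) are mutually exclusive, no enf simulation can relate $\Omega^L$ and $\Omega$, yielding $\Omega^L \not\leqenf \Omega$ and therefore $\Omega \not\eqenf \Omega^L$.

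Second, I would justify that both $\Omega$ and $\Omega^L$ are $\Omega$-terms, as asserted in \refsect{plotkin}: any closing context $\ctx$ for $\Omega^L$ must eventually substitute values for the free variables $\vartwo$ and $\varthree$, and once this happens the stuck redex unblocks and $\ctxp{\Omega^L}$ reduces to an instance of $\Omega$ sitting in the residual of $\ctx$, so $\ctxp{\Omega^L}$ converges to a value if and only if $\ctxp{\Omega}$ does.

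Combining the two facts yields $\Omega \equivomv \Omega^L$ yet $\Omega \not\eqenf \Omega^L$, which is the desired failure of validation. The main subtlety lies in the second point rather than in the non-bisimilarity argument; the non-bisimilarity is an immediate consequence of the rigid clause structure of enf combined with the existence of false normal forms, which is precisely the phenomenon motivating net bisimilarity in the sequel.
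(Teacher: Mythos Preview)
Your proposal is correct and matches the paper's approach: both use the pair $\Omega$ and $\Omega^L$ as the counterexample, observing that $\Omega^L$ is $\tolw$-normal (so must be handled by clause (enf~4)) while $\Omega$ diverges (so cannot satisfy the convergence requirement on the right-hand side of (enf~4)), whence $\Omega^L \not\leqenf \Omega$. Your write-up is in fact more detailed than the paper's, which simply asserts that the same counterexample used for naive similarity works here; your only loose spot is the phrase ``the only enf clause available for a pair containing $\Omega$ is (enf~1)'', which is fine once you note that clauses (enf~2)--(enf~4) all require the \emph{second} component to converge.
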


 The issue of enf with $\equivomv$ concerns clause (enf 1), as left diverging terms are a strict subset of \cbv $\Omega$-terms. To solve it, as usual, there are two options, changing the calculus or the nf-bisimilarity. 
 \citet{DBLP:conf/fossacs/BiernackiLP19} change the nf-bisimilarity, extending (enf 1) to all \emph{deferred diverging terms}, which correspond exactly to their (\cbv+ state) $\Omega$-terms. Their approach does not address the reason why $\Omega$-terms are not equated (namely, the stuck normal forms of Plotkin's \cbv), it only circumvents it, and it does not help when that \cadr{reasons}{reason} affects other aspects such as commuting $\letexp$s (which happens when one of the two $\letexp$s is blocked deep under a stuck redex). Addressing the reason and removing stuck normal forms amounts in fact to \cadr{change}{changing} the calculus. In \refsect{net}, we shall introduce a nf-bisimilarity smoothly validating both $\equivomv$ and commuting $\letexp$s, as it is built on an extension of Plotkin's calculus not suffering from stuck normal forms.

\paragraph{Further Equivalences.} The next proposition sums up the benchmarks for enf.
\begin{toappendix}
\begin{proposition}
\label{prop:enf-validation-of-equivalences}
Enf bisimilarity validates Moggi's equivalences, the left-to-right restrictions of the proof nets equivalences, but it does not validate $\etav$, all of the unrestricted proof nets equivalences, nor \cbn duplication.
\end{proposition}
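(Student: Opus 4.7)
The plan is to reduce every claim to a root-level identification via compatibility of $\eqenf$, and then dispatch the positive cases by constructing enf bisimulations and the negative cases by small counterexamples. For the positive half (Moggi's equivalences and the left-to-right restricted proof nets equivalences), I proceed uniformly: for each axiom I build the smallest relation $\relsym$ containing the pair of sides (parametrized by the free meta-terms of the axiom) and close it under the four clauses of $\relenf$. The key observation is that the value restrictions present in the left-to-right forms guarantee that left reduction on the two sides either matches via parallel $\betav$-steps to the same residual (up to $\alpha$), or stops at the \emph{same} non-value position, in which case clause (enf 4) replaces the stuck argument by a fresh variable and reduces the comparison to a smaller pair already in the schema. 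The case of $\equivlid$ has already been treated above; associativity, left and right decomposition of applications, and the restricted forms of $\equivsone$, $\equivexsthree$, and $\equivcom$ all follow the same recipe, each producing a finite schema closed under (enf 1)--(enf 4).

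For the negative half, each claim admits a small counterexample. For $\equivetav$, the variable $\vartwo$ is a left-normal form matching only clause (enf 2), while $\la\var\vartwo\var$ is a left-normal abstraction matching only clause (enf 3), so no clause of $\relenf$ can relate them. For the unrestricted proof nets equivalences, I instantiate the free meta-terms by stuck applications of distinct variables; the two sides then present left-normal decompositions $\lctxp{\var\val}$ whose outer contexts, after replacement by a fresh variable and possibly a $\betav$-step, reveal incompatible stuck inner pairs, so iterating clause (enf 4) eventually demands the identification of two distinct variables, ruled out by (enf 2). For $\equivdup$, take $\tm = \varthree\varfour$ with $\varthree \neq \varfour$ and compare $(\la\var\vartwo\var\var)\tm$ with $\vartwo\tm\tm$: both are left-normal sharing inner $\var\val = \varthree\varfour$, so (enf 4) with a fresh $\var'$ and the ensuing $\betav$-step on the left yields the pair $\vartwo\var'\var'$ vs $\vartwo\var'(\varthree\varfour)$; a second round of (enf 4) with a fresh $\var''$ confronts $\var''\var'$ with $\var''(\varthree\varfour)$, whose unique decompositions assign $\var'$ and $\varfour$ respectively in the value slot, contradicting (enf 2).

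The principal obstacle is the positive half for the restricted commutativity $\equivcom$: the axiom is symmetric and carries two non-capture side-conditions, so the schema splits into several sub-cases according to which of the commuting arguments is a value and which is stuck; in each sub-case one must verify that the two sides left-reduce to decompositions sharing the same $\var\val$ inside matching outer contexts (modulo the commutation itself), so the relation stays closed under clause (enf 4). A secondary subtlety is that the parametrizing meta-terms may themselves diverge, so certain branches of the schema must be closed via clause (enf 1) rather than (enf 4).
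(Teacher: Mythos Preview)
Your approach matches the paper's: for each positive claim, exhibit a relation $\relsym = Id \cup \{\text{instances of the axiom}\}$ and check $\relsym \subseteq \relenf$; for each negative claim, give a small counterexample. The paper's proof is essentially this, referring Moggi's laws to Lassen's original paper and handling the restricted proof nets equivalences by an explicit bisimulation (the appendix spells out $\equivsone$, declaring $\equivsthree$ analogous).

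You are, however, overreading the statement. The phrase ``left-to-right restrictions of the proof nets equivalences'' covers only $\equivsone$ (which needs no restriction in the left-to-right setting) and $\equivsthree$ (the restriction of $\equivexsthree$ to $\tm$ a value); the text immediately after the proposition lists exactly these two. There is no restricted $\equivcom$ to prove, so your ``principal obstacle'' is not part of the task. Dually, ``does not validate all of the unrestricted proof nets equivalences'' is read as \emph{not all of them are validated}: $\equivsone$ is validated, so you only need counterexamples for unrestricted $\equivexsthree$ and $\equivcom$, which your stuck-variable instantiation handles (the two sides get stuck with distinct head variables, so clause (enf~4) fails immediately). Your counterexamples for $\etav$ and $\equivdup$ are fine.
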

\end{toappendix}
 Enf similarity does not validate $\etav$, since $\var$ and $\la\vartwo\var\vartwo$ are handled by different clauses in the definition of $\leqenf$. It can however be adapted to validate $\etav$, see \cite{LassenEnf,lassen+strovring-bisimilarity-eta,DBLP:journals/lmcs/BiernackiLP19}. 
About Moggi's equivalences, we have already discussed the left identity. Enf similarity validates all the other ones. Note that in \citeyearpar{LassenEnf} \citeauthor{LassenEnf} claims that enf validates $\equiv_{exrad}$ \adr{(defined in \refsect{benchmarks}) }which is actually false, it only validates $\equiv_{rad}$ (in fact $\equiv_{exrad}$ does not correspond to Moggi's usual right decomposition rule, it shall be motivated by proof nets in \refsect{benchmarks-vsc}). Proofs of the validations are easy, one only needs to write the right relation (the identity relation $\cup$ the equivalence to validate) and show that it is indeed an \enf bisimulation.

The validation of the left-to-right restricted versions of the proof nets equivalences ($\equivsone$ and a restricted version of $\equivexsthree$) is an easy contribution of this paper. 
Simple inspections show that enf does not validate the (order-unspecified) proof nets equivalences $\equivexsthree$ and $\equivcom$, nor \cbn duplication. Consider now renf bisimilarity, the right variant of enf. With respect to proof nets equivalences, it has a sort of dual behavior: it  does not validate $\equivsone$ but instead validates the right-to-left proof nets equivalences ($\equivexsthree$ and a restricted version of $\equivsone$).


\section{Value Substitution Calculus}
\label{sect:vsc}
\begin{figure}
\begin{tabular}{c}
	\!\!\!\!\!\!\!\!\!
$\arraycolsep=3pt
\begin{array}{rrll@{\hspace{1cm}}rrll}
\multicolumn{4}{c}{\textsc{Language}} & \multicolumn{4}{c}{\textsc{Lists of ESs}}
\\[-2pt]
\textsc{Terms } & \vsubterms \ni \tm,\tmtwo, \tmthree & \grameq& \val \mid \tm\tmtwo 
\mid \tm \esub\var\tmtwo  & 
\textsc{Sub. ctxs } &\sctx,\sctxtwo  &\grameq &\ctxhole \mid \sctx \esub\var\tm
\\
\textsc{Values } & \val,\valtwo & \grameq & \var \mid  \la\var\tm 

\end{array}$

\\[4pt]
\hline
\\[-12pt]
\tabcolsep = 2pt
		\!\!\!\!\!\!\!\!\!\!
	\begin{tabular}{c}
\textsc{Weak reduction}
	\\[3pt]
	$\begin{array}{r@{\hspace{.15cm}}r@{\hspace{.1cm}}l@{\hspace{.1cm}}ll}
	
	\textsc{Evaluation Contexts} & \evctx & \grameq &  \ctxhole\mid \tm\evctx\mid \evctx\tm \mid \evctx\esub{\var}{\tmtwo} \mid \tm\esub{\var}{\evctx}
	
	\end{array}$
	\\[2pt]
	\begin{tabular}{c @{\hspace{.3cm}}|@{\hspace{.3cm}} c}
		
		$\begin{array}{rr@{\ }l@{\ }l}
		\multicolumn{4}{c}{\textsc{Root rules}}
		\\
		\textsc{Mult. } & \sctxp{\la\var\tm}\tmtwo &  \rtom  & \sctxp{\tm\esub{\var}{\tmtwo}} 
		\\
		\textsc{Exp.}  & \tm\esub\var{\sctxp{\val}} &  \rtoe  & \sctxp{\tm\isub{\var}{\val}} 
		
	\end{array}$  
		&
		
		$\begin{array}{rr@{\ }l@{\ }l}
		\textsc{Contextual Closure}	
		\\
		\multicolumn{3}{l}{
			\AxiomC{$\tm \rootRew{a} \tm'$}
			\UnaryInfC{$\evctxp{\tm} \Rew{a} \evctxp{\tm'}$}
			\DisplayProof
			\ \ \ 
			a \!\in\! \set{\msym,\esym}
		}
		  
		  \end{array}$  
	\end{tabular}
\\[2pt]
	$\begin{array}{l@{\hspace{.15cm}}r@{\hspace{.1cm}}l@{\hspace{.1cm}}ll}

	\textsc{Notation} & \ \tovsc   & \defeq  & \tom \cup \toe 
	\end{array}$
\\[0pt]
\hline
\\[-12pt]
\textsc{normal forms}
\\[3pt]
	$\arraycolsep = 5pt
	\begin{array}{r@{\hspace{.5cm}}rllr@{\hspace{.5cm}}rll}
	\textsc{Inert terms}  &
	 \itm,\itmtwo & \grameq & \isctxp{\var\ntm} \mid \itm \fire \mid \itm \esub{\var}{\itmtwo}
	& 
	\textsc{ Inert Sub. ctxs} & \isctx & \grameq &  \ctxhole\mid \isctx\esub{\var}{\itm}

	\end{array}$ 
\\[4pt]
$\arraycolsep = 2pt
\begin{array}{r@{\hspace{.5cm}}rll}

\tovsc\textsc{-Normal forms} 
& \ntm,\ntmtwo &\grameq &\val \mid \itm \mid \fire \esub{\var}{\itm}
\end{array}$
\\[-5pt]
\end{tabular}
\end{tabular}
\caption{Value substitution calculus and normal forms.}
\label{fig:vsc}
\end{figure}

%
%
 
Intuitively, the VSC is a \cbv $\lambda$-calculus extended with $\letexp$-expressions, as is common for \cbv $\l$-calculi such as \citeauthor{DBLP:conf/lics/Moggi89}'s one [\citeyear{Moggi88tech,DBLP:conf/lics/Moggi89}]. 
We do however replace a $\letexp$-expression $\letin\var\tmtwo\tm$ with a more compact  \emph{explicit substitution} (ES for short) notation $\tm\esub{\var}{\tmtwo}$, which binds $\var$ in $\tm$ and that has precedence over abstraction and application (that is, $\la\var\tm\esub\vartwo\tmtwo$ stands for $\la\var(\tm\esub\vartwo\tmtwo)$ and $\tm\tmthree\esub\vartwo\tmtwo$ for $\tm(\tmthree\esub\vartwo\tmtwo)$). Moreover, our $\letexp$/ES does not fix an order of evaluation between $\tm$ and $\tmtwo$, in contrast to many papers in the literature (\eg \citet{DBLP:journals/toplas/SabryW97,DBLP:journals/iandc/LevyPT03}) where $\tmtwo$ is evaluated first.

The reduction rules of VSC are slightly unusual as they use \emph{contexts} both to allow one to reduce redexes located in sub-terms, which is standard, \emph{and} to define the redexes themselves, which is less standard---these kind of rules is 
called \emph{at a distance}. The rationale behind is that the rewriting rules are designed to mimic exactly cut-elimination on linear logic proof nets, via \citeauthor{DBLP:journals/tcs/Girard87}'s \citeyearpar{DBLP:journals/tcs/Girard87} \cbv translation $(A \Rightarrow B)^v = ! (A^v \multimap B^v)$ of intuitionistic logic into linear logic, see \citet{Accattoli-proofnets}.

\paragraph{Root rewriting rules}
In VSC, 
there are two main rewrite rules, the \emph{multiplicative} one $\tom$ and the \emph{exponential} one $\toe$ (the terminology comes from the connection between VSC and linear logic), and both work \emph{at a distance}: they use contexts even in the definition of their \emph{root} rules (that is, before the contextual closure). Their definition is based on \emph{substitution contexts} $\sctx$, which are lists of~ES. 
In \Cref{fig:vsc}, the root rule $\rtom$ (resp. $\rtoe$) is assumed to be capture-free, so no free
 variable of $\tmtwo$ (resp. $\tm$) is captured by the substitution context $\sctx$ (by possibly $\alpha$-renaming on-the-fly).

Examples: $(\la\var\vartwo)\esub\vartwo\tm\tmtwo \rtom \vartwo\esub\var\tmtwo\esub\vartwo\tm$ and $(\la\varthree\var\var)\esub\var{\vartwo\esub\vartwo\tm} \rtoe (\la\varthree\vartwo\vartwo)\esub\vartwo\tm$. An example with on-the-fly $\alpha$-renaming is $(\la\var\vartwo)\esub\vartwo\tm\vartwo \rtom \varthree\esub\var\vartwo\esub\varthree\tm$.

A key point is that $\beta$-redexes are decomposed via ES: the \emph{by-value} restriction is on ES-redexes, \emph{not} on $\beta$-redexes, because only values can be substituted.
The multiplicative rule
$\rtom$ fires a $\beta$-redex at a distance and generates an  ES even when the argument is not a value.
	The \cbv discipline is entirely encoded in the exponential rule $\toe$ (see \Cref{fig:vsc}): it can fire an  ES performing a substitution only when its argument is a \emph{value} (\ie a variable or an abstraction) up to a list of ES. This means that only values can be duplicated or erased.

\paragraph{Rewriting Rules} \cadr{We close the root rules by evaluation contexts $\evctx$, which allow reduction everywhere but under abstractions.}{We define weak reduction, noted $\tovsc$, to be the closure of the root rules by evaluation contexts which allow reduction everywhere but under lambdas.} In other papers about the VSC \cite{accattoli+paolini-vsc,accattoli+guerrieri-opencbv,DBLP:conf/lics/AccattoliCC21,DBLP:journals/pacmpl/AccattoliG22}, where other contextual closures are also considered, they are called \emph{weak} or \emph{open} contexts. \cadr{In this work, we only consider weak reduction, noted $\tovsc$.  Examples: }{Examples:}

\adr{\begin{center}
$\begin{array}{r@{\hspace{.1cm}}r@{\hspace{.1cm}}l@{\hspace{.5cm}} r@{\hspace{.1cm}}r@{\hspace{.1cm}}l}
		\tm \esub\var{(\la\vartwo\tmtwo)\esub\varthree\tmthree \tmfour}
		& \tom &\tm \esub\var{\tmtwo\esub\vartwo\tmfour\esub\varthree\tmthree}
		&
		\tm (\var\var) \esub\var{\vartwo\esub\varthree\tmtwo} & \toe & \tm (\vartwo\vartwo)\esub\varthree\tmtwo
		\\[2pt]
		((\var\var) \esub\var{\la\vartwo\varthree} \tm)\esub\varfour\tmtwo 
		& \toe & ((\la\vartwo\varthree) (\la\vartwo\varthree) \tm)\esub\varfour\tmtwo
		&
		\la\varthree (\var\var) \esub\var{\la\vartwo\varthree} & \not\toe&  \la\varthree(\la\vartwo\varthree) \la\vartwo\varthree
\end{array}$
\end{center}
}
\paragraph{Diamond}The $\tovsc$ reduction is non-deterministic, as for instance:
\begin{center}
$\delta (\delta \Id) \ \lRew{\expoabs}\  \vartwo\esub\vartwo\delta (\delta \Id) \ \tom \ \vartwo\esub\vartwo\delta((\var\var)\esub\var\Id).$
\end{center}
It is however more than confluent, it is diamond.
\begin{proposition}[Diamond, \cite{accattoli+paolini-vsc}]
\label{prop:vsc-diamond}
$\tovsc$ is diamond.
\end{proposition}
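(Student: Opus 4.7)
The plan is to prove the diamond property by direct case analysis on the relative positions of the two redexes fired in the steps $\tm \tovsc \tmtwo_1$ and $\tm \tovsc \tmtwo_2$. Each $\tovsc$-step has a unique decomposition of the form $\tm = \evctxp{r}$ with $r$ a root redex, so the two steps are given by pairs $(\evctx_1, r_1)$ and $(\evctx_2, r_2)$ which, by the hypothesis $\tmtwo_1 \neq \tmtwo_2$, must differ. I would show that each redex has a canonical residual after firing the other, and that firing the residuals yields a common reduct $\tmthree$ in exactly one step.

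The backbone of the argument is a \emph{redex preservation} lemma: firing one $\tovsc$-redex never destroys or duplicates another. This rests on two facts. First, evaluation contexts $\evctx$ do not go under abstractions, so no $\tovsc$-redex lies inside an abstraction body. Second, the root rule $\rtoe$ substitutes only a value $\val$, and a value contains no $\tovsc$-redex of its own (a variable has none, and any redex inside an abstraction body would violate weakness). Hence duplicating $\val$ cannot replicate any redex, and $\rtom$ merely wraps a body in a new ES, leaving other redexes structurally unchanged. These observations make residuals well-defined.

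The case analysis then splits according to how $\evctx_1$ and $\evctx_2$ are related: (a) the two redexes lie in disjoint sub-terms of $\tm$, in which case the reductions trivially commute; (b) one redex is strictly inside an argument position or an ES substitutum of the other, and the residual of the inner one after firing the outer is straightforward (it simply sits in the corresponding position of the reduct); (c) the delicate case, where the two redexes' at-a-distance patterns interact. Concretely, this happens when one redex's substitution context $\sctx$ contains an ES that is itself part of the other redex---for example, $\tm = (\la\var\tm')\esub\vartwo{\la\varthree\tm''}\tmtwo$, which admits both a $\tom$-redex at the root and a $\toe$-redex at the ES. Here one must verify that firing in either order lands at the same term in one additional step, up to $\alpha$-renaming.

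The main obstacle is case (c). Because substitution contexts are part of the redex patterns, firing one redex rearranges them---$\rtom$ introduces a new ES alongside the existing substitution context, while $\rtoe$ lifts a $\sctx$ out of the ES being consumed. I would need to verify that the residual of the other redex is still a valid root redex after this rearrangement, and that the capture-freeness side conditions (which may require on-the-fly $\alpha$-renaming of the bound variables appearing in $\sctx$) do not break the alignment of the two joining steps. Once these syntactic invariants are handled---ideally by defining residuals in lockstep with a rewriting lemma stating that the two reducts are joinable by a single step---the diamond follows by a routine enumeration of the finitely many configurations of $(\evctx_1, r_1)$ vs.\ $(\evctx_2, r_2)$.
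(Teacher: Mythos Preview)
The paper does not prove this proposition: it is stated with a citation to \cite{accattoli+paolini-vsc} and no proof is given or sketched in the text or appendix. So there is no ``paper's own proof'' to compare against.

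That said, your proposal is the standard and correct route to establish diamond for this kind of calculus with rules at a distance. The two key observations you isolate---that weak evaluation contexts do not enter abstractions, and that the value substituted by $\rtoe$ carries no weak redex---are exactly what guarantees no duplication of redexes, which is the heart of why diamond (and not merely confluence) holds. Your case (c), where the substitution context $\sctx$ appearing in one root redex overlaps with an ES that is part of another redex, is indeed the only non-trivial interaction, and the $\alpha$-renaming side conditions are the only place where care is needed. This is precisely how the cited paper proceeds.
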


We now investigate normal forms for terms in the VSC, which admit a, quite complex, inductive description via \emph{inert terms}.

\paragraph{Inert Terms and Normal Forms} 
\cbv is about \emph{values}, and, if terms are closed, normal forms are abstractions. In going beyond the closed setting,  a finer and more general view is required. Normal forms (for $\tovsc$) are given by mutual induction with the notions of \emph{inert term} and \emph{inert substitution contexts}, as in \Cref{fig:vsc}. 
Inert substitution contexts are lists of ESs where the content of every ES is an inert, ensuring that none of the ESs can fire a $\toe$-redex (as inerts are exactly normal forms which are not of the shape $\sctxp\val$).

Examples: $\la\var\vartwo$ is a normal form 
as an abstraction, while $\vartwo(\la\var\var)$, $\var\vartwo$, and $(\varthree(\la\var\var))(\varthree\varthree) (\la\vartwo(\varthree\vartwo))$ are normal forms as inert terms. The grammars also allow to have ES containing inert terms around abstractions and applications: $(\la\var\vartwo)\esub\vartwo{\varthree\varthree}$ is a normal form and $\var\esub\var{\vartwo(\la\var\var)}\vartwo$ is an inert term. One of the key points of inert terms is that they have a \emph{free} head variable (in particular they are open). Inert terms are the \cbv equivalent of \cbn neutral terms. In  
\cite{DBLP:conf/icfp/GregoireL02}, inert terms are called \emph{accumulators}, and normal forms are  called \emph{values}.
Some papers about the VSC adopt a restricted version of the $\tovsc$ reduction, where variables are not values and thus cannot be substituted by the $\toe$-rule. The restriction induces slightly different notions of normal forms (often called \emph{fireballs}) and inert terms (still called \emph{inert terms}), but the difference does not change the main properties of the VSC, as discussed in \cite{DBLP:journals/pacmpl/AccattoliG22}.

\begin{proposition}[Normal forms, \cite{accattoli+paolini-vsc} ]
$\tm$ is $\tovsc$-normal if and only if $\tm$ is a $\ntm$-term as defined in \Cref{fig:vsc}.
\end{proposition}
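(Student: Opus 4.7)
The plan is to prove the two implications separately, using different inductive strategies on each side.

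For the ``if'' direction (grammar implies $\tovsc$-normal), I proceed by mutual induction on the grammar derivations of $\itm$, $\isctx$, and $\ntm$. The key structural invariants to carry through are: (a) no $\itm$-term is of the form $\sctxp{\val}$ for any substitution context $\sctx$ and value $\val$, which simultaneously rules out the root redex $\sctxp{\la\var\tm}\,\tmtwo$ when an inert sits on the left of an application, and the root redex $\tm\esub\var{\sctxp\val}$ when an inert sits inside an ES; and (b) in the $\ntm$-clause $\val\esub\var\itm$, the inert content of the ES blocks $\toe$. Combining these with the fact that evaluation contexts $\evctx$ do not enter abstractions, a straightforward case analysis on the grammar clause and the shape of a hypothetical $\evctx$-decomposition shows that no $\tovsc$-redex can appear.

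For the ``only if'' direction ($\tovsc$-normal implies grammar), I proceed by induction on $\tm$. Values are immediate. If $\tm = \tm_1\tm_2$, then both sub-terms are $\tovsc$-normal (since $\evctx = \tm_1\evctx$ and $\evctx = \evctx\,\tm_2$ lie in the evaluation-context grammar), and the IH places them in the $\ntm$-grammar. Normality additionally forbids $\tm_1$ from being $\sctxp{\la\var\tm'}$, so a case analysis on the $\ntm$-form of $\tm_1$ (value, inert, or $\val\esub\var\itm$) produces either an $\isctxp{\var\ntm}$ form (when $\tm_1$ is a variable, possibly with inert ESs) or an $\itm\,\ntm$ form (when $\tm_1$ is inert), yielding an inert $\tm_1\tm_2$. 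If $\tm = \tm_1\esub\var\tm_2$, both are normal by IH; moreover $\tm_2$ cannot be $\sctxp\val$ (else $\toe$ fires at the root), so $\tm_2$ is inert. A last case analysis on $\tm_1$ matches the appropriate clause: $\val\esub\var\itm$ in the $\ntm$-grammar when $\tm_1$ is a value, $\itm\esub\var\itmtwo$ in the inert grammar when $\tm_1$ is inert, and a suitable repacking when $\tm_1$ itself ends with an ES.

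The main obstacle is the repacking step in the backward direction: when $\tm_1$ in an application or in an ES is itself of the compound form $\val\esub\vartwo\itm$, one must show that the resulting term still fits the grammar, which depends on the flexibility of $\isctxp{\var\ntm}$ and the closure of $\itm$ under appending inert ESs. Concretely, one needs an auxiliary lemma stating that $\isctx$ and the ``value-with-inert-ESs'' forms are closed under the operations produced by the induction, i.e.\ that $(\val\esub\vartwo\itm)\,\ntm$ and $(\val\esub\vartwo\itm)\esub\varthree\itmtwo$ both lie in the $\ntm$-grammar. Once this combinatorial closure lemma is in place, both directions assemble into a short syntactic proof.
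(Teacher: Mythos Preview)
The paper does not give its own proof of this proposition: it is cited from \citet{accattoli+paolini-vsc} and stated without argument. Your two-direction mutual-induction approach is the standard way to prove such grammar characterizations of normal forms, and the key invariant you identify---that no inert term has shape $\sctxp{\val}$---is exactly the right one to block both root rules.

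One small point to watch. You write the third $\ntm$-clause as $\val\esub\var\itm$, but the grammar has $\fire\esub\var\itm$ with $\fire$ ranging over all normal forms; the clause is recursive, so you get arbitrary stacks of inert ESs around a value, not just one. This does not break your induction, but be explicit about it. Relatedly, your ``closure lemma'' that $(\val\esub\vartwo\itm)\,\ntm$ lies in the grammar only holds when $\val$ is a variable (if $\val$ is an abstraction the term has a $\tom$-redex), and it relies on reading the first inert clause as $(\isctxp\var)\,\ntm$ rather than $\isctxp{\var\,\ntm}$; the paper's own example ``$\var\esub\var{\vartwo(\la\var\var)}\vartwo$ is an inert term'' confirms the former reading, and with it your repacking step goes through.
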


\paragraph{$\Omega$-Equivalence} One of the features of the VSC is that it solves the issues of Plotkin's calculus with respect to \cbv $\Omega$-terms. Consider the $\Omega$-term $\Omega^L =  (\la\var\delta)(\vartwo\varthree)\delta$ which is contextually equivalent to $\Omega$ but normal for Plotkin. In the VSC, instead, it diverges:
\begin{center}
$ (\la\var\delta)(\vartwo\varthree)\delta \tom \delta \esub\var{\vartwo\varthree}\delta \tom \varfour\varfour\esub\varfour\delta \esub\var{\vartwo\varthree} \toeabs \delta\delta \esub\var{\vartwo\varthree} \tom \ldots$.
\end{center}
More generally, \emph{all} \cbv $\Omega$-terms diverge: the following characterization of \cbv $\Omega$ holds, analogous to the one for \cbn (\refthm{cbn-scrutability-characterization}) and obtained composing the diverging characterization of \cbv inscrutable terms due to \citet{accattoli+paolini-vsc} with an easy proposition (in \refapp{app-vsc}) showing that \cbv inscrutable terms and \cbv $\Omega$-terms coincide.

\begin{toappendix}
\begin{theorem}[VSC diverging characterization of $\Omega$-terms]
	\label{thm:cbv-scrutability-characterization}
$\tm$ is a \cbv $\Omega$-term if and only if $\tm$ is $\tovsc$ diverging.
\end{theorem}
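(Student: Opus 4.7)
The plan is to factor the statement through the notion of \cbv inscrutable term. Indeed, \citet{accattoli+paolini-vsc} have already established in the VSC the diverging characterization of inscrutable terms, namely that $\tm$ is \cbv inscrutable if and only if $\tm$ is $\tovsc$-diverging. Therefore it suffices to prove, as an intermediate proposition, that \cbv inscrutable terms and \cbv $\Omega$-terms coincide. The theorem then follows by composition.

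For the direction \emph{inscrutable implies $\Omega$-term}, I would use the diverging characterization to reduce the claim to: if $\tm$ is $\tovsc$-diverging, then $\tm \equivcv \Omega$. The argument proceeds by showing that for every closing context $\ctx$, $\ctxp\tm$ evaluates to a value if and only if $\ctxp\Omega$ does. The key observation is that in the VSC (unlike Plotkin's calculus) there are no false normal forms: the grammar of evaluation contexts $\evctx$ in \Cref{fig:vsc} places holes inside ESs as well as on both sides of applications, so a divergent subterm in any such position forces divergence of the enclosing term. Combined with the diamond property (\refprop{vsc-diamond}), which ensures termination is independent of the reduction chosen, an induction on $\ctx$ shows that $\ctxp\tm$ and $\ctxp\Omega$ have the same termination behavior: either the hole sits outside every evaluation context, in which case both plugged terms are values (or terminate with the same structure), or it sits inside some evaluation context, in which case both diverge.

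For the converse direction \emph{$\Omega$-term implies inscrutable}, I would argue by contraposition. If $\tm$ is not inscrutable, then by the diverging characterization $\tm$ $\tovsc$-reduces to some normal form $\ntm$, which by the grammar in \Cref{fig:vsc} is either a value, an inert term, or of the form $\fire\esub\var\itm$. I then exhibit a closing context $\ctx$ making $\ctxp\tm$ evaluate to a value while $\ctxp\Omega$ diverges. When $\ntm$ is a value, the closing substitution of free variables by, say, $\Id$ suffices. Otherwise, one closes $\ntm$ by substituting values (again, for example $\Id$) for the head variables of the inerts, which triggers enough $\toe$-steps to unblock the computation and reach a value; a straightforward induction on the inert/normal-form grammar in \Cref{fig:vsc} confirms that such a substitution always yields termination to a value.

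The main obstacle will be the second direction, specifically the construction of the closing context that witnesses the scrutability of an inert normal form: one has to substitute closed values for the free head variables without introducing new divergent subterms, and check inductively that the resulting term indeed terminates. Once the coincidence of inscrutable and $\Omega$-terms is established, composition with the Accattoli--Paolini diverging characterization immediately yields the stated theorem.
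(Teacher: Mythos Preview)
Your overall decomposition---factor through inscrutability and compose with the Accattoli--Paolini diverging characterization---is exactly what the paper does. The concrete gap is in your second direction.

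You propose: if $\tm$ normalizes to $\ntm$, substitute $\Id$ for the free (head) variables to obtain a closed term that reaches a value. This fails. Take $\ntm = \var\delta\delta$, which is an applicative inert and hence $\tovsc$-normal. Substituting $\Id$ for $\var$ gives $\Id\,\delta\,\delta$, which in the VSC reduces to $\delta\delta = \Omega$ and diverges. So ``for example $\Id$'' is wrong; at minimum you would need, for each head variable, an abstraction that absorbs at least as many arguments as that variable is applied to anywhere in $\ntm$, and then argue inductively that every nested inert unblocks to a value under that substitution. That is doable, but it is not the ``straightforward induction'' you announce, and your sketch gives no indication of how to choose such values uniformly.

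The paper sidesteps this entirely by starting from the \emph{definition} of scrutability rather than from the normal form of $\tm$: if $\tm$ is scrutable there is already a testing context $\tctx$ with $\tctxp\tm \tovsc^* \val$. Closing the free variables of $\tctxp\tm$ by $\Id$ now \emph{does} work, because substitutivity gives $\tctxp\tm\isub{\var_1}\Id\cdots\isub{\var_n}\Id \tovsc^* \val\isub{\var_1}\Id\cdots\isub{\var_n}\Id$, and substituting values into a value yields a value. A separate short lemma (by induction on testing contexts) shows that $\tctxp\Omega$ is $\tovsc$-diverging for every testing context $\tctx$, so the closed testing context separates $\tm$ from $\Omega$. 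The point is that the hard work of ``reaching a value'' is packed into the testing context supplied by the definition, not reconstructed from $\ntm$.

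A smaller remark on your first direction: the ``induction on $\ctx$'' sketch is too coarse. When the hole sits under an abstraction that is later applied, the plugged subterm becomes visible only after a value substitution, so the inductive hypothesis on $\ctx$ alone does not carry the application case. The paper does not attempt a direct argument here either; it simply cites the result of \citet{DBLP:journals/pacmpl/AccattoliG22} that all \cbv inscrutable terms are contextually equivalent.
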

\end{toappendix}
Despite the fact that the VSC makes $\Omega^L$ diverge as in \cbn, it does \emph{not} validate \cbn duplication nor \cbn erasure, as $(\la\var\vartwo)\Omega$ is $\tovsc$-divergent and $(\la\var \vartwo \var \var) (\varthree\varthree) \tom (\vartwo \var \var)\esub\var{\varthree\varthree} \not\tovsc \vartwo (\varthree\varthree)(\varthree\varthree)$.

\paragraph{Contextual Equivalence} Terms such as $\Omega^L$ are normal for Plotkin but divergent in the VSC, that is, the VSC and Plotkin's calculus have different notions of termination. One might then suspect that contextual equivalence in the VSC is not the same as in Plotkin's calculus. This is not the case, in fact, because the VSC behaves differently \emph{only on open terms}, while contextual equivalence is defined reducing \emph{closed terms} only. 

\begin{proposition}[\cite{DBLP:journals/pacmpl/AccattoliG22}]
Two $\l$-terms without ES are contextual equivalent in Plotkin's calculus if and only if they are contextual equivalent in the VSC.
\end{proposition}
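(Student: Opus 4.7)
The plan is to derive both directions from a single key lemma: \emph{on closed terms that are pure $\l$-terms (or more generally, on closed VSC terms via their unfolding), Plotkin termination to a value and VSC termination to a value coincide}. The intuition is that the problematic behavior of the VSC — the new reductions that were added to handle open stuck redexes — cannot manifest on closed terms, since inert terms by definition have a free head variable, so every closed VSC normal form is of the form $\sctxp{\val}$ with $\val$ an abstraction, just as every closed Plotkin normal form is an abstraction.

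\textbf{Easy direction $(\Leftarrow)$.} Assume $\tm \equivcp{\mathsf{VSC}} \tmp$, where $\tm,\tmp$ have no ES. Every Plotkin context $\ctx$ is also a VSC context. For any closing Plotkin $\ctx$, the terms $\ctxp\tm$ and $\ctxp\tmp$ are closed pure $\l$-terms, so by the termination-agreement lemma, $\ctxp\tm \bs_v$ iff $\ctxp\tm \bs_{\mathsf{VSC}}$, and similarly for $\tmp$. Chaining with VSC-equivalence gives Plotkin-equivalence.

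\textbf{Harder direction $(\Rightarrow)$.} Assume $\tm \equivcp{P} \tmp$. Given a closing VSC context $\ctx$, define its \emph{unfolding} $\ctx^\circ$ by translating each ES into a $\beta_v$-redex, namely $(\tmtwo\esub\var\tmthree)^\circ := (\la\var\tmtwo^\circ)\tmthree^\circ$. Because $\tm$ has no ES, the unfolding commutes with plugging: $(\ctxp\tm)^\circ = \ctx^\circ\ctxholep\tm$, and likewise for $\tmp$. The extended termination-agreement lemma yields $\ctxp\tm \bs_{\mathsf{VSC}}$ iff $\ctx^\circ\ctxholep\tm \bs_v$; by Plotkin-equivalence this is iff $\ctx^\circ\ctxholep\tmp \bs_v$, and then iff $\ctxp\tmp \bs_{\mathsf{VSC}}$ going back through the lemma.

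\textbf{The central technical lemma.} The main obstacle is proving termination agreement between $\tovsc$ and Plotkin's $\tobv$ on closed terms (possibly with ES, matched with their unfoldings). One natural path: establish a simulation between the two reductions through $(-)^\circ$, showing that each $\tovsc$ step is matched by a nonempty $\tobv$-reduction on the unfolding, and conversely that each $\tobv$ step on $\tm^\circ$ can be matched by a $\tovsc$ reduction on $\tm$. Confluence (and the diamond property on the VSC side, \refprop{vsc-diamond}) then transfers termination in either direction. The delicate bit is the multiplicative rule at a distance, since $\sctxp{\la\var\tmtwo}\tmthree \tom \sctxp{\tmtwo\esub\var\tmthree}$ permutes ES across $\sctx$, inducing only a \emph{commutation} on the Plotkin side rather than a direct $\beta_v$-step. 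On closed terms, however, such commutations can always be absorbed (in the style of Plotkin's own argument for his CPS translation, and as carried out for the VSC in \cite{DBLP:journals/pacmpl/AccattoliG22}) because every frozen redex eventually becomes head-exposed, and no residual stays stuck. Once this lemma is in place, both directions of the proposition follow as sketched.
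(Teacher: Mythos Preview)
The paper does not give its own proof of this proposition: it is stated with a citation to \citet{DBLP:journals/pacmpl/AccattoliG22} and no argument is supplied, neither in the body nor in the appendix. The surrounding text only records the intuition you also identify, namely that the VSC differs from Plotkin's calculus \emph{only on open terms}, while contextual equivalence tests closed terms.

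Your overall strategy is the standard one and is correct in outline. A couple of remarks.

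\textbf{On the easy direction.} It is worth noting explicitly that Plotkin's contexts are a \emph{strict} subset of VSC contexts, so VSC-equivalence is \emph{a priori} finer; your reduction of Plotkin termination to VSC termination on closed pure terms is exactly what is needed.

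\textbf{On the central lemma.} Your sketch of the simulation via unfolding is the right idea, but the direction you flag as ``delicate'' is genuinely where the work lies, and the phrase ``such commutations can always be absorbed'' is doing a lot of lifting. The cleanest way to organise this (and the one used in the cited reference) is not to chase individual steps but to prove that, for any closed VSC term $\tm$, one has $\tm\bs_{\mathsf{VSC}}$ iff $\tm^\circ\bs_v$. One direction goes by showing that $\tm \tovsc \tmtwo$ implies $\tm^\circ \tobv^* \tmtwo^\circ$ (possibly with zero steps for $\tom$, which is why a naive step-for-step simulation fails), together with the observation that closed $\tovsc$-normal forms are abstractions, hence their unfoldings are Plotkin-normal. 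The converse direction uses that a $\tobv$ step on $\tm^\circ$ can be reflected as one or more $\tovsc$ steps on $\tm$; closedness is what rules out the stuck configurations (inert terms are open by construction, so closed VSC normal forms are exactly abstractions). You gesture at all of this, but a full proof needs the zero-step case of $\tom$ handled by a separate termination argument (e.g.\ the diamond property plus strong normalisation of $\tom$ alone), not just ``absorption of commutations''.

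In short: your plan matches the intended approach; the paper itself defers the details to the cited source.
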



\section{Equational Benchmarks and the Value Substitution Calculus}
\label{sect:benchmarks-vsc}
Here, we revisit the equational benchmarks of \refsect{benchmarks} in the VSC. We begin with the proof nets equivalences, as they are one of the \emph{raison d'\^etre} of the VSC and the key to re-understand them all.

\paragraph{Structural Equivalence} The translation of the VSC to proof nets maps some terms with ES to the same proof net. The induced identification of terms is expressed by structural equivalence \cite{accattoli+paolini-vsc,Accattoli-proofnets}. 
\begin{definition}[Structural equivalence $\streq$]
Structural equivalence $\streq$ is defined as the smallest compatible equivalence relation generated by union of the following root rules.
\begin{center}
$\begin{array}{rllrcc}
	(\tm\tmthree)\esub\var\tmtwo & \equivsone&\tm\esub\var\tmtwo\tmthree 	 & \mbox{if }\var \not \in \fv\tmthree
	\\
	(\tm\tmthree)\esub\var\tmtwo &\equivexsthree& 	\tm\tmthree\esub\var\tmtwo & \mbox{if }\var \not \in \fv\tm
	\\
	\tm\esub\var\tmtwo\esub\vartwo\tmthree &\equivass& \tm\esub\var{\tmtwo\esub\vartwo\tmthree} & \mbox{if }\vartwo \not \in \fv\tm
	\\
	\tm\esub\vartwo\tmthree\esub\var\tmtwo &\equivcom& 	\tm\esub\var\tmtwo\esub\vartwo\tmthree & \mbox{if }\var \not \in \fv\tmthree \mbox{ and }\vartwo \not \in \fv\tmtwo
\end{array}$
\end{center}
\end{definition}
These axioms preserve the number and type of the constructors in terms, they only rearrange the order. In particular, structurally equivalent terms have the same number of ES. Note that the axioms simply express the constructor-and-scope-preserving commutation of ES with applications and ES themselves (but not abstractions, as that would break \refprop{strong-bisimulation} below). 

Additionally, structural equivalence behaves very well with respect to evaluation: it commutes with reduction rules---and is therefore postponable---preserving the number and kind of steps. This is expressed by the following proposition. In the literature, what is below called \emph{strong commutation} is usually called \emph{strong bisimulation}. We prefer to change the terminology here to avoid confusion with \emph{nf-(bi)simulations}, as the concept is similar and yet different (no need to observe normal forms, and it preserves the number of steps).

\begin{toappendix}
\begin{proposition}[$\streq$ strongly commutes with $\tovsc$, \citet{accattoli+paolini-vsc}]
	\label{prop:strong-bisimulation}
	Let $a \in \set{\msym,\expoabs,\expovar}$. Structural equivalence $\streq$ strongly commutes with $\tovsc$:
	if  $\tm \streq\tmtwo$ and $ \tm \Rew{a}\tmp$ then $\tmtwo \Rew{a}\tmtwop$ and $\tmp\streq\tmtwop$.
\end{proposition}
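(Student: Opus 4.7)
The plan is to prove this by a standard strong commutation argument. Since $\streq$ is the smallest compatible equivalence relation generated by the four root axioms $\equivsone$, $\equivexsthree$, $\equivass$, and $\equivcom$, the transitive, symmetric, and compatible closures all reduce by routine diagram chases to the base case where $\tmtwo$ is obtained from $\tm$ by a single application of one root axiom inside some context $\ctx$, i.e.\ $\tm = \ctxp{\tmthree}$ and $\tmtwo = \ctxp{\tmfour}$ with $\tmthree$ and $\tmfour$ related by one root axiom. By symmetry of $\streq$ one direction per axiom suffices; transitivity is handled by iterating the single-step result; closure under contexts is handled by straightforward induction on $\ctx$.

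Given such a single $\streq$-step together with a reduction $\tm \Rew{a} \tmp$ taking place at some evaluation context position $\evctx$, the key is a case analysis on the relative positions of $\ctx$ and $\evctx$. Three situations arise. If the two positions are \emph{disjoint}, applying each change independently on the other side gives $\tmtwop$ with $\tmp \streq \tmtwop$ via the same root axiom on residuals. If the redex is \emph{strictly internal} to a sub-term that the axiom merely repositions, the same argument applies verbatim. The substantive case is the \emph{overlap}, in which the at-a-distance structure of the root rule $a$ interacts with the rearrangement performed by the axiom.

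The overlap case is precisely what the axioms are designed for: each of $\equivsone$, $\equivexsthree$, $\equivass$, $\equivcom$ permutes a substitution context with an application or another ES, which are exactly the configurations that $\rtom$ and $\rtoe$ traverse via their substitution context $\sctx$. For instance, if $\tm = \sctxp{\la\var\tmthree}\tmtwo \rtom \sctxp{\tmthree\esub\var\tmtwo}$ and $\tmtwo$ is obtained from $\tm$ by an $\equivsone$ step that shifts one ES from $\sctx$ past the outer application, then firing the corresponding $\rtom$ on the other side yields a term identical up to one $\equivass$ or $\equivcom$ rearrangement of the generated nested ESs, which is again a one-step $\streq$. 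The symmetric overlap for $\rtoe$ with $\equivass$ or $\equivcom$ is handled analogously, using that the $\sctx$ surrounding the substituted value $\val$ can be pulled out or permuted before/after firing the exponential step.

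The main obstacle is the combinatorial bookkeeping of these overlaps: three reduction kinds times four axioms, each in two directions. The verification is greatly simplified by two observations baked into the formalism. First, every axiom preserves the multiset of term constructors and merely rearranges their nesting, so the kind $a \in \set{\msym,\expoabs,\expovar}$ of the step on one side is automatically the kind of the step on the other side, and no multi-step simulation is ever needed—each square closes with exactly one step per side. Second, the free-variable side-conditions in the axioms are exactly what is needed to avoid capture and to match the implicit scoping of $\sctx$ in the at-a-distance rules, so no extra $\alpha$-renaming is required. Together these observations guarantee that in every overlap square the residual terms differ by at most a single axiom instance, closing the induction.
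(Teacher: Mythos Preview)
Your overall decomposition—reduce to a single axiom applied in some context, then analyze positions—is the same strategy the paper uses, and the ``overlap'' analysis of how each axiom interacts with the at-a-distance shape of $\rtom$ and $\rtoe$ is on target. However, there is a genuine gap.

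Your three-way position analysis (disjoint / redex inside a repositioned sub-term / overlap) omits a fourth case: the $\streq$-axiom occurs \emph{inside a sub-term that the reduction duplicates}. Concretely, take $\tm = \tmfive\esub\var{\sctxp{\val}}$ with the axiom applied somewhere inside $\val$, giving $\tmtwo = \tmfive\esub\var{\sctxp{\valtwo}}$ where $\val \streq \valtwo$ by one axiom instance. Firing $\rtoe$ on both sides yields $\sctxp{\tmfive\isub\var\val}$ and $\sctxp{\tmfive\isub\var\valtwo}$. If $\var$ occurs $n$ times in $\tmfive$, these results differ by $n$ axiom instances, not one. This is precisely why the paper first isolates two substitutivity lemmas (if $\tmthree \streq \tmfour$ then $\tmthree\isub\var\tmfive \streq \tmfour\isub\var\tmfive$, and if $\tmfive \streq \tmfive'$ then $\tmthree\isub\var\tmfive \streq \tmthree\isub\var{\tmfive'}$) and why the closing side of the square is stated with the full $\streq$, not a single-step relation.

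Hence your concluding claim that ``the residual terms differ by at most a single axiom instance'' is false in general, and the assertion that context closure is a ``straightforward induction on $\ctx$'' hides exactly the nontrivial step. The fix is small: state and prove the two substitutivity properties first (easy inductions), then in the inductive case $\ctx = \tmfive\esub\var{\ctxtwo}$ invoke them when the outer ES is the one being fired. Everything else in your outline goes through.
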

\end{toappendix}

In fact, $\streq$-equivalence classes are an isomorphic representation of proof nets, as the proof net $P_\tm$ associated to $\tm$ does the same exact rewriting steps as $\tm$, that is, the translation from $\tm$ to $P_\tm$ also strongly commutes with evaluation (turning term steps into proof nets steps, and vice-versa), see \citet{Accattoli-proofnets}. Consequently, $\streq$-equivalent terms are \emph{indistinguishable} and should be equated by any sensible notion of program equivalence on pure terms (extensions with effects can invalidate some cases of structural equivalence, typically $\equivcom$, as we discuss below). In particular, structural equivalence is included in contextual equivalence, as shown by \citet{DBLP:journals/pacmpl/AccattoliG22}.

\paragraph{Revisiting the Benchmarks From Calculi} The equivalences of Moggi's can be re-understood via structural equivalence. The idea is that by applying $\tom$ to the two sides of an equivalence, we can express it via ES, and many of become cases of $\streq$. Consider $\equiv_{ass}$:
	\begin{center}
		$\begin{array}{ccccc}
	(\la\vartwo\tm)((\la\var\tmtwo)\tmthree) &\equiv_{ass}& (\la\var(\la\vartwo\tm)\tmtwo)\tmthree) & \mbox{with }\var\notin\la\vartwo\tm
	\\
\downarrow_{\mult} && \downarrow_{\mult}
	\\
(\la\vartwo\tm)\tmtwo\esub\var\tmthree &\equivexsthree& ((\la\vartwo\tm)\tmtwo)\esub\var\tmthree
\end{array}$
	\end{center}

Similarly, the equivalences $\equivexsthree$ and $\equivcom$ of \refsect{benchmarks} become the axioms with the corresponding label of $\streq$. Therefore, structural equivalence covers the proof nets equivalences. Moggi's equivalences $\equivlid$, $\equivlad$ and $\equivrad$, instead, are not covered. By applying $\tom$, we obtain the following reformulation for the VSC which corresponds to Moggi's original formulation with $\letexp$-expressions. If $\var\notin \fv\tmtwo$ for $\equivlad$ and $\var\notin \fv\val$ for $\equivrad$:
\begin{center}
\begin{tabular}{c@{\hspace{1.5cm}} c@{\hspace{1.5cm}} c}
		$\begin{array}{ccccc}
	(\la\var\var)\tm &\equivlid& \tm 
	\\
\downarrow_{\mult} && =
	\\
\var\esub\var\tm &\equivlid& \tm
\end{array}$
&
$\begin{array}{ccccc}
	(\la\var\var\tmtwo)\tm &\equivlad& \tm\tmtwo 
	\\
\downarrow_{\mult} && =
	\\
(\var\tmtwo)\esub\var\tm &\equivlad& \tm\tmtwo
\end{array}$
&
$\begin{array}{ccccc}
	(\la\var\val\var)\tm &\equivrad& \val\tm 
	\\
\downarrow_{\mult} && \downarrow_{\mult}
	\\
(\val\var)\esub\var\tm &\equivrad&  \val\tm
\end{array}$
\end{tabular}
	\end{center}
In presence of structural equivalence, the ES formulation of the application decomposition equivalences $\equivlad$ and $\equivrad$ is derivable:  $\equivlad$ follows from $\equivsone$ and $\equivlid$, $\equivrad$ follows from \cadr{$\equivsthree$}{$\equivexsthree$} and $\equivlid$. In fact, \cadr{by using the extended version $\equivexsthree$ of $\equivsthree$ }{by using the (unrestricted) proof nets equivalence $\equivexsthree$ }we can actually derive the extended version of $(\tmtwo\var)\esub\var\tm \equivexrad  \tmtwo\tm$ (if $\var\notin\fv\tmtwo$) of $\equivrad$. Therefore, the whole of Moggi's equivalences is captured by simply adding $\equivlid$ to structural equivalence.

\paragraph{A Family of Strong Commutations} It turns out that various sub-relations of structural equivalence also verify strong commutation (to allow one to verify the claims of this paragraph, the proof of \refprop{strong-bisimulation} is in Appendix D of the additional material on HotCRP, as the proof is omitted from \citet{accattoli+paolini-vsc} where the result first appeared). We here describe them by the root rules, while implicitly referring to the same closure used in the definition of $\streq$. For instance, $\equivcom$ by itself strongly commutes with $\tovsc$, as well as $\equivsone$ by itself, or $\equivexsthree\cup\equivass$, or some of the restricted version such as ${\equivsthree}\cup\equivass$\adr{, where $\equivsthree$ is the \emph{left-to-right} restriction of $\equivexsthree$}. In particular, $\equivsone\cup\equivsthree\cup\equivass$, which would be the restriction of $\streq$ to a non-commutative setting for effects, also strongly commutes with $\tovsc$. In the next section, we shall craft a nf-similarity for the VSC which is parametric with respect to these variants of structural equivalence.

\paragraph{Shuffling} \citeauthor{shufflingcalculus}'s \emph{shuffling calculus} \citeyearpar{shufflingcalculus} is yet another extension of Plotkin's calculus with the \cbv variants of two shuffling rules introduced in \cbn by \citet{regnier94}, and it is used in semantical studies about \cbv \cite{DBLP:journals/lmcs/GuerrieriPR17,bohmtree-cbv}. Its \cbv shuffling rules are also instances of structural equivalence if one applies $\tom$ steps as discussed above.
\section{Net Similarity for the Value Substitution Calculus}
\label{sect:net}
In this section, we finally define the nf-similarity for the VSC we are interested in, \emph{net similarity}, which extends $\leqncbv$ along two axes:
\begin{enumerate}
\item \emph{Changing the underlying calculus:} evaluation is now based on the VSC, where $\Omega$-terms have a diverging characterization, hence $\Omega$-equivalence $\equivom$ will be trivially included in the bisimilarity, and 
\item \emph{Changing the nf-bisimilarity}: allowing simulations to test terms modulo structural equivalence $\streq$, in order to avoid artificial distinctions of indistinguishable terms and to validate the commutation of $\letexp$s $\equivcom$. 
\end{enumerate}
The problematic addition of the left identity equivalence $\equivlid$ is discussed at the end of the section.

\paragraph{Changing the Underlying Calculus} Moving from Plotkin's \cbv calculus to the VSC, normal forms are harder to describe, as one can see from the grammar of normal forms in Figure \ref{fig:vsc}.

Since the case $\isctxp\var\ntm$ is quite unpleasant to manage in proofs, we go one step further and consider normal forms modulo $\equivsone$, picking the $\equivsone$-representative of each normal form where the context $\isctx$ has been pushed out of the unpleasant case for inert terms. This can be done harmlessly because $\equivsone$ by itself verifies strong \emph{commutation} with respect to $\tovsc$ and the described representant can be easily described at the big-step level. The following lemma gives a grammar for normal forms modulo $\equivsone$; we overload/redefine \emph{inert terms}, which from now on only refer to the new grammar.

\begin{lemma}
$\tm$ is $\tovsc$-normal iff there exists $\ntm$, given by the following grammar, such that $\tm \equivsone \ntm$.
\begin{center}
	$\begin{array}{r@{\hspace{.5cm}}rlll}
		\textsc{Applicative Inert terms}  &
		\itmapp,\itmapptwo & \grameq &  \var\fire\mid \itmapp\fire
		\\
		\textsc{Inert terms}  &
		\itm,\itmtwo & \grameq &  \itmapp\mid \itm\esub\var\itmtwo
		\\
		\tovsc\textsc{-normal forms modulo $\equivsone$}  &
 \ntm,\ntmtwo & \grameq &  \val \mid \itm\mid \fire\esub\var\itmtwo
	\end{array}
	$\end{center}
\end{lemma}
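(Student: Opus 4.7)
The lemma is a biconditional; I would handle each direction separately.

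\emph{Direction $(\Leftarrow)$.} I would first check by a short structural induction on the new grammar that every $\ntm$ it generates is $\tovsc$-normal. The key observations are: an applicative inert $\itmapp$ has a variable head, so it is never of the form $\sctxp{\la\var\tm_0}$ and no $\tom$-root applies; and every ES in the grammar has an inert argument, which cannot be of the form $\sctxp{\val}$, so no $\toe$-root applies. Then, given $\tm \equivsone \ntm$ with $\ntm$ in the grammar, I would argue by contradiction: if $\tm \tovsc \tm'$, the fact that $\equivsone$ alone strongly commutes with $\tovsc$ (as highlighted in the paragraph ``A Family of Strong Commutations'' and provable along the lines of \Cref{prop:strong-bisimulation} restricted to the single axiom) would yield $\ntm \tovsc \ntm'$ with $\tm' \equivsone \ntm'$, contradicting the normality of $\ntm$.

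\emph{Direction $(\Rightarrow)$.} Since $\tm$ is $\tovsc$-normal, the previous characterization of normal forms places $\tm$ in the original grammar of \Cref{fig:vsc}. I would proceed by structural induction on that derivation. The cases $\tm = \val$, $\tm = \fire\esub{\var}{\itm}$, and $\tm = \itm_1\esub{\var}{\itm_2}$ reduce directly to the IH on the immediate subterms and reassembly via the corresponding clauses of the new grammar. Two subcases of original inerts require real work:

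\emph{(i)} For $\isctxp{\var\ntm'}$, I would sub-induct on the length of $\isctx$. The empty case gives $\var\ntm''$ with $\ntm''$ provided by the IH on $\ntm'$, which is an $\itmapp$ in the new grammar. The case $\isctx = \isctx'\esub{\vartwo}{\itmtwo}$ is handled by the IHs on $\isctx'p{\var\ntm'}$ and $\itmtwo$, combined through an outer ES in the new grammar.

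\emph{(ii)} For $\itm_1\fire$, the IH gives $\itm_1 \equivsone \itmapp_1\esub{\vartwo_1}{\itmtwo_1}\cdots\esub{\vartwo_n}{\itmtwo_n}$ and $\fire \equivsone \fire'$ (all in the new grammar). I would then invoke the root axiom of $\equivsone$ repeatedly, from the outermost ES inward, to reshape as
\[
(\itmapp_1\esub{\vartwo_1}{\itmtwo_1}\cdots\esub{\vartwo_n}{\itmtwo_n})\,\fire'
\ \equivsone\
(\itmapp_1\fire')\esub{\vartwo_1}{\itmtwo_1}\cdots\esub{\vartwo_n}{\itmtwo_n},
\]
$\alpha$-renaming each $\vartwo_i$ to a fresh name when needed to ensure the side condition $\vartwo_i \notin \fv{\fire'}$ of the $\equivsone$-axiom. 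The resulting $\itmapp_1\fire'$ is an $\itmapp$ by the clause $\itmapp ::= \itmapp\fire$ of the new grammar.

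\emph{Main obstacle.} The delicate case is \emph{(ii)}: the repeated outward pull of ESs across a left-application depends precisely on the root axiom $(\tm\tmthree)\esub{\var}{\tmtwo}\equivsone\tm\esub{\var}{\tmtwo}\tmthree$ with $\var\notin\fv\tmthree$, and on coordinating freshness side conditions with on-the-fly $\alpha$-renaming. Conceptually, this is exactly the proof-nets-motivated rearrangement the $\sigma_1$-axiom was designed for, but care is needed so that renamings in one step do not spoil the premises of subsequent steps.
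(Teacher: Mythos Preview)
The paper states this lemma without proof (it is not among the results proved in the appendices either), so there is no paper proof to compare against. Your proposal is correct and is the natural argument: the $(\Leftarrow)$ direction via strong commutation of $\equivsone$ alone, and the $(\Rightarrow)$ direction by structural induction on the old normal-form grammar of \Cref{fig:vsc}, with the only non-trivial case being $\itm\,\fire$ where one must iterate the $\equivsone$ root axiom to extrude the ES list past the right argument.

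Two minor remarks that would tighten the write-up. First, your induction silently relies on a \emph{mutual} statement: if the old-grammar term is an inert, then its $\equivsone$-representative lies in the new \emph{inert} grammar (not just the new normal-form grammar). You use this implicitly when you apply the IH to $\itm_1$ in case~(ii) and to $\itmtwo$ in case~(i); it is worth stating the strengthened hypothesis explicitly. Second, in case~(i) the ``IH on $\itmtwo$'' is an appeal to the main structural induction (on a strict subterm), not to the sub-induction on $\isctx$; phrasing it as size induction on the term, with a case analysis along the old grammar, makes this unambiguous. Neither point is a gap---the argument goes through as you describe.
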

Note the notion of applicative inert terms, which are specific inert terms where no substitutions can be pushed outward by $\equivsone$. 
Example: the $\tovsc$-normal form $\tm =\var\esub\var{\vartwo\vartwo}\varthree$ does not belong to grammar above, but $\tm \equivsone (\var\varthree)\esub\var{\vartwo\vartwo} = \ntm$ and $\ntm$ is described by the grammar.

\paragraph{Big-Step Evaluation} We then need to express evaluation to $\tovsc$-normal form modulo $\equivsone$ as a big-step predicate. For that, we re-define the inert substitution contexts, which at first sight are defined as before, except that the notion of inert term now has changed.
\begin{center}
	$\begin{array}{r@{\hspace{.5cm}}rlll}
		\textsc{Inert Substitution Contexts} & \isctx & \grameq &  \ctxhole\mid \isctx\esub{\var}{\itm} 
	\end{array}
	$\end{center}

\begin{definition}[Big-step evaluation $\bsvsct k$]
Big-step $\VSC$ modulo $\equivsone$ evaluation $\tm \bsvsct k \ntm$ is given by:
\begin{center}

\begin{tabular}{cccccc}

	\infer[(\bsvsctax)]{\val \bsvsct 0 \val}{}
	
	&
	
	\infer[(\bsvsctapm)]{\tm\tmtwo \bsvsct {k+i+1} \isctxp\fire}{
		\tm \bsvsct k \isctxp{\la\var\tmthree}
		&
		{\tmthree\esub\var\tmtwo} \bsvsct i \fire
	}

	\\[6pt]
	\infer[(\bsvsctapvar)]{\tm\tmtwo \bsvsct {k+h} \isctxp{\var\fire}}{
		\tm \bsvsct k \isctxp\var
		&
		\tmtwo \bsvsct h \fire
	}
	&
	\infer[(\bsvsctese)]{\tm\esub\var{\tmtwo} \bsvsct {k+i+1} \isctxp\fire}{
		\tmtwo \bsvsct k \isctxp{\val}
		&
		{\tm\isub\var{\val}} \bsvsct i \fire
	}

	\\[6pt]
	
	\infer[(\bsvsctapi)]{\tm\tmtwo \bsvsct {k+h} \isctxp{\itmapp\fire}}{
		\tm \bsvsct k \isctxp\itmapp
		&
		\tmtwo \bsvsct h \fire
	}
	&
\infer[(\bsvsctesi)]{\tm\esub\var\tmtwo \bsvsct {k+h} \fire\esub\var\itm}{
	\tm \bsvsct k \fire
	&
	\tmtwo \bsvsct h \itm
}
\end{tabular}

\end{center}
\emph{Notation}: $\tm \bsvscts \fire$ abbreviates \emph{there exists a $k$ such that $\tm \bsvsct k \fire$}.
\end{definition}

 The given big-step system captures $\equivsone$ via the rules ($\bsvsctapvar$) and ($\bsvsctapi$): when applying an inert term / variable surrounded by an inert context $\isctx$ to a normal term $\ntm$, the context $\isctx$ is pushed out of the application, obtaining $\isctxp{\var\fire}$ and $\isctxp{\itmapp\fire}$ instead of $\isctxp{\var}\fire$ and $\isctxp{\itmapp}\fire$. As a result, the $\bsvsctsym$-normal forms are exactly those of the given grammar for $\tovsc$-normal forms modulo $\equivsone$.

	We prove this big-step system to be correct and complete with respect to small-step reduction. 	Importantly, substitutivity also smoothly adapts.
	\begin{toappendix}
	\begin{proposition}
		\label{l:ss-bs-equivalence_vsce}
		$\tm \bsvsct k \fire$ if and only if there exists a normal form $\firep$ such that $\tm \tovsc^k \firep \equivsone \fire$.
	\end{proposition}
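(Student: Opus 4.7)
The plan is to prove the two directions separately, relying on the strong commutation of $\equivsone$ with $\tovsc$ (which, as noted in the discussion following \refprop{strong-bisimulation}, holds for $\equivsone$ alone) and on the diamond property of $\tovsc$ (\refprop{vsc-diamond}).

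For the forward direction, I would proceed by induction on the derivation of $\tm \bsvsct k \fire$. The axiom case is immediate. For the multiplicative application rule, the inductive hypothesis on the first premise yields $\tm \tovsc^k \firep$ with $\firep \equivsone \isctxp{\la\var\tmthree}$; by compatibility of $\equivsone$ we have $\firep\,\tmtwo \equivsone \isctxp{\la\var\tmthree}\tmtwo$, and the root step $\isctxp{\la\var\tmthree}\tmtwo \tom \isctxp{\tmthree\esub\var\tmtwo}$ transfers via strong commutation to a $\tom$-step from $\firep\,\tmtwo$ whose target is $\equivsone$-equivalent to $\isctxp{\tmthree\esub\var\tmtwo}$; an iterated application of strong commutation combined with the inductive hypothesis on the second premise yields the required sequence of $k+i+1$ small-steps ending in some $\equivsone$-representative of $\isctxp\fire$. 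The other compound rules are handled analogously: apply the inductive hypothesis to each premise, paste the resulting reductions into the appropriate evaluation context, and, when a root redex must fire, use strong commutation to transfer it through the $\equivsone$-rearrangement implicit in the normal form produced by the premise.

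For the backward direction, I would proceed by induction on $k$. The base case $k=0$ reduces to the auxiliary claim that every $\tovsc$-normal $\tm$ admits $\tm \bsvsct 0 \fire$ for some $\fire \equivsone \tm$; this is shown by structural induction on $\tm$, first invoking the preceding lemma to obtain a canonical $\fire \equivsone \tm$ in the grammar of normal forms modulo $\equivsone$, and then building a $0$-step big-step derivation following the shape of $\fire$ (values by axiom, applicative inert terms by iterated application rules, and terms of the form $\fire'\esub\var\itm$ by the corresponding ES rule). For the inductive step, decompose $\tm \tovsc \tmp \tovsc^{k-1} \firep$, apply the inductive hypothesis to $\tmp$ to obtain $\tmp \bsvsct{k-1} \fire$, and invoke an expansion lemma stating that if $\tm \tovsc \tmp$ and $\tmp \bsvsct j \fire$ then $\tm \bsvsct{j+1} \fire$.

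The main obstacle is the expansion lemma, proved by induction on the derivation of $\tmp \bsvsct j \fire$ with case analysis on the position of the redex of $\tm \tovsc \tmp$ relative to the top-level big-step rule. The complication is that the application and ES rules implicitly extrude the outer ES context $\isctx$, so the step $\tm \tovsc \tmp$ may land inside a subterm that the big-step derivation has relocated modulo $\equivsone$. To align the two, one slides the step through $\isctx$ using strong commutation of $\equivsone$ and appeals to the diamond property of $\tovsc$ to ensure that the choice of reduction path does not inflate the count, so that the big-step index increases by exactly one.
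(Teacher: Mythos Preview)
Your forward direction matches the paper's: both argue by induction on the big-step derivation, and your use of strong commutation of $\equivsone$ to transfer root steps across the implicit rearrangements is exactly what is needed.

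The backward direction, however, takes a different route from the paper and your sketch underestimates the difficulty of the expansion lemma for the full relation $\tovsc$. Two concrete obstacles:
\begin{itemize}
\item When the step lies in the body of an ES, i.e.\ $\tm=\tmrone\esub\var\tmtwo \tovsc \tmronep\esub\var\tmtwo=\tmp$, the big-step derivation of $\tmp$ may use rule $(\bsvsctese)$, whose premise is about $\tmronep\isub\var\val$. To rebuild a derivation for $\tm$ you must transport the step through the meta-substitution, i.e.\ you need small-step substitutivity. Your sketch does not invoke it.
\item More seriously, when the step is a \emph{root} $\tom$ with a \emph{non-inert} substitution context, say $\tm=\sctxp{\la\var\tmthree}\tmtwo \rtom \sctxp{\tmthree\esub\var\tmtwo}=\tmp$ with $\sctx$ containing reducible ES content, the big-step shapes of $\tm$ (an application, analysed by $(\bsvsctapm)$) and $\tmp$ (a stack of ES, analysed by $(\bsvsctese)/(\bsvsctesi)$) are completely different. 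There is no sub-derivation of $\tmp\bsvsct j\fire$ on which the induction hypothesis applies directly; you would need a nested induction on $\sctx$ together with substitutivity to reconcile the two trees. Neither diamond nor strong commutation of $\equivsone$ helps here---the mismatch is between a general $\sctx$ and the $\isctx$ that the big-step rules require, not a $\equivsone$-rearrangement.
\end{itemize}
The paper sidesteps both issues by introducing an auxiliary subreduction $\tos\subseteq\tovsc$ whose root rules already use inert $\isctx$ and whose evaluation contexts traverse ES bodies only when the ES content is inert. By design, $\tos$ mirrors the big-step rules, so the expansion lemma for $\tos$ is a straightforward structural induction (no substitutivity, no nested argument on $\sctx$). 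One then shows that $\tos$ has the same normal forms as $\tovsc$ and appeals to the diamond (random descent) property to conclude that any $\tovsc^k$-reduction to normal form is also a $\tos^k$-reduction, which reduces the backward direction to the easy expansion lemma for $\tos$. That decoupling of ``matching the big-step system'' from ``being the full reduction'' is the missing idea in your plan.
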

	\end{toappendix}
	

	\begin{toappendix}
	\begin{proposition}[Substitutivity]
		\label{prop:substitutivity_vsce}
	\hfill
	\begin{enumerate}
	\item 
	\emph{Small-step}: if $\tm~\tovsc~\tmp$ then $\tm\isubst\val\var ~\tovsc~ \tmp\isubst\val\var$.
	\item 
	\emph{Big-step}: 	if $\tm\isubst\val\var \bsvsct k \ntm$ then $\exists$ $k'$ and $\ntmtwo$ such that $ \tm \bsvsct {k'} \ntmtwo$ and $\ntmtwo\isubst\val\var\bsvsct {k-k'} \ntm$.
	\end{enumerate}
\end{proposition}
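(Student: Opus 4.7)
The plan is to handle the two parts in sequence, using part 1 as a key lemma for part 2, in the spirit of the proof of \refprop{ncbv-coherence}'s small-step precursor and \reflemma{splitting_weak} for naive similarity.

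For part 1, I proceed by induction on the structure of the evaluation context $\evctx$ witnessing $\tm\tovsc\tmp$, reducing to the root rules $\rtom$ and $\rtoe$. The essential observation is that value substitution $\isubst\val\vartwo$ commutes with substitution contexts: up to on-the-fly $\alpha$-renaming ensuring freshness, $(\sctxp\tmthree)\isubst\val\vartwo = \sctxtwop{\tmthree\isubst\val\vartwo}$ for some $\sctxtwo$, and the class of values is stable under value substitution. For $\rtom$, this is enough: the shape $\sctxp{\la\var\tmthree}\tmfour$ is preserved into $\sctxtwop{\la\var\tmthree\isubst\val\vartwo}\tmfour\isubst\val\vartwo$, whose reduct matches $(\sctxp{\tmthree\esub\var\tmfour})\isubst\val\vartwo$. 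For $\rtoe$, we additionally need the standard substitution lemma $(\tmthree\isub\var\valtwo)\isubst\val\vartwo = (\tmthree\isubst\val\vartwo)\isub\var{\valtwo\isubst\val\vartwo}$, which holds assuming $\var\neq\vartwo$ and $\var\notin\fv\val$ (ensured by $\alpha$-renaming). Closure by the evaluation context is then routine.

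For part 2, I first argue that $\tm$ must terminate under $\tovsc$. Indeed, if $\tm$ admitted an infinite $\tovsc$-reduction, then iterating part 1 would lift it to an infinite $\tovsc$-reduction from $\tm\isubst\val\var$; but $\tm\isubst\val\var$ reaches a normal form (via $\bsvsct k$), and since $\tovsc$ is diamond (\refprop{vsc-diamond}), the existence of a terminating reduction rules out any diverging one. Thus $\tm \bsvsct{k'}\ntmtwo$ for some $k'$ and $\ntmtwo$. By \refprop{ss-bs-equivalence_vsce}, this means $\tm \tovsc^{k'} \ntmtwo'$ with $\ntmtwo'$ a $\tovsc$-normal form and $\ntmtwo'\equivsone\ntmtwo$. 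Iterating part 1, I get $\tm\isubst\val\var \tovsc^{k'} \ntmtwo'\isubst\val\var$. Meanwhile, the hypothesis $\tm\isubst\val\var \bsvsct k \ntm$ rephrases as $\tm\isubst\val\var \tovsc^{k} \ntm'$ with $\ntm'\equivsone\ntm$ and $\ntm'$ $\tovsc$-normal. By the diamond property, all normalizing reductions from $\tm\isubst\val\var$ share length $k$, so $\ntmtwo'\isubst\val\var \tovsc^{k-k'}\ntm'$. Converting back, $\ntmtwo'\isubst\val\var \bsvsct{k-k'}\ntm$.

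It remains to pass from $\ntmtwo'\isubst\val\var$ to $\ntmtwo\isubst\val\var$. Since $\ntmtwo\equivsone\ntmtwo'$ and $\equivsone$ is a compatible equivalence relation stable under value substitution, $\ntmtwo\isubst\val\var \equivsone \ntmtwo'\isubst\val\var$. Strong commutation of $\equivsone$ (or just of the $\equivsone$-fragment) with $\tovsc$---the restriction of \refprop{strong-bisimulation}---transports the reduction to $\ntmtwo\isubst\val\var \tovsc^{k-k'} \ntm''$ for some $\ntm'' \equivsone \ntm' \equivsone \ntm$, yielding $\ntmtwo\isubst\val\var \bsvsct{k-k'}\ntm$ by \refprop{ss-bs-equivalence_vsce}. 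The main obstacle is the bookkeeping around $\equivsone$: because the big-step system chooses a specific $\equivsone$-representative of each normal form, the small-step extractions must be reconciled with the equivalence class at the end; the strong commutation of $\equivsone$ with $\tovsc$ makes this go through smoothly.
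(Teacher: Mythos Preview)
Your proposal is correct and follows essentially the same approach as the paper: part~1 by induction on the evaluation context (the paper just says ``induction on $\tm\tovsc\tmp$ (induction on contexts), using the fact that a value where a variable is substituted by a value is still a value''), and part~2 by the contraposition-via-part-1 termination argument followed by the diamond property and the big-step/small-step equivalence. Your treatment of the $\equivsone$ bookkeeping (substitutivity of $\equivsone$ and strong commutation to transport the remaining reduction across the equivalence) is more explicit than the paper's, which compresses this into the single line ``$\tm\isubst\val\var \tovsc^{k'}\equivsone \ntmtwo\isubst\val\var$ hence $\ntmtwo\isubst\val\var\bsvsct{k-k'}\ntm$ because the reduction $\tovsc$ is diamond'', but the substance is the same.
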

\end{toappendix}

	\paragraph{Changing the Nf-Bisimilarity by Adding Structural Equivalence to Simulations, Parametrically} We are now also going to refine the definition of naive similarity by adding structural equivalence $\streq$ to the nf-bisimilarity. In fact, we are going to do something more general, in order to obtain a whole family of similarities. We abstract away structural equivalence $\streq$ as a more abstract notion of \emph{mirror equivalence} $\equivx$, defined by the properties of $\streq$ that are needed to prove that similarity modulo $\streq$ is compatible. Then, similarity is defined \emph{parametrically} in a mirror $\equivx$, and net similarity is obtained by taking $\streq$ as mirror. The terminology \emph{mirror} is meant to suggest that $\equivx$ can modify terms only in inessential ways.
	
\begin{definition}[Mirror]
An equivalence relation $\equivx$ is a \emph{mirror} for $\tovsc$ when:
\begin{enumerate}
\item \emph{Strong commutation}: if  $\tm \equivx\tmtwo$ and $ \tm \tovsc\tmp$ then $\tmtwo \tovsc\tmtwop$ and $\tmp\equivx\tmtwop$.

\item \emph{Substitutivity}: if $\tm\equivx\tmtwo$ then $\tm\isub\var\val \equivx \tmtwo\isub\var\val$ for all values $\val$.
\end{enumerate}
\end{definition}

\begin{definition}[Mirrored and \net similarities]
	Let $\relsym$ be relation and $\equivx$ be a mirror over VSC terms.	We say that $\relsym$ is a \emph{$\equivx$-mirrored (\nafex) simulation} if $\relsym\subseteq\relvscx$, where $\tm \relvscx\tmp$ holds whenever $\tm,\tmp$ satisfy one of the following clauses:
	\begin{center}
		$\begin{array}{r@{\hspace{.3cm}}r@{\hspace{.3cm}}l@{\hspace{.3cm}}l@{\hspace{.3cm}}lll}
		\textup{(\nafex 1)} & &&\tm\bsvsctdiv & \ie ~ \text{has no} \tovsc \text{-normal form.}
		\\
		\textup{(\nafex 2)} & \tm \bsvscts \var  &\text{and}& \tmp \bsvscts \var
		\\
		\textup{(\nafex 3)} & \tm \bsvscts \la\var\tmfirst &\text{and}& \tmp \bsvscts\la\var\tmpfirst
		& \text{with} ~ \tmfirst \rel \tmpfirst
		\\
		\textup{(\nafex 4)} & \tm \bsvscts \ntmONE \ntmTWO &\text{and}& \tmp \bsvscts \ntmtwo \equivx \ntmONEtwo \ntmTWOtwo
		& \text{with} ~ \ntmONE \rel \ntmONEtwo ~\text{and}~ \ntmTWO \rel \ntmTWOtwo
		\\
		\textup{(\nafex 5)} & \tm \bsvscts \ntmONE\esub\var\ntmTWO &\text{and}& \tmp \bsvscts \ntmtwo \equivx \ntmONEtwo\esub\var\ntmTWOtwo
		& \text{with} ~ \ntmONE \rel \ntmONEtwo ~\text{and}~ \ntmTWO \rel \ntmTWOtwo
	\end{array}
	$\end{center}
		$\equivx$-Mirrored (\nafex) similarity , written $ \leqvscx $, is defined the largest $\equivx$-mirrored simulation.
		
		\Net simulations and \net similarity $\leqnet$ are defined as the $\equivx$-mirrored simulations and similarities with structural equivalence $\streq$ as mirror $\equivx$. 
\end{definition}

\paragraph{Making Inert Terms Explicit in the Clauses} Cases (\nafex 4) and (\nafex 5) can be rewritten using the grammar of normal forms, which is useful for clarity in proofs. For (\nafex 4), it actually splits in two:
	\begin{center}
		$\begin{array}{r@{\hspace{.3cm}}r@{\hspace{.3cm}}l@{\hspace{.3cm}}l@{\hspace{.3cm}}lll}
		
		\text{(\nafex 4a)} & \tm \bsvscts \var \ntmONE &\textit{and}& \tmp \bsvscts \ntmtwo \equivx\var \ntmONEtwo 
		& \textit{with}~\ntmONE \rel \ntmONEtwo
		\\
		\text{(\nafex 4b)} & \tm \bsvscts \itmapp \ntmONE &\textit{and}& \tmp \bsvscts \ntmtwo \equivx\itmapptwo \ntmONEtwo 
		& \textit{with} ~ \itmapp \rel \itmapptwo ~\textit{and}~ \ntmONE \rel \ntmONEtwo
		\\
		\text{(\nafex 5)} & \tm \bsvscts \ntmONE\esub\var\itm &\textit{and}& \tmp \bsvscts \ntmtwo \equivx\ntmONEtwo\esub\var\itmtwo
	& \textit{with} ~ \itm \rel \itmtwo ~\textit{and}~ \ntmONE \rel \ntmONEtwo
	\end{array}
	$\end{center}

\paragraph{Compatibility} The compatibility proof for $\leqvscx$ follows the same structure of the one for $\leqncbv$ (in Appendix E of the additional material on HotCRP). At the evaluation level, we have already seen that substitutivity holds also for the weak reduction of the VSC and with the addition of the equivalence $\equivsone$ (\refprop{substitutivity_vsce}). At the level of the simulation, we need to refine the notion of Lassen closure, by adding rule $\mscequivx$ accounting for mirrors.
\begin{definition}[Mirrored Lassen closure]
Let the \emph{mirrored Lassen closure} $\mlasrelsym$ of $\relsym$ be:
	\begin{center}
\begin{tabular}{cccccc} 
\begin{tabular}{cccccc} 
	\infer[\msclift ]{\tmrone \mlasrel \tmrtwo} {\tmrone \rel \tmrtwo}
	&
	\infer[\mscvar]{\var \mlasrel \var}	{}
	&
	\infer[\mscabs ]{\la\var\tmrone \mlasrel \la\var\tmrtwo} {\tmrone \mlasrel \tmrtwo}
	&
		\infer[\mscapp ] {\tmrone\tmrthree  \mlasrel  \tmrtwo\tmrfour} {\tmrone  \mlasrel \tmrtwo & \tmrthree \mlasrel \tmrfour }  
		
\end{tabular}
\\[14pt]
\begin{tabular}{cccccc}

	\infer[\mscesub ]{\tmrone\esub\var{\tmrthree} \mlasrel \tmrtwo\esub\var{\tmrfour}{}} {\tmrone \mlasrel \tmrtwo & \tmrthree \mlasrel \tmrfour }
&
	\infer[\mscsub ]{\tmrone\isub\var{\valof\tmrthree} \mlasrel \tmrtwo\isub\var{\valof\tmrfour}{}} {\tmrone \mlasrel \tmrtwo & \valof\tmrthree \mlasrel \valof\tmrfour }	
	&
\infer[\mscequivx]{\tmrone\mlasrel\tmrtwo}{\tmronep \mlasrel\tmrtwop & \tmrtwop \equivx \tmrtwo}
\end{tabular}
\end{tabular}		
	\end{center}
\end{definition}
Then the reasoning for compatibility---and in particular the coherence properties---smoothly adapts, using the mirror properties for rule $\mscequivx$ in the proof that the closure preserves mirrored simulations. In particular, strong commutation of $\equivx$ implies that it preserves normal forms and steps, that is, the coherence properties. Summing up, we obtain our main result.
\begin{toappendix}
\begin{theorem}[Compatibility and soundness of $\leqvscx$ and $\leqnet$]
	\label{thm:nafex-included-leqc}
Let $\equivx$ be a mirror.
	\begin{enumerate}
	\item \emph{Redundancy of the mirrored Lassen closure}: $\leqvscx \,= \mlassenop \leqvscx$.
	\item \Nafex similarity $\leqvscx$ is compatible and included in the \cbv contextual preorder $\leqcv$.
	\item \Net similarity $\leqnet$ is compatible and included in the \cbv contextual preorder $\leqcv$.
	\end{enumerate}
\end{theorem}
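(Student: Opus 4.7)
My approach follows the template of the compatibility proof for naive similarity (\refprop{main-lemma_naive}), adapting it to the richer definition of mirrored simulations and to the presence of explicit substitutions. All three items of the statement reduce to point 1. Given redundancy of the mirrored Lassen closure, compatibility of $\leqvscx$ follows since $\mlassenop\leqvscx$ is compatible by construction, and soundness, namely inclusion in $\leqcv$, then follows from \refprop{congruence-included-contextual-equivalence} using the trivial adequacy of $\leqvscx$ (clause (\nafex 1) handles divergence and the other four clauses require evaluation to some normal form). Point 3 is then an instance of point 2 once one checks that $\streq$ is a mirror: strong commutation with $\tovsc$ is \refprop{strong-bisimulation}, and substitutivity of $\streq$ with respect to value substitution is a routine induction on the generation of $\streq$.

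For point 1, the inclusion $\leqvscx \subseteq \mlassenop\leqvscx$ is immediate from the rule $\msclift$. The converse reduces, as in the naive case, to the \emph{main lemma}: if $\relsym$ is a mirrored simulation then so is $\mlassenop\relsym$; applied to $\relsym = \leqvscx$ and combined with maximality of $\leqvscx$ among mirrored simulations, this yields $\mlassenop\leqvscx \subseteq \leqvscx$. I would prove the main lemma via the technical auxiliary statement: if $\tm\mlasrel\tmtwo$ and $\tm \bsvsct k \fire$, then there exists $\fire'$ with $\tmtwo \bsvscts \fire'$ satisfying one of the clauses of $\relvscx$ for $\fire$, with the corresponding sub-components related by the $\mlasrel$-closure of $\leqvscx$ (the $\equivx$-flexibility of clauses (\nafex 4) and (\nafex 5) will be crucial below). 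The proof proceeds by induction on the lexicographic pair $(k, d)$, where $d$ is the size of the derivation of $\tm\mlasrel\tmtwo$, with case analysis on its last rule. Cases $\msclift$, $\mscvar$, $\mscabs$, $\mscapp$, and $\mscsub$ adapt directly from the naive case, relying on a coherence property of mirrored simulations with respect to reduction and substitution (analogous to \refprop{ncbv-coherence}) and on big-step substitutivity (\refprop{substitutivity_vsce}); the new rule $\mscesub$ for explicit substitutions is treated by an additional case analysis on the shape of the big-step derivation of $\tm \bsvsct k \fire$, following the pattern of the $\mscapp$ case.

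The main obstacle is the new rule $\mscequivx$: here $\tm \mlasrel \tmtwo$ is deduced from a shorter derivation whose right-hand term is related to $\tmtwo$ by the mirror $\equivx$. The induction hypothesis yields some $\fire'$ reachable via $\bsvscts$ from that right-hand term and fitting a clause of $\relvscx$ for $\fire$; transferring this to $\tmtwo$ requires iterating the strong commutation property of the mirror along the reduction to $\fire'$, producing a normal form $\fire''$ reached from $\tmtwo$ with $\fire' \equivx \fire''$. Strong commutation also guarantees that $\fire''$ is $\tovsc$-normal. The key observation is that clauses (\nafex 4) and (\nafex 5) are already stated modulo $\equivx$, so the extra $\equivx$-rewriting needed to pass from $\fire'$ to $\fire''$ is absorbed directly into the existential witness of the clause, with no further structural inspection of $\fire''$. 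Substitutivity of the mirror plays the symmetric role in the coherence property, ensuring that value substitution commutes cleanly with $\equivx$ in the $\mscsub$ case. This modular use of the two mirror axioms is exactly what justifies the parametric definition of $\leqvscx$ and makes $\leqnet$ drop out as the instance $\equivx = \streq$.
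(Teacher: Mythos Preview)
Your proposal is correct and follows essentially the same approach as the paper: reduce everything to the main lemma (preservation of mirrored simulations under $\mlassenop{\cdot}$), prove that lemma by lexicographic induction on $(k,d)$ with the expected case analysis, handle the new $\mscequivx$ rule via strong commutation plus the absorption of $\equivx$ into the right-hand side of clauses (\nafex 4)--(\nafex 5), and instantiate with $\streq$ for net similarity. One small wording slip: in the auxiliary statement the sub-components should be related by $\opnafexp{\mlasrelsym}$ for the given simulation $\relsym$, not by ``the $\mlasrel$-closure of $\leqvscx$''; the specialization to $\leqvscx$ only happens when you apply the lemma, not in its statement.
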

\end{toappendix}

\paragraph{Fixpoints and Benchmarks.} For any mirror $\equivx$, and in particular for $\equivx\defeq Id$ and $\equivx\defeq \streq$, one can show that Turing's and Curry's \cbv fixpoint combinators are \nafex bisimilar. The proof relies on exactly the same relation that for naive bisimilarity (\refprop{naive-fix-points-equiv}). \adr{Plotkin's $\betav$ and VSC conversions are included in \nafex similarities. }Unlike naive and enf similarities, \nafex and net similarities validate $\Omega_v$-equivalence $\equivomv$. Net and $\equivx$-mirrored bisimilarities do not however validate $\eta_v$ equivalence, one has to change the case for abstractions to accommodate it, and it does not validate \cbn duplication, as for instance $(\vartwo\var\var)\esub\var{\varthree\Id}$ and $\vartwo(\varthree\Id)(\varthree\Id)$ are both $\tovsc$-normal but not $\streq$-equivalent.

%



\paragraph{Left Identity Is Not Validated By \Nafex} Analogously, \net similarity does not validate Moggi's $\equivlid$ rule, because $\var\esub\var{\vartwo\Id} \not \tovsc \vartwo\Id$ and  $\var\esub\var{\vartwo\Id} \not \streq \vartwo\Id$. Thus, \enf is not included in \net similarity. About adding $\equivlid$, it is easy to define a \nafexp\equivlid bisimulation, but the current  compatibility proof does not go through, as $\equivlid$ is not a mirror for $\tovsc$ (in particular, it does not strongly commute with $\tovsc$) and the proof technique is not able (for now) to handle $\equivlid$ terms, as it breaks coherence for normal forms, that is, the fact that if  $\ntm\,\mlasrelsym \tm$ then $\tm$ is normal (and the symmetric statement).

One could also add $\equivlid$ as a reduction step of the VSC, but then the reduction is no longer diamond, and the diamond property (or at least the invariance of the number of steps to normal form) is essential in the current proof technique. It is thus unclear how to extend \net similarity as to validate $\equivlid$. Next section introduces a program equivalence including $\eqnet$ and validating $\equivlid$.

\paragraph{Net Bisimilarity Cannot Be Extensional} Lassen introduced an extension of enf bisimilarity validating $\equivetav$, enf bisimilarity up to $\eta_v$, at the same time that he introduced enf bisimilarity \cite{LassenEnf}. As of now, that modification cannot be applied to net bisimilarity. We explain the problem which boils down to, again, the fact that net does not validate the left identity law $\equivlid$.
	
	Let us consider that there exists a nf-bisimilarity $\relsym$ based on the VSC (\ie $\tm\tovsc\tmp$ implies $\tm\rel\tmp$) which is a compatible equivalence relation, and which validates $\equivetav$. Then, in particular, we have that $\var\rel\la\vartwo\var\vartwo$, as $\equivetav\subseteq\relsym$. By compatibility (for $\ctx=\ctxhole\tm$), $\var\tm\rel(\la\vartwo\var\vartwo)\tm$ and by reduction and transitivity, $\var\tm\rel\var\vartwo\esub\vartwo\tm$. This means that at least $(\var\tm,\var\vartwo\esub\vartwo\tm)$ must be included in the relation $\relsym$, which is a subcase of $\equivrad$, which itself can be implied by the $\equivlid$ rule and structural equivalence $\streq$. As net bisimilarity does not validate Moggi's laws, and we do not currently know how to include them, net bisimilarity is unable to include $\equivetav$.



\ignore{
\paragraph{Fixed point combinators are \nafex bisimilar.} As Lassen did with \enf bisimilarity, we can prove the equivalence of call-by-value versions of Curry's and Turing's fixed point combinators:

\[ \curryfix = \la\var{\curryfixaux\curryfixaux}\text{, where } \curryfixaux = \la\varthree{\var\la\vartwo{\varthree\varthree\vartwo}}\]
\[ \turingfix = (\la\varthree{\la\var{\var\la\vartwo{\varthree\varthree\var\vartwo}}})(\la\varthree{\la\var{\var\la\vartwo{\varthree\varthree\var\vartwo}}}) \]

To prove that they are \nafe bisimilar we build a bisimulation containing $\{(\curryfix,\turingfix)\}$.

\[ \relsym \defeq \{(\curryfix,\turingfix), (\la\var\curryfixaux\curryfixaux,\la\var{\var\la\vartwo{\turingfix\var\vartwo}}),(\curryfixaux\curryfixaux,{\var\la\vartwo{\turingfix\var\vartwo}}), \]\[ 
(\var\la\vartwo{\curryfixaux\curryfixaux\vartwo},{\var\la\vartwo{\turingfix\var\vartwo}}),(\var,\var),(\la\vartwo{\curryfixaux\curryfixaux\vartwo},\la\vartwo{\turingfix\var\vartwo}),\]
\[({\curryfixaux\curryfixaux\vartwo},{\turingfix\var\vartwo}),((\var\la\vartwo{\curryfixaux\curryfixaux\vartwo})\vartwo,({\var\la\vartwo{\turingfix\var\vartwo}})\vartwo),(\vartwo,\vartwo)\} \]

$\relsym \subseteq \opnafep{\relsym}$ by construction (we start with $(\curryfix,\turingfix)$ and we add to $\relsym$ what is needed for each element to satisfy one \nafe case), and sym(R) is a \nafe simulation as well, hence $\curryfix \nafebisim \turingfix$.

This relation is similar to the one defined by Lassen since those terms have "pure lambda-calculus" normal forms (nothing more can be done at the end with VSC and explicit substitutions).
}

\section{From Operational to Denotational Semantics: the Type Preorder}
\label{sect:type-preorder}
In this section, we study a behavioral preorder, the \emph{type preorder} $\leqtype$, which is not defined as a nf-similarity, it is instead induced by a denotational model. Namely, \citeauthor{DBLP:conf/csl/Ehrhard12}'s \cbv relational model \citeyearpar{DBLP:conf/csl/Ehrhard12} presented as a system of multi types, also known as \emph{non-idempotent intersection types}. We shall prove that both Lassen's similarity $\leqenf$ and our net similarity $\leqnet$ are included in $\leqtype$. The aim is to show that, while $\leqenf$ and $\leqnet$ are incomparable, they can be combined in a cost-sensitive preorder (the contextual preorder combines them but it is not cost-sensitive). We introduce the bare minimum about \cbv multi types. For more, see \cite{Accattoli-Guerrieri-TypesFireballs,DBLP:journals/pacmpl/AccattoliG22}.


%
%

\begin{figure}
\begin{tabular}{c}
		$\begin{array}{ccccc}
		\textsc{Linear Types} & \ltype, \ltypetwo &\grameq&
		\mtype \multimap \mtypetwo
		\\
		\textsc{Multi Types} & \mtype, \mtypetwo &\grameq& \multitype{n}{\ltype} & n\geq 0
		\end{array}$
		
		\\[10pt]

		\begin{tabular}{ccc}
			\infer[\typingruleAx]{\var \hastype [\ltype] \types \var \hastype \ltype}{}
			
			&
			
			\infer[\typingruleAbs]{\typectx \types \la\var\tm \hastype \mtype \multimap \mtypetwo}{\typectx, \var \hastype \mtype \types \tm \hastype\mtypetwo}
			
			&
			
			\infer[\typingruleMany]{\biguplus_{i\in I} \typectx_i \types \val \hastype \biguplus_{i\in I} [\ltype_i]}{(\typectx_i \types \val \hastype \ltype_i)_{i\in I}  & I~ \text{finite} }

		\end{tabular}
		\\[8pt]
		\begin{tabular}{cc}
			\infer[\typingruleApp]{\typectx \uplus \typectxtwo \types \tm\tmtwo \hastype \mtypetwo}{ \typectx \types \tm \hastype [\mtype \multimap \mtypetwo] & \typectxtwo \types \tmtwo \hastype \mtype }
			&
			\infer[\typingruleES]{\typectx \uplus \typectxtwo \types \tm\esub\var\tmtwo \hastype \mtypetwo}{ \typectx, \var \hastype \mtype \types \tm \hastype \mtypetwo & \typectxtwo \types \tmtwo \hastype \mtype }
			
		\end{tabular}
\\[-5pt]
\end{tabular}
\caption{Call-by-Value Multi Type System for VSC.}
\label{fig:multi-types-vsc}
\end{figure}

%
%
 
\paragraph{Multi Types} \Cref{fig:multi-types-vsc} gives the definition of multi types $\mtype$ for the VSC, which  mutually depends on the definition of linear types $\ltype$. Multi types are defined as finite multi-sets $\multitype{n}{\ltype}$, which intuitively denote the intersection $\ltype_1 \cap \ldots \cap \ltype_n$, where the intersection $\cap$ is a commutative, associative and non-idempotent ($A \cap A \not = A$) operator, the neutral element of which is $\emptytype$, the empty multi set.
Note that there is no ground type, its role is played by the empty multi type $\emptytype$.

A typing judgment is of the shape $\typectx \types \tm \hastype T$ where $T$ is a linear or a multi type and $\typectx$ is a typing context, that is an assignment of multi types to a finite set of variables ($\typectx = \var_1 \hastype \mtype_1, \ldots, \var_n \hastype \mtype_n$). A typing derivation $\typeder \derives \typectx \types \tm \hastype \mtype$ is a tree built from the derivation rules defined in \Cref{fig:multi-types-vsc} which ends with the typing judgment $\typectx \types \tm \hastype \mtype$.

\paragraph{Typing Rules} Linear types only type values, via the rules $\typingruleAx$ and $\typingruleAbs$. To give a multi type to value $\val$, one has to use the $\typingruleMany$ rule, turning an indexed family of linear types for $\val$ into a multi type. Note that any value can be typed with the empty multi type $\emptytype$. 
The symbol $\uplus$ is the disjoint union operator on multi sets (which corresponds to our non-idempotent intersection on multi types).  

\paragraph{Characterization of Termination} A key property of multi types is that they characterize $\tovsc$ termination. The characterization is proved via subject reduction and expansion.

\begin{theorem}[Characterization of termination, \cite{DBLP:journals/pacmpl/AccattoliG22}]
\label{thm:mtypes-charac}
\hfill
\begin{enumerate}
\item 	\label{p:mtypes-charac-subject} \emph{Subject reduction and expansion}:	let $\tm \tovsc \tmtwo$. Then $\typectx \types \tm \hastype \mtype$ if and only if $\typectx \types \tmtwo \hastype \mtype$.

\item $\tm$ is $\tovsc$-terminating if and only if there exists $\typectx$ and $\mtype$ such that $\typectx \types \tm \hastype \mtype$.
\end{enumerate}
\end{theorem}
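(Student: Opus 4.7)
The proof follows the standard template for non-idempotent intersection / multi type systems, with the extra care required by the at-a-distance rewriting rules and substitution contexts of the VSC. I would proceed in four steps: (a) a substitution lemma, (b) subject reduction, (c) subject expansion, (d) the termination characterization via a quantitative measure on derivations.

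\textbf{Step (a): substitution lemma.} I would first prove the key splitting/combining lemmas: if $\typectx, \var \hastype \mtype \types \tm \hastype \mtypetwo$ and $\typectxtwo \types \val \hastype \mtype$, then $\typectx \uplus \typectxtwo \types \tm\isub\var\val \hastype \mtypetwo$, and conversely if $\typectx \types \tm\isub\var\val \hastype \mtypetwo$ then there exist $\typectxthree, \mtype$ with $\typectxthree, \var \hastype \mtype \types \tm \hastype \mtypetwo$, a typing context $\typectxfour$ with $\typectxfour \types \val \hastype \mtype$, and $\typectx = \typectxthree \uplus \typectxfour$. Both go by induction on $\tm$, exploiting that the $\typingruleMany$ rule allows splitting a multi type on a value into independent linear components. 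I would also prove an analogous lemma for substitution contexts $\sctx$, which is straightforward by induction on the length of the ES list.

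\textbf{Step (b) and (c): subject reduction and expansion.} For the multiplicative rule, a derivation typing $\sctxp{\la\var\tm}\tmtwo$ with $\mtypetwo$ ends with $\typingruleApp$ with a sub-derivation typing $\sctxp{\la\var\tm}$ with $[\mtype \multimap \mtypetwo]$; by the substitution context lemma this decomposes into a typing of $\la\var\tm$, which via $\typingruleMany/\typingruleAbs$ gives typings of $\tm$ with $\var \hastype \mtype$, and I can reassemble the result into a derivation of $\sctxp{\tm\esub\var\tmtwo}$ with $\mtypetwo$; expansion is the converse reassembly. For the exponential rule $\tm\esub\var{\sctxp\val} \toe \sctxp{\tm\isub\var\val}$, a derivation typing the redex uses $\typingruleES$ plus the substitution lemma of Step (a); expansion uses the converse of the substitution lemma. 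Rules $\typingruleApp$, $\typingruleES$ behave well with $\uplus$, so the typing context $\typectx$ and type $\mtype$ are preserved on the nose, which gives point \ref{p:mtypes-charac-subject}.

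\textbf{Step (d): termination characterization.} For the direction ``terminating implies typable'', I would first show that every $\tovsc$-normal form $\ntm$ (described by the grammar of \Cref{fig:vsc}) is typable in some context with some multi type. This is a routine induction on the grammar of $\ntm$ and $\itm$: values are typable (possibly with the empty multi type), and inert terms can always be typed by assigning appropriate multi types to their free head variables. Then from $\tm \tovsc^\ast \ntm$ with $\ntm$ typable I conclude by iterating subject expansion. For the converse ``typable implies terminating'', I would assign to every derivation $\typeder \derives \typectx \types \tm \hastype \mtype$ a natural number $\sizeof\typeder$ (\eg the number of rules, or more finely the number of $\typingruleApp$ plus $\typingruleES$ rules) and prove the strict decrease property: if $\tm \tovsc \tmtwo$ then any derivation $\typeder$ of $\tm$ can be transformed into a derivation $\typederp$ of $\tmtwo$ with $\sizeof\typederp < \sizeof\typeder$. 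Combined with the diamond property of $\tovsc$ (\refprop{vsc-diamond}), which ensures that if some reduction terminates then all do and with the same length, this bounds the length of every reduction from $\tm$ by $\sizeof\typeder$.

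\textbf{Main obstacle.} The delicate part is the quantitative measure and the matching strict decrease for the exponential rule at a distance $\tm\esub\var{\sctxp\val} \toe \sctxp{\tm\isub\var\val}$: substitution of the value $\val$ may duplicate its sub-derivation as many times as the multiplicity of $\var$ in the multi type assigned to it, or erase it entirely when that multiplicity is $0$. The measure must be chosen so that the extra copies of the $\val$-derivation, plus the reorganization across $\sctx$, are still dominated by the $\typingruleES$ step that disappears. Non-idempotence is precisely what makes this accounting possible: the sub-derivation of $\val$ comes split into a disjoint union of exactly the right number of copies, and each copy is strictly smaller than the ES rule it replaces, yielding the strict decrease.
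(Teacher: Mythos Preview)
The paper does not give its own proof of this theorem: it is stated with a citation to \cite{DBLP:journals/pacmpl/AccattoliG22} and used as a black box. So there is no ``paper's own proof'' to compare against here.

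That said, your outline is the standard one for non-idempotent intersection type systems and is essentially what the cited reference does. Two small remarks. First, the paper does later fix a concrete size measure on derivations---the number of rule occurrences excluding $\typingruleMany$---and uses it in \refprop{bisimulation-preserves-typeder} and \reflemma{smaller-derivations-stuck}; that is the natural candidate for your Step~(d). Second, your description of the ``main obstacle'' slightly overstates the difficulty: with non-idempotent types, the substitution lemma gives an \emph{exact} additive decomposition of derivation sizes, so the strict decrease for $\toe$ comes simply from the disappearing $\typingruleES$ node (and for $\tom$ from the disappearing $\typingruleApp$ and $\typingruleAbs$ nodes), not from any delicate domination argument. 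The rest of your plan is sound.
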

Since $\tovsc$-divergence characterizes \cbv $\Omega$-terms (\refthm{cbv-scrutability-characterization}), \emph{not being typable} with multi types characterizes it too, by the previous theorem.

\paragraph{Multi Types Induce a Model}
Multi types induce a model
by interpreting a term 
as the set of its type judgments. 
A possibly empty list of pairwise distinct variables $\vec{\var} = (\var_1, \dots, \var_n)$ is \emph{suitable for} $\tm$ if $\fv{\tm} \subseteq \{\var_1, \dots, \var_n\}$.
If $\vec{\var} = (\var_1, \dots, \var_n)$ is suitable for $\tm$, the \emph{semantics} $\sem{\tm}_{\vec{\var}}$ \emph{of} $\tm$ \emph{for} $\vec{\var}$ is given by:\adr{
\begin{center}$
	\sem{\tm}_{\vec{\var}} \defeq \{((\mtypetwo_1,\dots, \mtypetwo_n),\mtype) \mid 
	\exists 
	\, 
	\concl{\tderiv}{\var_1 \hastype \mtypetwo_1, \dots, \var_n \hastype \mtypetwo_n}{\tm}{\mtype} \}$
\end{center}}
This is exactly \citeauthor{DBLP:conf/csl/Ehrhard12}'s \cbv relational model \citeyearpar{DBLP:conf/csl/Ehrhard12}. Ehrhard considers it with respect to Plotkin's calculus. We do not prove that it is a model for the VSC, because there is no formal notion of VSC model. We do have, however, that subject reduction and expansion (\refthmp{mtypes-charac}{subject}) ensure that the interpretation $\sem{\tm}_{\vec{\var}}$ is \emph{invariant} by $\tovsc$, and compatibility of the induced equational theory is proved below. These properties are what the definitions of $\l$-models or categorical models are meant to ensure. Moreover, the characterization theorem (\refthm{mtypes-charac}) ensures that $\sem{\tm}_{\vec{\var}}$ is adequate.

\begin{corollary}[\cite{DBLP:journals/pacmpl/AccattoliG22}]
	\label{thm:invariance-and-adequacy}
	Let $\tm$ be a term in the \VSC with $\vec{\var}  = (\var_1, \dots, \var_n)$ suitable for it.
\begin{enumerate}
	\item \emph{Invariance}: if $\tm (\tovsc \cup \streq) \tmtwo$ then $\sem{\tm}_{\vec{\var}} = \sem{\tmtwo}_{\vec{\var}}$.
	\item \emph{Adequacy for} $\tovsc$: $\sem{\tm}_{\vec{\var}}$ is non-empty if and only if $\tm$ is $\tovsc$-terminating.

\end{enumerate}
\end{corollary}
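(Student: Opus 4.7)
Both points are meant to be corollaries of \refthm{mtypes-charac}, so the strategy is to reduce each to a short manipulation of that theorem together with one bookkeeping fact about structural equivalence.

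\textbf{Invariance.} By the definition of $\sem{\cdot}_{\vec{\var}}$, the statement is equivalent to: for every typing context $\typectx$ with $\dom(\typectx) \subseteq \{\var_1,\dots,\var_n\}$ and every multi type $\mtype$, one has $\typectx \types \tm \hastype \mtype$ iff $\typectx \types \tmtwo \hastype \mtype$. The plan is to split on whether $\tm \tovsc \tmtwo$ or $\tm \streq \tmtwo$. The $\tovsc$-case is immediate from subject reduction and expansion (\refthmp{mtypes-charac}{subject}). For the $\streq$-case, I would prove the analogue of subject reduction/expansion for structural equivalence by induction on the derivation of $\tm \streq \tmtwo$, the only nontrivial step being the four root axioms $\equivsone,\equivexsthree,\equivass,\equivcom$. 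For each axiom one shows that a derivation on one side can be rearranged into a derivation on the other side with the same context and type, simply by permuting the $\typingruleApp$ and $\typingruleES$ rules (and their disjoint-union bookkeeping $\biguplus$) around each other according to the axiom, using the side conditions on free variables to guarantee that the ES being moved only touches the intended sub-derivations. This is routine because the axioms only rearrange constructors without duplicating or erasing them, so the multiset of leaves of the derivation is preserved.

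\textbf{Adequacy.} This is a direct consequence of \refthmp{mtypes-charac}{2}: $\tm$ is $\tovsc$-terminating iff there exist some $\typectx$ and $\mtype$ such that $\typectx \types \tm \hastype \mtype$. The only missing step is to match the shape of typing contexts in that statement with the list $\vec{\var}$ fixed in $\sem{\tm}_{\vec{\var}}$. From right to left: any derivation $\typectx \types \tm \hastype \mtype$ witnesses termination by the theorem. From left to right: if $\tm$ is $\tovsc$-terminating, the theorem gives some $\typectx \types \tm \hastype \mtype$ with $\dom(\typectx) \subseteq \fv{\tm} \subseteq \{\var_1,\dots,\var_n\}$; I then extend $\typectx$ to an assignment over the whole list $\vec{\var}$ by assigning the empty multi type $\emptytype$ to each missing variable, which does not alter derivability (a straightforward induction on the derivation, since $\emptytype$ acts as a neutral element for $\biguplus$ and can be freely introduced at the axiom rules). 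The resulting tuple $((\mtypetwo_1,\dots,\mtypetwo_n),\mtype)$ lies in $\sem{\tm}_{\vec{\var}}$.

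\textbf{Main obstacle.} The only substantive piece of work is verifying type invariance under the four axioms of $\streq$; everything else is essentially unpacking \refthm{mtypes-charac}. I expect each axiom to reduce to showing that certain $\typingruleApp$/$\typingruleES$ instances commute, with the side conditions $\var \notin \fv\cdot$ ensuring that multi-set splittings can be regrouped so that the displaced variable is typed on the correct side. No deeper combinatorics on derivations seems needed, because all axioms of $\streq$ preserve the term skeleton up to permutation of ES.
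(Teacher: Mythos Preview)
Your proposal is correct and matches the paper's own (very brief) justification: the paper simply invokes \refthmp{mtypes-charac}{subject} for invariance under $\tovsc$ and \refthm{mtypes-charac} for adequacy, deferring the full proof to the cited work. Your explicit treatment of the $\streq$-case by permuting $\typingruleApp$/$\typingruleES$ instances along the four axioms, and your context-padding argument for adequacy, are exactly the routine details that the citation covers and that the present paper leaves implicit.
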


\paragraph{The Type Preorder} Every model $M$ induces an equational theory defined as $\tm =_M \tmtwo$ if $\sem\tm_M = \sem\tmtwo_M$. For the multi types model, we consider the \emph{preorder} induced by $\interp\tm_{\vec\var} \subseteq \interp\tmp_{\vec\var}$.
\begin{definition}[Type preorder]
The type preorder $\tm \leqtype \tmp$ holds if $\typectx \types \tm \hastype \mtype$ implies $\typectx \types \tmp \hastype \mtype$.
\end{definition}
Rephrasing the definition with respect to interpretations, we have that $\tm \leqtype \tmp$ if $\interp\tm_{\vec\var} \subseteq \interp\tmp_{\vec\var}$ for every list of suitable variables $\vec\var$. By the adequacy of $\interp\tm_{\vec\var}$, it follows the adequacy of $\leqtype$. Compatibility is easily proved directly, for once, and soundness follows.

\begin{toappendix}
\begin{proposition}[Compatibility of $\leqtype$] \label{prop:type-preorder-is-compatible}
\hfill
\begin{enumerate}
\item \emph{Compatibility}: if $\tm\leqtype\tmp$ then $\ctxp\tm \leqtype \ctxp\tmp$.
\item \emph{Soundness}: if $\tm\leqtype\tmp$ then $\tm\leqcv\tmp$.
\end{enumerate}
\end{proposition}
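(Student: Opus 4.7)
The plan is to prove compatibility by induction on the structure of the context $\ctx$, and to derive soundness from Proposition \ref{prop:congruence-included-contextual-equivalence} via adequacy.

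\emph{Compatibility.} Fixing $\tm \leqtype \tmp$, I would induct on $\ctx$. The base case $\ctx = \ctxhole$ is immediate. In each inductive step the idea is to inspect the last rule of a derivation $\typectx \types \ctxp\tm \hastype \mtype$, apply the induction hypothesis to the sub-derivation typing the sub-term that contains the hole, and then reapply the same rule to rebuild a derivation of $\typectx \types \ctxp\tmp \hastype \mtype$. For the application and ES contexts, the last rule is $\typingruleApp$ or $\typingruleES$ respectively, and both of their premises end in multi-typed judgments, so the induction hypothesis applies directly.

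\emph{The delicate case} is when $\ctx$ has an abstraction at its root. Then $\ctxp\tm$ is a value, and any derivation $\typectx \types \ctxp\tm \hastype \mtype$ must end with a $\typingruleMany$ rule that splits $\mtype = \biguplus_{i\in I}[\mtype_i\multimap\mtypetwo_i]$ and $\typectx = \biguplus_{i\in I}\typectx_i$, combining a family of $\typingruleAbs$-derivations, each of which has as premise a multi-typed judgment beneath the binder $\la\var$ for the strictly smaller inner sub-context of $\ctx$. The induction hypothesis yields the corresponding transferred typings with $\tmp$ in place of $\tm$, and reapplying $\typingruleAbs$ and $\typingruleMany$ closes the case.

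\emph{Soundness.} Proposition \ref{prop:congruence-included-contextual-equivalence} reduces soundness to adequacy: if $\tm \leqtype \tmp$ and $\tm$ is $\tovsc$-terminating, then so is $\tmp$. By the characterization of termination (Theorem \ref{thm:mtypes-charac}), termination of $\tm$ yields some typing $\typectx \types \tm \hastype \mtype$, which transfers to $\typectx \types \tmp \hastype \mtype$ by $\tm \leqtype \tmp$; applying the theorem in the reverse direction gives termination of $\tmp$.

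\emph{Main obstacle.} The only subtle step is the abstraction case, since it requires unfolding both $\typingruleMany$ and $\typingruleAbs$ and applying the induction hypothesis beneath a binder. Stating the induction hypothesis uniformly over all typing contexts $\typectx$ and multi types $\mtype$ makes the extension of the typing context with the bound variable harmless.
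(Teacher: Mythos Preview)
Your proposal is correct and follows essentially the same approach as the paper. The paper factors the one-step closure properties (that $\leqtype$ is preserved by application, abstraction, and explicit substitution) into a separate compositionality lemma and then runs the induction on $\ctx$ using those properties, whereas you inline the same argument directly into the inductive cases; your treatment of the abstraction case via $\typingruleMany$ followed by $\typingruleAbs$ is exactly what the paper's abstractive clause unpacks to, and your soundness argument via adequacy and \refthm{mtypes-charac} matches the paper's.
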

\end{toappendix}

\paragraph{Enf and Net Are Included in Type}
Now, we show that $\leqenf$ and $\leqnet$ are both included in $\leqtype$. For that, we prove that, if $\tm\leqenf\tmp$ or $\tm\leqnet \tmp$, then any typing derivation for $\tm$ can be transformed in a typing derivation for $\tmp$ having the same final judgement. The next two propositions and the associated lemma are proved by induction on the following notion: the \emph{size} $\size\typeder$ of a type derivation $\typeder$, defined as the number of rule occurences in $\typeder$ except for rule $\typingruleMany$.

\begin{toappendix}
\begin{proposition}
	\label{prop:bisimulation-preserves-typeder}
\hfill
\begin{enumerate}
\item \emph{Net simulations and type derivations}:
	let $\relsym$ be a \net simulation. If $\tm \rel \tmp$ and $\typeder: \typectx \types \tm \hastype \mtype$ then there exists a derivation $\typederp: \typectx \types \tmp \hastype \mtype$.
	\item \emph{Net is included in Type}: if $\tm\leqnet\tmp$ then $\tm\leqtype\tmp$.
	\end{enumerate}
\end{proposition}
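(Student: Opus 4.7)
Part 2 follows immediately from Part 1 applied to $\relsym \defeq \leqnet$, which is itself a net simulation (the largest one): it gives that every typing derivation for $\tm$ yields one for $\tmp$ with the same judgment, which is exactly $\tm \leqtype \tmp$.

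For Part 1, the plan is strong induction on $\size{\typeder}$, combining subject reduction/expansion from \refthm{mtypes-charac} with invariance under $\streq$ from \refthm{invariance-and-adequacy}. Since $\typectx \types \tm \hastype \mtype$, the term $\tm$ is $\tovsc$-terminating, so $\tm \bsvscts \ntm$ for some normal form $\ntm$. Iterating subject reduction along the $\tovsc$-prefix and then applying $\equivsone$-invariance at the last step converts $\typeder$ into a derivation $\typederN : \typectx \types \ntm \hastype \mtype$ with $\size{\typederN} \leq \size{\typeder}$. Since $\tm$ terminates, clause (\nafex 1) is ruled out, so one of (\nafex 2)--(\nafex 5) applies to the pair $(\tm, \tmp)$.

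The remainder is a case analysis driven by the shape of $\ntm$, which pins down the last rule(s) of $\typederN$: $\ntm = \var$ forces $\typingruleMany$ over copies of $\typingruleAx$, giving $\typectx = \var \hastype \mtype$ directly; $\ntm = \la\var\tmfirst$ forces $\typingruleMany$ over instances of $\typingruleAbs$; either applicative shape $\ntm = \var\ntmONE$ or $\ntm = \itmapp\ntmONE$ forces $\typingruleApp$; and $\ntm = \ntmONE\esub\var\itm$ forces $\typingruleES$. In each non-base case the exposed sub-derivations for the immediate sub-terms have size strictly less than $\size{\typederN}$, since every rule but $\typingruleMany$ contributes at least $1$, and hence strictly less than $\size{\typeder}$. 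Invoking the induction hypothesis along the relational premises supplied by the clause (e.g.\ $\tmfirst \rel \tmpfirst$ for (\nafex 3), or $\itmapp \rel \itmapptwo$ together with $\ntmONE \rel \ntmONEtwo$ for (\nafex 4b)) converts each sub-derivation into one with the same judgment for the matching sub-term of the structured normal form of $\tmp$. Re-assembling these via the very same last rule(s) produces a derivation $\typectx \types \widetilde\ntm \hastype \mtype$, where $\widetilde\ntm$ is the structured normal form of $\tmp$ given by the clause (e.g.\ $\widetilde\ntm = \la\var\tmpfirst$ for (\nafex 3) or $\widetilde\ntm = \var\ntmONEtwo$ for (\nafex 4a)). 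Since $\tmp \bsvscts \ntmtwo \streq \widetilde\ntm$, invariance under $\streq$ followed by iterated subject expansion along the $\tovsc$-path yields the desired $\typectx \types \tmp \hastype \mtype$.

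The main technical obstacle is size-bookkeeping: every invocation of the induction hypothesis must act on a strictly smaller derivation. It rests on two facts about multi types on the VSC, implicit in \refthm{mtypes-charac}: subject reduction along $\tovsc$ \emph{strictly} decreases $\size{\cdot}$, so the preparatory normalization never inflates $\typederN$; and $\streq$-invariance preserves $\size{\cdot}$, since $\streq$ only rearranges the scopes of explicit substitutions without altering the count of non-$\typingruleMany$ rules.
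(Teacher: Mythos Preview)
Your proposal is correct and follows essentially the same approach as the paper: strong induction on derivation size, normalize $\tm$ via subject reduction to obtain a derivation for $\ntm$ of no larger size, case-analyze the normal form against the simulation clauses, apply the induction hypothesis to the strictly smaller sub-derivations, reassemble, and transport back to $\tmp$ via $\streq$-invariance and subject expansion. The only cosmetic difference is that the paper organizes the case analysis by the last typing rule of the derivation for $\ntm$ rather than by the shape of $\ntm$, and leaves the size-preservation under $\streq$ implicit where you make it explicit.
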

\end{toappendix}

To relate \enf similarity and typability, we need a lemma to deal with Lassen's stop-and-go. 
\begin{toappendix}
\begin{lemma}[Stop-and-go and type derivations]
	\label{l:smaller-derivations-stuck}
	Let $\typeder \derives \typectx \types {\levctxp{\var\val}} \hastype \mtype$ and $\varthree$ be fresh. Then there exist $\typeder_\levctx \derives  \typectx_\levctx, \varthree \hastype \mtypetwo \types  \levctxp{\varthree} \hastype \mtype$ with $\size{\typeder_\levctx}<\size\typeder$ and $\typeder_\val \derives \typectx_\val \types \val : \mtypetwo_1$ with $\size{\typeder_\val}<\size\typeder$.
\end{lemma}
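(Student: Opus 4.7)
The plan is to proceed by structural induction on the left context $\levctx$, equivalently by induction on $\size\typeder$, isolating the subderivation of $\typeder$ that types the occurrence $\var\val$ and replacing it by a minimal derivation for the fresh variable $\varthree$.

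For the base case $\levctx = \ctxhole$, the term $\levctxp{\var\val}$ is just $\var\val$, so $\typeder$ must conclude with $\typingruleApp$ from premises typing $\var \hastype [\mtypetwo \multimap \mtype]$ (forced to be $\typingruleAx$ followed by a single $\typingruleMany$) and $\val \hastype \mtypetwo$. The latter comes through $\typingruleMany$ from a family of linear derivations $(\typeder_{\val,i} \derives \typectx_i \types \val \hastype \ltype_i)_{i \in I}$ indexed by the linear components of $\mtypetwo$. I would take $\typeder_\levctx$ to be the minimal derivation $\varthree \hastype \mtype \types \varthree \hastype \mtype$ built from $|\mtype|$ instances of $\typingruleAx$ combined by one $\typingruleMany$, and $\typeder_\val$ to be any of the extracted $\typeder_{\val,i}$, reading $\mtypetwo_1$ as a specific linear component $\ltype_i$ of $\mtypetwo$.

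For the inductive case $\levctx = \val'\levctx'$ or $\levctx = \levctx'\tm$, the root of $\typeder$ is $\typingruleApp$, whose subderivation on the side containing the hole types $\levctx'[\var\val]$ with some multi-type. I would apply the induction hypothesis to this inner subderivation to obtain strictly smaller $\typeder_{\levctx'}$ and $\typeder_\val$, and then reassemble the outer $\typingruleApp$ around $\typeder_{\levctx'}$ and the untouched sibling derivation to form $\typeder_\levctx$ typing $\levctxp{\varthree}$. The strict decrease for $\typeder_\levctx$ is inherited from the IH, since the reconstruction adds exactly the same $\typingruleApp$ it consumed; the decrease for $\typeder_\val$ passes through unchanged.

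The hard part will be the precise bookkeeping of context splits under $\uplus$ in the reassembly, ensuring that $\typectx_\levctx$ combines the outer contributions with the $\varthree \hastype \mtypetwo$ binding correctly. A related subtlety is the reading of $\mtypetwo_1$ as a single linear component of $\val$'s multi-type rather than the entire $\typingruleMany$ bundle: selecting one extracted linear derivation is what guarantees that $\typeder_\val$ is strictly smaller than $\typeder$, as it becomes a proper subtree. Finally, the strict decrease for $\typeder_\levctx$ in the base case relies on the $\typingruleApp$ at $\var\val$ and the $\typingruleAx$ for $\var$ disappearing together with the full derivation of $\val$, so that they compensate the fresh axioms introduced to type $\varthree$ with $\mtype$.
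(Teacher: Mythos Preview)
Your inductive skeleton on the left context $\levctx$ is exactly the paper's, and your treatment of the two inductive cases (recurse on the hole side, reassemble with the same $\typingruleApp$ and the untouched sibling) is the same argument the paper gives.

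Where you diverge is the base case for $\typeder_\val$. You read $\mtypetwo_1$ as a single linear component of the multi-type of $\val$ and take $\typeder_\val$ to be one of the extracted linear sub-derivations. In the paper, $\mtypetwo_1$ is the \emph{whole} multi-type of $\val$: the root $\typingruleApp$ of $\typeder$ has left premise $\var \hastype [\mtypetwo_1 \multimap \mtype]$ (a single $\typingruleAx$ under $\typingruleMany$) and right premise the full derivation of $\val \hastype \mtypetwo_1$. That full right-hand subderivation \emph{is} $\typeder_\val$, and its size is $\size\typeder - 2$ (you lose one $\typingruleApp$ and one $\typingruleAx$; $\typingruleMany$ does not count). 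So there is no need to strip down to a linear component to get strict decrease, and doing so would in fact not match the statement, since $\mtypetwo_1$ ranges over multi-types. Your remark that ``selecting one extracted linear derivation is what guarantees that $\typeder_\val$ is strictly smaller'' is therefore based on a misreading; the decrease is already there.

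Your caution about the strict decrease for $\typeder_\levctx$ in the base case is reasonable. The paper simply asserts that this derivation has size $1$; your compensation argument is the right instinct, though neither you nor the paper spell out the arithmetic carefully (the number of fresh axioms for $\varthree$ is $|\mtype|$, not $1$). This is a presentational wrinkle rather than a gap in your overall strategy, which is the paper's.
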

\end{toappendix}

\begin{toappendix}
\begin{proposition}
	\label{l:enf-bisimulation-preserves-typeder}
	\hfill
\begin{enumerate}
\item \emph{Enf simulations and type derivations}:
	let $\relsym$ be an \enf simulation. If $\tm \rel \tmp$ and $\typeder \derives \typectx \types \tm \hastype \mtype$ then there exists a derivation $\typederp \derives \typectx \types \tmp \hastype \mtype$.
	\item \emph{Enf is included in Type}: if $\tm\leqenf\tmp$ then $\tm\leqtype\tmp$.
	\end{enumerate}
\end{proposition}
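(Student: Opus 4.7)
Part 2 is a direct corollary of part 1: since $\leqenf$ is itself an \enf simulation (the largest one), applying part 1 with $\relsym \defeq \leqenf$ gives that $\typectx \types \tm \hastype \mtype$ implies $\typectx \types \tmp \hastype \mtype$ whenever $\tm \leqenf \tmp$, which is exactly $\tm \leqtype \tmp$.

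For part 1, the plan is a strong induction on $\size\typeder$, with a case analysis on which clause of $\relenf$ is satisfied by the pair $\tm \rel \tmp$. The key preliminary observation is that every $\tolw$-step of Plotkin's calculus is simulated by a $\tom$-step followed by a $\toe$-step in the VSC, so that $\tm \bswlefts \ntm$ lifts to a $\tovsc$-reduction $\tm \tovsc^* \ntm$. Combined with the quantitative form of subject reduction underlying \refthmp{mtypes-charac}{subject}, this turns $\typeder$ into a derivation $\typeder_\ntm \derives \typectx \types \ntm \hastype \mtype$ of size at most $\size\typeder$. Dually, any derivation that I construct for the corresponding left normal form $\ntmtwo$ of $\tmp$ will be transported back to a derivation of $\tmp$ with the same judgment via subject expansion along $\tmp \tovsc^* \ntmtwo$, so the real work in each clause is to build the derivation at the level of normal forms.

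The cases then proceed as follows. Clause (enf 1) is vacuous: $\tm \bswleftdiv$ lifts to an infinite $\tovsc$-reduction from $\tm$, contradicting typability by \refthm{mtypes-charac}. Clause (enf 2) is immediate, since $\ntm = \var = \ntmtwo$. For clause (enf 3), any derivation of $\la\var\tmfirst \hastype \mtype$ factors through $\typingruleMany$ applied to finitely many $\typingruleAbs$ instances, each exposing a strictly smaller sub-derivation of $\tmfirst$; the induction hypothesis, applied pointwise to the pair $(\tmfirst, \tmpfirst)$, produces matching derivations for $\tmpfirst$, which I recombine via $\typingruleAbs$ and $\typingruleMany$ into the desired derivation of $\la\var\tmpfirst$. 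Clause (enf 4) is the most involved: \reflemma{smaller-derivations-stuck} extracts a strictly smaller derivation $\typeder_\levctx$ for $\levctxp{\varthree}$ and a strictly smaller derivation $\typeder_\val$ for $\val$ (one per linear-type component of the multi type expected at $\varthree$); the induction hypothesis, applied to the \enf-related pairs $(\levctxp{\varthree},\levctxtwop{\varthree})$ and $(\val,\valtwo)$, yields matching derivations for $\levctxtwop{\varthree}$ and $\valtwo$, which I reassemble into a derivation of $\levctxtwop{\var\valtwo}$ by an easy converse of \reflemma{smaller-derivations-stuck}, proved by structural induction on $\levctxtwo$ using the rules $\typingruleApp$, $\typingruleMany$, and the axiom on $\var$.

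The main obstacle will be the bookkeeping of the induction measure in clause (enf 4). It is essential that \reflemma{smaller-derivations-stuck} produces \emph{strictly} smaller sub-derivations for both the hole context $\levctxp{\varthree}$ and each linear-type instance of $\val$, so that the induction hypothesis actually applies to the pairs produced by the stop-and-go clause; this is also what motivates the definition of $\size\typeder$ that excludes the $\typingruleMany$ rule. Everything else is routine once quantitative subject reduction and that lemma are in hand.
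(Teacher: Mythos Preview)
Your proposal is correct and follows essentially the same approach as the paper: induction on $\size\typeder$, passing to left normal forms via subject reduction (with the key fact that $\tolw$ embeds in $\tovsc$), handling the value clauses directly, and using \reflemma{smaller-derivations-stuck} for the stop-and-go clause with reassembly at the end. The only cosmetic difference is that the paper structures the case analysis on the last typing rule of the normal-form derivation rather than on the enf clause, and encapsulates your ``easy converse'' as a separate proposition stating $\levctxp{\var\val} \equivtype (\la\varthree\levctxp\varthree)(\var\val)$.
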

\end{toappendix}

\ignore{\paragraph{$\etav$ Reduction} Concerning $\etav$ equivalence, the multi type system we consider does not (fully) validate it. In \cbn but this is standard and it is usually fixable by adding a recursive equation on the ground type. To our knowledge, however, the question has not been studied in \cbv. 

Part of $\eta_v$ equivalence, that is $\eta_v$-reduction, is however validated by the type preorder. On the other side, $\eta_v$-expansion fails. Hence none of them is validated by the symmetric closure of the type preorder, namely type equivalence. We first show the result on $\eta_v$-reduction for variables.
Actually, $\eta_v$ equivalence on closed values is standard and validated by $\alpha$-equivalence and reduction steps below lambdas. Therefore we can describe exactly how $\eta_v$ is included in the type preorder.
\begin{toappendix}
\begin{proposition} Let $\var$ a variable and $\val$ a value. One then has:
	\label{prop:etav-for-leqtype}
	\begin{enumerate}
		\item \emph{(Variable $\eta_v$-equivalence)} $\la\vartwo\var\vartwo \leqtype \var$, but $\var \not \leqtype \la\vartwo\var\vartwo$,
		\item  \emph{(Value $\eta_v$-equivalence)} $\la\vartwo\val\vartwo \leqtype \val$ for any $\val$ and $\val \leqtype \la\vartwo\val\vartwo$ iff $\val$ is an abstraction.
	\end{enumerate}
\end{proposition}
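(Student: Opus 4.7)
The plan is to analyze typing derivations structurally, leveraging the rigidity of the axiom rule ($\typingruleAx$) and the many rule ($\typingruleMany$), which collectively pin each variable's context rigidly to its assigned multi type.

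For the forward inequality $\la\vartwo\val\vartwo \leqtype \val$ (covering both item 1's first half and item 2's first claim, including the variable case $\val=\var$): the plan is to transform any derivation $\typeder \derives \typectx \types \la\vartwo\val\vartwo \hastype \mtype$ into one of $\typectx \types \val \hastype \mtype$. Since $\la\vartwo\val\vartwo$ is a value, $\typeder$ terminates in $\typingruleMany$ combining linear components $\typectx_i \types \la\vartwo\val\vartwo \hastype \mtype_i \multimap \mtypetwo_i$. Each component unfolds as $\typingruleAbs$ above $\typingruleApp$ on the body $\val\vartwo$, yielding subderivations $\typectx_i^1 \types \val \hastype [\mtype_i' \multimap \mtypetwo_i]$ and $\typectx_i^2 \types \vartwo \hastype \mtype_i'$ with $\typectx_i^1 \uplus \typectx_i^2 = \typectx_i, \vartwo \hastype \mtype_i$. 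Because $\vartwo$ is a variable, $\typingruleAx$ and $\typingruleMany$ force $\typectx_i^2 = \vartwo \hastype \mtype_i'$, and matching with the abstraction's input gives $\mtype_i' = \mtype_i$. The extracted $\typectx_i^1 \types \val \hastype [\mtype_i \multimap \mtypetwo_i]$ reassemble via $\typingruleMany$ into $\typectx \types \val \hastype \mtype$.

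For the converse $\val \leqtype \la\vartwo\val\vartwo$ when $\val = \la\varthree\tm$ is an abstraction, I will invert the previous construction. Given $\typectx \types \val \hastype \mtype$ decomposed into linear components $\typectx_i \types \la\varthree\tm \hastype \mtype_i \multimap \mtypetwo_i$, I rebuild a derivation of $\typectx_i \types \la\vartwo\val\vartwo \hastype \mtype_i \multimap \mtypetwo_i$ by applying $\val$ (wrapped in a multi singleton via $\typingruleMany$) to a fresh $\vartwo \hastype \mtype_i$ via $\typingruleApp$, then abstracting via $\typingruleAbs$, and finally recombining across $i$ via $\typingruleMany$ to get $\typectx \types \la\vartwo\val\vartwo \hastype \mtype$. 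Conceptually, this is the typing counterpart of the strong $\beta_v$ reduction $\la\vartwo(\la\varthree\tm)\vartwo \to^* \la\varthree\tm$ taking place under the outer $\lambda$, and one can also invoke invariance of multi typings under reduction (\refthmp{mtypes-charac}{subject}) to obtain the same conclusion.

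The main obstacle is establishing the separation $\var \not\leqtype \la\vartwo\var\vartwo$ (the failure in item 1 and the ``only if'' direction in item 2 when $\val$ is a variable). The inversion above breaks down: to reconstruct a typing of $\la\vartwo\var\vartwo$ from one of $\var$, one would at each linear component of $\var$'s context type need to regenerate the application structure on the body $\var\vartwo$, but $\var$ being a variable (rather than an abstraction whose input binder can be instantiated) leaves no room to match the abstraction's input to an appropriate argument type inside the $\typingruleApp$. The plan is to exhibit an explicit separating multi type for $\var$ that admits a typing but whose shape resists this reconstruction for $\la\vartwo\var\vartwo$; isolating and verifying such a witness is the delicate step and the crux of the proof.
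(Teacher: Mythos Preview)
Your treatment of $\la\vartwo\val\vartwo \leqtype \val$ and of the converse when $\val$ is an abstraction is correct and essentially matches the paper: you unfold $\typingruleMany$/$\typingruleAbs$/$\typingruleApp$ exactly as the paper does, and for the abstraction case the paper simply appeals to reduction under the outer $\lambda$ together with $\alpha$-equivalence, which is the alternative you also mention.

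The gap is in the separation $\var \not\leqtype \la\vartwo\var\vartwo$. Your diagnosis of why the inversion fails is off: it is not that ``$\var$ being a variable leaves no room to match the abstraction's input''. If every linear type is an arrow $\mtype\multimap\mtypetwo$, then every linear component of a typing of $\var$ already has arrow shape, and your own reconstruction (build $\typingruleApp$ on $\var\vartwo$ with $\vartwo:\mtype$, then $\typingruleAbs$, then $\typingruleMany$) goes through unchanged---indeed the paper proves $\var \leqtype \la\vartwo\var\vartwo$ in the ground-type-free system. The statement you are asked to prove therefore presupposes a linear-type grammar that admits a ground constant (some $\vartype$ that is not of the form $\mtype\multimap\mtypetwo$). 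With that, the witness is immediate and not delicate at all: $\var:[\vartype]\vdash\var:[\vartype]$ holds by $\typingruleAx$/$\typingruleMany$, while $\la\vartwo\var\vartwo$ cannot be given type $[\vartype]$ because any linear type assigned to an abstraction via $\typingruleAbs$ must be an arrow. This is exactly the paper's counterexample, and it simultaneously settles the ``only if'' direction of item~2 for variable $\val$. So you should stop searching for a subtle shape obstruction and instead identify the presence of a non-arrow linear type as the sole source of the failure.
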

\end{toappendix}}

\paragraph{$\eta_v$ Equivalence} 
By the fact that \enf and \net similarities are incomparable follows that they are strictly included in $\leqtype$. A further  gap between the type preorder $\leqtype$ and $\leqenf$ or $\leqnet$ is $\eta_v$ equivalence, which is included in $\leqtype$ but not in $\leqenf$ nor $\leqnet$.

\begin{toappendix}
	\begin{proposition}[$\eta_v$-equivalence is included in type equivalence]	\label{prop:etav-for-leqtypetwo}
	Let $\var$ a variable. Then $\la\vartwo\var\vartwo \leqtype \var$ and $\var \leqtype \la\vartwo\var\vartwo$.
	\end{proposition}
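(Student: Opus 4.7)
The plan is to prove the stronger biconditional: for every context $\typectx$ and multi type $\mtype$,
\[
\typectx \types \var \hastype \mtype \iff \typectx \types \la\vartwo\var\vartwo \hastype \mtype,
\]
from which both inclusions follow immediately. Since both $\var$ and $\la\vartwo\var\vartwo$ are values, every multi-type derivation for them must end with a terminal use of $\typingruleMany$ combining linear-type derivations for the same value. So it suffices to prove the biconditional at the level of linear types, and then to lift it by pairing the $\typingruleMany$ steps on the two sides with the same indexed family of linear-type premises.

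Fix a linear type of the shape $\mtypetwo \multimap \mtype'$. The only derivation of $\typectx \types \var \hastype \mtypetwo \multimap \mtype'$ ends with $\typingruleAx$, forcing $\typectx = \var \hastype [\mtypetwo \multimap \mtype']$. Any derivation of $\typectx \types \la\vartwo\var\vartwo \hastype \mtypetwo \multimap \mtype'$ (with $\vartwo \neq \var$ by $\alpha$-renaming) must unfold as $\typingruleAbs$ atop a derivation of $\typectx, \vartwo \hastype \mtypetwo \types \var\vartwo \hastype \mtype'$, which in turn must unfold as $\typingruleApp$ with premises $\typectx_1 \types \var \hastype [\mtype'' \multimap \mtype']$ and $\typectx_2 \types \vartwo \hastype \mtype''$, where $\typectx_1 \uplus \typectx_2 = \typectx, \vartwo \hastype \mtypetwo$. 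Inversion on the two premises (both derivations for a bare variable) forces $\typectx_1 = \var \hastype [\mtype'' \multimap \mtype']$ and $\typectx_2 = \vartwo \hastype \mtype''$ (read as the empty context when $\mtype'' = \emptytype$, via a $\typingruleMany$ with empty index set). Combined with $\var \neq \vartwo$, the union constraint forces $\mtype'' = \mtypetwo$ and $\typectx = \var \hastype [\mtypetwo \multimap \mtype']$, exactly the shape for $\var$. The converse direction is the same decomposition built bottom-up from this shape.

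Thus the linear-type biconditional holds with identical conclusion contexts on both sides. Lifting through $\typingruleMany$ yields the multi-type biconditional, and hence both $\leqtype$-inclusions. No step is expected to be genuinely hard: the entire argument is essentially inversion on typing rules, using crucially that $\var$ and $\vartwo$ are variables and that $\typingruleAx$ fixes the context uniquely. The only bookkeeping subtlety is the $\emptytype$ case for the inner argument $\mtype''$, handled consistently by treating a declaration $\vartwo \hastype \emptytype$ as empty in the context union.
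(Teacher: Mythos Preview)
Your proposal is correct and follows essentially the same approach as the paper: both arguments are pure inversion on the typing derivations, exploiting that $\typingruleAx$ determines the context uniquely for a bare variable and that $\typingruleMany$ is the only rule that assigns a multi type to a value. The paper unfolds the whole multi-type derivation at once for each direction, while you factor through a linear-type biconditional first and then lift via $\typingruleMany$; this is purely a difference of presentation, not of content.
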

\end{toappendix}

\paragraph{Characterizing Type Equivalence} We conjecture that $\leqtype$ is exactly the sup of the \enf and \net similarities enriched with $\eta_v$ equivalence, that is, that generalizing $\leqnet$ as to validate $\equivlid$ and $\equivetav$ would match $\leqtype$. If the conjecture is false, finding a nf-similarity presentation of $\leqtype$---which corresponds to describe the equational theory of \cbv relational semantics---is anyway an interesting and challenging problem.

About full abstraction with respect to \cbv contextual equivalence $\eqcv$, it fails for $\equivtype$, as $\equivtype$ is cost-sensitive---it does not validate \cbn duplication---while $\eqcv$ is cost-insensitive.

\section{Conclusions}
Motivated by the fact that Lassen's enf bisimilarity $\eqenf$---the normal form bisimilarity of reference in \cbv---does not identify $\Omega$-terms and commuting $\letexp$s, we introduced \emph{net bisimilarity} $\eqnet$, which does identify them. It turns out, however, that $\eqnet$ does not validate Moggi's laws nor $\etav$, which are instead validated by $\eqenf$. Additionally, it is unclear how to extend either enf or net bisimilarity as to catch the other one.

Such a problematic duality led us to develop a sharp analysis of \cbv and of the principles that can be validated or not by normal form bisimulations. The analysis shows that the semantic landscape of \cbv is considerably richer and more sophisticated than the \cbn one. 

Concretely, our analysis contributed two further equivalences. First, a naive bisimilarity $\eqncbv$, that mainly provides a better understanding of Lassen's tricky definition of enf simulations. Second, the type equivalence $\equivtype$ induced by Ehrhard's multi types, which subsumes both enf and net bisimilarity, and includes $\etav$-equivalence, while retaining their cost-sensitive aspect. In practice, $\equivtype$ is not really usable for comparing programs, but it provides a sharp theoretical tool.

\paragraph{Future Work} Type equivalence suggests that it could be possible to find a normal form bisimilarity merging the enf and net ones. We are actively working on this challenging problem. A related question is finding an axiomatization of $\equivtype$, for which some sort of separation theorem should be developed.
We would also like to investigate how net bisimilarity $\eqnet$ relates to the topics connected to $\eqenf$, such as  game semantics \cite{DBLP:conf/lics/JaberM21}, extensions with effects \cite{DBLP:conf/esop/LagoG19,DBLP:conf/fossacs/BiernackiLP19,biernacki_et_al:LIPIcs:2020:12329}, and the $\pi$-calculus \cite{DBLP:journals/tcs/DurierHS22}.


\bibliographystyle{ACM-Reference-Format}
\bibliography{\macrospath/biblio_ICFP}

\newpage
\appendix

\section{Proofs from \refsect{naive} (Naive CbV Bisimilarity)}
\label{chapter:proof-compatibility-naive}
In this section, we develop the proof of compatibility for the (weak) naive similarity, following Lassen's variant of Howe method for nf-bisimulations.

\subsection{Proof of Equivalence of Small-Step and Big-Step Operational Semantics} We prove the big-step evaluation predicate sound and complete with respect to the small-step operational semantics (Proposition \ref{l:ss-bs-equivalence_weak}).

\begin{lemma}
	\label{l:aux-ss-bs-equivalence}
	If $\tm\tow\tmp$ and $\tmp\bsw k \ntm$, then $\tm\bsw {k+1} \ntm$.
\end{lemma}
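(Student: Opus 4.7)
The plan is to proceed by induction on the structure of the weak context $\wctx$ such that $\tm = \wctxp{(\la\var\tmthree)\val}$ and $\tmp = \wctxp{\tmthree\isub\var\val}$, following the grammar $\wctx \grameq \ctxhole \mid \tm'\wctx \mid \wctx\tm'$ recalled in \refsect{plotkin}. The induction hypothesis states that for any smaller weak context the one-small-step-cons property of the statement holds.

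In the base case $\wctx = \ctxhole$, I have $\tm = (\la\var\tmthree)\val$ and $\tmp = \tmthree\isub\var\val$, with $\tmp \bsw k \ntm$ by assumption. Applying rule $(\bswbeta)$ to the two axiom derivations $\la\var\tmthree \bsw 0 \la\var\tmthree$ and $\val \bsw 0 \val$ together with the given $\tmthree\isub\var\val \bsw k \ntm$ immediately yields $\tm \bsw{0+0+k+1} \ntm$, which is the desired conclusion.

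For the inductive case $\wctx = \tmtwo\wctx'$, I write $\tm = \tmtwo \tm_1$ and $\tmp = \tmtwo \tm_1'$, observing that $\tm_1 \tow \tm_1'$ sits in the strictly smaller context $\wctx'$, so the induction hypothesis is available for any big-step derivation starting from $\tm_1'$. I then case-analyse on the last rule of $\tmp \bsw k \ntm$: since $\tmp$ is an application, this rule is either $(\bswbeta)$ or $(\bswappnf)$. In both cases exactly one premise is a derivation $\tm_1' \bsw h \ntm''$ for some $h$ and some value or normal form $\ntm''$, while the remaining premises concern $\tmtwo$ or a meta-substituted body and are preserved verbatim. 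Applying the induction hypothesis to that single premise gives $\tm_1 \bsw{h+1} \ntm''$, and reassembling the same rule with this updated premise yields $\tm \bsw{k+1} \ntm$, since the step count is additive in the premises and only one premise has grown by one. The symmetric inductive case $\wctx = \wctx'\tmtwo$ is handled identically, reading the big-step premises from left to right instead of right to left.

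The only potential source of friction is the bookkeeping of step counts, which must increase by exactly one in each inductive step; no use of the diamond property, of substitutivity, nor of any deeper property of $\tow$ is needed, since everything follows from the compatibility of $\tow$ with weak contexts and the additive shape of the big-step rules for applications.
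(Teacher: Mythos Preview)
Your proof is correct and follows essentially the same approach as the paper, which simply records ``Straightforward proof by structural induction'' without further detail. Your choice to induct on the weak context $\wctx$ witnessing the step $\tm \tow \tmp$, followed by a case analysis on the last big-step rule, is exactly the natural way to unfold that structural induction.
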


\begin{proof}
	Straightforward proof by structural induction.
\end{proof}

\gettoappendix{l:ss-bs-equivalence_weak}

\begin{proof}
	Trivial using \reflemma{aux-ss-bs-equivalence}.
\end{proof}

\subsection{Lemmas about normal forms and $\lasrel$ and $\lasrelncbv$}
Actually, before proving \refprop{main-lemma_naive} for all terms, we somehow need to prove it only on normal forms. More precisely, we show that $\lasrelsym$ and $\lasrelncbvsym$ coincide on normal forms (\reflemma{lasrelncbv-normal-forms-lasrel-left-to-right} and \reflemma{lasrelncbv-normal-forms-lasrel-right-to-left}).

\begin{lemma}
	\label{l:lasrelncbv-normal-forms-lasrel-left-to-right}
	If $\ntm\lasrelncbv\ntmtwo$ then $\ntm\lasrel\ntmtwo$.
\end{lemma}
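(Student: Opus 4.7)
The plan is to argue by induction on the derivation of $\ntm\lasrelncbv\ntmtwo$, exploiting the hypothesis that both $\ntm$ and $\ntmtwo$ are $\tow$-normal in order to rule out or constrain several of the Lassen closure rules.

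For the purely structural closure rules I would reason as follows. Rule $\scvar$ is immediate, since the homonymous rule is available in $\lasrel$. Rule $\scapp$ closes by observing that a weak-CBV normal application must be an inert term, so both immediate sub-terms are themselves $\tow$-normal; the induction hypothesis then applies componentwise and I rebuild with $\scapp$ in $\lasrel$. Rule $\scabs$ is handled by a direct application of the induction hypothesis to the body (the body being arbitrary does not matter here, since we are not using normality of the body, only of the outer abstraction), and rule $\scsub$ analogously via the value premise, relying on the substitutivity infrastructure already developed in the paper.

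The delicate case is $\sclift$, which reduces the task to $\ntm\leqncbv\ntmtwo$. Normality of $\ntm$ excludes clause (nai 1). Clause (nai 2) forces $\ntm=\ntmtwo=\var$ and closes by $\scvar$. Clause (nai 4) gives matching application structure $\ntm=\ntmONE\ntmTWO$ and $\ntmtwo=\ntmONEtwo\ntmTWOtwo$ with naive-similar sub-terms, and the inert-form observation used in the $\scapp$ case above lets me recurse on the sub-terms and close with $\scapp$.

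I expect the main obstacle to be clause (nai 3): here $\ntm=\la\var\tmfirst$ and $\ntmtwo=\la\var\tmpfirst$ with $\tmfirst\leqncbv\tmpfirst$, but the bodies $\tmfirst,\tmpfirst$ need not be normal, so the present induction on the derivation of $\lasrelncbv$ does not directly give $\tmfirst\lasrel\tmpfirst$. The plan for this case is to first re-lift $\tmfirst\leqncbv\tmpfirst$ into $\tmfirst\lasrelncbv\tmpfirst$ via $\sclift$ at the broader closure, and then to transfer it into $\tmfirst\lasrel\tmpfirst$ by exploiting the interplay of this lemma with its companion \reflemma{lasrelncbv-normal-forms-lasrel-right-to-left} and the main lemma \refprop{main-lemma_naive} inside the compatibility argument, closing finally by $\scabs$ in $\lasrel$; this is precisely the step where the coincidence of the two Lassen closures on normal forms earns its keep.
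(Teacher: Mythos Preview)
Your proposal rests on a misreading of the notation. The relation $\lasrelncbv$ is \emph{not} the Lassen closure of some underlying relation; it is the naive operator $\ncbvfp{\cdot}$ applied to the Lassen closure $\lasrelsym$. In other words, $\ntm \lasrelncbv \ntmtwo$ means that $\ntm$ and $\ntmtwo$ satisfy one of the clauses (nai 1)--(nai 4) \emph{with $\lasrelsym$ as the relation appearing in the premises}. There are no derivation rules $\scvar$, $\scabs$, $\scapp$, $\scsub$, $\sclift$ for $\lasrelncbv$, so your induction on such a derivation is not well-posed.

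Once the notation is read correctly, the lemma is immediate by a one-step case analysis on the shape of $\ntm$ (equivalently, on which naive clause applies). Clause (nai 1) is impossible since $\ntm$ is normal. Clause (nai 2) gives $\ntm=\ntmtwo=\var$ and closes by $\scvar$. Clause (nai 3) gives $\ntm=\la\var\tm_1$, $\ntmtwo=\la\var\tmp_1$ with $\tm_1 \lasrel \tmp_1$ \emph{already in $\lasrelsym$}, so a single application of $\scabs$ suffices; there is no obstacle here, and no need to invoke the companion lemma or the main proposition. Clause (nai 4) gives $\ntm=\ntmONE\ntmTWO$, $\ntmtwo=\ntmONEtwo\ntmTWOtwo$ with $\ntmONE \lasrel \ntmONEtwo$ and $\ntmTWO \lasrel \ntmTWOtwo$, and closes by $\scapp$. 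This is exactly the paper's proof: a simple case analysis, no induction required.
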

\begin{proof}
	By case analysis on the shape of $\ntm$.
\end{proof}

Notice that the next lemma already proves part of the conclusion of the first part of Proposition \ref{prop:ncbv-coherence}.

\begin{lemma}[Constrained Substitutivity of $\lasrelncbv$ on normal forms]
	\label{l:lasrelncbv-normal-forms-substitutive}
	If $\ntm \lasrelncbv \ntmtwo$, $\val \lasrelncbv \valtwo$ and $\ntm\isub\var{\val}$ and $\ntmtwo\isub\var{\valtwo}$ are $\tow$-normal then $\ntm\isub\var{\val} \lasrelncbv \ntmtwo\isub\var{\valtwo}$.
\end{lemma}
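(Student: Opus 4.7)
The plan is to proceed by case analysis on which clause of the naive simulation definition witnesses $\ntm \lasrelncbv \ntmtwo$. Since $\ntm$ and $\ntmtwo$ are $\tow$-normal, clause (nai 1) (divergence) is excluded, so one of (nai 2), (nai 3), (nai 4) must hold; for each of these, the task is to re-establish the same clause for the substituted terms $\ntm\isub\var\val$ and $\ntmtwo\isub\var\valtwo$ with respect to $\lasrelsym$. The key technical device is rule $\scsub$ of the Lassen closure, combined with \reflemma{lasrelncbv-normal-forms-lasrel-left-to-right} applied to the hypothesis $\val \lasrelncbv \valtwo$ (which are normal, being values) to obtain $\val \lasrel \valtwo$.

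For clause (nai 2), normality forces $\ntm = \ntmtwo = \vartwo$ for some variable $\vartwo$. If $\vartwo = \var$, then the substituted terms are exactly $\val$ and $\valtwo$, and the conclusion follows directly from the given $\val \lasrelncbv \valtwo$. Otherwise the substitutions act as identities on $\vartwo$, and clause (nai 2) holds trivially. For clause (nai 3), normality forces $\ntm = \la\vartwo\tmone_1$ and $\ntmtwo = \la\vartwo\tmonep_1$ with $\tmone_1 \lasrel \tmonep_1$; the substituted terms remain abstractions, and rule $\scsub$ applied to $\tmone_1 \lasrel \tmonep_1$ and $\val \lasrel \valtwo$ gives $\tmone_1\isub\var\val \lasrel \tmonep_1\isub\var\valtwo$, which is precisely the premise required by clause (nai 3) for $\la\vartwo\tmone_1\isub\var\val$ and $\la\vartwo\tmonep_1\isub\var\valtwo$.

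For clause (nai 4), normality forces $\ntm = \ntmONE\ntmTWO$ and $\ntmtwo = \ntmONEtwo\ntmTWOtwo$ with $\ntmONE,\ntmTWO,\ntmONEtwo,\ntmTWOtwo$ all $\tow$-normal (as subterms of normal applications) and $\ntmONE \lasrel \ntmONEtwo$, $\ntmTWO \lasrel \ntmTWOtwo$. Meta-level substitution distributes over applications, so $\ntm\isub\var\val = (\ntmONE\isub\var\val)(\ntmTWO\isub\var\val)$ and analogously for $\ntmtwo$; by the normality hypothesis on the substituted terms, both sides remain applications with normal components. Two applications of rule $\scsub$ (using $\val \lasrel \valtwo$ in each) yield $\ntmONE\isub\var\val \lasrel \ntmONEtwo\isub\var\valtwo$ and $\ntmTWO\isub\var\val \lasrel \ntmTWOtwo\isub\var\valtwo$, delivering clause (nai 4) for the substituted pair.

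The main subtlety is case (nai 4): substitution could in principle destroy the inert application structure, turning $\ntmONE\ntmTWO$ into a $\beta$-redex when $\ntmONE = \var$ becomes an abstraction through $\val$ and $\ntmTWO\isub\var\val$ happens to be a value. The normality assumption on $\ntm\isub\var\val$ in the statement is exactly what rules this out, allowing us to identify the normal form of $\ntm\isub\var\val$ with the syntactic application and thus to apply clause (nai 4) directly instead of having to chase through a non-trivial big-step evaluation of the substituted term.
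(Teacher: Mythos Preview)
Your proof is correct and follows essentially the same approach as the paper's: case analysis on the clause witnessing $\ntm \lasrelncbv \ntmtwo$ (equivalently, on the shape of $\ntm$), using \reflemma{lasrelncbv-normal-forms-lasrel-left-to-right} to obtain $\val \lasrel \valtwo$ and then rule $\scsub$ to propagate the substitution through the $\lasrelsym$-related subterms. Your explicit remark that the normality hypothesis on $\ntm\isub\var\val$ is what prevents clause (nai~4) from collapsing into a $\beta$-redex is a nice clarification that the paper leaves implicit.
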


\begin{proof}
	By case analysis on the shape of $\ntm$. Cases:
	\begin{itemize}
		\item $\ntm = \var$ and $\ntmtwo = \var$ then $\ntm\isub\var{\val} = \val \lasrelncbv \valtwo = \ntmtwo\isub\var{\valtwo}$.
		
		\item $\ntm = \vartwo$ and $\ntmtwo = \vartwo$ then $\ntm\isub\var{\val} =  \vartwo \lasrelncbv \vartwo = \ntmtwo\isub\var{\valtwo}$.
		
		\item $\ntm = \la\vartwo\tm$ and $\ntmtwo = \la\vartwo\tmp$ with $\tm \lasrel \tmp$
		we have \[\infer{\tm\isub\var{\val} \lasrel \tmp\isub\var{\valtwo}}{\tm \lasrel \tmp & \val \lasrel \valtwo}\]
		hence by case (ncbv 3) $\ntm\isub\var{\val} = \la\vartwo{\tm\isub\var{\val}} \lasrelncbv  \la\vartwo{\tmp\isub\var{\valtwo}} = \ntmtwo\isub\var{\valtwo}$.

		\item $\ntm = \ntmONE\ntmTWO$ and $\ntmtwo = \ntmONEtwo\ntmTWOtwo$ with $\ntmONE\lasrel\ntmONEtwo$ and $\ntmTWO\lasrel\ntmTWOtwo$. Also, by \reflemma{lasrelncbv-normal-forms-lasrel-left-to-right}, $\val\lasrel\valtwo$.
		
			\[ \infer[\scsub]{\ntmONE\isub\var\val \lasrel \ntmONEtwo\isub\var\valtwo}{\ntmONE\lasrel\ntmONEtwo & \val \lasrel \valtwo} ~\text{and}~ \infer[\scsub]{\ntmTWO\isub\var\val \lasrel \ntmTWOtwo\isub\var\valtwo}{\ntmTWO \lasrel \ntmTWOtwo & \val \lasrel \valtwo}\]

		Hence, as $\ntm\isub\var\val$ and $\ntm\isub\var\valtwo$ are normal forms, this concludes the proof.

	\end{itemize}
\end{proof}

\begin{lemma}
	\label{l:lasrelncbv-normal-forms-lasrel-right-to-left}
	If $\relsym$ is a naive simulation.
	If $\ntm\lasrel\ntmtwo$ then $\ntm\lasrelncbv\ntmtwo$.
\end{lemma}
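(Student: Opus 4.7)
The plan is to proceed by induction on the derivation of $\ntm \lasrel \ntmtwo$, inspecting the last rule applied. By the notational convention that both $\ntm$ and $\ntmtwo$ denote $\tow$-normal forms, clause (nai 1) need not be considered, and in the remaining clauses the $\bsws$ premises on $\ntm$ and $\ntmtwo$ collapse to equalities, so each of (nai 2), (nai 3), (nai 4) reduces to a statement about the shape of $\ntm$ and $\ntmtwo$ together with the relation on their sub-parts.

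The two base cases are easy. Rule $\scvar$ gives $\ntm = \ntmtwo = \var$, so (nai 2) applies immediately. Rule $\sclift$ gives $\ntm \rel \ntmtwo$; since $\rel$ is a naive simulation, $\ntm \relncbv \ntmtwo$, and because $\rel \subseteq \lasrel$ (via $\sclift$), the sub-parts witnessing the naive clause are $\lasrel$-related, yielding $\ntm \lasrelncbv \ntmtwo$. The inductive cases $\scabs$ and $\scapp$ are equally direct: if $\ntm = \la\var\tmrone$ and $\ntmtwo = \la\var\tmrtwo$ with $\tmrone \lasrel \tmrtwo$, then (nai 3) applies (abstractions are their own normal forms, so no condition on $\tmrone,\tmrtwo$ is needed); if $\ntm = \tmrone\tmrthree$ and $\ntmtwo = \tmrtwo\tmrfour$ (both normal applications) with $\tmrone \lasrel \tmrtwo$ and $\tmrthree \lasrel \tmrfour$, then (nai 4) applies with the witnesses already at hand.

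The main obstacle is the rule $\scsub$: we have $\ntm = \tmrone\isub\var\val$ and $\ntmtwo = \tmrtwo\isub\var\valtwo$ with $\tmrone \lasrel \tmrtwo$ and $\val \lasrel \valtwo$. The first key observation is that $\tmrone$ must itself be $\tow$-normal: by substitutivity of $\tow$ (\reflemma{stability_weak}), any step $\tmrone \tow \tmrone'$ would yield $\tmrone\isub\var\val \tow \tmrone'\isub\var\val$, contradicting normality of $\ntm$; symmetrically $\tmrtwo$ is normal, and values $\val,\valtwo$ are always normal. Hence the inductive hypothesis applies to the strictly smaller derivations $\tmrone \lasrel \tmrtwo$ and $\val \lasrel \valtwo$, giving $\tmrone \lasrelncbv \tmrtwo$ and $\val \lasrelncbv \valtwo$. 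The proof is concluded by invoking \reflemma{lasrelncbv-normal-forms-substitutive} with $\tmrone,\tmrtwo$ in place of its $\ntm,\ntmtwo$: the normal-form hypotheses of that lemma are exactly that $\tmrone\isub\var\val = \ntm$ and $\tmrtwo\isub\var\valtwo = \ntmtwo$ are $\tow$-normal, which is given, so we obtain $\ntm \lasrelncbv \ntmtwo$ as required.
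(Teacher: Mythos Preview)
Your proof is correct and follows the same inductive structure as the paper's: case analysis on the last rule of the $\lasrel$ derivation, with the only non-trivial case being $\scsub$, resolved via \reflemma{lasrelncbv-normal-forms-substitutive} after applying the inductive hypothesis. You are in fact slightly more careful than the paper in the $\scsub$ case, explicitly arguing via substitutivity of $\tow$ that $\tmrone$ and $\tmrtwo$ must be normal (a prerequisite for invoking the IH), which the paper leaves implicit by silently switching to the normal-form metavariables $\ntmONE,\ntmTWO$ for the premises.
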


\begin{proof}
	By induction on the derivation $\ntm \lasrel \ntmtwo$. Cases of the last rule in the derivation of $\ntm\lasrel\ntmtwo$:
	\begin{itemize}
		\item \emph {$\scvar$}\[ \infer[\scvar]{\var \lasrel \var}{} \]
		then $\var \lasrelncbv \var$ by definition of naive.
		\item \emph {$\scabs$} \[ \infer[\scabs]{\ntm = \la\var\tm \lasrel \la\var\tmp = \ntmtwo}{\tm \lasrel \tmp} \]
		then $\ntm \lasrelncbv \ntmtwo$ by definition of naive with $\tm \lasrel \tmp$.
		\item \emph {$\sclift$} \[ \infer[\scabs]{\ntm \lasrel \ntmtwo}{\ntm \rel \ntmtwo} \]
		then since $\relsym$ is a naive simulation $\ntm \relncbv \ntmtwo$, hence by monotonicity of $\ncbvfp\cdot$, $\ntm \lasrelncbv \ntmtwo$.
		\item \emph {$\scapp$} \[ \infer[\scapp]{\ntm = \ntmONE\ntmTWO \lasrel \ntmONEtwo\ntmTWOtwo = \ntmtwo}{\ntmONE \lasrel \ntmONEtwo & \ntmTWO \lasrel \ntmTWOtwo} \]then $\ntm \lasrelncbv \ntmtwo$ by definition of naive with $\ntmONE \lasrel \ntmONEtwo$ and  $\ntmTWO \lasrel \ntmTWOtwo$.
		\item \emph {$\scsub$} \[ \infer[\scsub]{\ntm = \ntmONE\isub\var\val \lasrel \ntmTWO\isub\var\valtwo = \ntmtwo}{\ntmONE \lasrel \ntmTWO & \val \lasrel \valtwo} \]
		by \ih we have $\ntmONE \lasrelncbv \ntmTWO$ and $\val \lasrelncbv \valtwo$. By \reflemma{lasrelncbv-normal-forms-substitutive}, $\ntmONE\isub\var{\val} \lasrelncbv \ntmTWO\isub\var{\valtwo}$.\qedhere
	\end{itemize}
\end{proof}

\subsection{Coherence of simulation, reduction and substitution}
The main difficulty in the proof of \refprop{main-lemma_naive} is the case of the ($\scsub$) rule. It is dealt with the following lemma, that states that substitution behaves nicely with the simulation and the reduction of the calculus.

\gettoappendix{prop:ncbv-coherence}

\begin{proof}
	\begin{enumerate}
		\item We only prove that $\ntmtwo\isub\var\valtwo$ is a normal form, the rest of the conclusion follows by \ref{l:lasrelncbv-normal-forms-substitutive}. By induction on $\ntm$.
		\begin{itemize}
			\item $\ntm=\vartwo$, then $\ntmtwo=\vartwo$ (since $\ntm\lasrelncbv\ntmtwo$), which is a normal form.
			\item $\ntm=\var$, then $\ntmtwo=\var$ (since $\ntm\lasrelncbv\ntmtwo$), hence the result.
			\item $\ntm=\la\vartwo\tm$, then $\ntmtwo=\la\vartwo\tmp$ with $\tm\lasrel\tmp$ (since $\ntm\lasrelncbv\ntmtwo$), we conclude since $\la\vartwo\tmp\isub\var\valtwo$ is a normal form.
			\item $\ntm = \ntmONE\ntmTWO$, then $\ntmtwo=\ntmONEtwo\ntmTWOtwo$ with $\ntmONE\lasrel\ntmONEtwo$ and $\ntmTWO\lasrel\ntmTWOtwo$. By \reflemma{lasrelncbv-normal-forms-lasrel-right-to-left}, $\ntmONE\lasrelncbv\ntmONEtwo$ and $\ntmTWO\lasrelncbv\ntmTWOtwo$. Hence we can apply the \ih and get that $\ntmONEtwo\isub\var\valtwo$ and $\ntmTWOtwo\isub\var\valtwo$ are normal forms. In fact, by a quick analysis, $\ntmONEtwo\ntmTWOtwo\isub\var\valtwo$ is a normal form (otherwise $\ntm\isub\var\val$ would not be normal).
		\end{itemize}
		\item By induction on $\ntm$. Note that the only possible shape for $\ntm$ is $\ntmONE\ntmTWO$, otherwise the substitution of a value does not imply a reduction step.
		
		\begin{itemize}
			\item $\ntmONE\isub\var\val$ is normal and $\ntmTWO\isub\var\val$ is normal. Then $\ntmONE\isub\var\val = \la\vartwo\tmrone$ and $\ntmTWO\isub\var\val = \val_1$.
			By Point 1, $\ntmONEtwo = \la\vartwo\tmrtwo$ and $\ntmTWOtwo=\val_2$ such that $\tmrone\lasrel\tmrtwop$ and $\val_1 \lasrel \val_2$ (by \ref{l:lasrelncbv-normal-forms-lasrel-left-to-right}). Hence, $\ntmtwo\isub\var\val \tow \tmrtwo\isub\vartwo{\val_2}$ and we conclude using the following derivation:
			\[\infer{\tmrone\isub\vartwo{\val_1} \lasrel  \tmrtwo\isub\vartwo{\val_2}}{\tmrone\lasrel\tmrtwo & \val_1 \lasrel \val_2}\]
			
			\item $\ntmONE\isub\var\val\tow\tm_1$ or $\ntmTWO\isub\var\val\tow\tm_1$.
			
			By \ih, $\ntmONEtwo\isub\var\valtwo\tow\tmtwo_1$ or $\ntmTWOtwo\isub\var\valtwo\tow\tmtwo_1$ such that $\tm_1\lasrel\tmtwo_1$, hence we find $\tmtwo$ easily in both cases, such that $\ntmtwo\isub\var\val \tow \tmtwo$ and $\tm\lasrel\tmtwo$.
		\end{itemize}
	\end{enumerate}
\end{proof}

\subsection{Lassen's closure preserves naive simulations}

Finally, we prove that the Lassen's closure of a naive simulation is a simulation.

\gettoappendix{prop:main-lemma_naive}
\begin{proof}
	


By case analysis on the last rule of the derivation $\tmrone\lasrel \tmrtwo$.
\begin{enumerate}
	\item \emph{Lifting}:
	\[ \infer[(\sclift) ]{\tmrone \lasrel \tmrtwo} {\tmrone \rel \tmrtwo}\text{ and }\tmrone\bsw k \ntm\]
	Since $\relsym$ is a naive simulation, we have $\tmrone\relncbv\tmrtwo$ and $\tmrtwo \bsws \ntmtwo$ for some $\ntmtwo$ such that $\ntm\relncbv\ntmtwo$. Hence  $\ntm\lasrelncbv\ntmtwo$ by monotonicity of $\ncbvfp\cdot$.
	
	\item \emph{Variables}:
	\[\infer[(\scvar) ]{\var \lasrel \var}	{} \text{ and } \var\bsw 0 \var\]
	
	hence the result ($\var\bsw 0 \var$) and $\var\lasrelncbv\var$ by definition of $\lasrelncbvsym$.
	
	\item \emph{Abstraction}:
	\[\infer[(\scabs) ]{\la\var\tmrone \lasrel \la\var\tmrtwo} {\tmrone \lasrel \tmrtwo} \text{ and } \la\var\tmrone \bsw 0 \la\var\tmrone \]
	
	hence the result ($\la\var\tmrtwo \bsw 0 \la\var\tmrtwo$) and $\la\var\tmrone \lasrelncbv \la\var\tmrtwo$ by definition of $\lasrelncbvsym$.
	\item \emph{Application}:
	\[ \infer[(sc.app) ] 
	{\tmrone\tmrthree  \lasrel  \tmrtwo\tmrfour} {\tmrone  \lasrel \tmrtwo & \tmrthree \lasrel \tmrfour } \text{ and }\tmrone\tmrthree \bsw k \ntm \]
	
	then, by case analysis on the last rule of the big-step derivation:
	
	\begin{itemize}
		\item \emph{Applied normal form:}
		
	\begin{center}
			$\infer{\tmrone\tmrthree\bsw {k+h} \ntm = \ntmONE\ntmTWO}{ \tmrone\bsw {k} \ntmONE & \tmrthree\bsw {h} \ntmTWO}$
	\end{center}

		by inductive hypothesis ($d$ strictly decreasing, first component not increasing) we obtain $\tmrtwo \bsws \ntmONEtwo$ and $\tmrfour \bsws \ntmTWOtwo$ with $\ntmONE\lasrelncbv\ntmONEtwo,~\ntmTWO \lasrelncbv \ntmTWOtwo$. Then we need two facts to conclude:
		\begin{itemize}
			\item $\ntmONEtwo\ntmTWOtwo$ is the normal form of $\tmrtwo\tmrfour$:
			Suppose it is not a normal form, \ie $\ntmONEtwo=\la\var\tmthree$ and $\ntmTWOtwo=\val$. By $\ntmONE\lasrelncbv\ntmONEtwo,~\ntmTWO \lasrelncbv \ntmTWOtwo$ and the definition of naive simulations, $\ntmONE$ must be an abstraction and $\ntmTWO$ must be a value, contradicting the fact that $\ntmONE\ntmTWO$ is a normal form. Hence:
			
			$\infer{\tmrtwo\tmrfour\bsw {k'+h'} \ntmtwo = \ntmONEtwo\ntmTWOtwo}{ \tmrtwo\bsw {k'} \ntmONEtwo & \tmrfour\bsw {h'} \ntmTWOtwo}$
			
			\item $\ntmONE\ntmTWO\lasrelncbv\ntmONEtwo\ntmTWOtwo$: which is clear from the first point and since, by \reflemma{lasrelncbv-normal-forms-lasrel-left-to-right}, $\ntmONE\lasrel\ntmONEtwo,~\ntmTWO \lasrel \ntmTWOtwo$. 
			
		\end{itemize}
		\item \emph{$\beta_v$ step}:
		\[\infer{\tmrone\tmrthree \bsw {k+h+i+1} \ntm}{
			\tmrone \bsw k {\la\var\tmronep}
			&
			\tmrthree \bsw h \val
			&
			{\tmronep\isub\var\val} \bsw i \ntm
		}\]

		then by inductive hypothesis ($d$ strictly decreasing, first component non increasing) on $\tmrone$ and $\tmrthree$ we get
		$\tmrtwo\bsws \la\var\tmrtwop$ with $\la\var\tmronep \lasrelncbv \la\var\tmrtwop$ and $\tmrfour\bsws \valtwo$ with $\val \lasrelncbv \valtwo$. In particular, by \reflemma{lasrelncbv-normal-forms-lasrel-left-to-right}, $\val\lasrel\valtwo$.

		Then:
		\[\infer{\tmronep\isub\var\val \lasrel \tmrtwop\isub\var\valtwo}{\tmronep\lasrel\tmrtwop&\val\lasrel\valtwo}\]
		
		since $\tmronep\isub\var\tmrthree \bsw i \ntm$ with $i < k+h+i+1$ we can apply the inductive hypothesis on the first component for $\tmronep\isub\var\val$ obtaining $\tmrtwop\isub\var\valtwo \bsws  \ntmtwo$ for some $\ntmtwo$ such that $\ntm\lasrelncbv\ntmtwo$. 
		
		Last, note that $\tmrtwo\tmrfour\bsws\ntmtwo$ by 
		\[\infer{\tmrtwo\tmrfour \bsws \ntmtwo}{
			\tmrtwo \bsws \la\var\tmrtwop
			&
			\tmrfour \bsws \valtwo
			&
			\tmrtwop\isub\var\valtwo \bsws \ntmtwo
		}\]
	\end{itemize}
	
%

	\ignore{
	\begin{enumerate}
		\item \emph{Applied inert}: 
		\[\infer{\tmrone\tmrthree \bsw {k+h} \itm\ntm}{
			\tmrone \bsw k \itm
			&
			\tmrthree \bsw h \ntm
		}\]
		
		by inductive hypothesis ($d$ strictly decreasing, first component not increasing) we obtain $\tmrtwo \bsws \itmtwo$ and $\tmrfour \bsws \ntmtwo$ with $\itm\lasrelncbv\itmtwo,~\ntm \lasrelncbv \ntmtwo$. In particular, by \reflemma{lasrelncbv-normal-forms-lasrel-left-to-right}, $\itm\lasrel\itmtwo,~\ntm \lasrel \ntmtwo$. Then:
		\[\infer{\tmrtwo\tmrfour \bsws \itmtwo\ntmtwo}{
			\tmrtwo \bsws  \itmtwo
			&
			\tmrfour \bsws \ntmtwo
		}\]
		and $\itm\ntm \lasrelncbv \itmtwo\ntmtwo$ by definition of $\lasrelncbv$ and $\itm\lasrel\itmtwo,~\ntm \lasrel \ntmtwo$.
		

		\item \emph{$m$ step}:
		\[\infer{\tmrone\tmrthree \bsw {k+i+1} \isctxp\ntm}{
			\tmrone \bsw k \isctxp{\la\var\tmronep}
			&
			{\tmronep\esub\var\tmrthree} \bsw i \ntm
		}\]

		then by inductive hypothesis ($d$ strictly decreasing, first component non increasing) on $\tmrone$ we get
		$\tmrtwo\bsws \ntm_\tmrtwo$ with $\isctxp{\la\var\tmronep} \lasrelncbv \ntm_\tmrtwo$ ($\ntm_\tmrtwo = \isctxtwop{\la\var\tmrtwop}$ by  \reflemma{abstraction-inerts-stable-lasrelncbv}) \ie $\isctxp{\la\var\tmronep} \lasrelncbv \isctxtwop{\la\var\tmrtwop}$.

		By \reflemma{lasrelncbv-values-isctx-decomposition} we get $\isctxp{\var} \lasrelncbv \isctxtwop{\var}$ and $\la\var\tmronep\lasrelncbv \la\var\tmrtwop$ then $\tmronep\lasrel\tmrtwop$ by case (nai 3)).
		
		Then:
		\[\infer{\tmronep\esub\var\tmrthree \lasrel \tmrtwop\esub\var\tmrfour}{\tmronep\lasrel\tmrtwop&\tmrthree\lasrel\tmrfour}\]
		
		since $\tmronep\esub\var\tmrthree \bsw i \ntm$ with $i < k+i+1$ we can apply the inductive hypothesis on the first component for $\tmronep\esub\var\tmrthree$ obtaining $\tmrtwop\esub\var\tmrfour \bsws  \ntmtwo$ for some $\ntmtwo$ such that $\ntm\lasrelncbv\ntmtwo$. Since $\isctxp{\var} \lasrelncbv \isctxtwop{\var}$ and $\ntm\lasrelncbv\ntmtwo$, by \reflemma{lasrelncbv-normal-forms-isctx-decomposition}, we get $\isctxp{\ntm} \lasrelncbv \isctxtwop{\ntmtwo}$.
		Last, note that $\tmrtwo\tmrfour\bsws\isctxtwop\ntmtwo$ by 
		\[\infer{\tmrtwo\tmrfour \bsws \isctxtwop\ntmtwo}{
			\tmrtwo \bsws \isctxtwop{\la\var\tmrtwop}
			&
			\tmrtwop\esub\var\tmrfour \bsws \ntmtwo
		}\]
		
%
	\end{enumerate}
	
	\item \emph{Explicit Substitution}: 
	\[ \infer[(sc.esubst) ]{\tmrone\esub\var{\tmrthree} \lasrel \tmrtwo\esub\var{\tmrfour}} {\tmrone \lasrel \tmrtwo & \tmrthree \lasrel \tmrfour }\text{ and }\tmrone\esub\var{\tmrthree} \bsw k \ntm \]
	
	\begin{enumerate}
		\item \emph{Substitution of an inert}:
		\[ \infer{\tmrone\esub\var\tmrthree \bsw {k+h} \ntm\esub\var\itm}{
			\tmrone \bsw k \ntm
			&
			\tmrthree \bsw h \itm
		} \]
		
		by inductive hypothesis ($d$ strictly decreasing, first component not increasing) we obtain $\tmrtwo \bsws \ntmtwo$ and $\tmrfour \bsws \itmtwo$ with $\itm\lasrelncbv\itmtwo,~\ntm \lasrelncbv \ntmtwo$. In particular, by \reflemma{lasrelncbv-normal-forms-lasrel-left-to-right}, $\itm\lasrel\itmtwo,~\ntm \lasrel \ntmtwo$. Then:
		\[\infer{\tmrtwo\esub\var\tmrfour \bsws \ntmtwo\esub\var\itmtwo}{
			\tmrtwo \bsws  \ntmtwo
			&
			\tmrfour \bsws \itmtwo
		}\]
		and $\ntm\esub\var\itm \lasrelncbv \ntmtwo\esub\var\itmtwo$ by definition of $\lasrelncbvsym$.
		
		\item \emph{e step}: 		\[ \infer{\tmrone\esub\var{\tmrthree} \bsw {k+i+1} \isctxp\ntm}{
			\tmrthree \bsw k \isctxp{\la\vartwo\tmrthreep}
			&
			{\tmrone\isub\var{\la\vartwo\tmrthreep}} \bsw i \ntm
		} \]
		
		then by inductive hypothesis ($d$ strictly decreasing, first component non increasing) on $\tmrone$ and $\tmrthree$ we get
		$\tmrfour\bsws \ntm_\tmrfour$ with $\isctxp{\la\vartwo\tmrthreep} \lasrelncbv \ntm_\tmrfour$($\ntm_\tmrfour = \isctxtwop{\la\vartwo\tmrfourp}$ by  \reflemma{abstraction-inerts-stable-lasrelncbv}) \ie $\isctxp{\la\vartwo\tmrthreep} \lasrelncbv \isctxtwop{\la\vartwo\tmrfourp}$.
		
		By \reflemma{lasrelncbv-values-isctx-decomposition} we get $\isctxp{\var} \lasrelncbv \isctxtwop{\var}$ and $\la\vartwo\tmrthreep\lasrelncbv\la\vartwo\tmrfourp$ and in particular\\ $\la\vartwo\tmrthreep\lasrel\la\vartwo\tmrfourp$ by \ref{l:lasrelncbv-normal-forms-lasrel-left-to-right}.
		
		Then:
		\[\infer{\tmrone\isub\var{\la\vartwo\tmrthreep} \lasrel \tmrtwo\isub\var{\la\vartwo\tmrfourp}}{\tmrone\lasrel\tmrtwo&\la\vartwo\tmrthreep\lasrel\la\vartwo\tmrfourp}\]
		
		since $\tmrone\isub\var{\la\vartwo\tmrthreep} \bsw i \ntm$ with $i < k+i+1$ we can apply the inductive hypothesis on the first component for $\tmrone\isub\var{\la\vartwo\tmrthreep}$ obtaining $\tmrtwo\isub\var{\la\vartwo\tmrfourp} \bsws  \ntmtwo$ for some $\ntmtwo$ such that $\ntm\lasrelncbv\ntmtwo$. 
		Since $\isctxp{\var} \lasrelncbv \isctxtwop{\var}$ and $\ntm\lasrelncbv\ntmtwo$, by \reflemma{lasrelncbv-normal-forms-isctx-decomposition}, we get $\isctxp{\ntm} \lasrelncbv \isctxtwop{\ntmtwo}$.
		
		Last, note that $\tmrtwo\tmrfour\bsws\isctxtwop\ntmtwo$ by 
		\[\infer{\tmrtwo\tmrfour \bsws \isctxtwop\ntmtwo}{
			\tmrfour \bsws \isctxtwop{\la\vartwo\tmrfourp}
			&
			\tmrtwo\isub\var{\la\vartwo\tmrfourp} \bsws \ntmtwo
		}\]
		
	\end{enumerate}}

	\item \emph{Meta-level Substitution}: 
	\[ \infer[(sc.subst) ]{\tmrone\isub\var{\val} \lasrel \tmrtwo\isub\var{\valtwo}} {\tmrone \lasrel \tmrtwo & \val \lasrel \valtwo }\text{ and }\tmrone\isub\var{\val} \bsw k \ntm \]
	then by applying Big-step substitutivity (\reflemma{splitting_weak}), we obtain $\tmrone \bsw {k_1} \ntm_\tmrone$ and $\ntm_\tmrone\isub\var{\val} \bsw {k_2} \ntm$ with $k=k_1+k_2$. Hence by inductive hypothesis ($d$ strictly decreasing, first component non increasing) $\tmrtwo\bsws\ntm_\tmrtwo$ and $\ntm_\tmrone \lasrelncbv \ntm_\tmrtwo$. In particular, by \reflemma{lasrelncbv-normal-forms-lasrel-left-to-right}, $\ntm_\tmrone \lasrel \ntm_\tmrtwo$.
	We then have: 
	\[\infer{\ntm_\tmrone\isub\var{\val} \lasrel \ntm_\tmrtwo\isub\var{\valtwo}} {\ntm_\tmrone \lasrel \ntm_\tmrtwo & \val \lasrel \valtwo }\]
	
	
	Two cases.
	\begin{enumerate}

		\item \emph{$\tmrone$ is not normal}, that is, $k_1>0$ and $k_2<k$. Then by applying the induction hypothesis to $k_2$ (first component)  and $\ntm_\tmrone\isub\var{\val}$ we obtain $\ntm_\tmrtwo\isub\var{\valtwo} \bsws \ntmtwo$ with $\ntm\lasrelncbv\ntmtwo$. We conclude using substitutivity of $\tow$ (\ref{l:stability_weak}) that $\tmrtwo\isub\var{\valtwo} \tow^* \ntm_\tmrtwo\isub\var{\valtwo} \tow^* \ntmtwo$, hence via the equivalence between big and small steps (\ref{l:ss-bs-equivalence_weak}), $\tmrtwo\isub\var{\valtwo} \bsws \ntmtwo$.

		\item \emph{$\tmrone$ is normal}, that is, $k_1=0$ and $k_2=k$. Then $\tmrone=\ntm_\tmrone$. Two sub-cases:
		\begin{itemize}
			\item \emph{$\tmrone\isub\var{\val} = \ntm_\tmrone\isub\var{\val}$ is also normal}

			Since we know that $\ntm_\tmrone \lasrelncbv \ntm_\tmrtwo$ and $\val \lasrel \valtwo $, we can apply Point 1 of \refprop{ncbv-coherence} and obtain that $\ntm_\tmrtwo\isub\var{\valtwo}$ is $\tow$-normal and  $\tmrone\isub\var{\val} \lasrelncbv \ntm_\tmrtwo\isub\var{\valtwo}$. It is only left to show that $\tmrtwo\isub\var{\valtwo} \bsws \ntm_\tmrtwo\isub\var{\valtwo}$, which follows from $\tmrtwo\bsws \ntm_\tmrtwo$, substitutivity of $\tow$ (\reflemma{stability_weak}) and the fact that $\ntm_\tmrtwo\isub\var{\valtwo}$ is $\tow$-normal (and via \reflemma{ss-bs-equivalence_weak}).
			
			\item   \emph{$\tmrone\isub\var{\val} = \ntm_\tmrone\isub\var{\val}$ is not normal}
			
			hence $\ntm_\tmrone\isub\var{\val} \tow \tmronep \tow ^ {k-1} \ntm$ (the reduction is diamond, all reductions are of the same length, we pick any first step possible). Then by Point 2 of \refprop{ncbv-coherence} with $\ntm_\tmrone \lasrelncbv \ntm_\tmrtwo$, $\ntm_\tmrtwo\isub\var{\valtwo} \tow \tmrtwop$ with $\tmronep \lasrel \tmrtwop$.
			
			We can apply the inductive hypothesis to $\tmronep$ (first component is decreasing, as $k-1<k$) and we obtain $\tmrtwop \bsws \ntmtwo$ with $\ntm\lasrelncbv\ntmtwo$.
			The statement is then proved, since (using \reflemma{stability_weak})
			$$\tmrtwo\isub\var{\valtwo} \tow^* \ntm_\tmrtwo\isub\var{\valtwo} \tow \tmrtwop \tow^* \ntmtwo$$ that is, $\tmrtwo\isub\var{\valtwo} \bsws \ntmtwo$ by \reflemma{ss-bs-equivalence_weak}.\qedhere
			
		\end{itemize}
	\end{enumerate}
	
\end{enumerate}

\end{proof}

\section{Proofs from \refsect{enf} (Lassen's eager normal form simulation)}
In this sections, we give the proof of \refprop{enf-validation-of-equivalences} concerning shuffling equivalences, detailing the $\equivsone$ case. The case for $\equivsthree$, the restricted left-to-right version of $\equivexsthree$, is similar.

\begin{lemma}
	\label{l:equivsone-is-included-in-enf}
	$\tm \equivsone \tmp$ then $\tm \enfbisim \tmp$.
\end{lemma}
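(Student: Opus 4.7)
The strategy is to show that $\equivsone$ itself, viewed as an equivalence relation on Plotkin terms, is an enf bisimulation; that is, I will verify $\equivsonesym \subseteq \opnafp{\equivsonesym}$ where $\opnafp{\cdot}$ is here Lassen's functional $\mathcal{F}_\enf$. Since $\equivsone$ is compatible by definition, clauses (enf 2)--(enf 4) will essentially follow from a shape-preservation analysis of $\equivsone$-related normal forms, once we know that evaluation commutes nicely with $\equivsone$.

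The central technical step is a strong commutation lemma between $\equivsone$ and left reduction $\tolw$: if $\tm \equivsone \tmp$ and $\tm \tolw \tm_1$ then there exists $\tmp_1$ with $\tmp \tolw \tmp_1$ (or $\tmp = \tmp_1$ if the $\equivsone$-step itself realized the left step on one side) and $\tm_1 \equivsone \tmp_1$. The critical case is when the left redex overlaps with the $\equivsone$-root rule $(\la\var\tm_0)\tmtwo\tmthree \equivsone (\la\var\tm_0\tmthree)\tmtwo$: if the left-to-right reduction of $\tmtwo$ yields a value $\valtwo$, then firing on the left gives $(\la\var\tm_0)\valtwo\tmthree \tolw \tm_0\isub\var\valtwo\tmthree$, and firing on the right gives $(\la\var\tm_0\tmthree)\valtwo \tolw (\tm_0\tmthree)\isub\var\valtwo = (\tm_0\isub\var\valtwo)\tmthree$ using $\var\notin\fv\tmthree$, so the two meet on the nose; the other sub-cases (reduction strictly inside $\tmtwo$ or $\tm_0$, redex disjoint from the $\equivsone$-axiom) are routine. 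I would then lift this to the big-step predicate $\bswlefts$ (together with its divergence variant $\bswleftdiv$) by an easy induction on the number of steps, obtaining: (a) $\tm \bswleftdiv$ iff $\tmp \bswleftdiv$, and (b) if $\tm \bswlefts \ntm$ then $\tmp \bswlefts \ntmtwo$ with $\ntm \equivsone \ntmtwo$.

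With this in hand the verification of the enf clauses is straightforward by case analysis on the shape of $\ntm$ via the unique decomposition lemma (\reflemma{las-unique-dec}). Clause (enf 1) is exactly (a). For (enf 2), if $\ntm = \var$ then $\ntmtwo \equivsone \var$ forces $\ntmtwo = \var$, since $\equivsone$ preserves the head free variable of a term with no outer abstraction. For (enf 3), if $\ntm = \la\var\tm_1$ then by induction on the derivation of $\ntmtwo \equivsone \la\var\tm_1$, necessarily $\ntmtwo = \la\var\tmp_1$ with $\tm_1 \equivsone \tmp_1$, giving the required $\equivsone$-relatedness of the bodies. For (enf 4), if $\ntm = \levctxp{\var\val}$ then $\ntmtwo$ admits (by unique decomposition) a form $\levctxtwop{\vartwo\valtwo}$; pushing the $\equivsone$-equivalence through the decomposition---using that $\equivsone$ preserves the outermost application structure up to its single axiom---yields $\vartwo=\var$, $\val \equivsone \valtwo$, and $\levctxp{\varthree} \equivsone \levctxtwop{\varthree}$ for $\varthree$ fresh, which are exactly the enf~(4) obligations.

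The main obstacle is the strong commutation lemma: the interaction of the $\equivsone$-axiom with the specific shape of left contexts has to be checked carefully, in particular when the left redex sits in the shared left-prefix of the two sides of the axiom and may be ``uncovered'' by applying the axiom. Once this combinatorial bookkeeping is done, the rest of the argument is essentially formal, and the same template will adapt to $\equivsthree$ (the right-to-left restriction used for the dual renf bisimilarity) and to Moggi's laws.
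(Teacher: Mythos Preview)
Your approach is sound but substantially heavier than the paper's. You aim to show that the full context-and-equivalence closure $\equivsone$ is itself an enf bisimulation, which forces you to prove (i) a general strong-commutation lemma for $\equivsone$ against $\tolw$, and (ii) a decomposition lemma asserting that arbitrary $\equivsone$-related left normal forms factor into $\equivsone$-related components under the unique decomposition of \reflemma{las-unique-dec}. Both hold, but (ii) is not as light as you suggest: it is a full induction over the inductive definition of $\equivsone$ (root, context closure, reflexivity, symmetry, transitivity, plus substitutivity for when a $\betav$-step fires across an $\equivsone$-related body), and the root case of that induction is precisely the combinatorial heart of the matter.

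The paper bypasses all of this. It exhibits the tiny candidate $\relsym = Id \cup \{\text{root instances of }\equivsone\}$ and checks the enf clauses for a root pair $((\la\var\tm)\tmtwo)\tmthree \rel (\la\var\tm\tmthree)\tmtwo$ directly, by case-splitting on what left evaluation does to the shared subterm $\tmtwo$: if $\tmtwo$ diverges, both sides diverge; if $\tmtwo$ reaches a value $\val$, both sides reduce to the \emph{same} term $\tm\isub\var\val\tmthree$ and $Id$ closes the coinduction; if $\tmtwo$ reaches a stuck $\levctxp{\vartwo\val}$, the (enf~4) obligation becomes $((\la\var\tm)\levctxp{\varthree})\tmthree \rel (\la\var\tm\tmthree)\levctxp{\varthree}$, which is again a root $\equivsone$ instance. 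Three cases, no auxiliary lemmas; the passage to the full context-closed $\equivsone$ is delegated to the already-established compatibility of $\eqenf$. Your route would be reusable if you later wanted to exhibit a larger context-closed relation directly as an enf bisimulation without invoking compatibility, but for this lemma the paper's argument is the minimal one. A small side remark: your ``or $\tmp = \tmp_1$'' escape clause in the commutation statement never fires for $\equivsone$, since the axiom neither creates nor destroys a $\tolw$-redex.
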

\begin{proof}
	We prove that $\relsym = Id \cup \{(\tm,\tmp) \mid \tm \equivsone \tmp \}$ is an \enf bisimulation. First note that $Id \subseteq \openfp{Id}$ and $sym(Id) \subseteq \openfp{sym(Id)}$. We show that $\{(\tm,\tmp) \mid \tm \equivsone \tmp \} \subseteq \relenf$ (and the same reasoning shows that $sym(\relsym)$ is also an enf simulation).
	
	Let $(\tmrone,\tmrtwo)=(((\la\var\tm)\tmtwo)\tmthree,(\la\var\tm\tmthree)\tmtwo) \in\relsym$.
	\begin{itemize}
		\item If $\tmtwo \bswleftdiv$, then $\tmrone$ and $\tmrtwo$ diverge, hence $(\tmrone,\tmrtwo) \in \relenf$ by case (enf 1).
		\item If $\tmtwo \bswleft k \val$, then $\tmrone \tolw^{k+1} \tm\isub\var\val\tmthree$  and $\tmrtwo \tolw^{k+1} (\tm\tmthree)\isub\var\val$. We conclude $(\tmrone,\tmrtwo) \in \relenf$ because they both reduce to $\tm\isub\var\val\tmthree = (\tm\tmthree)\isub\var\val$ -- indeed $\var\not\in\fv\tmthree$) hence have the same normal form ($Id$ part of the relation $\relsym$).
		\item If $\tmtwo \bswleft k \levctxp {\vartwo\val}$, then $\tmrone \bswleft k ((\la\var\tm)\levctxp {\vartwo\val})\tmthree$ and $\tmrtwo \bswleft k (\la\var\tm\tmthree)\levctxp {\vartwo\val}$. By case (enf 4), $(\tmrone,\tmrtwo) \in \relenf$ since $\val \rel \val$ and $((\la\var\tm)\levctxp {\varthree})\tmthree \rel (\la\var\tm\tmthree)\levctxp {\varthree}$ (because $\{(\tm,\tmp) \mid \tm \equivsone \tmp \} \subseteq \relsym$).
	\end{itemize}	
	Hence the result by coinduction.
\end{proof}

%

\gettoappendix{prop:enf-validation-of-equivalences}

\begin{proof} Moggi's equivalences proofs are straightforward, and already included in Lassen's original paper \cite{LassenEnf}. 
	
	We deduce the result for the shuffling equivalences by an easy coinductive argument described in \reflemma{equivsone-is-included-in-enf} for $\equivsone$. For $\equivsthree$, the argument is similar.
	
	Counterexamples for the other equivalences are easy to come up with.
\end{proof}

\section{Proofs from \refsect{vsc} (The Value Substitution Calculus)}
\label{app:app-vsc}
In this Appendix, we give the definition of \cbn and \cbv inscrutable terms, prove that \cbv inscrutable terms coincide with \cbv $\Omega$ terms, to then connect with the diverging characterization of \cbv inscrutable terms due to with \cbv $\Omega$-terms.

\paragraph{Definition of \cbn and \cbv Inscrutable Terms}
\begin{definition}[Testing contexts and \cbn/\cbv (in)scrutability]
\label{def:cbn-scrutability}
Testing context are defined by:
\begin{center}
\textsc{Testing contexts}  \ \ \ $\tctx \grameq \ctxhole \mid \tctx \tm \mid (\la\var\tctx)\tm$
\end{center}
A $\l$-term $\tm$ is \emph{\cbn scrutable} if there is a testing context $\tctx$ and a value $\val$ such that $\tctxp\tm \tob^* \val$, that is, a variable or an abstraction, otherwise $\tm$ is \emph{\cbn inscrutable}.

A $\l$-term $\tm$ is \emph{\cbv scrutable} if there is a testing context $\tctx$ and a value $\val$ such that $\tctxp\tm \tobv^* \val$, otherwise $\tm$ is \emph{\cbv inscrutable}.
\end{definition}

\paragraph{\cbv Inscrutable Terms Coincide with \cbv $\Omega$-terms} It is obtained via the following proposition, based on an auxiliary lemma.

\begin{lemma}
\label{l:Omega-testing-diverging}
Let $\tctx$ be a testing context. Then $\tctxp\Omega$ is $\tovsc$-diverging.
\end{lemma}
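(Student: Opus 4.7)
The plan is to proceed by induction on the structure of the testing context $\tctx$, exhibiting in each case an infinite $\tovsc$-reduction sequence from $\tctxp\Omega$.

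\textbf{Base case:} $\tctx = \ctxhole$. Then $\tctxp\Omega = \Omega = \delta\delta$ with $\delta \defeq \la\var\var\var$. One step of $\rtom$ at the root gives $\delta\delta \tom (\var\var)\esub\var\delta$, and one step of $\toeabs$ recovers $\delta\delta$, yielding an infinite $\tovsc$-reduction.

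\textbf{Inductive case $\tctx = \tctx' \tm$:} By the induction hypothesis there is an infinite reduction $\tctx'\langle\Omega\rangle \tovsc t_1 \tovsc t_2 \tovsc \cdots$. Each step takes place inside some evaluation context $\evctx_i$, and the grammar of evaluation contexts contains the clause $\evctx \grameq \evctx \tm$, so $\ctxhole\tm$ composes with each $\evctx_i$ into a legal evaluation context. Hence each step lifts through the outer application, producing an infinite reduction $\tctx'\langle\Omega\rangle\,\tm \tovsc t_1\,\tm \tovsc t_2\,\tm \tovsc \cdots$ from $\tctxp\Omega$.

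\textbf{Inductive case $\tctx = (\la\var\tctx')\tm$:} Here $\Omega$ sits under a $\lambda$, so the inductive hypothesis cannot be applied \emph{in situ}, since weak reduction does not go under abstractions. The key move is to first fire the outermost $\beta$-redex via the multiplicative rule at the root (taking the empty substitution context), obtaining
$$(\la\var\tctx'\langle\Omega\rangle)\,\tm \ \tom\ \tctx'\langle\Omega\rangle\,\esub\var\tm.$$
Now $\Omega$ is no longer under a $\lambda$: the IH gives an infinite reduction from $\tctx'\langle\Omega\rangle$, and the clause $\evctx \grameq \evctx\esub\var\tm$ of the evaluation contexts lets each of those steps lift through the surrounding ES, as in the previous case. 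Concatenating the initial multiplicative step with this lifted infinite reduction yields $\tovsc$-divergence of $\tctxp\Omega$.

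The proof is essentially routine; the only subtle point is the second inductive case, where one must exploit the outer $\beta$-redex (decomposed at a distance via the $\tom$ rule of the VSC) to extract $\Omega$ from under the abstraction so that the induction hypothesis becomes applicable. The rest is bookkeeping about the composition of evaluation contexts.
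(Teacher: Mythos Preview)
Your proof is correct and follows essentially the same approach as the paper: induction on the structure of $\tctx$, with the key move in the applied-abstraction case being to fire the outer $\tom$ redex first, reducing $(\la\var\tctx'\langle\Omega\rangle)\tm$ to $\tctx'\langle\Omega\rangle\esub\var\tm$, and then using the induction hypothesis together with the fact that $\ctxhole\esub\var\tm$ is an evaluation context. Your base case and application case are spelled out in a bit more detail than the paper's (which simply says ``trivial'' and ``follows from the \ih and the fact that $\ctxhole\tm$ is an evaluation context''), but the argument is the same.
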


\begin{proof}
By induction on $\tctx$. Cases:
\begin{itemize}
\item \emph{Empty}, that is $\tctx = \ctxhole$. Trivial.
\item \emph{Application}, that is $\tctx = \tctxtwo \tm$. It follows by the \ih and the fact that $\ctxhole\tm$ is an evaluation context.
\item \emph{Applied abstraction}, that is $\tctx = (\la\var\tctxtwo) \tm$. Then $\tctxp\Omega = (\la\var\tctxtwop\Omega) \tm \tom \tctxtwop\Omega\esub\var\tm$. By \ih, $\tctxtwop\Omega$ is $\tovsc$-diverging and $\ctxhole\esub\var\tm$ is an evaluation context, thus  $\tctxtwop\Omega\esub\var\tm$, and so $\tctxp\Omega$, is $\tovsc$-diverging.\qedhere
\end{itemize}
\end{proof}

\begin{proposition}[\cbv inscrutable terms = \cbv $\Omega$-terms]
\label{prop:cbv-inscr-equal-cbv-omega}
A term $\tm$ is \cbv inscrutable if and only if it is a \cbv $\Omega$-term.
\end{proposition}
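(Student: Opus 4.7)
The plan is to prove both directions of the iff, using Lemma~\ref{l:Omega-testing-diverging} for one direction and the diverging characterization of \cbv inscrutable terms (due to Accattoli--Paolini) for the other.

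For $(\Leftarrow)$, I would argue by contraposition. Given a witness of scrutability $\tctxp\tm \tobv^* \val$, I would first close the free variables of $\tctxp\tm$ by wrapping $\tctx$ iteratively with constructors of the form $(\la\var_i\ctxhole)\Id$; these are still testing contexts since the grammar contains the applied-abstraction form, which effectively acts as a $\letexp$. By substitutivity, the resulting closing testing context $\ctx$ satisfies $\ctxp\tm \tobv^* \val'$ for some closed value $\val'$, and since closed $\tobv$-normalization to a value implies weak \cbv convergence in Plotkin's calculus (by a standardization/confluence argument), $\ctxp\tm$ converges. On the other hand, by Lemma~\ref{l:Omega-testing-diverging} $\ctxp\Omega$ is $\tovsc$-diverging; on pure $\l$-terms this implies weak \cbv divergence in Plotkin's calculus, because weak \cbv steps embed into $\tovsc$, so any weak \cbv reduction of $\ctxp\Omega$ to a value would yield a VSC normal form. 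Hence $\ctx$ separates $\tm$ from $\Omega$, contradicting $\tm \eqcv \Omega$.

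For $(\Rightarrow)$, I would first invoke the Accattoli--Paolini characterization to rephrase inscrutability as $\tovsc$-divergence, reducing the task to showing that any $\tovsc$-diverging term $\tm$ is cbv contextually equivalent to $\Omega$. The cleanest route uses the multi type characterization of termination (Theorem~\ref{thm:mtypes-charac}): for a closing context $\ctx$, $\ctxp\tm$ converges iff $\ctxp\tm$ is typable. Because both $\tm$ and $\Omega$ are $\tovsc$-diverging and are not values, in a typing derivation of $\ctxp\tm$ no occurrence of $\tm$ (or $\Omega$) can receive a type except by being sheltered under an abstraction typed $\emptytype$ via the $\typingruleMany$ rule with empty index set, which omits typing the abstraction body. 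This shielding condition depends only on the position of the hole in $\ctx$, not on the specific diverging term plugged in, so $\ctxp\tm$ is typable iff $\ctxp\Omega$ is, giving $\ctxp\tm$ converges iff $\ctxp\Omega$ converges.

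The main obstacle is formalizing the $\emptytype$-shielding argument rigorously, by transporting a typing derivation of $\ctxp\tm$ to one of $\ctxp\Omega$: one needs to identify the empty-index applications of $\typingruleMany$ at which the branches mentioning $\tm$ are pruned and re-graft them around $\Omega$ instead. A purely operational alternative would trace weak \cbv reductions of $\ctxp\tm$ and $\ctxp\Omega$ in parallel, using that \cbv never discharges a redex whose argument or function is a non-value without evaluating it first: wherever a residual of $\tm$ or $\Omega$ is eventually demanded as a value, both reductions diverge, and wherever it is never demanded, they agree step by step modulo the silent swap of $\tm$ for $\Omega$ inside inert subterms.
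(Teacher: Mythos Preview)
Your $(\Leftarrow)$ direction is essentially the paper's argument: close the free variables of the testing context with applied-$\l$ wrappers $(\la{\var_i}\ctxhole)\Id$, use substitutivity to keep convergence of $\ctxp\tm$, and invoke \reflemma{Omega-testing-diverging} for divergence of $\ctxp\Omega$. The paper works directly in the VSC (citing that Plotkin and VSC scrutability coincide) rather than bridging Plotkin's $\tobv$ to weak convergence via standardization as you do, but the core idea is identical.

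Your $(\Rightarrow)$ direction takes a genuinely different route. The paper simply cites \cite{DBLP:journals/pacmpl/AccattoliG22} (their Proposition~6.3 and Corollary~6.4) for the fact that all \cbv inscrutable terms are contextually equivalent. Your argument via multi types is correct and more self-contained within this paper's development, but you are making it harder than necessary. The ``$\emptytype$-shielding'' transport of derivations that you flag as the main obstacle is not needed at all: by \refthm{mtypes-charac}, a $\tovsc$-diverging term is \emph{untypable}, so the implication ``$\typectx \types \tm \hastype \mtype$ implies $\typectx \types \Omega \hastype \mtype$'' holds vacuously, i.e.\ $\tm \leqtype \Omega$; symmetrically $\Omega \leqtype \tm$. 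Then compatibility and soundness of $\leqtype$ (\refprop{type-preorder-is-compatible}) immediately give $\tm \eqcv \Omega$. There is no circularity: \refthm{mtypes-charac} and \refprop{type-preorder-is-compatible} do not rely on the present proposition. So your approach works, and once simplified it is arguably cleaner than deferring to an external reference; the paper's route just avoids introducing Section~\ref{sect:type-preorder} material here.
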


\begin{proof}
\hfill
\begin{itemize}
\item \emph{Direction $\Rightarrow$.}
In \cite{DBLP:journals/pacmpl/AccattoliG22}, it is proved that all \cbv inscrutable terms are \cbv contextually equivalent (therein it is Proposition 6.3 and Corollary 6.4, page 17). 

\item \emph{Direction $\Leftarrow$.} In \cite{DBLP:journals/pacmpl/AccattoliG22}, it is proved that \cbv scrutability in Plotkin's calculus and in the VSC coincide (therein it is Theorem 5.5, page 14), that is one can replace $\tobv$ with $\tovsc$ in \refdef{cbn-scrutability} without changing the definition. Moreover, Plotkin's calculus and the VSC have also the same contextual equivalence. Thus, we here use the VSC.

Suppose by contradiction that $\tm$ is not \cbv inscrutable, that is, that there is a testing context $\tctx$ such that $\tctxp\tm \tovsc^*\val$. By \reflemma{Omega-testing-diverging}, $\tctxp\Omega$ is $\tovsc$-diverging. Since $\tm$ and $\Omega$ are contextually equivalent (and contextual equivalence is defined using closing contexts), $\tctxp\tm$ has to be an open term. Let $\fv{\tctxp\tm} =\set{\var_1, \ldots, \var_n}$ and consider the closed testing context $\tctxtwo \defeq \la{\var_n}(\ldots(\la{\var_1}\tctx)\Id\ldots)\Id$. By stability of $\tovsc$ reduction under substitution of values (\refprop{substitutivity_vsce}), we have:
\begin{center}$
\tctxtwop\tm = \la{\var_n}(\ldots(\la{\var_1}\tctxp\tm)\Id\ldots)\Id \tovsc^* \tctxp\tm\isub{\var_1}\Id\ldots \isub{\var_n}\Id \tovsc^* \val\isub{\var_1}\Id\ldots \isub{\var_n}\Id
$\end{center}
which is a closed value, while $\tctxtwop\Omega$ diverges, again by \reflemma{Omega-testing-diverging}. Thus, $\tm$ and $\Omega$ are not contextually equivalent, absurd.\qedhere
\end{itemize}
\end{proof}

\paragraph{Diverging Characterization of $\Omega$-Terms} We first recall the diverging characterization of \cbv inscrutable terms due to \cite{accattoli+paolini-vsc}.

\begin{theorem}[VSC diverging characterization of \cbv inscrutability, \cite{accattoli+paolini-vsc}]
	\label{thm:cbv-scrutability-characterization-bis}
A term $\tm$ is \cbv inscrutable if and only if $\tm$ is $\tovsc$ diverging.
\end{theorem}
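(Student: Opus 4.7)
My plan is to prove each direction by contraposition.

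For the direction \emph{scrutable implies $\tovsc$-terminating}, I would use the multi type system of \Cref{fig:multi-types-vsc}. Assume $\tctxp\tm \tovsc^* \val$. Every value is typable with the empty multi type $\emptytype$ via rule $\typingruleMany$ instantiated with $I = \emptyset$, and iterated subject expansion (\refthm{mtypes-charac}) gives a typing derivation $\typeder$ for $\tctxp\tm$. From $\typeder$ I would extract a typing derivation for $\tm$ by structural induction on $\tctx$: when $\tctx = \tctx'\tmtwo$, $\typeder$ ends with $\typingruleApp$, whose left premise types $\tctx'[\tm]$ with a non-empty multi type $[\mtype \multimap \mtypetwo]$; when $\tctx = (\la\var\tctx')\tmtwo$, peeling off $\typingruleApp$, $\typingruleMany$ (forced to have $|I|=1$ since the function type is a singleton), and $\typingruleAbs$ gives a derivation of $\tctx'[\tm]$ with some multi type; in both cases the IH produces a typing for $\tm$. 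The termination characterization (\refthm{mtypes-charac}) then concludes.

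For the converse \emph{$\tovsc$-terminating implies scrutable}, suppose $\tm \tovsc^* \ntm$ with $\ntm$ a $\tovsc$-normal form and let $\fv\ntm = \{\var_1, \ldots, \var_n\}$. I would build
\[
\tctx \,\defeq\, (\la{\var_n}\cdots(\la{\var_2}(\la{\var_1}\ctxhole)\val_1)\val_2\cdots)\val_n,
\]
with each $\val_i$ chosen as a nested abstraction $\la\vartwo_1\ldots\la\vartwo_k\Id$ whose arity $k$ dominates all uses of $\var_i$ in $\ntm$. Whatever the precise choice, the $n$ outer applications fire by $\tom$ steps and the resulting chain of explicit substitutions fires by $\toe$ steps (since the $\val_i$ are values), yielding $\tctxp\tm \tovsc^* \ntm\isub{\var_1}{\val_1}\cdots\isub{\var_n}{\val_n}$. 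Inspecting the normal form grammar (modulo $\equivsone$), the only obstructions to $\ntm$ being a value are blocked head variables, either at the head of the outer applications or inside an inner explicit substitution $\ntmtwo\esub\var\itm$ with inert $\itm$. Substituting the appropriate $\var_i$ by an abstraction of sufficient arity unblocks these redexes, and, once propagated upward, the whole substituted term reduces to a value.

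The main obstacle will be this second direction, specifically guaranteeing that a \emph{uniform} choice of the $\val_i$ unblocks \emph{all} nested inert positions simultaneously---the head variable of an inert appearing inside an explicit substitution can in turn block other reductions once substituted. I expect the cleanest argument to proceed by induction on the size of $\ntm$, peeling off one normal-form constructor at a time and propagating arity information upward to compute the correct depth for each $\val_i$. The diamond property of $\tovsc$ (\refprop{vsc-diamond}) ensures that the exact order in which the cascade of $\tom$ and $\toe$ steps is fired is immaterial, which significantly simplifies the bookkeeping.
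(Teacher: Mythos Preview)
The paper does not prove this statement: it is quoted verbatim from \cite{accattoli+paolini-vsc} and used as a black box in \refapp{app-vsc}, where it is combined with \refprop{cbv-inscr-equal-cbv-omega} to obtain \refthm{cbv-scrutability-characterization}. There is therefore no in-paper proof to compare against.

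On the merits of your sketch: the first direction, routing through the multi type system and \refthm{mtypes-charac}, is a clean and almost certainly \emph{different} argument from the one in \cite{accattoli+paolini-vsc}, which is a direct syntactic proof predating the detailed type-theoretic study of the VSC. One small gap: the paper's definition of \cbv scrutability (\refdef{cbn-scrutability}) uses Plotkin's $\tobv$, not $\tovsc$, so you need a line bridging $\tctxp\tm \tobv^* \val$ to $\tovsc$-termination of $\tctxp\tm$; this is easy (simulate each $\tobv$ step as $\tom$ followed by $\toe$) and the paper in fact cites the coincidence of the two notions in the proof of \refprop{cbv-inscr-equal-cbv-omega}. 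Your inductive extraction of a derivation for $\tm$ from one for $\tctxp\tm$ is correct as stated.

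For the second direction your plan is the standard one, but the execution is still loose. The difficulty you flag is genuine: substituting an abstraction for the head variable of an inert unblocks a $\tom$ step, but the resulting ES can only fire $\toe$ once its content becomes (a substitution context around) a value, and that content is itself a substituted normal subterm. The fix is to take a single ``universal eater'' $\val = \la\vartwo_1\ldots\la\vartwo_N\Id$ with $N$ larger than the size of $\ntm$ and use it for \emph{every} free variable; then each application chain is absorbed, the bodies are vacuous in the bound $\vartwo_j$, and the pending ES contents are strict subterms of $\ntm$ (after the same substitution). An induction on the size of $\ntm$, following the normal-form grammar of \Cref{fig:vsc}, then goes through. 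What you wrote is the right shape; it just needs this uniform choice and the explicit case analysis to become a proof.
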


\gettoappendix{thm:cbv-scrutability-characterization}
\begin{proof}
It follows from \refthm{cbv-scrutability-characterization-bis} and \refprop{cbv-inscr-equal-cbv-omega}.
\end{proof}

\section{Proofs from \refsect{benchmarks-vsc} (Equational Benchmarks and the Value Substitution Calculus)}
In this section, we recall the proof that structural equivalence strongly commutes with $\tovsc$. The proof is identical to the one given in \cite{accattoli+paolini-vsc}, it is here presented for the interested reader to be able to see which fragments of structural equivalence independently strongly commutes.

\gettoappendix{prop:strong-bisimulation}
\newcommand{\eqz}{\equiv_0}
\begin{proof}
Define $\eqz$ as the context closure of $\equivsone\cup \equivexsthree\cup \equivass \cup \equivcom$.
 We have $\streq=\eqz^*$. We prove that:
\begin{equation}
\mbox{if $t_0\eqz t_1 \Rew{a} s_1$ then there exists $w$ such that $t_0\Rew{a} w \streq s_1$}
\label{eq:bis}
\end{equation}
The statement then follows by induction on the reflexive and transitive closure of $\eqz$. Let us show that: the reflexive case is trivial and if $t_0\eqz t_0'\eqz^k t_1\Rew{a} s_1$ then by \ih\ exists $w$ such that  $t_0'\Rew{a} w \streq s_1$ and by (\ref{eq:bis}) there exists $w'$ such that  $t_0\Rew{a} w'\streq w\streq s_1$.\\

The proof of (\ref{eq:bis}) is by induction on $\eqz$. Actually, before to proceed with the proof one should first prove the following two easy substitutivity properties:
\begin{enumerate}
  \item \label{l:eqo-stability-two} If $t \eqz t'$ then   $t\isub{x}{u} \eqz t'\isub{x}{u}$.
  \item \label{l:eqo-stability-one} If $u \eqz u'$ then   $t\isub{x}{u} \streq t\isub{x}{u'}$.
  \end{enumerate}      
Used in the inductive cases for the ES. We omit their proofs, which are straightforward inductions.

\begin{itemize}
\item Base cases:
\begin{itemize}
 \item \emph{Commutativity}: let $t_0= t\esub\vartwo{u}\esub\var{s}\equivcom t\esub\var{s}\esub\vartwo{u}=t_1$ with $x\notin\fv{u}$ and $y\notin\fv{s}$. If $t_1\Rew{a} s_1$ because:
 \begin{itemize}
  \item $t\Rew{a}  t'$ then $t_0=t\esub\vartwo{u}\esub\var{s}\Rew{a}  t'\esub\vartwo{u}\esub\var{s}\equivcom t'\esub\var{s}\esub\vartwo{u}=s_1$.
  \item $u\Rew{a}  u'$ or $s\Rew{a}  s'$ then it is similar to the previous case.
  \item $s=\sctxp\val$ and $t\esub\var{\sctxp\val}\esub\vartwo{u}\toe \sctxp{t\isub\var\val}\esub\vartwo{u}=s_1$. Then:
  \[\begin{array}{llllll}
  t_0&\toe & \sctxp{t\esub\vartwo{u}\isub\var\val}\\
  &=& \sctxp{t\isub\var\val \esub\vartwo{u}} \\
    &\equivcom& \sctxp{t\isub\var\val} \esub\vartwo{u}&=&s_1
  \end{array}\]
  \item The case where $u=\sctxp\val$ and $t\esub\var{s}\esub\vartwo{\sctxp\val}\toe \sctxp{t\esub\var{s}\isub\vartwo\val}=s_1$is similar to the previous one.  \end{itemize}

\item \emph{Sigma 1}: let $t_0=t\esub\var{s} \tmtwo   \equivsone  (\tm\tmtwo)\esub\var{s}=t_1$ with $x\notin\fv{u}$. If $t_1\Rew{a} s_1$ because:

 \begin{itemize}
  \item $t\Rew{a}  t'$ then $t_0=t\esub\var{s} \tmtwo \Rew{a}  t'\esub\var{s} \tmtwo \equivsone (t' \tmtwo )\esub\var{s}=s_1$.
  \item $s\Rew{a}  s'$ or $u\Rew{a}  u'$ then it is similar to the previous case.
  \item $s=\sctxp\val$ and $t_1=(\tm\tmtwo)\esub\var{\sctxp\val}\toe \sctxp{(\tm\tmtwo)\isub\var\val}=s_1$.
Then:
  \[\begin{array}{llllll}
  t_0&=& t\esub\var{\sctxp\val} \tmtwo \\
  &\toe& \sctxp{t\isub\var\val} \tmtwo \\
    &\equivsone& \sctxp{t\isub\var\val  \tmtwo }\\
    &=& \sctxp{(\tm\tmtwo)\isub\var\val}&=&s_1
  \end{array}\]

  \item $t= \l y. t'$ and $t_1=((\l y. t') \tmtwo )\esub\var{s} \tom t'\esub\vartwo{u}\esub\var{s}$. Then:
  \[\begin{array}{llllll}
  t_0&=& (\l y. t')\esub\var{s} \tmtwo \\
  &\tom& t'\esub\vartwo{u}\esub\var{s}&=&s_1
  \end{array}\]
  \end{itemize}

Note that here it is reflexivity of $\streq$ which is used.

 \item The case symmetric to the previous one, \ie\ $t_0= (\tm\tmtwo)\esub\var{s}   \equivsone  t\esub\var{s} \tmtwo=t_1$ with $x\notin\fv{u}$, is proved analogously. It shall be so for all following cases, so we simply omit the symmetric cases.

\item \emph{Extended sigma 3}: let $t_0=\tm\tmtwo\esub\var{s}  \equivexsthree  (\tm\tmtwo)\esub\var{s}=t_1$ with $x\notin\fv{t}$. If $t_1\Rew{a} s_1$ because:

 \begin{itemize}
  \item $t\Rew{a}  t'$ then $t_0=\tm\tmtwo\esub\var{s}\Rew{a}  t' \tmtwo \esub\var{s} \equivexsthree (t' \tmtwo )\esub\var{s}=s_1$.
  \item $s\Rew{a}  s'$ or $u\Rew{a}  u'$ then it is similar to the previous case.
  \item $s=\sctxp\val$ and $t_1=(\tm\tmtwo)\esub\var{\sctxp\val}\toe \sctxp{(\tm\tmtwo)\isub\var\val}=s_1$. Then:
  \[\begin{array}{llllll}
  t_0&=& \tm\tmtwo\esub\var{\sctxp\val}\\
  &\toe & \tm\, \sctxp{\tmtwo\isub\var\val}\\
    &\equivexsthree & \sctxp{\tm\tmtwo\isub\var\val}\\
    &=& \sctxp{(\tm\tmtwo)\isub\var\val}&=&s_1
  \end{array}\]

  \item $t= \l y. t'$ and $t_1=((\l y. t') \tmtwo )\esub\var{s} \tom t'\esub\vartwo{u}\esub\var{s}$. Then:
  \[\begin{array}{llllll}
  t_0&=& (\l y. t') \tmtwo \esub\var{s}\\
  &\tom& t'\esub\vartwo{\tmtwo\esub\var{s}}\\
  &\equivass& t'\esub\vartwo{u}\esub\var{s}&=&s_1
  \end{array}\]
  \end{itemize}

\item \emph{Associativity of ES}: let $t_0=t\esub\vartwo{\tmtwo\esub\var{s}}  \equivass  t\esub\vartwo{u}\esub\var{s}=t_1$ with $x\notin\fv{t}$. If $t_1\Rew{a} s_1$ because:

\begin{itemize}
\item $t\Rew{a} t'$ then $t_0\Rew{a} t'\esub\vartwo{\tmtwo\esub\var{s}}\equivass t'\esub\var\tmtwo\esub\var{s}=s_1$.
\item $u\Rew{a} u'$ or $s\Rew{a} s'$ it is analogous to the previous case.
\item $s=\sctxp\val$ and $t_1 \toe \sctxp{t\esub\vartwo{u}\isub\var\val}=s_1$. Then
  \[\begin{array}{llllll}
  t_0&=& t\esub\vartwo{u\esub\var{\sctxp\val}}\\
  &\toe& t\esub\vartwo{\sctxp{u\isub\var\val}}\\
  &\equivass& \sctxp{t\esub\vartwo{u\isub\var\val}}\\
  &=& \sctxp{t\esub\vartwo{u}\isub\var\val}&=&s_1
  \end{array}\]

\item $u=\sctxp\val'$ and $t_1=\sctxp{t\esub\var{\sctxtwop\val}} \toe \sctxp{\sctxtwop{t\isub\var\val}}$. Then $t_0=t\esub\var{\sctxp{\sctxtwop\val}}\toe \sctxp{\sctxtwop{t\isub\var{\val}}}=s_1$. Note that here it is reflexivity of $\streq$ which is used.
\end{itemize}
 \end{itemize}

\item Inductive cases. We only show the interesting ones:
\begin{itemize}
\item Application: the only case where the reduction interact with the contextual closure is $t_0=\sctxp{\la\var\tm} \tmtwo  \eqz \sctxp{\la\var\tm'} \tmtwo  = t_1 \Rew{a} \sctxp{t'\esub\var\tmtwo}=s_1$. Then $t_0\Rew{a} \sctxp{t\esub\var\tmtwo} \eqz \sctxp{t'\esub\var\tmtwo}=s_1$. The variants  $t_0=\sctxp{\la\var\tm} \tmtwo  \eqz \sctxp{\la\var\tm} \tmtwo ' = t_1 \Rew{a} \sctxp{t\esub\var{\tmtwo'}}=s_1$ and $t_0=\sctxp{\la\var\tm} \tmtwo  \eqz \sctxtwop{\la\var\tm} \tmtwo  = t_1 \Rew{a} \sctxtwop{t\esub\var\tmtwo}=s_1$ are analogous. All other inductive cases for application are straightforward.

\item Explicit substitution. We only show the interesting cases. 
\begin{itemize}
\item $t_0=t\esub\var{\sctxp\val} \eqz t'\esub\var{\sctxp\val} = t_1 \Rew{a} \sctxp{t'\isub\var\val}=s_1$. Then by the first substitutivity property we obtain $t_0\Rew{a} \sctxp{t\isub\var\val} \eqz \sctxp{t'\isub\var\val}$.
\item $t_0=t\esub\var{\sctxp\val} \eqz t\esub\var{\sctxp\valtwo} = t_1 \Rew{a} \sctxp{t\isub\var\valtwo}=s_1$. Then by the second substitutivity property we obtain $t_0\Rew{a} \sctxp{t\isub\var\val}\streq \sctxp{t\isub\var\valtwo}$.\qedhere
\end{itemize}
\end{itemize}
\end{itemize}
\end{proof}

\section{Proof of Compatibility for \nafex and \net Similarity}
\label{chapter:proof-compatibility-nafex}
The proof follows the same structure as in the case of naive simulations. We prove the general statement for $\equivx$ a mirror.

\subsection{Proof of Equivalence of Small-Step and Big-Step Operational Semantics}
In this subsection, we give the details for the proof of completeness for the big step system we introduce for the Value Substitution Calculus using the diamond property.

\subsubsection{Preliminaries}
Some generalities about subreductions and their properties with a calculus that has the diamond property (or the Random Descent property).

\paragraph{General Properties of Random Descent}
The diamond property implies the Random Descent (RD) property, that is the following:
\begin{definition}[RD property, Newman]A relation $\to$ has the 
	\emph{Random Descent (RD)} property if for each element $\tm$, all maximal sequences from $\tm$ have the same length and---if it is finite---they all end in the same element.
\end{definition}

\begin{lemma}[Completeness of subreductions]\label{lem:RD_completeness} Let $\to$ be a reduction which satisfies the RD property, and  $\tos \subseteq \to$.
	If  $\tos$ and  $\to$ have the same normal forms, and $\tmn$ is $\to$-normal, then: 
	\[\tm \to^k \tmn  \iff \tm \tos^k \tmn \]
	
		%
		%
	
\end{lemma}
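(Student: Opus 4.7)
The $(\Leftarrow)$ direction is immediate: since $\tos \subseteq \to$, any $\tos$-sequence is also a $\to$-sequence of the same length, so $\tm \tos^k \tmn$ gives $\tm \to^k \tmn$. The content is in the $(\Rightarrow)$ direction, which I plan to derive from two straightforward consequences of the RD property, plus the hypothesis that $\tos$ and $\to$ have the same normal forms.

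First I would invoke RD on the given reduction $\tm \to^k \tmn$. Because $\tmn$ is $\to$-normal, the sequence $\tm \to^k \tmn$ is maximal, so RD gives that \emph{every} maximal $\to$-sequence starting from $\tm$ has length exactly $k$ and terminates in $\tmn$. Next I would note that, since $\tos \subseteq \to$, every $\tos$-sequence from $\tm$ is a $\to$-sequence from $\tm$, and hence has length at most $k$; in particular $\tos$ is terminating on $\tm$.

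The remaining step is to exhibit a specific $\tos$-sequence of length $k$ ending in $\tmn$. If $k = 0$, then $\tm = \tmn$ and we are done. Otherwise $\tm$ is not $\to$-normal, and by the hypothesis that $\tos$ and $\to$ share the same normal forms, $\tm$ is not $\tos$-normal either, so a $\tos$-step from $\tm$ exists. Iterating, any maximal $\tos$-sequence from $\tm$ is finite (by the previous paragraph) and ends in a $\tos$-normal form $\tmn'$; by the shared-normal-forms hypothesis, $\tmn'$ is also $\to$-normal, so applying RD to this $\to$-sequence forces its length to be $k$ and $\tmn' = \tmn$. This yields $\tm \tos^k \tmn$, concluding the proof.

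There is no real obstacle: the argument is essentially bookkeeping with RD, and the shared-normal-forms hypothesis is precisely what is needed to prevent a $\tos$-sequence from stopping short of $\tmn$. The only subtle point is remembering that RD does not say $\tos$ has RD, only that the overarching $\to$ does; what makes the proof work is that every $\tos$-sequence lifts to a $\to$-sequence, so the uniform-length conclusion of RD applies indirectly.
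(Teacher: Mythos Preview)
Your proof is correct and essentially takes the same approach as the paper. The paper organizes it as an explicit induction on $k$ (at each step, take a $\tos$-step, use RD to see the target still $\to$-reaches $\tmn$ in $k-1$ steps, apply the induction hypothesis), whereas you unroll that induction by building a full maximal $\tos$-sequence and applying RD once to the whole thing; the key ingredients---shared normal forms to guarantee a $\tos$-step exists when not yet at $\tmn$, and RD to pin down length and endpoint---are the same in both.
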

\begin{proof}By induction on $k$.
	\begin{itemize}
		\item $k=0$. Trivial.
		\item $k\geq 1$. Assume $\tm \to \tm' \to^{k-1} \tmn $. By   assumption, $\tm$ is not a $\tos$-$ \nf $. Hence it exists $\tm''$ such that $\tm \tos \tm''$.
		Since $\tos \subseteq \to$, then $\tm \to \tm''$ and so by RD property $\tm'' \to^{k-1} \tmn$. By \ih,  $\tm'' \tos^{k-1},   \tmn$, hence $ \tm \tos^k \tmn  $.\qedhere
	\end{itemize}
	
\end{proof}

\subsubsection{Big-Steps/Small-Steps for the Value Substitution Calculus }
\label{proof:proof-completeness-big-step-vsc-practical}
\paragraph{A constrained reduction to model the big steps semantics}
In the Value Substitution Calculus,we define a subreduction $\tos \subseteq \tovsc$ is defined as follows. 
\begin{center}
	$\begin{array}{r@{\hspace{.5cm}}rlll}
		%
		\textsc{Substitution-Restricted Evaluation Contexts} & \ievctx & \grameq &  \ctxhole\mid \tm\ievctx\mid \ievctx\tm \mid \ievctx\esub{\var}{\itm} \mid \tm\esub{\var}{\ievctx}\\
	\end{array}
	$\end{center}

\begin{center}
	$\begin{array}{c@{\hspace{.5cm}}rcc}
		\textsc{Rule at Top Level} & \textsc{Contextual closure} \\
		\isctxp{\la\var\tm}\tmtwo \rtos \isctxp{\tm\esub\var\tmtwo} &
		\ievctxp \tm \tos \ievctxp \tmtwo \textrm{~~~if } \tm \rtos \tmtwo
		\\
		\tm\esub\var{\isctxp{\val}} \rtos \isctxp{\tm\isub\var\val} &
		\ievctxp \tm \tos \ievctxp \tmtwo \textrm{~~~if } \tm \rtos \tmtwo
	\end{array}
	$\end{center}

\begin{remark}
	This subreduction is written with the big step system in mind to ease the proof of \reflemma{ss-bs-equivalenceaux1_vsc} and still be a complete subreduction.
\end{remark}

\begin{lemma}
	\label{l:sctx-normal-is-isctx}
	If $\sctxp\tm$ is normal then $\sctx = \isctx$.
\end{lemma}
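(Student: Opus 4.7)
The plan is to proceed by induction on the structure of the substitution context $\sctx$.

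For the base case $\sctx = \ctxhole$, the empty context already matches the base case of the $\isctx$ grammar, so there is nothing to show.

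For the inductive case $\sctx = \sctxtwo\esub\var\tmtwo$, we have $\sctxp\tm = \sctxtwop\tm\esub\var\tmtwo$. Two things must be established: that $\tmtwo$ is an inert term, and that $\sctxtwo$ is an inert substitution context (so that the inductive hypothesis applies). First, since $\ctxhole\esub\var\tmtwo$ is a valid evaluation context, any $\tovsc$-step from $\sctxtwop\tm$ would lift to a $\tovsc$-step from $\sctxp\tm$; hence $\sctxtwop\tm$ must be normal. Symmetrically, the evaluation context $(\sctxtwop\tm)\esub\var\ctxhole$ forces $\tmtwo$ to be normal as well. Moreover, $\tmtwo$ cannot have the shape $\sctxthreep\val$, for otherwise the root exponential rule $\tm\esub\var{\sctxthreep\val} \rtoe \sctxthreep{\tm\isub\var\val}$ fires at the outermost ES, contradicting normality of $\sctxp\tm$. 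By the grammar of $\tovsc$-normal forms $\ntm \grameq \val \mid \itm \mid \ntm\esub\var\itm$, a normal form that is not of the shape $\sctxthreep\val$ is necessarily an inert term. Therefore $\tmtwo$ is inert, and the induction hypothesis applied to $\sctxtwo$ yields $\sctxtwo = \isctxtwo$. Consequently $\sctx = \isctxtwo\esub\var\tmtwo$ with inert ES-content, which matches the grammar of inert substitution contexts.

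There is no real obstacle here; the only subtlety is the dual evaluation-context argument on each side of the outer ES together with the root-$\rtoe$ argument ruling out $\tmtwo = \sctxthreep\val$. The rest is grammar inspection. (If the intended notion of ``normal'' is $\tos$-normal rather than $\tovsc$-normal, the argument is the same, since the relevant contexts $\ctxhole\esub\var\tmtwo$ and $(\sctxtwop\tm)\esub\var\ctxhole$ are admissible in $\ievctx$ once we know $\tmtwo$ is inert, and that is established exactly as above.)
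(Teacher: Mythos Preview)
Your proof is correct and follows exactly the paper's inductive argument, just with more detail on the evaluation-context reasoning and the grammar inspection showing that a normal form not of shape $\sctxthreep\val$ must be inert. The closing parenthetical about $\tos$-normality is unnecessary (the lemma is only invoked for $\tovsc$-normal terms) and mildly circular as phrased, but this does not affect the main argument.
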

\begin{proof}
	By induction on $\sctx$.
	\begin{itemize}
		\item $\sctx= \ctxhole$, the result is immediate.
		\item $\sctx = \sctxONE\esub\var\tmtwo$. $\sctxONEp\tm$ is also normal so by induction $\sctxONE = \isctxONE$. If $\tmtwo$ is not an inert, either $\tmtwo$ reduces or $\tmtwo$ is a value and the whole term reduces which contradicts the hypothesis that $\sctxp\tm$ is normal. Hence $\tmtwo = \itm$ and so $\sctx = \isctxONE\esub\var\itm$ \ie $\sctx = \isctx$.\qedhere
	\end{itemize}
\end{proof}

\begin{lemma}
	\label{l:tos-normal-is-tovsct-normal}
	If $\tm$ is $\tos$-normal, then $\tm$ is $\tovsc$-normal.
\end{lemma}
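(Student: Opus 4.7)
The plan is to prove the contrapositive by induction on the structure of $\tm$: if $\tm$ admits a $\tovsc$-step then it also admits a $\tos$-step. Values are $\tovsc$-normal since weak reduction does not enter abstractions, so I only need to treat applications $\tm = \tm_1\tm_2$ and explicit substitutions $\tm = \tm_1\esub\var{\tm_2}$ that have a $\tovsc$-redex somewhere.

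For an application $\tm = \tm_1\tm_2$, if the $\tovsc$-step lies in $\tm_1$ or $\tm_2$, the induction hypothesis supplies a $\tos$-step on the non-normal subterm, which lifts through $\ievctx\,\tm_2$ or $\tm_1\,\ievctx$---both fully unrestricted productions of the $\tos$ context grammar. Otherwise both subterms are $\tovsc$-normal and the $\tovsc$-step fires at the root as a multiplicative step, so $\tm_1 = \sctxp{\la\var\tmthree}$. Since $\tm_1$ is normal, \reflemma{sctx-normal-is-isctx} forces $\sctx$ to be inert, i.e.\ $\sctx = \isctx$, and the multiplicative root rule of $\tos$ applies.

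The genuinely delicate case is $\tm = \tm_1\esub\var{\tm_2}$, because of the asymmetric restriction in the $\tos$ context grammar: the production $\ievctx\esub\var\itm$ demands the right-hand side be inert, while $\tm_1\esub\var\ievctx$ imposes no restriction on the left. So if the $\tovsc$-step lies in $\tm_2$ I lift freely via $\tm_1\esub\var\ievctx$; and if $\tm_2 = \sctxp\val$ then \reflemma{sctx-normal-is-isctx} yields $\sctx = \isctx$ and the exponential root rule of $\tos$ fires. The real obstacle is the subcase where $\tm_2$ is $\tovsc$-normal but not of the form $\sctxp\val$, while the $\tovsc$-step lies in $\tm_1$: lifting via $\ievctx\esub\var\itm$ requires $\tm_2$ to be inert. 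I would discharge this by inspecting the grammar of $\tovsc$-normal forms from Figure~\ref{fig:vsc}: every normal form either has the shape $\sctxp\val$ (with $\sctx$ automatically an $\isctx$) or is itself an inert term, so the excluded case forces $\tm_2$ to be inert and the lift is valid.

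Overall, the argument is a structural case analysis powered by \reflemma{sctx-normal-is-isctx} together with the classification of $\tovsc$-normal forms. The restricted evaluation contexts of $\tos$ have been crafted precisely so that these two structural facts suffice to locate and fire a redex whenever $\tovsc$ can, so beyond the one subcase above the bookkeeping is routine.
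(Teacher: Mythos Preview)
Your proof is correct and follows essentially the same approach as the paper's: structural induction on $\tm$, with \reflemma{sctx-normal-is-isctx} as the key tool to turn a $\tovsc$-root-redex into a $\tos$-root-redex. You argue the contrapositive while the paper argues the direct statement, and you are in fact more careful in the explicit-substitution case---the paper simply asserts that both subterms of a $\tos$-normal $\tm_1\esub\var{\tm_2}$ are $\tos$-normal, but for $\tm_1$ this presupposes that $\tm_2$ is inert, which is precisely the point you justify via the classification of $\tovsc$-normal forms as either $\sctxp\val$ or inert.
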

\begin{proof}
	By induction on the structure of $\tm$.
	\begin{itemize}
		\item $\tm =\var$ or $\tm = \la\var\tmp$, the result is immediate
		\item $\tm = \tmrone\tmrtwo$, then $\tmrone$ and $\tmrtwo$ are $\tos$-normal, and by induction are $\tovsc$-normal. There is only one possibility for $\tm$ to $\tovsc$-reduce.
		
		Suppose $\tmrone\tmrtwo \tom \tmp$ then $\tmrone =\sctxp{\la\var\tmronep}$.
		Since $\tmrone$ is a $\tovsc$ normal form, by \reflemma{sctx-normal-is-isctx} we have that $\tmrone = \isctxp{\la\var\tmronep'}$, which contradicts the assumption that $\tm$ is a $\tos$-normal form.
		\item $\tm = \tmrone\esub\var\tmrtwo$, , then $\tmrone$ and $\tmrtwo$ are $\tos$-normal, and by induction are $\tovsc$-normal. There is only one possibility for $\tm$ to $\tovsc$-reduce.
		
		Suppose $\tmrone\esub\var\tmrtwo \toeabs \tmp$ then $\tmrtwo =\sctxp{\val}$.
		Since $\tmrtwo$ is a $\tovsc$ normal form, by \reflemma{sctx-normal-is-isctx} we have $\tmrtwo = \isctxp{\valtwo}$, which contradicts the assumption that $\tm$ is a $\tos$-normal form.\qedhere
	\end{itemize}
	
\end{proof}

\begin{corollary}
	\label{cor:tovsct-and-tos-normal-forms}
	$\tovsc$ and $\tos$ have the same normal forms.
\end{corollary}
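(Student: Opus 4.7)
The corollary asserts a two-way equivalence between $\tovsc$-normality and $\tos$-normality, and one direction has already been established: \Cref{l:tos-normal-is-tovsct-normal} gives that any $\tos$-normal form is a $\tovsc$-normal form. So the plan is to prove the converse and then combine.

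For the converse, I would observe that $\tos$ is defined as a \emph{sub}reduction of $\tovsc$, i.e.\ $\tos \subseteq \tovsc$, directly from its definition: the root rules of $\tos$ are special cases of the root rules of $\tovsc$ (requiring the substitution context to be an \emph{inert} substitution context $\isctx$, hence a substitution context $\sctx$), and the contextual closure uses substitution-restricted evaluation contexts $\ievctx$, which are by construction a subclass of evaluation contexts $\evctx$. Consequently, any $\tos$-step is also a $\tovsc$-step. Thus if $\tm$ is $\tovsc$-normal it admits no $\tovsc$-step, and hence no $\tos$-step, so it is $\tos$-normal.

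Combining the two inclusions of normal forms yields the corollary. I do not anticipate any real obstacle here; the whole content of the statement lies in the previously proved \Cref{l:tos-normal-is-tovsct-normal} (where the nontrivial argument about $\sctx$ being forced to be an inert $\isctx$ via \Cref{l:sctx-normal-is-isctx} was already made), together with the trivial inclusion $\tos\subseteq\tovsc$ that is built into the definition of $\tos$.
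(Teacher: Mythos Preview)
Your proposal is correct and matches the paper's intent: the paper states this as an immediate corollary of \Cref{l:tos-normal-is-tovsct-normal} without further proof, the other inclusion following trivially from $\tos\subseteq\tovsc$ exactly as you spell out.
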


\begin{corollary}[Completeness of $\tos$] \label{prop:S_completeness}$ \tm \tovsc^k \tmn  \iff \tm \tos^k \tmn $ 
\end{corollary}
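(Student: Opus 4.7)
The plan is to derive this corollary as an immediate instantiation of \reflemma{RD_completeness} with $\to \defeq \tovsc$ and subreduction $\tos$, reading $\tmn$ (per the paper's convention) as a $\tovsc$-normal form. Three hypotheses must be discharged: (i) $\tovsc$ has the Random Descent property, (ii) $\tos \subseteq \tovsc$, and (iii) $\tos$ and $\tovsc$ share the same normal forms.

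For (i), I would invoke \refprop{vsc-diamond}: $\tovsc$ is diamond, and the diamond property implies Random Descent, since (as recalled in \refsect{preliminaries}) under diamond all maximal reduction sequences from a term have the same length and, when finite, end at the same element. Hypothesis (ii) is immediate by the very definition of $\tos$: each root rule of $\tos$ is obtained from a root rule of $\tovsc$ by restricting the substitution context $\sctx$ to an inert substitution context $\isctx$, and every substitution-restricted evaluation context $\ievctx$ is a particular evaluation context $\evctx$, so the closure is likewise a restriction. Hypothesis (iii) is exactly \refcor{tovsct-and-tos-normal-forms}, which packages \reflemma{tos-normal-is-tovsct-normal} together with the reverse inclusion that any $\tovsc$-normal form is trivially $\tos$-normal (since $\tos \subseteq \tovsc$).

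With (i)--(iii) in hand, \reflemma{RD_completeness} directly yields $\tm \tovsc^k \tmn \iff \tm \tos^k \tmn$ for any $\tovsc$-normal $\tmn$. There is no real obstacle at this stage: the corollary is just a specialization of the general Random Descent machinery to the VSC. The conceptual content sits in the preceding steps, in particular the case analysis of \reflemma{tos-normal-is-tovsct-normal} that shows a $\tos$-normal form cannot hide a fireable $\tom$- or $\toeabs$-redex --- this is where one uses \reflemma{sctx-normal-is-isctx} to force the surrounding substitution context of any residual redex to be inert, contradicting $\tos$-normality.
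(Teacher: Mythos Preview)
Your proposal is correct and follows exactly the paper's approach: the paper's proof is the one-liner ``Direct corollary of \Cref{lem:RD_completeness} and Corollary \ref{cor:tovsct-and-tos-normal-forms}'', and you have simply made explicit the three hypotheses (Random Descent via diamond, $\tos \subseteq \tovsc$, and same normal forms) needed to instantiate \reflemma{RD_completeness}.
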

\begin{proof} 
Direct corollary of \Cref{lem:RD_completeness} and Corollary \ref{cor:tovsct-and-tos-normal-forms}.
	\end{proof}
	
	\paragraph{Big Steps/Small Steps via $\tos$}
	
	\begin{lemma}
		\label{l:ss-bs-equivalenceaux1_vsc}
		$\tm \tos \tmp$ and  $\tmp \bsvsct k \ntm \Rightarrow \tm \bsvsct {k+1} \ntm$
	\end{lemma}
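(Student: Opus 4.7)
I would argue by induction on the height of the big-step derivation $\tmp \bsvsct k \ntm$, with an inner case split on whether the small step $\tm \tos \tmp$ occurs at the root of the evaluation context $\ievctx$ (where $\tm = \ievctxp\tmr$ and $\tmp = \ievctxp\tmrp$ for the root step $\tmr \rtos \tmrp$). The base case has $\tmp = \val$ derived by $\bsvsctax$; since a value has no internal $\ievctx$ position (the context does not go under abstractions), we force $\ievctx = \ctxhole$, and only the exponential root rule can produce a value: $\tm = \tmfour\esub\var{\val'}$ with $\val = \tmfour\isub\var{\val'}$. Then rule $\bsvsctese$ applied to two axiom derivations yields $\tm \bsvsct 1 \val$, as desired.

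For the inductive case with $\ievctx = \ctxhole$ (root step), I rely on the following auxiliary lemma, which says that inert substitution contexts float out of a big-step derivation with no cost: if $\isctxp\tmfour \bsvsct k \fire$, then $\fire = \isctxp{\fire'}$ for some $\fire'$ with $\tmfour \bsvsct k \fire'$. Its proof proceeds by a secondary induction on $\isctx$; the inductive case is forced to use $\bsvsctesi$ (since an inert cannot produce a value-under-inerts via $\bsvsctese$) and the inert argument contributes zero steps. With Lemma~A in hand, the multiplicative root case $\tm = \isctxp{\la\var\tmthree}\tmtwo$ and $\tmp = \isctxp{\tmthree\esub\var\tmtwo}$ is handled by observing $\isctxp{\la\var\tmthree}$ is already normal and thus derives to itself in $0$ steps, so rule $\bsvsctapm$ applied to this $0$-step derivation and to the $k$-step derivation of $\tmthree\esub\var\tmtwo \bsvsct k \fire'$ produced by Lemma~A yields $\tm \bsvsct{k+1} \isctxp{\fire'} = \ntm$. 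The exponential root case $\tm = \tmfour\esub\var{\isctxp\val}$ is symmetric using $\bsvsctese$.

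For the inductive case with $\ievctx \neq \ctxhole$ (internal step), the top constructor of $\ievctx$ agrees with that of $\tmp$, so the last rule R of the big-step derivation has a sub-derivation $D'$ whose subject contains the hole. I identify $D'$ for each pairing of R and the shape of $\ievctx$: for instance, with $\ievctx = \ievctx'\tmtwo$ and R $= \bsvsctapm$ the relevant $D'$ is the left-operand derivation, while with $\ievctx = \tmone\ievctx'$ and R $= \bsvsctapm$ the relevant $D'$ is the derivation for $\tmthree\esub\var{\ievctx'p{\tmrp}}$ (the ES produced by the rule); in both cases, closure of $\tos$ under arbitrary $\ievctx$ lifts the original root step to a step on the subject of $D'$, and the IH applied to $D'$ (of strictly smaller height) supplies the $+1$, after which R is reapplied on top. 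The remaining shapes of $\ievctx$ (inside an application on the right, inside an ES body when its argument is inert, or inside an ES argument) are handled symmetrically against rules $\bsvsctapvar$, $\bsvsctapi$, $\bsvsctese$, and $\bsvsctesi$.

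\textbf{Main obstacle.} The real work is in Lemma~A and in the bookkeeping of the internal-step case. Lemma~A must exclude the $\bsvsctese$ branch for an inert argument, which ultimately depends on the fact that $\equivsone$ cannot rewrite an inert into the shape $\isctxp\val$; and the internal-step case must, for each pair (R, shape of $\ievctx$), correctly match the hole with a sub-derivation whose subject is possibly larger than the immediate sub-term at the hole (as in the $\bsvsctapm$--right case, where the sub-derivation is for a wrapped explicit substitution rather than for the operand alone), relying on the evaluation-context closure of $\tos$ to transport the step inside the wrapping.
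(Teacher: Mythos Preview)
Your proposal is correct and is essentially the same approach the paper intends. The paper records only ``straightforward proof by structural induction'' together with the remark that ``substitutivity is never required''; your induction on the height of the big-step derivation, together with the auxiliary Lemma~A (inert substitution contexts peel off via $\bsvsctesi$ at zero cost) and the observation that normal forms big-step to themselves in $0$ steps, is exactly how that structural induction unfolds. The crucial design point---that the restriction $\ievctx\esub\var\itm$ forces the last big-step rule for $\tmp$ to be $\bsvsctesi$ (never $\bsvsctese$) when the $\tos$-step lies in the ES body---is precisely why substitutivity is never invoked, matching the paper's remark.
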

	
	\begin{proof}
		Straightforward proof by structural induction. Note that substitutivity is never  required.
	\end{proof}

\gettoappendix{l:ss-bs-equivalence_vsce}

\begin{proof}
	$(\Rightarrow)$ by induction on the $(\tm \bsvsct k \ntm)$ derivation.
	
	$(\Leftarrow)$ by induction on $k$ using the fact that $\tm \tovsc^k \tmn$ implies $\tm \tos^k \tmn$  (by \Cref{prop:S_completeness}), and the \reflemma{ss-bs-equivalenceaux1_vsc}.
\end{proof}

\subsection{Lemmas concerning $\equivx$}

\begin{proposition}[$\opnafex\cdot\equivx\subseteq\opnafex$]
	\label{prop:relnafex-equivx-subseteq-relnafex}
	Let $\relsym$ be a relation on terms.
	If $\tm \relnafex \tmtwo$ and there exists $\tmtwop$ such that $\tmtwo \equivx \tmtwop$, then $\tm \relnafex \tmtwop$.
\end{proposition}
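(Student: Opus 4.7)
The plan is to proceed by case analysis on which clause of the $\nafex$-simulation definition witnesses $\tm \relnafex \tmtwo$, aiming to show that a structurally analogous clause holds for $(\tm, \tmtwop)$. The workhorse is the following \emph{transfer lemma}: if $\tmtwo \bsvscts \ntm$ and $\tmtwo \equivx \tmtwop$, then there exists $\ntmtwo$ with $\tmtwop \bsvscts \ntmtwo$ and $\ntm \equivx \ntmtwo$. In other words, $\equivx$ is stable under big-step evaluation.

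The transfer lemma follows from the small-step/big-step correspondence (\refprop{ss-bs-equivalence_vsce}) combined with the strong commutation axiom for mirrors. From $\tmtwo \bsvscts \ntm$ one obtains $\tmtwo \tovsc^k \ntm_0$ with $\ntm_0 \equivsone \ntm$ and $\ntm_0$ $\tovsc$-normal. Iterating the strong commutation property of $\equivx$ with $\tovsc$ exactly $k$ times yields $\tmtwop \tovsc^k \ntm_0'$ with $\ntm_0 \equivx \ntm_0'$. Since strong commutation preserves normality (a reduction from $\ntm_0'$ would lift to one from $\ntm_0$ via $\equivx$), $\ntm_0'$ is also $\tovsc$-normal, so $\ntm_0' \equivsone \ntmtwo$ for some representative $\ntmtwo$ in the big-step normal form grammar, giving $\tmtwop \bsvscts \ntmtwo$. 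Chaining $\ntm \equivsone \ntm_0 \equivx \ntm_0' \equivsone \ntmtwo$ and using $\equivsone \subseteq \equivx$ (which holds for $\equivx = \streq$ and, generally, for any mirror containing $\equivsone$) delivers $\ntm \equivx \ntmtwo$.

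With the transfer lemma in hand, the case analysis is uneventful. Clause (\nafex 1) makes no reference to $\tmtwo$ and so holds for $(\tm, \tmtwop)$ unchanged. For clauses (\nafex 4) and (\nafex 5), the $\equivx$-flexibility is already explicit on the $\tmp$ side: starting from $\tmtwo \bsvscts \ntmtwo \equivx \ntmONEtwo\ntmTWOtwo$ (resp.\ $\equivx \ntmONEtwo\esub\var\ntmTWOtwo$), the transfer lemma yields $\tmtwop \bsvscts \ntmtwop$ with $\ntmtwop \equivx \ntmtwo$, and transitivity of $\equivx$ gives $\ntmtwop \equivx \ntmONEtwo\ntmTWOtwo$ (resp.\ $\equivx \ntmONEtwo\esub\var\ntmTWOtwo$); the same $\ntmONEtwo, \ntmTWOtwo$ witness the clause for $(\tm, \tmtwop)$. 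Clauses (\nafex 2) and (\nafex 3) are handled similarly, exploiting the shape preservation of values under $\streq$: since the root axioms of $\streq$ all preserve the outermost constructor and act only on applications and ES, an easy induction shows that $\var \streq \ntmtwop$ forces $\ntmtwop = \var$, and $\la\var\tmpfirst \streq \ntmtwop$ forces $\ntmtwop = \la\var\tmpfirst'$ for some $\tmpfirst \streq \tmpfirst'$.

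The most delicate point is clause (\nafex 3): once shape preservation yields $\tmtwop \bsvscts \la\var\tmpfirst'$ with $\tmpfirst \equivx \tmpfirst'$, closing the clause for $(\tm, \tmtwop)$ formally demands $\tmfirst \rel \tmpfirst'$, whereas the hypothesis only supplies $\tmfirst \rel \tmpfirst$. The natural and uniform resolution is to read (\nafex 3) with the same $\equivx$-flexibility as (\nafex 4) and (\nafex 5), namely as ``$\tmp \bsvscts \ntmtwo$ with $\ntmtwo \equivx \la\var\tmpfirst$ and $\tmfirst \rel \tmpfirst$'', which then dovetails with the transfer lemma and transitivity of $\equivx$ exactly as for (\nafex 4) and (\nafex 5), completing the proof modularly for any mirror $\equivx$.
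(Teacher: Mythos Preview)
Your case analysis and the transfer lemma mirror the paper's (one-line) strategy, and clauses (\nafex 1), (\nafex 4), (\nafex 5) go through exactly as you describe. You have also put your finger precisely on the sore spot: clause (\nafex 3). But your proposed resolution---silently re-reading (\nafex 3) with $\equivx$-flexibility---is not a proof of the proposition as written; it amends the definition of $\relnafex$. Without that amendment the statement actually fails for an arbitrary $\relsym$, even with $\equivx = \streq$. Concretely: pick bodies $\tmpfirst \neq \tmpfirst'$ with $\tmpfirst \streq \tmpfirst'$ (say, two independent ES swapped by $\equivcom$), let $\tm = \tmtwo = \la\var\tmpfirst$, $\tmtwop = \la\var\tmpfirst'$, and $\relsym = \{(\tmpfirst,\tmpfirst)\}$. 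Then $\tm \relnafex \tmtwo$ by clause (\nafex 3) and $\tmtwo \streq \tmtwop$, yet $\tm \relnafex \tmtwop$ would need $(\tmpfirst,\tmpfirst') \in \relsym$. A second hidden assumption shows up in your transfer lemma, which relies on $\equivsone \subseteq \equivx$; this holds for $\streq$ but is not part of the mirror axioms (e.g.\ it fails for $\equivx = \equivcom$).

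What makes the paper's overall development go through is that the proposition is only invoked with $\relsym = \mlasrelsym$, the mirrored Lassen closure, which by rule $\mscequivx$ already satisfies $\relsym \cdot \equivx \subseteq \relsym$. Under that extra hypothesis your clause-3 argument closes cleanly: shape preservation for $\streq$ gives $\tmtwop \bsvscts \la\var\tmpfirst'$ with $\tmpfirst \equivx \tmpfirst'$, and then $\tmfirst \rel \tmpfirst$ plus $\relsym \cdot \equivx \subseteq \relsym$ yield $\tmfirst \rel \tmpfirst'$. So the right move is either to add this closure hypothesis on $\relsym$, or---as you suggest---to uniformly equip clauses (\nafex 2) and (\nafex 3) with $\equivx$-flexibility; but that is a correction to the statement, not a proof step.
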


\begin{proof}
	By case analysis on $\tm \relnafex \tmtwo$.
\end{proof}

\begin{proposition}[$\equivx$ is a strong commutation wrto $\tovsc$]
	\label{prop:equivx-is-a-strong-bisimulation}
	If $\tm\equivx\tmtwo$ and $\tm \tovsc \tmp$ then $\tmtwo \tovsc \tmtwop$ and $\tmp\equivx\tmtwop$.
\end{proposition}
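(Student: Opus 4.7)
The plan is essentially to observe that this proposition is an immediate restatement of the first clause in the definition of a \emph{mirror} $\equivx$ for $\tovsc$. Indeed, the definition stipulates that a mirror must satisfy (1) strong commutation---if $\tm\equivx\tmtwo$ and $\tm\tovsc\tmp$, then there exists $\tmtwop$ with $\tmtwo\tovsc\tmtwop$ and $\tmp\equivx\tmtwop$---and (2) substitutivity. The statement of the proposition is literally clause (1), so the proof is a one-line unfolding of the definition of mirror.

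There is no main obstacle to overcome: no induction, no case analysis, no auxiliary lemma is needed. The only purpose of stating this as a separate proposition in the appendix is bookkeeping, namely to have a named result to invoke in the subsequent compatibility argument (in particular in the proof that the mirrored Lassen closure preserves \nafex simulations, where the rule $\mscequivx$ requires knowing that $\equivx$ does not block reduction and produces a $\equivx$-related reduct). When the proposition is cited downstream, the reader should simply be reminded that this is part of what it means to be a mirror, rather than something requiring a genuine derivation.

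Accordingly, my proof would read: \emph{by definition of mirror, $\equivx$ strongly commutes with $\tovsc$, which is exactly the stated implication.} Should the paper ever wish to make the appendix self-contained for the concrete instance $\equivx\,{:=}\,\streq$, one could alternatively quote \refprop{strong-bisimulation} (strong commutation of structural equivalence with $\tovsc$), but this is unnecessary at the present level of generality where $\equivx$ is assumed to be an arbitrary mirror.
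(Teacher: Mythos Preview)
Your proposal is correct and matches the paper's approach: the proposition is stated without proof in the appendix precisely because it is a verbatim restatement of clause (1) in the definition of a mirror, and your observation that its role is bookkeeping for later use in the $\mscequivx$ case is exactly right.
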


\begin{proposition}[$\equivx$ preserves normal forms]
	\label{prop:equivx-preserves-normal-forms}
	$\forall \tm,\tmtwo,~ \tm\equivx\tmtwo$ and $\tm$ is a normal form implies $\tmtwo$ is a normal form.
\end{proposition}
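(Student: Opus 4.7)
The plan is a short proof by contradiction, exploiting the fact that $\equivx$ is an equivalence relation (hence symmetric) together with its strong commutation property with respect to $\tovsc$ (\refprop{equivx-is-a-strong-bisimulation}).

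Concretely, I would suppose $\tm \equivx \tmtwo$ with $\tm$ a $\tovsc$-normal form, and assume towards contradiction that $\tmtwo$ is not a $\tovsc$-normal form, so that there exists $\tmtwop$ with $\tmtwo \tovsc \tmtwop$. Since $\equivx$ is a mirror, it is an equivalence relation, hence symmetric, and thus $\tmtwo \equivx \tm$. Applying strong commutation (\refprop{equivx-is-a-strong-bisimulation}) to $\tmtwo \equivx \tm$ and $\tmtwo \tovsc \tmtwop$, we obtain a term $\tmp$ such that $\tm \tovsc \tmp$ and $\tmtwop \equivx \tmp$. But this contradicts the assumption that $\tm$ is a $\tovsc$-normal form, which concludes the proof.

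There is essentially no obstacle here: the statement is a one-line corollary of strong commutation, and the only subtlety to flag is the use of the symmetry of $\equivx$, which is immediate from the fact that mirrors are required to be equivalence relations.
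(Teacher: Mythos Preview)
Your proof is correct and takes essentially the same approach as the paper, which simply states that the result is derived from the fact that $\equivx$ is a strong commutation with respect to $\tovsc$. Your argument spells out the obvious contrapositive reasoning, including the use of symmetry of $\equivx$ to orient the strong commutation hypothesis appropriately.
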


\begin{proof}
	This is derived from the fact that $\equivx$ is a strong commutation wrto $\tovsc$.
\end{proof}

\begin{proposition}[$\equivx$ is substitutive]
	\label{prop:equivx-is-substitutive}
	$\forall \tm,\tmtwo,\val,~ \tm\equivx\tmtwo$ implies $\tm\isub\var\val\equivx\tmtwo\isub\var\val$
\end{proposition}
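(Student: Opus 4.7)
The proposition is an immediate corollary of the definition of a mirror, and requires no real proof. Recall that $\equivx$ is assumed to be a mirror for $\tovsc$, which by definition means that $\equivx$ is an equivalence relation satisfying two clauses: strong commutation with $\tovsc$, and \emph{substitutivity}, where the latter is precisely the statement that $\tm \equivx \tmtwo$ implies $\tm\isub\var\val \equivx \tmtwo\isub\var\val$ for every value $\val$. Thus my plan is simply to unfold the definition of mirror and invoke its second clause.

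The reason the statement nonetheless appears as a standalone proposition is organisational: the appendix records each defining property of a mirror, plus its easy consequences, as a separate named lemma so that later compatibility arguments can cite them directly. In exactly the same way, \refprop{equivx-is-a-strong-bisimulation} just records the strong commutation clause, and \refprop{equivx-preserves-normal-forms} records the trivial consequence that, since $\tovsc$-reductions on $\tm$ can be strongly mimicked from $\tmtwo$ when $\tm \equivx \tmtwo$, being normal is preserved by $\equivx$.

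There is no real obstacle: the only mild care needed is to note that the mirror definition quantifies substitutivity over values $\val$, which matches the quantification in the proposition's statement (the metavariable $\val$ ranges over values throughout the paper). Consequently, the proof is a one-liner: ``By the substitutivity clause of the definition of mirror (\refdef{mirror}).'' The actual use of this proposition comes later, in the proof that the mirrored Lassen closure preserves \nafex simulations, where substitutivity of $\equivx$ is combined with substitutivity of $\tovsc$-evaluation (\refprop{substitutivity_vsce}) to handle the $\mscsub$ and $\mscequivx$ rules of the closure.
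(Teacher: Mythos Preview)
Your proposal is correct and matches the paper's treatment: the proposition is literally the second clause of the definition of a mirror, and the paper accordingly states it without proof, recording it only so that later arguments can cite it by name.
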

\subsection{Compatibility proof}

A useful tool in the proof is substitutivity, with respect to small-step and big-step semantics, that is Proposition \ref{prop:substitutivity_vsce}.

\begin{proposition}[Substitutivity of $\tovsc$]
	\label{l:stability_vsce}
	$\tm\tovsc\tmp \Rightarrow \tm\isubst\val\var \tovsc \tmp\isubst\val\var$
\end{proposition}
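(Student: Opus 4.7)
The statement is a routine substitutivity lemma for the weak reduction $\tovsc$ of the VSC, and the natural approach is induction on the derivation of $\tm \tovsc \tmp$, which amounts to induction on the evaluation context $\evctx$ such that $\tm = \evctxp{\tmr_1}$ and $\tmp = \evctxp{\tmr_2}$ with $\tmr_1 \Rew{a} \tmr_2$ at the root for $a \in \set{\msym,\esym}$. The proof reduces to two ingredients: (i) commuting meta-level substitution with plugging in all forms of contexts used (evaluation contexts $\evctx$ and substitution contexts $\sctx$), namely $\evctxp\tmthree\isub\var\val = (\evctx\isub\var\val)\ctxholep{\tmthree\isub\var\val}$ and similarly for $\sctx$, always assuming the usual Barendregt-style $\alpha$-renaming so that the bound variables of the contexts avoid $\var$ and $\fv\val$; and (ii) showing that the two root rules are preserved by substitution.

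\textbf{Root rules.} For $\sctxp{\la\vartwo\tmthree}\tmfour \rtom \sctxp{\tmthree\esub\vartwo\tmfour}$, applying $\isub\var\val$ on both sides and commuting substitution with plugging yields
\[ (\sctx\isub\var\val)\ctxholep{\la\vartwo(\tmthree\isub\var\val)}\,(\tmfour\isub\var\val) \quad \text{and} \quad (\sctx\isub\var\val)\ctxholep{(\tmthree\isub\var\val)\esub\vartwo{\tmfour\isub\var\val}}, \]
which are related by $\rtom$ since $\sctx\isub\var\val$ is again a substitution context. For $\tmthree\esub\vartwo{\sctxp{\valtwo}} \rtoe \sctxp{\tmthree\isub\vartwo\valtwo}$, the crucial observation is that $\valtwo\isub\var\val$ is still a value (as values are closed under value-substitution: variables remain variables or become the value $\val$, and abstractions remain abstractions). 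Thus, after applying $\isub\var\val$ we obtain
\[ (\tmthree\isub\var\val)\esub\vartwo{(\sctx\isub\var\val)\ctxholep{\valtwo\isub\var\val}} \rtoe (\sctx\isub\var\val)\ctxholep{(\tmthree\isub\var\val)\isub\vartwo{\valtwo\isub\var\val}}, \]
and by the standard substitution lemma $(\tmthree\isub\var\val)\isub\vartwo{\valtwo\isub\var\val} = (\tmthree\isub\vartwo\valtwo)\isub\var\val$ (assuming $\vartwo \not= \var$ and $\vartwo \notin \fv\val$, obtained by $\alpha$-renaming), this matches $(\sctxp{\tmthree\isub\vartwo\valtwo})\isub\var\val$.

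\textbf{Inductive step.} For the contextual closure, if $\tm = \evctxp{\tmr_1} \tovsc \evctxp{\tmr_2} = \tmp$ because $\tmr_1 \rootRew{a} \tmr_2$, then $\tm\isub\var\val = (\evctx\isub\var\val)\ctxholep{\tmr_1\isub\var\val}$ and similarly for $\tmp$. Since $\evctx\isub\var\val$ is again an evaluation context (the grammar of $\evctx$ is stable under substitution of values for free variables, modulo $\alpha$-renaming) and by the root-rule cases above $\tmr_1\isub\var\val \rootRew{a} \tmr_2\isub\var\val$, we conclude $\tm\isub\var\val \tovsc \tmp\isub\var\val$.

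\textbf{Main obstacle.} There is no deep obstacle here; the only delicate points are the bookkeeping of $\alpha$-renaming to avoid capture by the contexts $\sctx$ and $\evctx$, and the closure of the value grammar under value-substitution, which is what makes the exponential rule go through. Both are entirely standard.
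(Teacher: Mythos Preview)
Your proposal is correct and follows exactly the approach of the paper, which states only ``By induction on $\tm\tovsc\tmp$ (induction on contexts), using the fact that a value where a variable is substituted by a value is still a value.'' You have simply unfolded this one-line sketch into its natural details.
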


\begin{proof}
	By induction on $\tm\tovsc\tmp$ (induction on contexts), using the fact that a value where a variable is substituted by a value is still a value.
\end{proof}

\begin{lemma}[Substitutivity of $\bsvscts$]
	\label{l:splitting_vsce}
	Forall $\tm,\val$,
	
	$\tm\isubst\val\var \bsvsct k \ntm \implies 
	\exists k',\ntmtwo$ s.t. $ \tm \bsvsct {k'} \ntmtwo$ and $\ntmtwo\isubst\val\var\bsvsct {k-k'} \ntm$

\end{lemma}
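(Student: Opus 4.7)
The plan is to adapt the argument used for Lemma \ref{l:splitting_weak} (naive bisimilarity) to the VSC setting, with extra bookkeeping for the $\equivsone$-quotient built into the big-step system. First I would unfold the hypothesis via Prop \ref{l:ss-bs-equivalence_vsce}: from $\tm\isubst\val\var \bsvsct k \ntm$ we obtain a $\tovsc$-normal form $\ntm'$ such that $\tm\isubst\val\var \tovsc^k \ntm'$ and $\ntm' \equivsone \ntm$.

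Next I would argue that $\tm$ itself is $\tovsc$-terminating. If $\tm$ admitted an infinite $\tovsc$-reduction, small-step substitutivity (Prop \ref{l:stability_vsce}) would transport it to an infinite $\tovsc$-reduction from $\tm\isubst\val\var$; but the diamond property (Prop \ref{prop:vsc-diamond}) guarantees that, since $\tm\isubst\val\var$ reaches $\ntm'$ in $k$ steps, all its reductions terminate in $k$ steps. Therefore $\tm \tovsc^{k'} \ntmtwo_0$ for some normal $\ntmtwo_0$ with $k' \le k$. Taking $\ntmtwo$ to be (a representative $\equivsone$-equivalent to $\ntmtwo_0$ fitting the big-step grammar of normal forms), Prop \ref{l:ss-bs-equivalence_vsce} gives $\tm \bsvsct{k'} \ntmtwo$.

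Then I would push the substitution through and use the diamond: small-step substitutivity applied step-by-step yields $\tm\isubst\val\var \tovsc^{k'} \ntmtwo_0\isubst\val\var$. By the diamond property, $\tm\isubst\val\var$ has a unique normal form---necessarily $\ntm'$---and every normalizing sequence from it has length $k$, so $\ntmtwo_0\isubst\val\var \tovsc^{k-k'} \ntm'$. Finally, to bridge $\ntmtwo_0$ and the big-step target $\ntmtwo$, I invoke the mirror properties of $\equivsone$: substitutivity (Prop \ref{prop:equivx-is-substitutive}) gives $\ntmtwo_0\isubst\val\var \equivsone \ntmtwo\isubst\val\var$, and strong commutation with $\tovsc$ (Prop \ref{prop:equivx-is-a-strong-bisimulation}) transports the $(k-k')$-step reduction to $\ntmtwo\isubst\val\var \tovsc^{k-k'} \ntm''$ for some $\ntm'' \equivsone \ntm' \equivsone \ntm$. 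Prop \ref{l:ss-bs-equivalence_vsce} then concludes $\ntmtwo\isubst\val\var \bsvsct{k-k'} \ntm$.

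The main obstacle is purely bookkeeping around $\equivsone$: one must be careful that the intermediate $\tovsc$-normal form produced from $\tm$ is not directly what the big-step predicate outputs, so the $\equivsone$-layer has to be chased both when producing $\ntmtwo$ from $\ntmtwo_0$ and when reconstructing the big-step derivation for $\ntmtwo\isubst\val\var$. All the real work (diamond, small-step substitutivity, strong commutation and substitutivity of $\equivsone$) is already available, so the argument reduces to assembling these ingredients in the right order.
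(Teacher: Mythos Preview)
Your proposal is correct and follows essentially the same approach as the paper's proof: termination of $\tm$ via small-step substitutivity plus diamond, then transport of the reduction through the substitution and use of diamond for the length accounting. The paper's proof is considerably terser and glosses over the $\equivsone$-bookkeeping that you spell out (the passage between the small-step normal form $\ntmtwo_0$ and the big-step target $\ntmtwo$, and the use of strong commutation of $\equivsone$ to recover the final big-step judgement), but the skeleton is identical.
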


\begin{proof}
	Suppose $\tm\isubst\val\var \bsvsct k \ntm$, then $ \tm \bsvscts \ntmtwo$ because if it diverges then the first diverges as well by the stability of reduction by substitution {(\reflemma{stability_vsce})}.
	Let $k', \tm \bsvsct {k'} \ntmtwo$
	then $ \tm\isubst\val\var \tovsc^{k'}\equivsone \ntmtwo\isubst\val\var$ hence $\ntmtwo\isubst\val\var\bsvsct {k-k'} \ntm$ because the reduction $\tovsc$ is diamond (hence all reduction sequences have the same length).
\end{proof}

\subsection{Equivalence of $\mlasrelnafex$ and $\mlasrel$ on normal forms.}
As in the proof of compatibility for naive similarity, we need to show the main proof first on normal forms, then we will generalize to any term.

\begin{lemma}
	\label{l:lasrelnafex-normal-forms-lasrel-left-to-right}
	If $\fire\mlasrelnafex\firetwo$ then $\fire\mlasrel\firetwo$.
\end{lemma}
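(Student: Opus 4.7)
I would argue by case analysis on the shape of the normal form $\fire$, following the grammar of $\bsvsct$-normal forms recalled right before this lemma: $\fire$ is either a value $\var$ or $\la\var\tm$, an applicative inert $\var\fire_1$ or $\itmapp\fire_1$, or of the form $\fire_1\esub\var\itm$. A preliminary observation used in every case is that if $\fire$ is already a $\bsvsct$-normal form (i.e.\ already in the canonical grammar), then the only $\bsvsct$-derivation from $\fire$ has length zero and yields $\fire$ itself. Hence the clause of $\relnafex$ that applies to the pair $(\fire,\firetwo)$ is determined by the shape of $\fire$ and supplies $\mlasrel$-relations between the immediate sub-terms, together with a single $\equivx$-equivalence on $\firetwo$.

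\paragraph{Case analysis.} For $\fire=\var$, clause (\nafex~2) forces $\firetwo=\var$ and rule $\mscvar$ concludes. For $\fire=\la\var\tm_1$, clause (\nafex~3) gives $\firetwo=\la\var\tm_2$ with $\tm_1\mlasrel\tm_2$, and $\mscabs$ concludes. For $\fire=\var\fire_1$, clause (\nafex~4a) supplies $\fire_1\mlasrel\ntmONEtwo$ together with $\firetwo\equivx\var\ntmONEtwo$; combining $\mscvar$ with $\mscapp$ builds $\var\fire_1\mlasrel\var\ntmONEtwo$. For $\fire=\itmapp\fire_1$, clause (\nafex~4b) gives $\itmapp\mlasrel\itmapptwo$ and $\fire_1\mlasrel\ntmONEtwo$ with $\firetwo\equivx\itmapptwo\ntmONEtwo$, and $\mscapp$ builds $\itmapp\fire_1\mlasrel\itmapptwo\ntmONEtwo$. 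For $\fire=\fire_1\esub\var\itm$, clause (\nafex~5) provides $\fire_1\mlasrel\ntmONEtwo$ and $\itm\mlasrel\itmtwo$ with $\firetwo\equivx\ntmONEtwo\esub\var\itmtwo$, and $\mscesub$ builds $\fire_1\esub\var\itm\mlasrel\ntmONEtwo\esub\var\itmtwo$.

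\paragraph{Mirror handling and expected difficulty.} The three non-trivial cases finish with one application of $\mscequivx$ absorbing the supplied $\equivx$-step, after using symmetry of $\equivx$ to orient the equivalence as the rule requires. No induction is needed, and the proof should be as short as its naive counterpart \reflemma{lasrelncbv-normal-forms-lasrel-left-to-right}. I expect no real obstacle here: the rule $\mscequivx$ was added to the mirrored Lassen closure precisely to absorb the $\equivx$-slack appearing in clauses (\nafex~4) and (\nafex~5), and normality of $\fire$ on the left collapses the big-step premises to equalities, which is exactly what enables plain case analysis to go through.
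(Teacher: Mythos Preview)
Your proposal is correct and matches the paper's own proof, which is just the one-line hint ``By case analysis on $\fire =\valt \mid \var\fire\mid\itmapp\fire \mid \fire\esub\var\itm$ and using the ($\scequivx$) rule.'' You have unfolded exactly this: the shape of the canonical normal form $\fire$ determines which clause of $\opnafex$ applies, the clause supplies $\mlasrel$-related subcomponents plus an $\equivx$-step on the right, the compatible rules ($\mscvar$, $\mscabs$, $\mscapp$, $\mscesub$) reassemble them, and $\mscequivx$ absorbs the mirror slack (with symmetry of $\equivx$, as you note).
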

\begin{proof}
	By case analysis on $\fire =\valt \mid \var\fire\mid\itmapp\fire \mid \fire\esub\var\itm$ and using the ($\scequivx$) rule.
\end{proof}

\begin{lemma}[Constrained Substitutivity of $\mlasrelnafex$ on normal forms]
	\label{l:lasrelnafex-normal-forms-substitutive}
	If $\ntmONE \mlasrelnafex \ntmTWO$, $\valof\tmrthree \mlasrelnafex \valof\tmrfour$ and $\ntmONE\isub\var{\valof\tmrthree}$ and $\ntmTWO\isub\var{\valof\tmrfour}$ are $\tovsce$-normal then $\ntmONE\isub\var{\valof\tmrthree} \mlasrelnafex \ntmTWO\isub\var{\valof\tmrfour}$.
\end{lemma}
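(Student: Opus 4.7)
The plan is to prove the lemma by case analysis on the shape of $\ntmONE$, following the grammar of $\tovsc$-normal forms modulo $\equivsone$: $\ntmONE$ is either a value, an applicative inert of the form $\vartwo\fire$ or $\itmapp\fire$, or a substitution $\fire\esub\vartwo\itm$. For each case, the hypothesis $\ntmONE \mlasrelnafex \ntmTWO$ forces a corresponding shape for $\ntmTWO$ up to $\equivx$ via one of the clauses (\nafex 2)--(\nafex 5) of $\relvscx$, and the goal is to check that the matching clause still holds after substitution.

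For the value cases ($\ntmONE \in \{\var, \vartwo, \la\vartwo\tm\}$) the argument mimics the analogous case of \reflemma{lasrelncbv-normal-forms-substitutive}: clause (\nafex 2) handles the variable cases (using $\var\isub\var{\valof\tmrthree} = \valof\tmrthree \mlasrelnafex \valof\tmrfour$, possibly via a trivial big-step evaluation), while clause (\nafex 3) handles the abstraction case after substituting under the binder, extracting an $\mlasrel$-relation on bodies by the closure rules $\mscabs$ and $\mscsub$ together with $\valof\tmrthree \mlasrel \valof\tmrfour$ (obtained from $\valof\tmrthree \mlasrelnafex \valof\tmrfour$ via \reflemma{lasrelnafex-normal-forms-lasrel-left-to-right}).

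The non-value cases follow a uniform pattern. First, use \reflemma{lasrelnafex-normal-forms-lasrel-left-to-right} to convert the $\mlasrelnafex$-related sub-components supplied by the matching clause into $\mlasrel$-related ones; then apply the closure rules $\mscapp$, $\mscesub$, $\mscvar$, $\mscsub$ to build $\mlasrel$-relations on the substituted sub-components. For example, in the case $\ntmONE = \vartwo\fire$ with $\vartwo \neq \var$, clause (\nafex 4a) gives some $\ntmTWO'$ with $\ntmTWO \equivx \vartwo \fire'$ and $\fire \mlasrelnafex \fire'$; rule $\mscsub$ then yields $\fire\isub\var{\valof\tmrthree} \mlasrel \fire'\isub\var{\valof\tmrfour}$. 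By substitutivity of $\equivx$ (\refprop{equivx-is-substitutive}), $\ntmTWO\isub\var{\valof\tmrfour} \equivx \vartwo\,(\fire'\isub\var{\valof\tmrfour})$, and clause (\nafex 4a) instantiated on the substituted sub-components (both normal by hypothesis) closes the case. The subcase $\vartwo = \var$ is ruled out by the assumed normality of $\ntmONE\isub\var{\valof\tmrthree} = \valof\tmrthree\,(\fire\isub\var{\valof\tmrthree})$: this forces $\valof\tmrthree$ to be a variable (otherwise a $\tom$-redex would appear), collapsing into the previous analysis with a different head variable. The remaining cases $\ntmONE = \itmapp\fire$ and $\ntmONE = \fire\esub\vartwo\itm$ are handled analogously, using clauses (\nafex 4b) and (\nafex 5) respectively, and $\alpha$-renaming the binder $\vartwo$ to avoid capturing free variables of $\valof\tmrthree$ and $\valof\tmrfour$.

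The main obstacle is correctly threading the mirror equivalence $\equivx$ through the substitution. Each non-value clause relates $\ntmTWO$ only modulo $\equivx$, so one must apply \refprop{equivx-is-substitutive} at exactly the right moment to push this equivalence through the substitution, then rely on the fact that the $\mlasrelnafex$-clauses tolerate a final $\equivx$-step on the right-hand side (as captured by \refprop{relnafex-equivx-subseteq-relnafex}). The bookkeeping is otherwise routine, and no inductive hypothesis on the size of derivations is required, because the case analysis only destructures the normal form $\ntmONE$ one level and re-uses the sub-components' $\mlasrelnafex$-relations directly inside the matching clause.
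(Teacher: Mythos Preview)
Your proposal is correct and follows essentially the same approach as the paper: a direct case analysis on the shape of $\ntmONE$, using substitutivity of $\equivx$ to push the mirror through, and concluding each non-value case by re-instantiating the matching clause with $\mlasrel$-related substituted sub-components obtained via rule $\mscsub$. The paper treats the application case uniformly via clause (\nafex 4) rather than splitting into 4a/4b, and handles your $\vartwo = \var$ edge case by simply noting that the substituted left sub-component cannot be an ``almost-abstraction'' (otherwise normality of $\ntmONE\isub\var{\valof\tmrthree}$ would fail), but this is the same observation you make. Your remark that no induction is needed is accurate and matches the paper; the only quibble is that in the abstraction case $\mscabs$ is not used---clause (\nafex 3) already gives $\tm \mlasrel \tmp$ on the bodies, so a single application of $\mscsub$ suffices.
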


\begin{proof}
	By case analysis on $\ntmONE$. Cases:
	\begin{itemize}
		\item $\ntmONE = \var$ and $\ntmTWO = \var$ then $\ntmONE\isub\var{\valof\tmrthree} = \valof\tmrthree \mlasrelnafex \valof\tmrfour = \ntmTWO\isub\var{\valof\tmrfour}$.
		
		\item $\ntmONE = \vartwo$ and $\ntmTWO = \vartwo$ then $\ntmONE\isub\var{\valof\tmrthree} =  \vartwo \mlasrelnafex \vartwo = \ntmTWO\isub\var{\valof\tmrfour}$.
		
		\item $\ntmONE = \la\vartwo\tm$ and $\ntmTWO = \la\vartwo\tmp$ with $\tm \mlasrel \tmp$
		we have \[\infer{\tm\isub\var{\valof\tmrthree} \mlasrel \tmp\isub\var{\valof\tmrfour}}{\tm \mlasrel \tmp & \valof\tmthree \mlasrel \valof\tmfour}\]
		hence by case (\nafex 3) $\ntmONE\isub\var{\valof\tmrthree} = \la\vartwo{\tm\isub\var{\valof\tmrthree}} \mlasrelnafex  \la\vartwo{\tmp\isub\var{\valof\tmrfour}} = \ntmTWO\isub\var{\valof\tmrfour}$.

		\item $\ntmONE = \ntmONEtwo\ntmONEthree$ and $\ntmTWO \equivx \ntmTWOtwo\ntmTWOthree$ with $\ntmONEtwo \mlasrel \ntmTWOtwo$ and $\ntmONEthree \mlasrel \ntmTWOthree$. 
		From $\ntmTWO \equivx \ntmTWOtwo\ntmTWOthree$, we deduct by substitutivity of $\equivx$ (\refprop{equivx-is-substitutive}) that $\ntmTWOtwo\isub\var{\valof\tmrfour}\ntmTWOthree\isub\var{\valof\tmrfour} \equivx \ntmTWO\isub\var{\valof\tmrfour}$.
		
		Since $\ntmONE\isub\var{\valof\tmrthree}$ and $\ntmTWO\isub\var{\valof\tmrfour}$ are $\tovsce$-normal, then by \refprop{equivx-preserves-normal-forms}, $\ntmONEtwo\isub\var{\valof\tmrthree}$, $\ntmTWOtwo\isub\var{\valof\tmrfour}$, $\ntmONEthree\isub\var{\valof\tmrthree}$ and $\ntmTWOthree\isub\var{\valof\tmrfour}$ all are $\tovsce$-normal as well and $\ntmONEtwo\isub\var{\valof\tmrthree}$, $\ntmTWOtwo\isub\var{\valof\tmrfour}$ are not almost-abstractions (\ie  $\not =\isctxp{\la\vartwo\tm}$ for any $\isctx$).
		
		To conclude that $\ntmONE\isub\var\val \mlasrelnafex \ntmTWO\isub\var\val$, what is only remaining is that\\ $\ntmONEtwo\isub\var\val \mlasrel \ntmTWOtwo\isub\var\valtwo$ and  $\ntmONEthree\isub\var\val \mlasrel \ntmTWOthree\isub\var\valtwo$.
		
		We derive easily these facts: ($\val\mlasrelnafex\valtwo$ implies $\val\mlasrel\valtwo$ by \reflemma{lasrelnafex-normal-forms-lasrel-left-to-right})
		\[ \infer[\scsub]{\ntmONEtwo\isub\var\val \mlasrel \ntmTWOtwo\isub\var\valtwo}{\ntmONEtwo \mlasrel \ntmTWOtwo & \val \mlasrel \valtwo} ~\text{and}~ \infer[\scsub]{\ntmONEthree\isub\var\val \mlasrel \ntmTWOthree\isub\var\valtwo}{\ntmONEthree \mlasrel \ntmTWOthree & \val \mlasrel \valtwo}\]

		\item $\ntmONE = \ntmONEtwo\esub\vartwo\itmONEtwo$ and $\ntmTWO {\equivx} \ntmTWOtwo\esub\vartwo\itmTWOtwo$ with $\itmONEtwo \mlasrel \itmTWOtwo$ and $\ntmONEtwo \mlasrel \ntmTWOtwo$. 
		
		The hypothesis that $\ntmONE\isub\var\val$ is normal is equivalent to $\itmONEtwo\isub\var\val$ is an inert and $\ntmONEtwo\isub\var\val$ is normal.
		
		From $\ntmTWO \equivx \ntmTWOtwo\esub\vartwo\itmTWOtwo$, we deduct {by substitutivity of $\equivx$} (\refprop{equivx-is-substitutive}) that $\ntmTWO\isub\var\valtwo \equivx (\ntmTWOtwo\isub\var\valtwo)\esub\vartwo{\itmTWOtwo\isub\var\valtwo}$.
		Since {$\equivx$ preserves normal forms}  (\refprop{equivx-preserves-normal-forms}), the hypothesis that $\ntmTWO\isub\var\valtwo$ is normal is equivalent to $\itmTWOtwo\isub\var\valtwo$ is an inert and $\ntmTWOtwo\isub\var\valtwo$ is normal.

		To conclude that $\ntmONE\isub\var\val \mlasrelnafex \ntmTWO\isub\var\val$, what is only remaining is that\\ $\itmONEtwo\isub\var\val \mlasrel \itmTWOtwo\isub\var\valtwo$ and  $\ntmONEtwo\isub\var\val \mlasrel \ntmTWOtwo\isub\var\valtwo$.
		
		We derive easily these facts: ($\val\mlasrelnafex\valtwo$ implies $\val\mlasrel\valtwo$ by \reflemma{lasrelnafex-normal-forms-lasrel-left-to-right})
		\[ \infer[\scsub]{\itmONEtwo\isub\var\val \mlasrel \itmTWOtwo\isub\var\valtwo}{\itmONEtwo \mlasrel \itmTWOtwo & \val \mlasrel \valtwo} ~\text{and}~ \infer[\scsub]{\ntmONEtwo\isub\var\val \mlasrel \ntmTWOtwo\isub\var\valtwo}{\ntmONEtwo \mlasrel \ntmTWOtwo & \val \mlasrel \valtwo}\]
		
	\end{itemize}
\end{proof}

\begin{lemma}
	\label{l:lasrelnafex-normal-forms-lasrel-right-to-left}
	If $\relsym$ is a \nafex simulation.
	If $\ntm\mlasrel\ntmtwo$ then $\ntm\mlasrelnafex\ntmtwo$.
\end{lemma}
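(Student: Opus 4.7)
The plan is to proceed by induction on the derivation of $\ntm\mlasrel\ntmtwo$, with a case analysis on the last rule applied. The argument closely parallels the naive analogue \reflemma{lasrelncbv-normal-forms-lasrel-right-to-left}, the two new ingredients being the rule $\mscesub$ (for explicit substitutions) and the rule $\mscequivx$ (for the mirror).

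The rules $\msclift$, $\mscvar$, and $\mscabs$ are handled directly: $\msclift$ uses that $\relsym \subseteq \relvscx$ since $\relsym$ is a \nafex simulation, together with monotonicity of the \nafex functional; $\mscvar$ and $\mscabs$ instantiate clauses (\nafex 2) and (\nafex 3). For the compositional rules $\mscapp$ and $\mscesub$, the key observation is that the normal-form grammar (modulo $\equivsone$) forces the immediate sub-terms of $\ntm$ to be $\tovsc$-normal themselves: in $\mscapp$, $\ntm = \tmrone\tmrthree$ must be an applicative inert, so $\tmrone$ is either a variable or an applicative inert and $\tmrthree$ is a normal form; in $\mscesub$, $\ntm = \tmrone\esub\var\tmrthree$ forces $\tmrthree$ to be an inert. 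The inductive hypothesis then fires on the sub-derivations, and the relevant sub-case of clause (\nafex 4) or (\nafex 5) closes each case.

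The $\mscsub$ case follows the blueprint of the naive proof: we apply the \ih to the premises $\tmrone \mlasrel \tmrtwo$ and $\val \mlasrel \valtwo$ (using that $\tmrone\isub\var\val$ being $\tovsc$-normal forces $\tmrone$ to be $\tovsc$-normal, since substitution of a value can only create---never remove---$\tovsc$-redexes), and then invoke the constrained substitutivity lemma \reflemma{lasrelnafex-normal-forms-substitutive}. The novel $\mscequivx$ case is handled uniformly: from $\ntm \mlasrel \ntmtwop$ and $\ntmtwop \equivx \ntmtwo$, the \ih on the smaller premise yields $\ntm \mlasrelnafex \ntmtwop$, and then \refprop{relnafex-equivx-subseteq-relnafex} lifts this to $\ntm \mlasrelnafex \ntmtwo$. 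The main subtlety lies in the $\mscsub$ case, in the interplay between normal-form preservation and value substitution---essentially the same technical step as in the naive proof, but transposed to the richer VSC grammar where inerts may themselves contain explicit substitutions, and where the mirror-induced rearrangements of the right-hand side have to be reconciled with the rigid inductive structure of $\mlasrelnafex$.
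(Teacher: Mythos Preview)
Your proposal is correct and follows essentially the same approach as the paper: induction on the derivation of $\ntm \mlasrel \ntmtwo$ with a case analysis on the last rule, dispatching $\mscsub$ via \reflemma{lasrelnafex-normal-forms-substitutive} and $\mscequivx$ via \refprop{relnafex-equivx-subseteq-relnafex}. One minor simplification worth noting: in the $\mscapp$ and $\mscesub$ cases you do not actually need the inductive hypothesis---clause (\nafex~4) (resp.\ (\nafex~5)) for $\mlasrelsym$ asks only for $\ntmONE \mlasrel \ntmONEtwo$ and $\ntmTWO \mlasrel \ntmTWOtwo$, which are precisely the rule premises, so the paper closes these cases directly without appealing to the \ih\ (and hence without needing to argue that the sub-terms are normal or to round-trip through \reflemma{lasrelnafex-normal-forms-lasrel-left-to-right}).
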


\begin{proof}
	By induction on the derivation $\ntm \mlasrel \ntmtwo$. Cases of the last rule in the derivation of $\ntm\mlasrel\ntmtwo$:
	\begin{itemize}
		\item \emph {$\scvar$}\[ \infer[\scvar]{\var \mlasrel \var}{} \]
		then $\var \mlasrelnafex \var$ by case (\nafex 2).
		\item \emph {$\scabs$} \[ \infer[\scabs]{\ntm = \la\var\tm \mlasrel \la\var\tmp = \ntmtwo}{\tm \mlasrel \tmp} \]
		then $\ntm \mlasrelnafex \ntmtwo$ by case (\nafex 3) with $\tm \mlasrel \tmp$.
		\item \emph {$\sclift$} \[ \infer[\scabs]{\ntm \mlasrel \ntmtwo}{\ntm \rel \ntmtwo} \]
		then $\ntm \relnafex \ntmtwo$ since $\relsym$ is a \nafex simulation.
		By monotonicity of $\opnafex$, $\ntm \mlasrelnafex \ntmtwo$.
		
		\item \emph {$\scapp$} 
		\[ \infer[\scapp]{\ntm = \ntmONE\ntmTWO \mlasrel \ntmONEtwo\ntmTWOtwo = \ntmtwo}{\ntmONE \mlasrel \ntmONEtwo & \ntmTWO \mlasrel \ntmTWOtwo} \]

		then $\ntm \mlasrelnafex \ntmtwo$ by case (\nafex 4) with $\ntmONE \mlasrel \ntmONEtwo$ and $\ntmTWO \mlasrel \ntmTWOtwo$.

		\item \emph {$\scesub$} \[ \infer[\scesub]{\ntm = \ntmONE\esub\var\itmONE \mlasrel \ntmTWO\esub\var\itmTWO = \ntmtwo}{\ntmONE \mlasrel \ntmTWO & \itmONE \mlasrel \itmTWO} \] then $\ntm \mlasrelnafex \ntmtwo$
		by case (\nafex 5) with $\itmONE \mlasrel \itmTWO$ and $\ntmONE \mlasrel \ntmTWO$.
		\item \emph {$\scsub$} \[ \infer[\scsub]{\ntm = \ntmONE\isub\var\val \mlasrel \ntmTWO\isub\var\valtwo = \ntmtwo}{\ntmONE \mlasrel \ntmTWO & \val \mlasrel \valtwo} \]
		by \ih we have $\ntmONE \mlasrelnafex \ntmTWO$ and $\valof\tmrthree \mlasrelnafex \valof\tmrfour$. By \reflemma{lasrelnafex-normal-forms-substitutive}, $\ntmONE\isub\var{\valof\tmrthree} \mlasrelnafex \ntmTWO\isub\var{\valof\tmrfour}$.
		
		\item \emph {$\scequivx$} \[\infer[\scequivx]{\ntm \mlasrel \ntmtwo}{\ntm \mlasrel \ntmONE & \ntmONE \equivx \ntmtwo}\]
		
		By \ih, $\ntm \mlasrelnafex \ntmONE$ which means by \refprop{relnafex-equivx-subseteq-relnafex} $\ntm \mlasrelnafex \ntmtwo$ since $\ntmtwo \equivx \ntmONE$.\qedhere
	\end{itemize}
\end{proof}

\subsection{Syntactic lemmas to relate bisimilar normal forms}
The lemmas in this subsection are needed because mirrored simulations are defined \emph{top-down}, whereas one needs at some point to be able to reason bottom-up, especially in the presence of explicit substitutions and reduction \emph{at a distance}.

\begin{lemma}
	\label{l:lasrelnafex-inert-style-lists-induction}
	If $\isctxp\var \mlasrelnafex \isctxtwop\var$ (or $\isctxp\val \mlasrelnafex \isctxtwop\valtwo$) then either $\isctx,\isctxtwo =\ctxhole,\ctxhole$ or  (for $\tm = \var$ or $ \val$ and $\tmp=\var$ or $\val$) $\isctxp\tm = \isctxONEp\tm\esub\vartwo\itm$ and $\isctxtwop\tmp \equivx\isctxONEtwop\tmp\esub\vartwo\itmtwo$ with $\isctxONEp\var \mlasrelnafex \isctxONEtwop\var$ (or $\isctxONEp\val \mlasrelnafex \isctxONEtwop\valtwo$) and $\itm \mlasrelnafex \itmtwo$.
\end{lemma}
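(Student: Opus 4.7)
The proof proceeds by induction on the number of substitutions in the inert substitution context $\isctx$, with the $\var$-version and the $\val$-version handled in parallel; I describe the $\var$-version below.

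In the base case $\isctx = \ctxhole$, the term $\isctxp\var$ is the variable $\var$. Only clause (\nafex 2) of $\mlasrelnafex$ can apply to a variable on the left, forcing $\isctxtwop\var \bsvscts \var$. Since $\isctxtwop\var$ is already in the grammar of $\tovsc$-normal forms modulo $\equivsone$, big-step evaluation does not alter it, so $\isctxtwop\var = \var$ and hence $\isctxtwo = \ctxhole$, which yields the first disjunct of the conclusion.

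In the inductive case $\isctx = \isctxONE\esub\vartwo\itm$, the term $\isctxp\var = \isctxONEp\var\esub\vartwo\itm$ fits the shape $\ntmONE\esub\vartwo\itm$ with $\ntmONE = \isctxONEp\var$ normal and $\itm$ an inert, so only clause (\nafex 5) can apply. Using that $\isctxtwop\var$ is already normal, its big-step evaluation leaves it alone; so the clause produces a normal form $\ntmONEtwo$ and an inert $\itmtwo$ with $\isctxtwop\var \equivx \ntmONEtwo\esub\vartwo\itmtwo$, $\isctxONEp\var \mlasrelnafex \ntmONEtwo$, and $\itm \mlasrelnafex \itmtwo$. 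The second disjunct of the conclusion then follows provided we can exhibit $\ntmONEtwo \equivx \isctxONEtwop\var$ for some $\isctxONEtwo$: transitivity of $\equivx$ gives $\isctxtwop\var \equivx \isctxONEtwop\var\esub\vartwo\itmtwo$, and the residual simulation $\isctxONEp\var \mlasrelnafex \isctxONEtwop\var$ is obtained by transporting $\isctxONEp\var \mlasrelnafex \ntmONEtwo$ along $\ntmONEtwo \equivx \isctxONEtwop\var$ using \refprop{relnafex-equivx-subseteq-relnafex}.

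The hard part will be producing the decomposition $\ntmONEtwo \equivx \isctxONEtwop\var$ from the simulation $\isctxONEp\var \mlasrelnafex \ntmONEtwo$. My plan is to strengthen the statement so that the right-hand side of $\mlasrelnafex$ is allowed to be an arbitrary normal form and the conclusion asserts it is $\equivx$-equivalent to some $\isctxtwop\var$; the induction on the length of $\isctx$ still goes through because the peeled simulation $\isctxONEp\var \mlasrelnafex \ntmONEtwo$ has a strictly smaller left-hand context. The strengthened statement is then discharged by a subsidiary case analysis on $\ntmONEtwo$ via the normal-form grammar $\ntm \grameq \val \mid \itm \mid \fire\esub\vartwo\itmtwo$: the value case is compatible only with $\isctxONE = \ctxhole$ (otherwise $\isctxONEp\var$ is a substitution-form, incompatible with clauses (\nafex 2)--(\nafex 3)), forcing $\ntmONEtwo = \var$ and $\isctxONEtwo = \ctxhole$; the plain-inert case is ruled out since $\isctxONEp\var$ is never itself an inert in the modified grammar; and the substitution-form case returns us to an instance of the induction hypothesis after one more application of (\nafex 5) on both sides.
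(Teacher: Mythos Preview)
Your proof takes a different and more careful route than the paper's. The paper argues by a two-line case exhaustion on the pair $(\isctx,\isctxtwo)$: it rules out the case where exactly one context is empty (a variable cannot satisfy any \nafex clause against an ES form, nor conversely) and the case where both are nonempty with mismatched outer binders, and implicitly claims the remaining shapes give the conclusion. You instead proceed by induction on the length of $\isctx$ and, notably, you isolate a point the paper's sketch glosses over: clause~(\nafex 5) only yields an existential witness $\ntmONEtwo$ on the right that is a priori an arbitrary normal form, not literally of shape $\isctxONEtwop\var$. Your proposed strengthening (allowing an arbitrary normal form on the right and concluding it is $\equivx$-equivalent to some $\isctxtwop\var$) is a natural way to make this rigorous, and it is essentially what is needed to justify the ``induction principle'' reading the paper attaches to the lemma just after its proof.

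Two small remarks. First, your ``subsidiary case analysis on $\ntmONEtwo$'' is more cleanly phrased as a case analysis on the left context $\isctxONE$, since the applicable \nafex clause is selected by the shape of the left normal form: $\isctxONE=\ctxhole$ forces clause~(\nafex 2) and hence $\ntmONEtwo=\var$, while $\isctxONE\neq\ctxhole$ forces clause~(\nafex 5) and hands the residual pair to the induction hypothesis; the grammar-based split you describe has overlapping cases (ES-shaped inerts fall under both the inert and the $\fire\esub\vartwo\itmtwo$ productions). Second, your transitivity step from $\ntmONEtwo\equivx\isctxONEtwop\var$ to $\ntmONEtwo\esub\vartwo\itmtwo\equivx\isctxONEtwop\var\esub\vartwo\itmtwo$ uses that $\equivx$ is closed under the context $\ctxhole\esub\vartwo\itmtwo$; this holds for every concrete mirror in the paper (each being a compatible closure of root axioms) but is not literally among the abstract mirror axioms, so it is worth flagging.
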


\begin{proof}
	By contradiction and case exhaustion, these are the only possibilities for\\ $\isctxp\var \mlasrelnafex \isctxtwop\var$ (or $\isctxp\val \mlasrelnafex \isctxtwop\valtwo$).
	\begin{itemize}
		\item If only one of the lists is empty: $\var \mlasrelnafex \itm\esub\vartwo\itmtwo$ (or $\var \mlasrelnafex \ntm\esub\vartwo\itmtwo$) is not possible given the definition of nafex. 
		\item If $\isctx= \isctxONE\esub\vartwo\itm$ and $\isctxtwo \equivx\isctxONEtwo\esub\varthree\itmtwo$, we again have $\neg (\ntm\esub\vartwo\itm \mlasrelnafex \ntmtwo\esub\varthree\itmtwo)$.\qedhere
	\end{itemize}
\end{proof}

%
%

This \reflemma{lasrelnafex-inert-style-lists-induction} gives us an "induction principle" on $\isctx,\isctxtwo$ when $\isctxp\var \mlasrelnafex \isctxtwop\var$ (or $\isctxp\val \mlasrelnafex \isctxtwop\valtwo$.

\begin{lemma}
	\label{l:lasrelnafex-values-isctx-decomposition}
	$\isctxp\val \mlasrelnafex \isctxtwop\valtwo \iff \val \mlasrelnafex \valtwo$ and $\isctxp\var \mlasrelnafex \isctxtwop\var$ ($\var$ is fresh)
\end{lemma}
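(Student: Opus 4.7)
The plan is to prove both directions by induction on the length of $\isctx$, using \reflemma{lasrelnafex-inert-style-lists-induction} as the structural induction principle and keeping in mind the freshness of $\var$ with respect to $\isctx, \isctxtwo, \val, \valtwo$.

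For the direction $(\Leftarrow)$, I would assume $\val \mlasrelnafex \valtwo$ and $\isctxp\var \mlasrelnafex \isctxtwop\var$ and proceed to build $\isctxp\val \mlasrelnafex \isctxtwop\valtwo$. In the base case $\isctx = \isctxtwo = \ctxhole$ (forced by \reflemma{lasrelnafex-inert-style-lists-induction}), the conclusion is immediate. In the inductive case, \reflemma{lasrelnafex-inert-style-lists-induction} applied to the variable form yields $\isctx = \isctxONE\esub\vartwo\itm$, $\isctxtwop\var \equivx \isctxONEtwop\var\esub\vartwo\itmtwo$, $\isctxONEp\var \mlasrelnafex \isctxONEtwop\var$, and $\itm \mlasrelnafex \itmtwo$. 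Applying the IH on $\isctxONE$ gives $\isctxONEp\val \mlasrelnafex \isctxONEtwop\valtwo$. Substitutivity of $\equivx$ (\refprop{equivx-is-substitutive}) lifts the mirror relation from the var-plug to the val-plug by substituting the fresh $\var$ with $\valtwo$, producing $\isctxtwop\valtwo \equivx \isctxONEtwop\valtwo\esub\vartwo\itmtwo$. Together with \reflemma{lasrelnafex-normal-forms-lasrel-left-to-right}, which converts the $\mlasrelnafex$-premises into their $\mlasrel$-counterparts, clause (\nafex 5) applies and gives the result.

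For the direction $(\Rightarrow)$, the strategy is symmetric: induct on $\isctx$ by \reflemma{lasrelnafex-inert-style-lists-induction} applied to the value form of the hypothesis. The base case trivially produces $\val \mlasrelnafex \valtwo$ and $\var \mlasrelnafex \var$ via clause (\nafex 2). For the inductive step, the lemma yields the decomposition $\isctxp\val = \isctxONEp\val\esub\vartwo\itm$ together with $\isctxONEp\val \mlasrelnafex \isctxONEtwop\valtwo$ and $\itm \mlasrelnafex \itmtwo$, and—crucially—the plug-agnostic part of its conclusion (``for $\tm = \var$ or $\val$'') also supplies the variable-version of the $\equivx$-decomposition, namely $\isctxtwop\var \equivx \isctxONEtwop\var\esub\vartwo\itmtwo$. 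The IH then gives $\val \mlasrelnafex \valtwo$ (which is one of the two goals) and $\isctxONEp\var \mlasrelnafex \isctxONEtwop\var$, whence clause (\nafex 5) together with \reflemma{lasrelnafex-normal-forms-lasrel-left-to-right} concludes $\isctxp\var \mlasrelnafex \isctxtwop\var$.

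The main piece of bookkeeping—and the only mildly delicate step—is transporting the $\equivx$-relation between the var-plug and val-plug forms of $\isctxtwo$: in $(\Leftarrow)$ this is delivered cleanly by substitutivity of $\equivx$, while in $(\Rightarrow)$ one relies on the uniform ``for $\tm \in \{\var,\val\}$'' formulation of \reflemma{lasrelnafex-inert-style-lists-induction}. Everything else is routine unfolding of the nafex clauses and invocation of \reflemma{lasrelnafex-normal-forms-lasrel-left-to-right} to bridge $\mlasrelnafex$ and $\mlasrel$ on normal forms.
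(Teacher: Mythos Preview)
Your proposal is correct and follows essentially the same approach as the paper: induction on the contexts via \reflemma{lasrelnafex-inert-style-lists-induction}, invoking the IH on the shorter context, converting between $\mlasrelnafex$ and $\mlasrel$ via \reflemma{lasrelnafex-normal-forms-lasrel-left-to-right}, and closing with clause~(\nafex~5). The only cosmetic difference is that the paper handles both directions of the iff in a single induction, whereas you split them; your use of substitutivity of $\equivx$ in the $(\Leftarrow)$ direction is a harmless alternative to the plug-agnostic clause of \reflemma{lasrelnafex-inert-style-lists-induction} that the paper uses implicitly on both sides.
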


\begin{proof}
	By "induction" on lists $\isctx, \isctxtwo$. (They are of the same size because of how \nafex is defined - see \reflemma{lasrelnafex-inert-style-lists-induction})
	\begin{itemize}
		\item $\isctx,\isctxtwo =\ctxhole,\ctxhole$ then $\val \mlasrelnafex \valtwo \iff \val \mlasrelnafex \valtwo$ and $\var\mlasrelnafex\var$ is always true by case (\nafex 2).
		\item $\isctxp\val = \isctxONEp\val\esub\vartwo\itm$ and $\isctxtwop\valtwo \equivx\isctxONEtwop\valtwo\esub\vartwo\itmtwo$ with $\isctxONEp\valtwo \mlasrelnafex \isctxONEtwop\valtwo$ and $\itm \mlasrelnafex \itmtwo$, then
		
		
		
		by \ih, $\isctxONEp\valtwo \mlasrelnafex \isctxONEtwop\valtwo\iff \isctxONEp\var \mlasrelnafex \isctxONEtwop\var$ and $\val \mlasrelnafex \valtwo$
		
		by \reflemma{lasrelnafex-normal-forms-lasrel-left-to-right}, $\isctxONEp\valtwo \mlasrelnafex \isctxONEtwop\valtwo\iff \isctxONEp\var \mlasrel \isctxONEtwop\var$, $\val \mlasrelnafex \valtwo$ and $\itm \mlasrel \itmtwo$
		
		and finally by case (nafex 5)\\ $\isctxp\val \mlasrelnafex \isctxtwop\valtwo\iff \isctxp\var=\isctxONEp\var\esub\vartwo\itm \mlasrelnafex \isctxONEtwop\var\esub\vartwo\itmtwo\equivx \isctxtwop\var$, $\val \mlasrelnafex \valtwo$.
	\end{itemize}
\end{proof}

\begin{lemma}
	\label{l:values-fireballs-stable-lasrelnafex}
	If $\fire \mlasrelnafex \firep$ then ($\fire =\isctxp\valt$ $\iff$ $\firep = \isctxtwop\valttwo$)
\end{lemma}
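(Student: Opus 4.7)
My plan is to prove the lemma by structural induction on the fireball $\fire$, exploiting the fact that fireballs partition into two disjoint classes—\emph{value-like} ones of the form $\isctxp\val$, and \emph{inert-like} ones (those normal forms $\itm$ whose innermost subterm is an applicative inert $\itmapp$)—and showing that this partition is preserved by $\mlasrelnafex$. Since $\fire$ is a fireball, clause $(\nafex 1)$ of the mirrored nafex definition cannot apply to $\fire \mlasrelnafex \firep$, and the syntactic shape of $\fire$ determines which of the remaining clauses fires.

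If $\fire = \val$, clauses $(\nafex 2)$ or $(\nafex 3)$ force $\firep$ to also be a value, so both sides are value-like. If $\fire$ is an applicative inert $\itmapp$ with an application at the root, clause $(\nafex 4)$ yields $\firep \equivx \ntmtwo_1 \ntmtwo_2$; the only fireballs in the grammar of normal forms with a root-level application are applicative inerts, so $\ntmtwo_1 \ntmtwo_2$ is inert, and by the sublemma below $\firep$ is inert as well. Finally, if $\fire = \fire_1 \esub\var\itm$, clause $(\nafex 5)$ gives $\firep \equivx \firep_1 \esub\var\itmp$ with $\fire_1 \mlasrel \firep_1$ and $\itm \mlasrel \itmp$. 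Since $\fire_1$ and $\firep_1$ are normal, \reflemma{lasrelnafex-normal-forms-lasrel-right-to-left} upgrades $\mlasrel$ to $\mlasrelnafex$ at this level, so the inductive hypothesis gives that $\fire_1$ is value-like iff $\firep_1$ is, and this property propagates through the outer $\esub\var{\cdot}$ wrapper and then through the $\equivx$-step to $\firep$ by the sublemma.

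The crux is a supporting sublemma: if $\ntm \equivx \ntmp$ are both normal forms, then $\ntm$ is value-like iff $\ntmp$ is value-like. For the mirror of interest $\streq$, this follows by inspection of its defining axioms $\equivexsthree$, $\equivass$, and $\equivcom$ (the $\equivsone$ rearrangements are already absorbed into the chosen representatives of normal forms modulo $\equivsone$): none of these rules introduces or removes a root-level application, nor alters the innermost value-versus-applicative-inert structure that determines value-likeness; they only rearrange explicit substitutions among themselves or past applications. The main obstacle is precisely this sublemma—for generic mirrors, preservation of the inert/value-like partition is not obviously entailed by strong commutation and substitutivity alone, and one either has to verify it per-mirror or adjoin a suitable closure condition. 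In the concrete case of $\streq$ and its relevant sub-relations, the verification is routine.
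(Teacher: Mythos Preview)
Your approach is correct and coincides with the paper's, whose proof is the one-line sketch ``by induction on $\isctx$ using Lemmas~\ref{l:lasrelnafex-normal-forms-lasrel-left-to-right} and~\ref{l:lasrelnafex-normal-forms-lasrel-right-to-left}''. Your structural induction on $\fire$ and the paper's induction on $\isctx$ unfold into the same case analysis, and the two lemmas you invoke to pass between $\mlasrel$ and $\mlasrelnafex$ on normal forms are exactly the ones the paper cites.

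Where you go further than the paper is in isolating the sublemma that $\equivx$ preserves the value-like/inert partition on normal forms. This step is genuinely needed: the explicit-substitution case only produces $\firep \equivx \firetwo$ for some $\firetwo$ of the desired shape, and one must cross back from $\equivx$ to literal equality to match the statement (and its later use when applying the big-step rule $(\bsvsctapm)$). The paper's sketch is silent here. Your caution that the sublemma does not obviously follow from the abstract mirror axioms---strong commutation plus substitutivity, with no compatibility requirement---is well-placed; for the concrete mirrors the paper actually targets, namely $\streq$ and its sub-relations, the check is routine, as you outline.
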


\begin{proof}
	Proof by induction on $\isctx$ using \ref{l:lasrelnafex-normal-forms-lasrel-left-to-right} and \ref{l:lasrelnafex-normal-forms-lasrel-right-to-left}.
\end{proof}

\begin{corollary}
	\label{l:inerts-fireballs-stable-lasrelnafex}
	If $\fire \mlasrelnafex \firep$ then ($\fire =\itm$ $\iff$ $\firep = \itmtwo$)
\end{corollary}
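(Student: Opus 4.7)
The plan is to reduce the claim directly to the preceding \reflemma{values-fireballs-stable-lasrelnafex}, which already characterises when a normal form is of the shape $\isctxp\valt$. The key observation I will justify first is that the grammar of $\tovsc$-normal forms modulo $\equivsone$ partitions normal forms into exactly two disjoint classes: those of the form $\isctxp\valt$ for some inert substitution context $\isctx$ and value $\valt$, and inert terms $\itm$. Once this dichotomy is established, the corollary is immediate.

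First I would prove the dichotomy by a small induction on the normal form grammar $\fire \grameq \val \mid \itm \mid \fire'\esub\var\itm$. For the base cases $\val$ and $\itm$, membership in one class is immediate (a value sits in $\isctxp\valt$ via $\isctx = \ctxhole$, and inerts are handled by the second class). For the case $\fire'\esub\var\itm$, I would apply the induction hypothesis to $\fire'$: if $\fire' = \isctxONEp\valt$, then $\fire'\esub\var\itm = (\isctxONE\esub\var\itm)[\valt]$, still in the first class; if $\fire' = \itm'$, then $\fire'\esub\var\itm = \itm'\esub\var\itm$ is an inert, in the second class. Disjointness follows because an inert, by its grammar $\itmapp \mid \itm'\esub\var\itm''$, never has a value at the innermost position once the (inert) substitutions are peeled off, whereas $\isctxp\valt$ does.

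With this dichotomy in hand, the proof of the corollary is a one-liner: since $\fire$ is an inert iff $\fire$ is \emph{not} of the form $\isctxp\valt$, \reflemma{values-fireballs-stable-lasrelnafex} applied to $\fire \mlasrelnafex \firep$ yields that $\fire = \itm$ iff $\firep$ is also not of the form $\isctxtwop\valttwo$, iff $\firep = \itmtwo$. The only subtlety is that the preceding lemma also allows $\equivx$-deformation on the right-hand side, but this is harmless since $\equivx$ preserves normal forms (\refprop{equivx-preserves-normal-forms}) and, more specifically, preserves the inert-vs-fireball-of-value distinction (easy inspection of the root rules of $\streq$, which never turn an inert into a value wrapped in substitutions nor vice-versa).

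I do not expect any real obstacle here: the whole content is encapsulated in the preceding lemma, and the corollary merely converts the statement about ``$\isctxp\valt$-shape'' into the equivalent complementary statement about being an inert. If anything requires some care, it is writing out the structural dichotomy cleanly, which is essentially bookkeeping.
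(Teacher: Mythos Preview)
Your proposal is correct and follows exactly the paper's intended route: the corollary is obtained from \reflemma{values-fireballs-stable-lasrelnafex} by complementation, using the fact that the grammar of $\tovsc$-normal forms modulo $\equivsone$ splits into the two disjoint classes $\isctxp\valt$ and inert terms. The extra remark about $\equivx$-deformation is unnecessary here since the preceding lemma already speaks directly about the shape of $\firep$, but it does no harm.
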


\begin{lemma}
	\label{l:lasrelnafex-normal-forms-isctx-decomposition}
	Let $\ntm$ and $\ntmtwo$ be normal forms and $\isctx$ and $\isctxtwo$ inert substitution contexts which may capture free variables of $\ntm$ and $\ntmtwo$. $\ntm \mlasrelnafex \ntmtwo$ and $ \isctxp\var \mlasrelnafex \isctxtwop\var$ ($\var$ is fresh) $\Rightarrow \isctxp\ntm \mlasrelnafex \isctxtwop\ntmtwo$
\end{lemma}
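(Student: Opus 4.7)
I would prove the statement by induction on the length of the inert substitution context $\isctx$, which by Lemma \reflemma{lasrelnafex-inert-style-lists-induction} matches that of $\isctxtwo$. The base case $\isctx = \ctxhole$ is immediate: the decomposition lemma also forces $\isctxtwo = \ctxhole$, and $\isctxp\ntm = \ntm \mlasrelnafex \ntmtwo = \isctxtwop\ntmtwo$ then holds directly from the hypothesis.

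For the inductive step, the decomposition lemma gives $\isctx = \isctxONE \esub\vartwo\itm$ syntactically, together with $\isctxtwop\var \equivx \isctxONEtwop\var \esub\vartwo\itmtwo$, $\isctxONEp\var \mlasrelnafex \isctxONEtwop\var$, and $\itm \mlasrelnafex \itmtwo$, for a fresh $\var$. Applying the induction hypothesis to the smaller $\isctxONE, \isctxONEtwo$ with the unchanged pair $\ntm \mlasrelnafex \ntmtwo$ yields $\isctxONEp\ntm \mlasrelnafex \isctxONEtwop\ntmtwo$. Converting this and $\itm \mlasrelnafex \itmtwo$ down to $\mlasrel$ via \reflemma{lasrelnafex-normal-forms-lasrel-left-to-right} and then applying the closure rule $\mscesub$ produces
\[
\isctxp\ntm = \isctxONEp\ntm \esub\vartwo\itm \;\; \mlasrel \;\; \isctxONEtwop\ntmtwo \esub\vartwo\itmtwo.
\]

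The main obstacle lies in transferring the right-hand side from $\isctxONEtwop\ntmtwo \esub\vartwo\itmtwo$ to $\isctxtwop\ntmtwo$, which must be done by one application of $\mscequivx$ and so requires promoting the mirror equivalence $\isctxtwop\var \equivx \isctxONEtwop\var \esub\vartwo\itmtwo$ (known for the fresh placeholder $\var$) to $\isctxtwop\ntmtwo \equivx \isctxONEtwop\ntmtwo \esub\vartwo\itmtwo$. When $\ntmtwo$ is a value, this is direct from value-substitutivity of the mirror (\refprop{equivx-is-substitutive}) applied to the fresh $\var$. In the remaining cases (inert $\ntmtwo$, or $\ntmtwo$ of the shape $\ntmONE \esub\varthree \itmONE$) one exploits that $\var$ occurs linearly and only at the hole of an inert substitution context, so that the axiom instances composing the $\equivx$-chain rearrange only the surrounding ES-spine and are generic in the term at the hole; the same chain therefore witnesses the equivalence with $\var$ replaced by $\ntmtwo$, after $\alpha$-renaming to prevent capture. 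With this transport in hand, $\mscequivx$ delivers $\isctxp\ntm \mlasrel \isctxtwop\ntmtwo$, and a final application of \reflemma{lasrelnafex-normal-forms-lasrel-right-to-left} lifts this back to $\isctxp\ntm \mlasrelnafex \isctxtwop\ntmtwo$, closing the induction.
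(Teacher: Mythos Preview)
Your approach is exactly the paper's: its proof is the single line ``By induction on $\isctx,\isctxtwo$'', and you have correctly unfolded this via \reflemma{lasrelnafex-inert-style-lists-induction}, the induction hypothesis, and clause~(\nafex~5).  Your identification of the transport---promoting $\isctxtwop\var \equivx \isctxONEtwop\var\esub\vartwo\itmtwo$ from the fresh placeholder $\var$ to $\ntmtwo$---is precisely where the work lives.

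One caveat on that transport.  Your justification that ``the $\equivx$-chain rearranges only the surrounding ES-spine and is generic in the term at the hole'' appeals to the concrete form of the $\streq$-axioms, not to the abstract mirror axioms (strong commutation plus value-substitutivity).  In particular, $\equivass$ can change which variables are bound at the hole (e.g.\ $\ctxhole\esub{y}{i_1\esub{z}{i_2}}$ binds only $y$, while $\ctxhole\esub{y}{i_1}\esub{z}{i_2}$ binds $y$ and $z$), so if $z\in\fv{\ntmtwo}$ the same chain no longer applies verbatim.  Your parenthetical ``after $\alpha$-renaming to prevent capture'' is the right fix, but note that this renaming also perturbs $\isctxONEtwo$ and $\itmtwo$, so the IH and the $\mlasrel$-facts must be taken for the renamed pieces.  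The paper's one-liner does not spell any of this out either, so you are not missing anything the paper provides---but the step is less innocent than either presentation suggests, and for a fully abstract mirror one would need an additional ``hole-genericity'' hypothesis.
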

\begin{proof}
	By induction on $\isctx,\isctxtwo$.
\end{proof}

\begin{lemma}
	\label{l:lasrelnafex-normal-forms-isctx-decomposition-2}
	$\ntm \mlasrelnafex \ntmtwo$ and $ \isctxp\var \mlasrelnafex \isctxtwop\var$ $\Rightarrow \isctxp{\var\ntm} \mlasrelnafex \isctxtwop{\var\ntmtwo}$
\end{lemma}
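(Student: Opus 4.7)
The plan is to derive the lemma as a short corollary of the immediately preceding \reflemma{lasrelnafex-normal-forms-isctx-decomposition}, via one additional use of clause (\nafex 4a).

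First, I would establish the auxiliary fact $\var\ntm \mlasrelnafex \var\ntmtwo$. Since $\ntm$ and $\ntmtwo$ are normal forms, the terms $\var\ntm$ and $\var\ntmtwo$ match the grammar of applicative inert terms (both are of the shape $\var\fire$), hence are themselves $\tovsc$-normal. Clause (\nafex 4a) is then applicable, requiring (i) $\ntm \mlasrel \ntmtwo$, which is obtained from the hypothesis $\ntm \mlasrelnafex \ntmtwo$ via \reflemma{lasrelnafex-normal-forms-lasrel-left-to-right}, and (ii) an $\equivx$-deformation on the right, which is supplied by the reflexivity of $\equivx$. The head variable $\var$ is the same on both sides, matching the clause exactly.

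Second, I would instantiate \reflemma{lasrelnafex-normal-forms-isctx-decomposition} with the normal forms $\var\ntm$ and $\var\ntmtwo$ in place of $\ntm$ and $\ntmtwo$, using the auxiliary $\var\ntm \mlasrelnafex \var\ntmtwo$ just derived together with the hypothesis $\isctxp\var \mlasrelnafex \isctxtwop\var$. The conclusion delivered by that lemma is exactly $\isctxp{\var\ntm} \mlasrelnafex \isctxtwop{\var\ntmtwo}$.

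No substantial obstacle is expected: the lemma is essentially a specialization of the previous decomposition lemma to the case where the normal form plugged inside $\isctx, \isctxtwo$ is itself the application of a head variable to a related pair of normal forms. The reason it is stated separately is presumably usability in the main compatibility proof: there one typically has a related pair $(\ntm, \ntmtwo)$ of normal forms in hand, not a pre-packaged pair $(\var\ntm, \var\ntmtwo)$, and needs to reconstruct the outer application to the head variable together with the surrounding inert substitution context in one step. Any subtlety concerning how the $\equivx$-deformation in $\isctxtwop\var \mlasrelnafex \isctxtwop\var$ survives the replacement of $\var$ by $\var\ntmtwo$ is entirely absorbed by the previous lemma and need not be re-examined here.
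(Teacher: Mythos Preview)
Your approach is correct and is a genuine shortcut compared with the paper. The paper proves the lemma by a direct induction on the pair $\isctx,\isctxtwo$ (via the list-induction principle of \reflemma{lasrelnafex-inert-style-lists-induction}), essentially repeating the same induction that underlies \reflemma{lasrelnafex-normal-forms-isctx-decomposition}. You instead derive the result as a corollary: establish $\var\ntm \mlasrelnafex \var\ntmtwo$ once via clause (\nafex 4a) and then invoke the previous lemma. This is cleaner and avoids duplicating the inductive argument.

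One point worth making explicit in your write-up: \reflemma{lasrelnafex-normal-forms-isctx-decomposition} is stated with the side condition that the variable witnessing the context relation is \emph{fresh}, whereas the present lemma has no such condition and you are reusing the very $\var$ that also serves as head variable of the plugged normal form. In the intended use (the ``applied variable'' case of the main compatibility proof) that variable is in fact not captured by $\isctx$ or $\isctxtwo$, and an inspection of the induction behind the previous lemma shows that freshness is never actually exploited. Still, for a self-contained argument you should either remark that the freshness hypothesis of the previous lemma is inessential, or record that $\var$ is not captured by $\isctx,\isctxtwo$, so that your instantiation is legitimate.
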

\begin{proof}
	By induction on $\isctx,\isctxtwo$.
\end{proof}

\begin{lemma}
	\label{l:lasrelnafex-normal-forms-isctx-decomposition-applied-inerts}
	$\ntm \mlasrelnafex \ntmtwo$ and $ \isctxp\itmapp \mlasrelnafex \isctxtwop\itmapptwo$ $\Rightarrow \isctxp{\itmapp\ntm} \mlasrelnafex \isctxtwop{\itmapptwo\ntmtwo}$
\end{lemma}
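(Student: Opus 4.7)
The statement extends the preceding Lemma~\ref{l:lasrelnafex-normal-forms-isctx-decomposition-2} from a head variable $\var$ to an applicative inert $\itmapp$. The natural strategy is induction on the pair $(\isctx,\isctxtwo)$, exploiting a decomposition analogous to Lemma~\ref{l:lasrelnafex-inert-style-lists-induction}: since $\isctxp\itmapp$ is an inert term and $\isctxp\itmapp\mlasrelnafex\isctxtwop\itmapptwo$, a case analysis on the clauses of $\relvscx$ applicable to inerts (namely (\nafex~5) and (\nafex~4b) at the bottom) forces either $\isctx=\isctxtwo=\ctxhole$, or $\isctx=\isctxONE\esub\vartwo\itm$ and $\isctxtwop\itmapptwo\equivx\isctxONEtwop\itmapptwo\esub\vartwo\itmtwo$, with $\isctxONEp\itmapp\mlasrelnafex\isctxONEtwop\itmapptwo$ and $\itm\mlasrelnafex\itmtwo$.

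\textbf{Base case} ($\isctx=\isctxtwo=\ctxhole$). The hypothesis reduces to $\itmapp\mlasrelnafex\itmapptwo$, and we must show $\itmapp\ntm\mlasrelnafex\itmapptwo\ntmtwo$. Since $\itmapp\ntm$ is normal (applicative inert times a normal form is inert), clause (\nafex~4b) applies directly, with $\itmapp\mlasrel\itmapptwo$ and $\ntm\mlasrel\ntmtwo$ obtained by Lemma~\ref{l:lasrelnafex-normal-forms-lasrel-left-to-right} from the two hypotheses.

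\textbf{Inductive step.} By the \ih applied to $\isctxONEp\itmapp\mlasrelnafex\isctxONEtwop\itmapptwo$ and $\ntm\mlasrelnafex\ntmtwo$, we obtain $\isctxONEp{\itmapp\ntm}\mlasrelnafex\isctxONEtwop{\itmapptwo\ntmtwo}$. Since $\itm\mlasrelnafex\itmtwo$, clause (\nafex~5) gives
\[\isctxONEp{\itmapp\ntm}\esub\vartwo\itm\ \mlasrelnafex\ \isctxONEtwop{\itmapptwo\ntmtwo}\esub\vartwo\itmtwo\,.\]
The left-hand side is exactly $\isctxp{\itmapp\ntm}$. On the right, the $\equivx$ rearrangement that turned $\isctxtwop\itmapptwo$ into $\isctxONEtwop\itmapptwo\esub\vartwo\itmtwo$ transports (by compatibility of $\equivx$) to $\isctxtwop{\itmapptwo\ntmtwo}\equivx\isctxONEtwop{\itmapptwo\ntmtwo}\esub\vartwo\itmtwo$. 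Using Proposition~\ref{prop:relnafex-equivx-subseteq-relnafex} we close the relation up to $\equivx$ on the right, yielding $\isctxp{\itmapp\ntm}\mlasrelnafex\isctxtwop{\itmapptwo\ntmtwo}$.

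\textbf{Main obstacle.} The delicate point is the decomposition of the hypothesis $\isctxp\itmapp\mlasrelnafex\isctxtwop\itmapptwo$ in the inductive step, which must handle the $\equivx$ rearrangement on the right-hand side (the inner ES's and the applicative inert may have been shuffled by structural equivalence). One then needs to show that commuting the new argument $\ntmtwo$ past the outer substitution preserves $\equivx$, which uses that $\equivx$ (in the \nafex instance, $\streq$) is compatible and that $\vartwo$ is not free in $\ntmtwo$ by the standard Barendregt convention on inert substitution contexts. Once this bookkeeping is in place, the remainder is a routine application of clauses (\nafex~4b) and (\nafex~5), as in the preceding lemmas of the section.
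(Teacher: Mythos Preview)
Your approach is essentially the paper's: induction on $\isctx$, with the base case handled by clause~(\nafex~4b) and the inductive step by clause~(\nafex~5). You also correctly flag the $\equivx$-commutation bookkeeping as the delicate point.

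There is, however, a small but genuine gap in your decomposition. The analogue of Lemma~\ref{l:lasrelnafex-inert-style-lists-induction} you invoke does \emph{not} force $\isctxtwo=\ctxhole$ in the base case, nor does it preserve the specific applicative inert $\itmapptwo$ in the inductive case. The reason is that, unlike clauses~(\nafex~2)/(\nafex~3) for values, clause~(\nafex~4) carries an $\equivx$ on the right: from $\itmapp\mlasrelnafex\isctxtwop{\itmapptwo}$ one only gets $\isctxtwop{\itmapptwo}\equivx\itmapptwo_1$ for \emph{some} applicative inert $\itmapptwo_1$ with $\itmapp\mlasrel\itmapptwo_1$, not $\isctxtwo=\ctxhole$. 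Similarly, in the ES case, clause~(\nafex~5) yields $\isctxtwop{\itmapptwo}\equivx\isctxONEtwop{\itmapptwo_1}\esub\vartwo\itmtwo$ with a possibly different $\itmapptwo_1$.

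The paper's proof handles this by inducting on $\isctx$ alone and carrying this fresh $\itmapptwo_1$: in the base case it concludes $\itmapp\ntm\mlasrelnafex\itmapptwo_1\ntmtwo$ and then closes by $\itmapptwo_1\ntmtwo\equivx\isctxtwop{\itmapptwo\ntmtwo}$; in the inductive case the \ih\ gives $\isctxONEp{\itmapp\ntm}\mlasrelnafex\isctxONEtwop{\itmapptwo_1\ntmtwo}$, and one concludes via~(\nafex~5) plus the analogous $\equivx$-step. Your ``main obstacle'' paragraph is exactly this final $\equivx$-step; once you relax the decomposition to allow $\itmapptwo_1\neq\itmapptwo$ (and drop the $\isctxtwo=\ctxhole$ claim), your argument coincides with the paper's.
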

\begin{proof}
	By induction on $\isctx$.
	
	Since $ \isctxp\itmapp \mlasrelnafex \isctxtwop\itmapptwo$, two cases:
	\begin{itemize}
		\item $\isctx=\ctxhole$ and we fall into case (\nafex 4), that is $\isctxtwop\itmapptwo \equivx \itmapptwo_1$ and $\itmapp \mlasrel \itmapptwo_1$. Hence $\itmapp \ntm \mlasrelnafex \itmapptwo_1\ntmtwo \equivx \isctxtwop{\itmapptwo\ntmtwo}$ follows.
		
		\item  $\isctx=\isctxONE\esub\vartwo\itm$ and we fall into case (\nafex 5), that is $\isctxtwop\itmapptwo \equivx \isctxONEtwop{\itmapptwo_1 }\esub\var\itmtwo$ (we cannot move from inerts to answers and we choose the normal form modulo $\equivsone$) where $\isctxONEp\itmapp \mlasrel \isctxONEtwop{\itmapptwo_1 }$ and $\itm \mlasrel \itmtwo$. Apply \reflemma{lasrelnafex-normal-forms-lasrel-right-to-left} and obtain by \ih that $\isctxONEp{\itmapp\ntm} \mlasrelnafex \isctxONEtwop{\itmapptwo_1\ntmtwo}$. By \reflemma{lasrelnafex-normal-forms-lasrel-left-to-right} and definition of \nafex, we can conclude.\qedhere
	\end{itemize}
\end{proof}

\subsection{Normal substituted terms characterization}
The following two lemmas characterize normal terms that are still normal when a variable is substituted by a value -- this characterization does not depend on $\equivx$ (by strong commutation and preservation of normal forms).
\begin{lemma}
	$\ntm\isub\var\val$ is normal iff $\ntm \not = \evctxp{\var\fire}$. (and $\val \not = \vartwo$)
\end{lemma}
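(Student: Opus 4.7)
The proof proceeds by analyzing the structure of the normal form $\ntm$ and identifying which redexes can be \emph{created} by the substitution. The key observation is that in the VSC, a new redex can only arise through the multiplicative root rule $\sctxp{\la\vartwo\tm'}\tmtwo \rtom \sctxp{\tm'\esub\vartwo\tmtwo}$: a new exponential redex $\tm'\esub\varthree{\sctxp{\valtwo}}$ cannot be created by substitution, because any subterm of $\ntm$ in the shape $\tm_0\esub\varthree{\sctx_0p{\var}}$ would already be an exponential redex in $\ntm$ (variables are themselves values in the VSC), contradicting normality of $\ntm$. Hence the creation of a redex in $\ntm\isub\var\val$ requires both that $\val = \la\vartwo\tm$ be a proper abstraction and that $\var$ occur at the head of an applied position in evaluation context, which is exactly the shape $\evctxp{\var\fire}$.

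\emph{Forward direction} (contrapositive, easy): assume $\ntm = \evctxp{\var\fire}$ and $\val = \la\vartwo\tm$. Pushing the meta-substitution through the context, $\ntm\isub\var\val = \evctxqp{(\la\vartwo\tm)\,\firep}$ where $\evctxq \defeq \evctx\isub\var\val$ and $\firep \defeq \fire\isub\var\val$. The subterm $(\la\vartwo\tm)\firep$ is a root multiplicative redex sitting in evaluation position $\evctxq$, so $\ntm\isub\var\val$ is not $\tovsc$-normal.

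\emph{Backward direction} (contrapositive): assume $\ntm\isub\var\val$ is not normal, so $\ntm\isub\var\val = \evctxqp\tmr$ for some evaluation context $\evctxq$ and root redex $\tmr$. By the observation above, $\tmr$ cannot be exponential, so $\tmr = \sctxp{\la\varfive\tmr'}\tmtwo$ is multiplicative. Since $\ntm$ is normal, the head abstraction $\la\varfive\tmr'$ cannot be present in $\ntm$ in this applied position: it must be produced by the substitution of $\val = \la\vartwo\tm$ for $\var$. Tracing back, $\ntm$ must have a subterm of the shape $\sctxtwop{\var}\,\tmtwoq$ in evaluation position with $\sctxtwo\isub\var\val = \sctx$ and $\tmtwoq\isub\var\val = \tmtwo$. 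Now using the grammar of $\tovsc$-normal forms modulo $\equivsone$, where every applicative-inert head is a bare variable and the ambient substitutions are pushed to the outside of the applicative inert, the context $\sctxtwo$ must be empty. Thus $\ntm$ literally decomposes as $\evctxp{\var\fire}$ with $\fire \defeq \tmtwoq$ and an appropriate outer evaluation context $\evctx$.

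\emph{Main obstacle.} The delicate point is the last step of the backward direction: ruling out a non-empty $\sctxtwo$ around $\var$. If the paper were working with unreformulated normal forms of \Cref{fig:vsc}, the head of an inert would be of the shape $\isctxp{\var\ntm}$ and a normal form like $(\var\esub\varfour\itmp)\vartwo$ could arise, for which the substitution still creates a multiplicative redex yet the literal decomposition $\evctxp{\var\fire}$ fails (the $\esub\varfour\itmp$ sits between $\var$ and its argument). The lemma is stated for normal forms in the grammar \emph{modulo} $\equivsone$, where substitutions have been pushed outside applicative inert heads; this reformulation is precisely what forces $\sctxtwo$ to be empty and makes the characterization literal rather than only up to structural equivalence.
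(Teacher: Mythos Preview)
The paper states this lemma without proof, so there is no reference argument to compare against. Your proposal is essentially correct and supplies a reasonable proof.

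A few remarks on the details. Your key observation---that substitution cannot \emph{create} an exponential redex because variables are already values---is exactly right: if $\ntm\isub\var\val$ contained a subterm $\tm'\esub\varthree{\sctxp\valtwo}$ in evaluation position, then the pre-image in $\ntm$ would be $\tm''\esub\varthree{\sctxtwop{\valthree}}$ with $\valthree$ a value (either the same variable, or $\var$ itself, or the same abstraction), hence already an exponential redex in $\ntm$, contradicting normality. Your treatment of the backward direction is also correct: in the $\equivsone$-normalized grammar, every application reachable through an evaluation context has a bare variable or an applicative inert on its left, never something wrapped in a substitution context. This forces $\sctxtwo = \ctxhole$ and $\tms_1 = \var$, as you argue. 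One small point you leave implicit is that the right-hand side $\tms_2$ is itself a normal form $\fire$; this follows because any subterm of an $\equivsone$-normal form reached via an evaluation context is again a normal form in that grammar.

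Your ``main obstacle'' paragraph correctly identifies the role of the $\equivsone$-normalization: without it a term such as $(\var\esub\varfour\itm)\vartwo$ would be $\tovsc$-normal but would not literally decompose as $\evctxp{\var\fire}$, and the statement would need to be formulated up to $\equivsone$ rather than as a syntactic equality.
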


\begin{lemma}
	\label{l:normal-subst-inert-are-inert}
	$\itm\isub\var\val$ is inert iff ($\itm$ is inert,) $\itm\isub\var\val$ is normal.
\end{lemma}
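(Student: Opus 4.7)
The plan is a straightforward structural induction on $\itm$, the forward direction being trivial (every inert is a normal form by the grammar).

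For the reverse direction, assume $\itm$ is inert and $\itm\isub\var\val$ is $\tovsc$-normal, and prove $\itm\isub\var\val$ is inert. I would first strengthen the statement for applicative inerts: if $\itmapp$ is an applicative inert and $\itmapp\isub\var\val$ is normal, then $\itmapp\isub\var\val$ is applicative inert. This is proved by sub-induction on $\itmapp$. Base case $\itmapp = \vartwo\fire$: if $\vartwo \neq \var$, then $\itmapp\isub\var\val = \vartwo(\fire\isub\var\val)$, which is applicative inert because normality of the whole forces $\fire\isub\var\val$ to be a normal form. If $\vartwo = \var$, then $\itmapp\isub\var\val = \val(\fire\isub\var\val)$; the normality hypothesis rules out $\val = \la\vartwo\tm$ (otherwise a $\tom$-redex at a distance would fire using $\sctx = \ctxhole$), so $\val$ must be a variable $\varthree$, and the result $\varthree(\fire\isub\var\val)$ is applicative inert. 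Inductive step $\itmapp = \itmapp'\fire$: by the sub-IH, $\itmapp'\isub\var\val$ is applicative inert, hence not of the shape $\sctxp{\la\vartwo\tm}$, so no $\tom$-redex is created; applying the grammar clause $\itmapp\fire$ gives the result.

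Then the main induction on $\itm$ has two cases. If $\itm = \itmapp$, apply the sub-lemma just established. If $\itm = \itm'\esub\vartwo\itm''$, after an $\alpha$-renaming so that $\vartwo$ does not appear in $\val$ and $\vartwo \neq \var$, substitution distributes: $\itm\isub\var\val = (\itm'\isub\var\val)\esub\vartwo{\itm''\isub\var\val}$. Normality of this term implies that both $\itm'\isub\var\val$ and $\itm''\isub\var\val$ are normal, and moreover that $\itm''\isub\var\val$ is not of the shape $\sctxp{\val'}$ (otherwise the ES fires a $\toe$-step), so $\itm''\isub\var\val$ must be inert. By the outer IH applied to both $\itm'$ and $\itm''$, both $\itm'\isub\var\val$ and $\itm''\isub\var\val$ are inert, so the grammar clause $\itm \grameq \itm\esub\var\itmtwo$ yields that $\itm\isub\var\val$ is inert.

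The main (minor) obstacle is the correct management of the grammar stratification between \emph{applicative} inerts (where no ES may appear) and general inerts: when substituting at the head of an inert, one must rule out the creation of an answer $\sctxp\val'$ in positions where it would trigger a redex, which is precisely what the normality hypothesis provides. The sub-induction on $\itmapp$ is needed exactly to propagate this finer property through applicative spines, where the outer IH on inerts would be too weak.
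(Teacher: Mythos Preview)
Your proof is correct. The paper actually states this lemma without proof, so there is no reference argument to compare against; your structural induction with the strengthened sub-lemma for applicative inerts is the natural route and goes through cleanly. Two small remarks: in the inductive step of the sub-lemma you should also note that normality of the whole application forces $\fire\isub\var\val$ to be normal (so that the grammar clause $\itmapp\fire$ applies), and in the ES case your direct argument already shows $\itm''\isub\var\val$ is inert, so invoking the IH on $\itm''$ is redundant (though harmless).
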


\subsection{Coherence of the \nafex simulation, reduction and substitution}
We split the Coherence Proposition of the \nafex simulation into two lemmas to prove, knowing that part of the first statement has already been proven in \reflemma{lasrelnafex-normal-forms-substitutive}.

\begin{lemma}
	\label{l:lasrelnafex-on-normal-subs_vsce}
	Let $\rel$ be an \nafex simulation, $\ntm \mlasrelnafex \ntmtwo$, and $\val\mlasrelnafex\valtwo$. If $\ntm\isub\var\val$ is $\tovsce$-normal then $\ntmtwo\isub\var\valtwo$ is $\tovsce$-normal.
\end{lemma}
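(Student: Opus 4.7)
The plan is to proceed by induction on the structure of $\ntm$, using the grammar of $\tovsc$-normal forms modulo $\equivsone$: either $\ntm$ is a value $\val'$, an inert $\itm$, or a term of the form $\fire\esub\vartwo\itm$. Throughout, I exploit three facts about $\equivx$: it preserves normal forms (\refprop{equivx-preserves-normal-forms}), it is substitutive (\refprop{equivx-is-substitutive}), and it strongly commutes with $\tovsc$. I also use the syntactic stability lemmas \reflemma{values-fireballs-stable-lasrelnafex} and \reflemma{inerts-fireballs-stable-lasrelnafex} for $\mlasrelnafex$, as well as \reflemma{lasrelnafex-normal-forms-lasrel-left-to-right} to turn $\mlasrelnafex$-related normal forms into $\mlasrel$-related ones when convenient.

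If $\ntm = \val'$, the only applicable clauses in the definition of $\mlasrelnafex$ are (\nafex 2) and (\nafex 3), both of which force $\ntmtwo$ to be a value $\valtwo'$. Since substituting a value inside a value produces a value, $\valtwo'\isub\var\valtwo$ is normal. If $\ntm$ is of the form $\fire\esub\vartwo\itm$, clause (\nafex 5) yields $\ntmtwo \equivx \firetwo\esub\vartwo\itmtwo$ with $\fire\mlasrelnafex\firetwo$ and $\itm\mlasrelnafex\itmtwo$; the IH applies to the strictly smaller components $\fire\isub\var\val$ and $\itm\isub\var\val$ (both of which are normal because $\ntm\isub\var\val$ is), giving that $\firetwo\isub\var\valtwo$ and $\itmtwo\isub\var\valtwo$ are normal; transporting along $\equivx$ (using substitutivity and preservation of normal forms) yields $\ntmtwo\isub\var\valtwo$ normal.

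The core case is $\ntm = \itm$ inert. By \reflemma{inerts-fireballs-stable-lasrelnafex}, $\ntmtwo = \itmtwo$ is inert, and by \reflemma{normal-subst-inert-are-inert} it suffices to prove that $\itmtwo\isub\var\valtwo$ is inert. Writing $\itm = \isctxp\itmapp$, the decomposition lemma \reflemma{lasrelnafex-inert-style-lists-induction} gives $\itmtwo \equivx \isctxtwop\itmapptwo$ with $\itmapp\mlasrelnafex\itmapptwo$ and corresponding relations on the content of the substitutions in $\isctx,\isctxtwo$. A sub-induction on $\itmapp$ (either $\varthree\fire$ or $\itmapp_1\fire$) together with the IH on each $\fire$ handles the applicative inert: the only way the result could fail to be inert is if the head variable of $\itmapp$ is $\var$ and $\val$ is an abstraction, but that would already make $\itm\isub\var\val$ non-normal, contradicting the hypothesis; moreover the head variable of $\itmapptwo$ coincides with that of $\itmapp$ (forced by the clauses (\nafex 4a)/(\nafex 4b)), and if $\val = \vartwo$ then case analysis on $\val\mlasrelnafex\valtwo$ gives $\valtwo = \vartwo$, so the head-variable case remains safe on the right-hand side. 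The substitutions in $\isctxtwo$ are handled by the IH on each inert stored there. Finally, substitutivity of $\equivx$ transports inertness across $\itmtwo\isub\var\valtwo \equivx \isctxtwop\itmapptwo\isub\var\valtwo$.

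The main obstacle is the bookkeeping between $\equivx$ and meta-substitution: one must consistently reason on a $\equivx$-representative in the shape of the grammar (where the IH applies) and then transport the conclusion back. The three mirror properties of $\equivx$ together with the matching of variable-vs-abstraction structure enforced by $\mlasrelnafex$ on values make this transport routine but notationally delicate.
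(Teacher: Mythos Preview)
Your proof takes the same inductive approach as the paper, but the paper organizes the case analysis more directly. Instead of the three-way split value / inert / ES-wrapped, the paper splits by the top-level constructor of $\ntm$ (variable, abstraction, application $\ntmONE\ntmTWO$, explicit substitution $\ntmONE\esub\vartwo\itmONE$), which aligns one-to-one with clauses (\nafex~2)--(\nafex~5). In particular, the paper never decomposes an inert as $\isctxp\itmapp$ nor runs a sub-induction on $\itmapp$: the application case is handled by three flat sub-cases (head variable equal to $\var$ with $\val$ not an abstraction; head a different variable $\vartwo$; head an applicative inert with $\itmapp\isub\var\val$ still applicative-inert), each reading off $\ntmtwo\equivx\ldots$ from (\nafex~4a)/(\nafex~4b), applying \reflemma{lasrelnafex-normal-forms-lasrel-right-to-left} to obtain $\mlasrelnafex$ on the sub-terms, and invoking the IH together with \reflemma{normal-subst-inert-are-inert}.

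Two small technical slips in your write-up. First, \reflemma{lasrelnafex-inert-style-lists-induction} as stated applies only to $\isctxp\var$ and $\isctxp\val$, not to $\isctxp\itmapp$; you would need an analogous statement for applicative inerts (or simply bypass the decomposition as the paper does). Second, the conversion you need on sub-terms is from $\mlasrel$ (what the clauses yield) to $\mlasrelnafex$ (to apply the IH), i.e.\ \reflemma{lasrelnafex-normal-forms-lasrel-right-to-left}, not the left-to-right lemma you cite. Neither affects the soundness of your argument, which is otherwise correct.
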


\begin{proof}
	By induction on normal forms $\ntm$ for which $\ntm\isub\var\val$ is $\tovsce$-normal:
	\begin{itemize}
		\item \emph{Variable}. Two sub-cases:
		\begin{itemize}
			\item $\ntm= \var$ and so $\ntm\isub\var\val = \val$. Then $\ntmtwo = \var$ by case (\nafex 2) and $\ntmtwo\isub\var\valtwo = \valtwo$, which is $\tovsce$-normal.

			\item $\ntm= \vartwo$ and so $\ntm\isub\var\val = \vartwo$. Then $\ntmtwo = \vartwo$ by case (\nafex 2) and $\ntmtwo\isub\var\valtwo = \vartwo$, which is $\tovsce$-normal. 
		\end{itemize}
		
		\item \emph{Abstraction}, that is, $\ntm = \la\vartwo\tm$ and so $\ntm\isub\var\val = \la\vartwo\tm\isub\var\val$. Then $\ntmtwo= \la\vartwo\tmp$ with $\tm \mlasrel \tmp$. We have that $\ntmtwo\isub\var\valtwo = \la\vartwo\tmp\isub\var\valtwo$, which is $\tovsce$-normal.
		
		\item \emph{Substituted Inert}, that is, $\ntm = \ntmONE\esub\vartwo\itmONE$ and so $\ntm\isub\var\val = \ntmONE\isub\var\val\esub\vartwo{\itmONE\isub\var\val}$.
		$\ntm\isub\var\val$ is normal is equivalent to $\itmONE\isub\var\val$ and $\ntmONE\isub\var\val$ are normal.
		Then $\ntmtwo \equivx \ntmONEtwo\esub\vartwo\itmONEtwo$ with $\itmONE \mlasrel \itmONEtwo$ and $\ntmONE \mlasrel \ntmONEtwo$, which implies $\itmONE \mlasrelnafex \itmONEtwo$ and $\ntmONE \mlasrelnafex \ntmONEtwo$ by \reflemma{lasrelnafex-normal-forms-lasrel-right-to-left}.
		By \ih we then have $\itmONEtwo\isub\var\valtwo$ is $\tovsce$-normal and $\ntmONEtwo\isub\var\valtwo$ is $\tovsce$-normal : which is equivalent to  $\ntmtwo\isub\var\valtwo$ is $\tovsce$-normal by \refprop{equivx-preserves-normal-forms}.

		\item \emph{Applied Normal forms}, that is, $\ntm = \ntmONE\ntmTWO$. Then we have three sub-cases:
		\begin{itemize}
			\item $\ntm = \var\ntmTWO$ and $\val$ is not an abstraction,
			Then $\ntmtwo \equivx \var\ntmTWOtwo$ with $\ntmTWO \mlasrel \ntmTWOtwo$, which implies $\ntmTWO \mlasrelnafex \ntmTWOtwo$ by \reflemma{lasrelnafex-normal-forms-lasrel-right-to-left}.
			By \ih we then have $\ntmTWOtwo\isub\var\valtwo$ is $\tovsce$-normal: which is equivalent to $\ntmtwo\isub\var\valtwo$ is $\tovsce$-normal by \refprop{equivx-preserves-normal-forms}.
			\item $\ntm = \vartwo\ntmTWO$,
			Then $\ntmtwo \equivx \vartwo\ntmTWOtwo$ with $\ntmTWO \mlasrel \ntmTWOtwo$, which implies $\ntmTWO \mlasrelnafex \ntmTWOtwo$ by \reflemma{lasrelnafex-normal-forms-lasrel-right-to-left}.
			By \ih we then have $\ntmTWOtwo\isub\var\valtwo$ is $\tovsce$-normal: which is equivalent to $\ntmtwo\isub\var\valtwo$ is $\tovsce$-normal by \refprop{equivx-preserves-normal-forms}.
			\item $\ntm = \itmapp\ntmTWO$, and $\itmapp\isub\var\val$ is an applied inert,
			Then $\ntmtwo \equivx \itmapptwo\ntmTWOtwo$ with $\itmapp \mlasrel \itmapptwo$ and $\ntmTWO \mlasrel \ntmTWOtwo$, which implies $\itmapp \mlasrelnafex \itmapptwo$ and $\ntmTWO \mlasrelnafex \ntmTWOtwo$ by \reflemma{lasrelnafex-normal-forms-lasrel-right-to-left}.
			
			By \ih we then have $\itmapptwo\isub\var\valtwo$ is $\tovsce$-normal, $\ntmTWOtwo\isub\var\valtwo$ is $\tovsce$-normal : which is equivalent to $\ntmtwo\isub\var\valtwo$ is $\tovsce$-normal by \refprop{equivx-preserves-normal-forms} and by \reflemma{normal-subst-inert-are-inert}.\qedhere
		\end{itemize}

	\end{itemize}
\end{proof}

\begin{lemma} 
	\label{l:lasrelnafex-not-normal-subs}
	If $\ntm_\tmrone \mlasrelnafex \ntm_\tmrtwo$, $\val \mlasrel \valtwo$
	and $\ntm_\tmrone\isub\var{\val} \tovsc \tmronep$
	then $\ntm_\tmrtwo\isub\var{\valtwo}  \tovsc \tmrtwop$ and $\tmronep \mlasrel \tmrtwop$
\end{lemma}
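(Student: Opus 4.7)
This lemma is the mirror-calculus analogue of the second clause of \refprop{ncbv-coherence}. The plan is to induct on the size of $\ntm_\tmrone$, with a case analysis on its shape using the refined clauses (\nafex 4a), (\nafex 4b), (\nafex 5) of $\mlasrelnafex$. Since $\ntm_\tmrone$ is $\tovsc$-normal while $\ntm_\tmrone\isub\var\val$ reduces, the substitution of $\val$ for $\var$ must have unblocked a redex. Given the grammar of normal forms, this new redex is either a $\tom$-step at the head of an applicative inert whose head variable is exactly $\var$ (and $\val$ is an abstraction), or else it lies inside a strict subterm (a subterm of an applicative inert, the content of an ES, or an ES body), where we can apply the inductive hypothesis.

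First I would use the refined clauses together with \reflemma{lasrelnafex-inert-style-lists-induction} to produce a representative $\ntmtwoone$ with $\ntm_\tmrtwo \equivx \ntmtwoone$ whose top-level shape matches $\ntm_\tmrone$ and whose immediate components are pairwise $\mlasrel$-related to those of $\ntm_\tmrone$. By substitutivity of the mirror (\refprop{equivx-is-substitutive}) one has $\ntm_\tmrtwo\isub\var\valtwo \equivx \ntmtwoone\isub\var\valtwo$. I would then locate the matching position in $\ntmtwoone\isub\var\valtwo$ and produce a reduction $\ntmtwoone\isub\var\valtwo \tovsc \tmrtwop_0$ either directly (when the redex is head-created, using that $\val \mlasrel \valtwo$ preserves the top-level shape of values: a variable is related only to a variable, an abstraction only to an abstraction, either via rules $\scvar$/$\scabs$ or modulo $\equivx$ which preserves the outer constructor on normal forms) or by the inductive hypothesis applied to the strict subterm containing the redex. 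The closure rules $\mscapp$, $\mscesub$, $\mscsub$, and $\mscequivx$ then give $\tmronep \mlasrel \tmrtwop_0$.

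Finally I would apply strong commutation of $\equivx$ with $\tovsc$ (\refprop{equivx-is-a-strong-bisimulation}) to $\ntm_\tmrtwo\isub\var\valtwo \equivx \ntmtwoone\isub\var\valtwo \tovsc \tmrtwop_0$, obtaining $\ntm_\tmrtwo\isub\var\valtwo \tovsc \tmrtwop$ with $\tmrtwop \equivx \tmrtwop_0$, and absorb this last $\equivx$ into the derivation via the rule $\mscequivx$ to conclude $\tmronep \mlasrel \tmrtwop$.

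The main obstacle is the interplay between the mirror $\equivx$, the meta-level substitution, and small-step reduction: because $\ntm_\tmrtwo$ is only known up to $\equivx$, one cannot perform the reduction directly on it but must first move to a convenient representative $\ntmtwoone$ and then transport the resulting step back via strong commutation. Within the case analysis, the delicate point is the head-created $\tom$-redex, where the applicative-inert clause (\nafex 4a) of $\mlasrelnafex$ interacts with the substitution of an abstraction: one has to match $(\la\varthree\tm)(\ntm\isub\var\val)$ with $(\la\varthree\tmtwo)(\ntmtwo\isub\var\valtwo)$ after having crossed $\equivx$, and verify that the new ES bodies are $\mlasrel$-related using $\mscsub$ and the fact that $\tm \mlasrel \tmtwo$ follows from $\la\varthree\tm \mlasrel \la\varthree\tmtwo$. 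Everything else is routine, the inductive step being a direct application of evaluation-context closure of $\tovsc$.
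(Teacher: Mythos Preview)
Your plan is correct and matches the paper's proof: the paper also proceeds by induction on the position of the created redex (phrased as induction on the evaluation context $\evctx$ with $\ntm_\tmrone = \evctxp{\var\fire}$, rather than on the size of $\ntm_\tmrone$), handling the base case $\evctx=\ctxhole$ exactly as you describe and the inner cases by the inductive hypothesis, transporting across $\equivx$ via substitutivity and strong commutation.

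One technical point to tighten. Your case analysis for ``$\val \mlasrel \valtwo$ preserves the top-level shape of values'' (rules $\scvar$/$\scabs$/$\scequivx$) is incomplete: the last rule could also be $\msclift$ or $\mscsub$, and neither yields bodies $\tm \mlasrel \tmtwo$ by direct inversion. The paper instead uses \reflemma{lasrelnafex-normal-forms-lasrel-right-to-left} to pass from $\val \mlasrel \valtwo$ to $\val \mlasrelnafex \valtwo$, whence clause~(\nafex~3) gives $\val=\la\vartwo\tmfour$, $\valtwo=\la\vartwo\tmfourp$ with $\tmfour \mlasrel \tmfourp$. The same lemma is needed in your inductive step: the refined clauses only give $\ntmONE \mlasrel \ntmONEtwo$ on the subterms, while your inductive hypothesis requires $\ntmONE \mlasrelnafex \ntmONEtwo$.
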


\begin{proof}
	(We write $\val=\la\vartwo\tmfour$ and $\valtwo = \la\vartwo\tmfourp$ with $\tmfour \mlasrel \tmfourp$ - using the fact that ($\val \mlasrel \valtwo \Rightarrow \val \mlasrelnafex \valtwo$) by \ref{l:lasrelnafex-normal-forms-lasrel-right-to-left}.)
	
	If $\ntm_\tmrone\isub\var{\val} \tovsc \tmronep$ then $\ntm_\tmrone = \evctxp{\var\fire}$ ($\evctx$ is the context where the reduction has been done).
	Show (by induction on $\evctx$) that $\tmronep \mlasrel \tmrtwop$.
	\begin{itemize}
		\item $\evctx = \ctxhole$
		then $\ntm_\tmrone = {\var\fire}$ , 
		by $\ntm_\tmrone \mlasrelnafex \ntm_\tmrtwo$ we have $\ntm_\tmrtwo \equivx {\var\firetwo}$ with $\fire \mlasrel \firetwo$.

		By substitutivity of $\equivx$ (\refprop{equivx-is-substitutive}), $\ntm_\tmrtwo\isub\var\valtwo \equivx {\valtwo\firetwo\isub\var\valtwo}\tovsc {\tmfourp\esub\vartwo{\firetwo\isub\var\valtwo}}$
		
		By \refprop{equivx-is-a-strong-bisimulation}, there exists $\tmrtwop$ such that ${\tmfourp\esub\vartwo{\firetwo\isub\var\valtwo}} \equivx \tmrtwop$ and $\ntm_\tmrtwo\isub\var\valtwo \tovsc\tmrtwop$.
		
		We also have $\ntm_\tmrone\isub\var\val \tovsc {\tmfour\esub\vartwo{\fire\isub\var\val}} = \tmronep$ and we can build the following derivation:
		\[ \infer[\scequivx]{\tmronep\mlasrel\tmrtwop}{\infer[\scesub]{ \tmfour\esub\vartwo{\fire\isub\var\val} \mlasrel \tmfourp\esub\vartwo{\firetwo\isub\var\valtwo}}{\tmfour \mlasrel \tmfourp & \infer{\fire\isub\var\val \mlasrel \firetwo\isub\var\valtwo}{\fire \mlasrel \firetwo & \val \mlasrel \valtwo}} & {\tmfourp\esub\vartwo{\firetwo\isub\var\valtwo}} \equivx \tmrtwop} \]
		
		hence the result $\ntm_\tmrone\isub\var\val \tovsc \tmronep$, $\ntm_\tmrtwo\isub\var\valtwo \tovsc \tmrtwop$ and $\tmronep \mlasrel \tmrtwop$.

		\item $\evctx = \tmtwo\evctxONE$ ($\tmtwo = \itm$ because $\ntm_\tmrone$ is normal) 	then $\ntm_\tmrone = \itm\evctxONEp\tmthree$ (where $\tmthree = \var\firetwo$).
		Then by $\ntm_\tmrone \mlasrelnafex \ntm_\tmrtwo$, $\ntm_\tmrtwo \equivx \itmtwo\ntmONE$ with $\itm \mlasrel \itmtwo$ and $\evctxONEp\tmthree \mlasrel \ntmONE$.
		$\evctxONEp\tmthree\isub\var\val \tovsc \tm$ by hypothesis ($\ntm_\tmrone\isub\var\val \tovsc \tmronep$ is in this case $\itm\isub\var\val\evctxONEp\tmthree\isub\var\val \tovsc \itm\isub\var\val\tm$) and $\evctxONEp\tmthree \mlasrelnafex \ntmONE$ (normal forms, apply lemma \ref{l:lasrelnafex-normal-forms-lasrel-right-to-left}), hence by \ih $\ntmONE\isub\var\valtwo \tovsc \tmp$ with $\tm \mlasrel \tmp$.
		
		By substitutivity and strong commutation of $\equivx$,
		$\ntm_\tmrtwo\isub\var\valtwo \equivx \itmtwo\isub\var\valtwo\ntmONE\isub\var\valtwo\tovsc \itmtwo\isub\var\valtwo\tmp$ implies $\ntm_\tmrtwo\isub\var\valtwo\tovsc\tmrtwop$ with 
		$\itmtwo\isub\var\valtwo\tmp \equivx \tmrtwop$.
		\[\infer{\tmronep\mlasrel\tmrtwop}{\infer{\itm\isub\var\val\tm \mlasrel \itmtwo\isub\var\valtwo\tmp }{ \infer{\itm\isub\var\val \mlasrel \itmtwo\isub\var\valtwo}{\itm \mlasrel \itmtwo & \val \mlasrel \valtwo} & \tm \mlasrel \tmp} & \itmtwo\isub\var\valtwo\tmp \equivx \tmrtwop}\]
		
		hence $\ntm_\tmrone\isub\var\val \tovsc \tmronep$, $\ntm_\tmrtwo\isub\var\valtwo \tovsc \tmrtwop$ and $\tmronep \mlasrel \tmrtwop$.

		\item The rest of the induction cases ($\evctx = \evctxONE\tmtwo$, $\evctx = \tmtwo\esub\varthree\evctxONE$ or $\evctx = \evctxONE\esub\varthree\tmtwo$) follow from very similar arguments.\qedhere
	\end{itemize}
\end{proof}

\subsection{Mirrored Lassen's Closure preserves \nafex simulations}
After all these preliminaries, we can finally prove \nafex compatibility, with a very similar proof than in the case of naive similarity.

\begin{proposition}
	\label{prop:main-lemma_vsce}
	Let $\relsym$ be a \nafex simulation.
	\begin{enumerate}
		\item \emph{Technical auxiliary statement}: if $\tmrone\mlasrel\tmrtwo$ and $\tmrone \bsvsct k \ntm$ then $\tmrtwo\bsvscts \ntmtwo$ and $\ntm \mlasrelnafex \ntmtwo$.		
		\item \emph{Mirrored Lassen's closure preserves \nafex simulations}:  $\mlassenop\relsym$ is a \nafex simulation.
	\end{enumerate}
\end{proposition}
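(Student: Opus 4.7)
The plan is to prove point 1 (the technical auxiliary statement), from which point 2 follows directly: unfolding the definition of $\mlassenop\relsym$ being a nafex simulation gives exactly point 1 applied to the diverging/converging dichotomy of $\tmrone$, because if $\tmrone\bsvsctdiv$ then by substitutivity (\refprop{substitutivity_vsce}) and strong commutation of $\equivx$ one can propagate divergence across $\mlasrel$ (clause (\nafex\ 1)), and if $\tmrone\bsvscts\ntm$ point 1 gives the matching big-step for $\tmrtwo$ together with $\ntm\mlasrelnafex\ntmtwo$, thus landing in one of clauses (\nafex\ 2)--(\nafex\ 5).

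For point 1, I would proceed by lexicographic induction on $(k,d)$ where $d$ is the size of the derivation of $\tmrone\mlasrel\tmrtwo$, with case analysis on the last rule of that derivation. The cases ($\msclift$), ($\mscvar$), ($\mscabs$) are immediate (the first uses that $\relsym$ is a nafex simulation and that $\opnafex$ is monotone; the other two are structural). Case ($\mscapp$) proceeds by a secondary case analysis on the last rule of $\tmrone\tmrthree\bsvsct k\ntm$, distinguishing the four productive rules ($\bsvsctapm$, $\bsvsctapvar$, $\bsvsctapi$, and the case where the big-step stays at a value application because of a variable head). For the $\beta_v$-style firing rule $\bsvsctapm$ one applies the IH to the sub-derivations of $\tmrone$ and $\tmrthree$, uses \reflemma{lasrelnafex-normal-forms-lasrel-left-to-right} to transport $\mlasrelnafex$ into $\mlasrel$ on the obtained abstractions/values, reconstructs a ($\mscsub$) derivation for the body, and invokes the IH on the resulting smaller $k$. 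For the applied-inert and applied-variable rules one uses \reflemma{lasrelnafex-normal-forms-isctx-decomposition}, \reflemma{lasrelnafex-normal-forms-isctx-decomposition-2}, and \reflemma{lasrelnafex-normal-forms-isctx-decomposition-applied-inerts} to reassemble the matching normal form on the $\tmrtwo$ side, producing only an $\equivx$-deformation compatible with clauses (\nafex\ 4a/4b). Case ($\mscesub$) is analogous but uses the $\bsvsctese$ and $\bsvsctesi$ rules, relying on the fact that $\equivx$ is stable under the $\isctx$-decomposition provided by the $\equivsone$-oriented grammar.

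The critical case is ($\mscsub$): given $\tmrone\isub\var\val\mlasrel\tmrtwo\isub\var\valtwo$ with premises $\tmrone\mlasrel\tmrtwo$ and $\val\mlasrel\valtwo$, and $\tmrone\isub\var\val\bsvsct k\ntm$, first invoke big-step substitutivity (\refprop{substitutivity_vsce}) to split into $\tmrone\bsvsct{k_1}\ntm_\tmrone$ and $\ntm_\tmrone\isub\var\val\bsvsct{k_2}\ntm$ with $k_1+k_2=k$; then apply the IH (on $d$ strictly smaller) to $\tmrone\mlasrel\tmrtwo$ to get $\tmrtwo\bsvscts\ntm_\tmrtwo$ with $\ntm_\tmrone\mlasrelnafex\ntm_\tmrtwo$. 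If $k_1>0$ then $k_2<k$ and one applies the IH (on $k$) after reassembling $\ntm_\tmrone\isub\var\val\mlasrel\ntm_\tmrtwo\isub\var\valtwo$ via ($\mscsub$). If $k_1=0$ one splits: if $\ntm_\tmrone\isub\var\val$ is $\tovsc$-normal one concludes by \reflemma{lasrelnafex-on-normal-subs_vsce} plus \reflemma{lasrelnafex-normal-forms-substitutive}; otherwise \reflemma{lasrelnafex-not-normal-subs} provides a matching $\tovsc$-step on the right side with $\mlasrel$-related reducts, and a final appeal to the IH on strictly fewer big-steps closes the case. This is the main obstacle, because it is where the interplay between substitution at a distance, the diamond property (for length-invariance of $\bsvscts$), and the mirror $\equivx$ (via \reflemma{lasrelnafex-not-normal-subs}) all come into play.

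Finally, case ($\mscequivx$), which is new with respect to the naive proof, handles the derivation $\tmrone\mlasrel\tmrtwo$ obtained from $\tmrone\mlasrel\tmrtwo'$ and $\tmrtwo'\equivx\tmrtwo$. By IH, $\tmrtwo'\bsvscts\ntmtwo'$ with $\ntm\mlasrelnafex\ntmtwo'$. Strong commutation of $\equivx$ with $\tovsc$ (\refprop{equivx-is-a-strong-bisimulation}) propagates the evaluation across the $\equivx$-step, yielding $\tmrtwo\bsvscts\ntmtwo$ with $\ntmtwo'\equivx\ntmtwo$; then \refprop{relnafex-equivx-subseteq-relnafex} gives $\ntm\mlasrelnafex\ntmtwo$, as required.
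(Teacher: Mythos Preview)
Your proposal is correct and follows essentially the same approach as the paper: lexicographic induction on $(k,d)$, case analysis on the last rule of $\mlasrel$, with the critical ($\mscsub$) case handled via big-step substitutivity and the coherence lemmas, and the new ($\mscequivx$) case closed by strong commutation plus \refprop{relnafex-equivx-subseteq-relnafex}. One small slip: in the $\bsvsctapm$ sub-case of ($\mscapp$), the VSC multiplicative rule does \emph{not} evaluate $\tmrthree$ (it creates an ES), so you apply the IH only to $\tmrone$, then rebuild $\tmronep\esub\var\tmrthree \mlasrel \tmrtwop\esub\var\tmrfour$ via ($\mscesub$)---not ($\mscsub$)---before the second IH call; this is exactly how the paper proceeds.
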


\begin{proof}
	\begin{enumerate}
		\item 
	By induction on $(k,d)$ where $d$ is the size of the derivation of $\tmrone \mlasrel \tmrtwo$.


By case analysis on the last rule of the derivation $\tmrone\mlasrel \tmrtwo$.

\begin{enumerate}
	\item \emph{Lifting}:
	\[ \infer[(\sclift) ]{\tmrone \mlasrel \tmrtwo} {\tmrone \rel \tmrtwo}\text{ and }\tmrone\bsvsct k \ntm\]
	Since $\relsym$ is a \nafex simulation, we have $\tmrone\relnafex\tmrtwo$ and $\tmrtwo \bsvscts \ntmtwo$ for some $\ntmtwo$ such that $\ntm\relnafex\ntmtwo$. Hence  $\ntm\mlasrelnafex\ntmtwo$ by monotonicity of $\opnafex$.
	
	\item \emph{Variables}:
	\[\infer[(\scvar) ]{\var \mlasrel \var}	{} \text{ and } \var\bsvsct 0 \var\]
	
	hence the result ($\var\bsvsct 0 \var$) and by the definition of \nafex, $\var \mlasrelnafex \var$.
	
	\item \emph{Abstraction}:
	\[\infer[(\scabs) ]{\la\var\tmrone \mlasrel \la\var\tmrtwo} {\tmrone \mlasrel \tmrtwo} \text{ and } \la\var\tmrone \bsvsct 0 \la\var\tmrone \]
	
	hence the result ($\la\var\tmrtwo \bsvsct 0 \la\var\tmrtwo$) and by the definition of \nafex, $\la\var\tmrone \mlasrelnafex \la\var\tmrtwo$.
	\item \emph{Application}:
	\[ \infer[(sc.app) ] 
	{\tmrone\tmrthree  \mlasrel  \tmrtwo\tmrfour} {\tmrone  \mlasrel \tmrtwo & \tmrthree \mlasrel \tmrfour } \text{ and }\tmrone\tmrthree \bsvsct k \ntm \]
	
	then, by case analysis on the last rule of the big-step derivation,
	\begin{enumerate}
		\item \emph{Applied variable}: 
		\[\infer{\tmrone\tmrthree \bsvsct {k+h} \isctxp{\var\ntm}}{
			\tmrone \bsvsct k \isctxp\var
			&
			\tmrthree \bsvsct h \ntm
		}\]
		
		by inductive hypothesis ($d$ strictly decreasing, first component not increasing) we obtain $\tmrtwo \bsvscts \firep$ with $\isctxp{\var} \mlasrelnafex \firep$ (hence by \reflemma{values-fireballs-stable-lasrelnafex}, $\firep = \isctxtwop{\var}$)and $\tmrfour \bsvscts \ntmtwo$ with $\ntm \mlasrelnafex \ntmtwo$. Then:
		\[\infer{\tmrtwo\tmrfour \bsvscts \isctxtwop{\var\ntmtwo}}{
			\tmrtwo \bsvscts  \itmtwo = \isctxtwop{\var}
			&
			\tmrfour \bsvscts \ntmtwo
		}\]
		By definition of \nafex, and 
		by \reflemma{lasrelnafex-normal-forms-isctx-decomposition-2}
		we have $\isctxp{\var\ntm} \mlasrelnafex \isctxtwop{\var\ntmtwo}$.
		\item \emph{Applied inert}: 
		\[\infer{\tmrone\tmrthree \bsvsct {k+h} \isctxp{\itmappONE\ntm}}{
			\tmrone \bsvsct k \isctxp\itmappONE = \itm
			&
			\tmrthree \bsvsct h \ntm
		}\]
		
		by inductive hypothesis ($d$ strictly decreasing, first component not increasing) we obtain $\tmrtwo \bsvscts \itmtwo = \isctxtwop{\itmappONEtwo}$ (the $\tovsce$ normal form, and it is an inert by Corollary \ref{l:inerts-fireballs-stable-lasrelnafex}) and $\tmrfour \bsvscts \ntmtwo$ with $\itm\mlasrelnafex\itmtwo,~\ntm \mlasrelnafex \ntmtwo$. Then:
		\[\infer{\tmrtwo\tmrfour \bsvscts \isctxtwop{\itmappONEtwo\ntmtwo}}{
			\tmrtwo \bsvscts  \itmtwo = \isctxtwop{\itmappONEtwo}
			&
			\tmrfour \bsvscts \ntmtwo
		}\]
		
		By \reflemma{lasrelnafex-normal-forms-isctx-decomposition-applied-inerts}, $\isctxp{\itmappONE\ntm} \mlasrelnafex \isctxtwop{\itmappONEtwo\ntmtwo}$.
		
		\item \emph{Substitution of an inert}:
		not applicable.

		\item \emph{$m$ step}:
		\[\infer{\tmrone\tmrthree \bsvsct {k+i+1} \isctxp\ntm}{
			\tmrone \bsvsct k \isctxp{\la\var\tmronep}
			&
			{\tmronep\esub\var\tmrthree} \bsvsct i \ntm
		}\]

		then by inductive hypothesis ($d$ strictly decreasing, first component non increasing) on $\tmrone$ we get
		$\tmrtwo\bsvscts \ntm_\tmrtwo$ with $\isctxp{\la\var\tmronep} \mlasrelnafex \ntm_\tmrtwo$ (hence by  \reflemma{values-fireballs-stable-lasrelnafex} $\ntm_\tmrtwo = \isctxtwop{\la\var\tmrtwop}$) \ie $\isctxp{\la\var\tmronep} \mlasrel \isctxtwop{\la\var\tmrtwop}$.
		
		Hence by \reflemma{lasrelnafex-normal-forms-lasrel-right-to-left}, $\isctxp{\la\var\tmronep} \mlasrelnafex \isctxtwop{\la\var\tmrtwop}$ and by \reflemma{lasrelnafex-values-isctx-decomposition} we get\\ $\isctxp{\var} \mlasrelnafex \isctxtwop{\var}$ and $\la\var\tmronep\mlasrelnafex \la\var\tmrtwop$ then $\tmronep\mlasrel\tmrtwop$ by case (\nafex 3).
		
		Then:
		\[\infer{\tmronep\esub\var\tmrthree \mlasrel \tmrtwop\esub\var\tmrfour}{\tmronep\mlasrel\tmrtwop&\tmrthree\mlasrel\tmrfour}\]
		
		since $\tmronep\esub\var\tmrthree \bsvsct i \ntm$ with $i < k+i+1$ we can apply the inductive hypothesis on the first component for $\tmronep\esub\var\tmrthree$ obtaining $\tmrtwop\esub\var\tmrfour \bsvscts  \ntmtwo$ for some $\ntmtwo$ such that $\ntm\mlasrelnafex\ntmtwo$. Since $\isctxp{\var} \mlasrelnafex \isctxtwop{\var}$ and $\ntm\mlasrelnafex\ntmtwo$, by \reflemma{lasrelnafex-normal-forms-isctx-decomposition}, we get $\isctxp{\ntm} \mlasrelnafex \isctxtwop{\ntmtwo}$.
		Last, note that $\tmrtwo\tmrfour\bsvscts\isctxtwop\ntmtwo$ by 
		\[\infer{\tmrtwo\tmrfour \bsvscts \isctxtwop\ntmtwo}{
			\tmrtwo \bsvscts \isctxtwop{\la\var\tmrtwop}
			&
			\tmrtwop\esub\var\tmrfour \bsvscts \ntmtwo
		}\]

		\item \emph{$e$ step}:
		not applicable.
	\end{enumerate}
	
	\item \emph{Explicit Substitution}: 
	\[ \infer[(sc.esubst) ]{\tmrone\esub\var{\tmrthree} \mlasrel \tmrtwo\esub\var{\tmrfour}} {\tmrone \mlasrel \tmrtwo & \tmrthree \mlasrel \tmrfour }\text{ and }\tmrone\esub\var{\tmrthree} \bsvsct k \ntm \]
	
	\begin{enumerate}
		\item \emph{Applied inert}: not applicable.
		\item \emph{Substitution of an inert}:
		\[ \infer{\tmrone\esub\var\tmrthree \bsvsct {k+h} \ntm\esub\var\itm}{
			\tmrone \bsvsct k \ntm
			&
			\tmrthree \bsvsct h \itm
		} \]
		
		by inductive hypothesis ($d$ strictly decreasing, first component not increasing) we obtain $\tmrtwo \bsvscts \ntmtwo$ and $\tmrfour \bsvscts \itmtwo$ with $\itm\mlasrelnafex\itmtwo$ and $\ntm \mlasrelnafex \ntmtwo$. By \reflemma{lasrelnafex-normal-forms-lasrel-left-to-right},  $\itm\mlasrel\itmtwo$ and $\ntm \mlasrel \ntmtwo$. Then:
		\[\infer{\tmrtwo\esub\var\tmrfour \bsvscts \ntmtwo\esub\var\itmtwo}{
			\tmrtwo \bsvscts  \ntmtwo
			&
			\tmrfour \bsvscts \itmtwo
		}\]
		and $\ntm\esub\var\itm \mlasrelnafex \ntmtwo\esub\var\itmtwo$ by definition of \nafex.
		
		\item \emph{m step}: not applicable.
		\item \emph{e step}: 		\[ \infer{\tmrone\esub\var{\tmrthree} \bsvsct {k+i+1} \isctxp\ntm}{
			\tmrthree \bsvsct k \isctxp{\valof{\tmrthree}}
			&
			\tmrone\isub\var{\valof{\tmrthree}}\bsvsct i \ntm
		} \]
		
		then by inductive hypothesis ($d$ strictly decreasing, first component non increasing) on $\tmrone$ and $\tmrthree$ we get
		$\tmrfour\bsvscts\tmrfourp$ with $\isctxp{\valof\tmrthree} \mlasrelnafex \tmrfourp$ (hence by \reflemma{values-fireballs-stable-lasrelnafex} $\tmrfourp = \isctxtwop{\valof\tmrfour}$) \ie $\isctxp{\valof\tmrthree} \mlasrelnafex \isctxtwop{\valof\tmrfour}$.
		Hence by \reflemma{lasrelnafex-values-isctx-decomposition} we get $\isctxp{\var} \mlasrelnafex \isctxtwop{\var}$ and ${\valof\tmrthree}\mlasrelnafex{\valof\tmrfour}$ \ie ${\valof\tmrthree}\mlasrel{\valof\tmrfour}$ by \ref{l:lasrelnafex-normal-forms-lasrel-left-to-right}.
		
		Then:
		\[\infer{\tmrone\isub\var{\valof\tmrthree} \mlasrel \tmrtwo\isub\var{\valof\tmrfour}}{\tmrone\mlasrel\tmrtwo& {\valof\tmrthree}\mlasrel{\valof\tmrfour}}\]
		
		since $\tmrone\isub\var{\valof\tmrthree} \bsvsct i \ntm$ with $i < k+i+1$ we can apply the inductive hypothesis on the first component for $\tmrone\isub\var{\valof\tmrthree}$ obtaining $\tmrtwo\isub\var{\valof\tmrfour} \bsvscts  \ntmtwo$ for some $\ntmtwo$ such that $\ntm\mlasrelnafex\ntmtwo$. 
		Since $\isctxp{\var} \mlasrelnafex \isctxtwop{\var}$ and $\ntm\mlasrelnafex\ntmtwo$, by \reflemma{lasrelnafex-normal-forms-isctx-decomposition}, we get $\isctxp{\ntm} \mlasrelnafex \isctxtwop{\ntmtwo}$.
		
		Last, note that $\tmrtwo\tmrfour\bsvscts\isctxtwop\ntmtwo$ by 
		\[\infer{\tmrtwo\tmrfour \bsvscts \isctxtwop\ntmtwo}{
			\tmrfour \bsvscts \isctxtwop{\valof\tmrfour}
			&
			\tmrtwo\isub\var{\valof\tmrfour} \bsvscts \ntmtwo
		}\]
		
	\end{enumerate}

	\item \emph{Implicit Substitution}: 
	\[ \infer[(sc.subst) ]{\tmrone\isub\var{\valof\tmrthree} \mlasrel \tmrtwo\isub\var{\valof\tmrfour}} {\tmrone \mlasrel \tmrtwo & \valof\tmrthree \mlasrel \valof\tmrfour }\text{ and }\tmrone\isub\var{\valof\tmrthree} \bsvsct k \ntm \]
	then by applying the splitting lemma (\reflemma{splitting_vsce}), we obtain $\tmrone \bsvsct {k_1} \ntm_\tmrone$ and $\ntm_\tmrone\isub\var{\valof\tmrthree} \bsvsct {k_2} \ntm$ with $k=k_1+k_2$. Hence by inductive hypothesis ($d$ strictly decreasing, first component non increasing) $\tmrtwo\bsvscts\ntm_\tmrtwo$ and $\ntm_\tmrone \mlasrelnafex \ntm_\tmrtwo$, and by \ih $\valof\tmrthree \mlasrelnafex \valof\tmrfour $. By applying \reflemma{lasrelnafex-normal-forms-lasrel-left-to-right}, we have $\ntm_\tmrone \mlasrel \ntm_\tmrtwo$.
	We then have: 
	\[\infer{\ntm_\tmrone\isub\var{\valof\tmrthree} \mlasrel \ntm_\tmrtwo\isub\var{\valof\tmrfour}} {\ntm_\tmrone \mlasrel \ntm_\tmrtwo & \valof\tmrthree \mlasrel \valof\tmrfour }\]
	
	
	Two cases.
	\begin{enumerate}

		\item \emph{$\tmrone$ is not normal}, that is, $k_1>0$ and $k_2<k$. Then by applying the induction hypothesis to $k_2$ (first component)  and $\ntm_\tmrone\isub\var{\valof\tmrthree}$ we obtain $\ntm_\tmrtwo\isub\var{\valof\tmrfour} \bsvscts \ntmtwo$ with $\ntm\mlasrelnafex\ntmtwo$. We conclude using stability \ref{l:stability_vsce} and the equivalence between big and small steps, because $\tmrtwo\isub\var{\valof\tmrfour} \to^* \ntm_\tmrtwo\isub\var{\valof\tmrfour} \to^* \ntmtwo$.

		\item \emph{$\tmrone$ is normal}, that is, $k_1=0$ and $k_2=k$. Then $\tmrone=\ntm_\tmrone$. Two sub-cases:
		\begin{itemize}
			\item \emph{$\tmrone\isub\var{\valof\tmrthree} = \ntm_\tmrone\isub\var{\valof\tmrthree}$ is also normal}

			Since we know that $\ntm_\tmrone \mlasrelnafex \ntm_\tmrtwo$  and $\valof\tmrthree \mlasrelnafex \valof\tmrfour $, we can apply \reflemma{lasrelnafex-on-normal-subs_vsce}, and obtain that $\ntm_\tmrtwo\isub\var{\valof\tmrfour}$ is $\tovsce$-normal and by \reflemma{lasrelnafex-normal-forms-substitutive}, $\tmrone\isub\var{\valof\tmrthree} \mlasrelnafex \ntm_\tmrtwo\isub\var{\valof\tmrfour}$. It is only left to show that $\tmrtwo\isub\var{\valof\tmrfour} \bsvscts \ntm_\tmrtwo\isub\var{\valof\tmrfour}$, which follows from $\tmrtwo\bsvscts \ntm_\tmrtwo$, stability of reduction under substitution (\reflemma{stability_vsce}) and the fact that $\ntm_\tmrtwo\isub\var{\valof\tmrfour}$ is $\tovsce$-normal (plus the equivalence of small and big steps).
			
			\item   \emph{$\tmrone\isub\var{\valof\tmrthree} = \ntm_\tmrone\isub\var{\valof\tmrthree}$ is not normal}
			
			hence $\ntm_\tmrone\isub\var{\valof\tmrthree} \to \tmronep \to ^ {k-1} \ntm$ (the reduction is diamond, all reductions are of the same length, we pick any first step possible). Then by \reflemma{lasrelnafex-not-normal-subs} , $\ntm_\tmrtwo\isub\var{\valof\tmrfour} \to \tmrtwop$ with $\tmronep \mlasrel \tmrtwop$.
			
			We can apply the inductive hypothesis to $\tmronep$ (first component is decreasing, as $k-1<k$) and we obtain $\tmrtwop \bsvscts \ntmtwo$ with $\ntm\mlasrelnafex\ntmtwo$.
			The statement is then proved, since (using \reflemma{stability_vsce})
			$$\tmrtwo\isub\var{\valof\tmrfour} \to^* \ntm_\tmrtwo\isub\var{\valof\tmrfour} \to \tmrtwop \to^* \ntmtwo$$ that is, $\tmrtwo\isub\var{\valof\tmrfour} \bsvscts \ntmtwo$ by \reflemma{ss-bs-equivalence_vsce}.
			
		\end{itemize}
	\end{enumerate}

	\item \emph{Equivalent $X$} \[	\infer[\scequivx]{\tmrone \mlasrel \tmrtwop} {\tmrone \mlasrel \tmrtwo & \tmrtwo \equivx \tmrtwop} \text{ and }\tmrone \bsvsct k \ntm \]
	
	by \ih, $\tmrtwo \bsvscts \ntmtwo$ and $\ntm \mlasrelnafex \ntmtwo$.
	
	Since $\tmrtwo$ has a normal form and $\tmrtwo \equivx \tmrtwop$ then $\tmrtwop \bsvscts \ntmthree$ and $\ntmtwo \equivx \ntmthree$ by \refprop{equivx-is-a-strong-bisimulation}, hence $\ntm \mlasrelnafex \ntmthree$ by \refprop{relnafex-equivx-subseteq-relnafex}.
	
\end{enumerate}

	\item Reformulation of the first point.\qedhere
\end{enumerate}
\end{proof}

\begin{proposition}[\nafex similarity is adequate]
	\label{prop:adequacy-nafex}
	Suppose $\equivx$ is a mirror for $\tovsc$. If $\tm \leqnafex \tmtwo$ then $\tm \bsvscs$ implies $\tmtwo\bsvscs$
\end{proposition}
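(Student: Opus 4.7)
The plan is to proceed by a straightforward case analysis on the defining clauses of $\leqnafex$. Since $\leqnafex$ is itself a \nafex simulation (being the union of all \nafex simulations), the assumption $\tm \leqnafex \tmtwo$ gives us that $\tm \relnafex[\leqnafex] \tmtwo$ holds, hence one of the five clauses (\nafex 1)--(\nafex 5) applies to the pair $(\tm,\tmtwo)$.

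First I would dispose of clause (\nafex 1): it asserts $\tm \bsvsctdiv$, i.e., $\tm$ has no $\tovsc$-normal form, which directly contradicts the assumption $\tm \bsvscs$. Hence (\nafex 1) is impossible.

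The remaining four clauses (\nafex 2), (\nafex 3), (\nafex 4), and (\nafex 5) each explicitly require, as part of their statement, that $\tmtwo \bsvscts \ntmtwo$ for some normal form $\ntmtwo$ (in cases 4 and 5 the normal form $\ntmtwo$ is additionally required to satisfy $\ntmtwo \equivx \ntmONEtwo\ntmTWOtwo$ or $\ntmtwo \equivx \ntmONEtwo\esub\var\ntmTWOtwo$, but in all cases $\tmtwo \bsvscts \ntmtwo$ holds). Therefore in each of these four cases we immediately obtain $\tmtwo \bsvscs$, which is the desired conclusion.

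There is no real obstacle: adequacy is built directly into the shape of the simulation clauses, which all pair convergence of $\tm$ with convergence of $\tmtwo$ (to appropriately matching normal forms), with only clause (\nafex 1) handling the divergent case. Notice that this proof does not even require $\equivx$ to be a mirror; the mirror properties are needed for compatibility, not for adequacy.
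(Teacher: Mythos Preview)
Your argument is correct and follows the same case analysis as the paper: clause (\nafex~1) is excluded by the convergence hypothesis on $\tm$, and each of clauses (\nafex~2)--(\nafex~5) literally contains the assertion $\tmp \bsvscts \ntmtwo$, which is convergence of $\tmp$.

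The one noteworthy difference is your final remark. The paper's own proof does invoke the mirror hypothesis, stating that preservation of normal forms by $\equivx$ (\refprop{equivx-preserves-normal-forms}) is needed once the $\equivx$-modulated clauses are present. Your observation that this is unnecessary appears to be well-taken: since the big-step predicate $\bsvscts$ by construction only relates terms to normal forms (every rule of the system outputs an element of the $\ntm$ grammar), the statement $\tmp \bsvscts \ntmtwo$ in clauses (\nafex~4) and (\nafex~5) already yields convergence of $\tmp$ outright, independently of whether $\equivx$ preserves normal forms. The mirror assumptions are genuinely needed elsewhere---for compatibility, and in particular for the coherence lemmas---but adequacy itself is as immediate as you say.
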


\begin{proof}
	Without using equivalences in the definition this fact was obvious. With equivalences, we need the fact that $\equivx$ preserves normal forms to conclude (Proposition \ref{prop:equivx-preserves-normal-forms}).
\end{proof}

\gettoappendix{thm:nafex-included-leqc}

\begin{proof}
	\begin{enumerate}
		\item By Proposition \ref{prop:main-lemma_vsce} and coinductive definition of $\leqvscx$.
		\item Compatibility comes from the first point of the theorem and the fact that the mirrored Lassen's closure is compatible. Inclusion in contextual preorder by compatibility and adequacy (\refprop{adequacy-nafex}).
		\item Similar argument, here we rely on the fact that $\streq$ is substitutive and a strong commutation for the VSC. \qedhere
	\end{enumerate}
\end{proof}


\section{Proofs from \refsect{type-preorder} (From Operational to Denotational Semantics: the Type Preorder)}
In this section, we prove the propositions from the Section 12 \emph{From Operational to Denotational Semantics: the Type Preorder}.

\subsection{Type preorder is compatible}
Here we show that the type preorder is compatible. The proof is quite trivial, as the type preorder is somehow \emph{compositional}. We first prove a lemma about compositionality of syntax, then compatibility follows by an induction on contexts.

\begin{lemma} Type preorder verifies:
	\begin{itemize}
		\item \emph{(applicative)}  $\tm \leqtype \tmp ~\&~ \tmtwo\leqtype\tmtwop \Rightarrow {\tm\tmtwo}  \leqtype{\tmp\tmtwop}$.
		\item \emph{(abstractive)} $\tm \leqtype \tmtwo \Rightarrow {\la\var\tm} \leqtype {\la\var\tmtwo}$.
		\item \emph{(explicitly substitutive)} $\tm \leqtype \tmp ~\&~ \tmtwo \leqtype \tmtwop \Rightarrow {\tm\esub\var\tmtwo} \leqtype {\tmp\esub\var\tmtwop}$.
		\item \emph{($\alpha$-equivalence)} ${\la\var\tm} \equivtype {\la\vartwo\tm\isub\var\vartwo}$.
		\item \emph{($\alpha$-equivalence')} ${\tm\esub\var\tmtwo} \equivtype {(\tm\isub\var\vartwo)\esub\vartwo\tmtwo}$.
	\end{itemize}
\end{lemma}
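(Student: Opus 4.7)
The plan is to dispatch each property by inspection of the last typing rule used in a derivation of the left-hand side and then rebuilding the derivation for the right-hand side using the same rules. The underlying observation is that the multi type system is entirely syntax-directed: the shape of a term determines (up to $\typingruleMany$) which rule concludes its derivation.

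For the \emph{applicative} case, I would take any derivation $\typeder \derives \typectx' \types \tm\tmtwo \hastype \mtypetwo$. Since applications can only be typed by $\typingruleApp$, this derivation ends with a splitting $\typectx' = \typectx \uplus \typectxtwo$ and subderivations $\typectx \types \tm \hastype [\mtype \multimap \mtypetwo]$ and $\typectxtwo \types \tmtwo \hastype \mtype$. Applying the hypotheses $\tm \leqtype \tmp$ and $\tmtwo \leqtype \tmtwop$ respectively produces derivations $\typectx \types \tmp \hastype [\mtype \multimap \mtypetwo]$ and $\typectxtwo \types \tmtwop \hastype \mtype$, and reapplying $\typingruleApp$ gives $\typectx' \types \tmp\tmtwop \hastype \mtypetwo$. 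The \emph{explicitly substitutive} case is structurally identical, using $\typingruleES$ instead of $\typingruleApp$.

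The \emph{abstractive} case is slightly more involved because $\la\var\tm$ is a value, so a derivation $\typectx \types \la\var\tm \hastype \mtype$ with $\mtype = \multitype{n}{\ltype}$ ends with rule $\typingruleMany$, splitting $\typectx = \biguplus_{i \in I} \typectx_i$ and giving, for each $i$, a linear derivation $\typectx_i \types \la\var\tm \hastype \ltype_i$. Each of these must end with $\typingruleAbs$, so $\ltype_i = \mtype_i \multimap \mtypetwo_i$ and we have $\typectx_i, \var \hastype \mtype_i \types \tm \hastype \mtypetwo_i$. Applying $\tm \leqtype \tmtwo$ to each subderivation yields $\typectx_i, \var \hastype \mtype_i \types \tmtwo \hastype \mtypetwo_i$, and reapplying $\typingruleAbs$ followed by $\typingruleMany$ recomposes a derivation $\typectx \types \la\var\tmtwo \hastype \mtype$.

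The two $\alpha$-equivalence properties are standard and follow by a routine induction on derivations, exploiting that only the free variables of a term occur in the typing context, so swapping a fresh bound name is invisible to the typing rules. The expected main obstacle, if any, is not technical but notational: keeping track of the context decomposition $\biguplus_{i \in I} \typectx_i$ in the abstractive case, and properly distinguishing derivations of linear versus multi types when the definition of $\leqtype$ is phrased uniformly. Everything else is syntactic bookkeeping.
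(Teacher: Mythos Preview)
Your proposal is correct and follows the same syntax-directed approach as the paper: inspect the last typing rule (which is determined by the term's shape), use the hypotheses on the subderivations, and rebuild. The paper only sketches the $\typingruleES$ case explicitly and dismisses the $\alpha$-equivalence cases as obvious, whereas you give more detail on the abstractive case with its intermediate $\typingruleMany$ layer; but the underlying argument is identical.
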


\begin{proof}
\hfill
	\begin{itemize}
		\item \emph{(applicative)}, \emph{(explicitly substitutive)} and \emph{(abstractive)} properties can be done by looking at trees since they are very syntax driven. We look at the (ES) case to sketch the idea:
		\\
		
		Let $\typeder$ be a type derivation for $\tm\esub\varthree\tmtwo$, $\typeder :~ \typectx \types \tm\esub\varthree\tmtwo \hastype \mtype$.
		
		We show that there exists a type derivation $\typederp$ such that $\typederp :~ \typectx \types \tmp\esub\varthree\tmtwop \hastype \mtype$.
		
		Since the term $\tm\esub\varthree\tmtwo$ is not a value, there is only one possibility for the last rule of the derivation: ($\typingruleES$).
		\[\infer[\typingruleES]{\typectx \types \tm\esub\varthree\tmtwo \hastype \mtype}{\typectx_1, \varthree \hastype \mtypetwo \types \tm \hastype \mtype & \typectx_2 \types \tmtwo \hastype \mtypetwo}\]
		
		where $\typectx = \typectx_1 \uplus \typectx_2$.
		
		Since $\tm$ is type equivalent to $\tmp$ and $\tmtwo$ is type equivalent to $\tmtwop$, there exists two derivations $\typectx_2 \types \tmtwop \hastype \mtypetwo$ and $\typectx_1, \varthree \hastype \mtypetwo \types \tmp \hastype \mtype$. Hence we can construct the appropriate derivation $\typederp$ for $\tmp\esub\varthree\tmtwop$.
		\[\infer[\typingruleES]{\typectx \types \tmp\esub\varthree\tmtwop \hastype \mtype}{\typectx_1, \varthree \hastype \mtypetwo \types \tmp \hastype \mtype & \typectx_2 \types \tmtwop \hastype \mtypetwo}\]

		Hence $\tm\esub\varthree\tmtwo\leqtype \tmp\esub\varthree\tmtwop$.
		
		\item \emph{($\alpha$-equivalence)} and \emph{($\alpha$-equivalence')} are obvious, because typing judgments do not depend on the representation of bounded variables.\qedhere
	\end{itemize}
\end{proof}

\gettoappendix{prop:type-preorder-is-compatible}

\begin{proof}
	\begin{enumerate}
		\item By induction on $\ctx$.
	\begin{itemize}
		\item $\ctx = \ctxhole$, $\tm\leqtype\tmp$.
		\item $\ctx \leqtype \tmtwo\ctxtwo$, then by induction ${\ctxtwop\tm} \leqtype{\ctxtwop\tmp}$ and obviously $\tmtwo \leqtype \tmtwo$, hence by the (applicative) property ${\tmtwo\ctxtwop\tm} \leqtype{\tmtwo\ctxtwop\tmp}$.
		\item $\ctx = \ctxtwo\tmtwo$, $\ctx \leqtype \ctxtwo\esub\var\tmtwo$ and $\ctx \leqtype \tmtwo\esub\var\ctxtwo$ are similar to the previous case.
		\item $\ctx = \la\var\ctxtwo$ then by induction ${\ctxtwop\tm} \leqtype{\ctxtwop\tmp}$, hence by the (abstractive) property ${\la\var\ctxtwop\tm} \leqtype{\la\var\ctxtwop\tmp}$.
	\end{itemize}
\item By compatibility and adequacy.\qedhere

\end{enumerate}
\end{proof}

\subsection{\Net similarity is included into the Type preorder}
In fact, for all \nafex similarities, such that the following propoposition is true (and not necessarily that $\equivx$ is a mirror), we get that $\leqnafex \subseteq \leqtype$. The only prerequisite is that $\equivx$ satisfies the following proposition.

\begin{proposition}[$\equivx$-equivalence implies typability-equivalence]
	\label{prop:equivx-subseteq-equivtype}
	If $\tm\equivx\tmtwo$ then $\tm\equivtype\tmtwo$.
\end{proposition}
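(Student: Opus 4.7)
The plan is to establish the proposition for the concrete mirror $\equivx = \streq$, which is the only case actually needed for the main application, namely the inclusion $\leqnet \subseteq \leqtype$. For this case, the result is an immediate consequence of the invariance clause of Corollary~\ref{thm:invariance-and-adequacy}: if $\tm \streq \tmtwo$, then $\sem{\tm}_{\vec{\var}} = \sem{\tmtwo}_{\vec{\var}}$ for every list $\vec{\var}$ of variables suitable for both terms.

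The next step is to unfold the definition of $\sem{\cdot}_{\vec{\var}}$, which by construction collects exactly those pairs of typing contexts (over the variables in $\vec{\var}$) and multi types for which a typing derivation exists. Hence the equality $\sem{\tm}_{\vec{\var}} = \sem{\tmtwo}_{\vec{\var}}$, holding for every suitable $\vec{\var}$, translates directly into: $\typectx \types \tm \hastype \mtype$ if and only if $\typectx \types \tmtwo \hastype \mtype$, for every typing context $\typectx$ and every multi type $\mtype$. By the definition of $\leqtype$ this yields both $\tm \leqtype \tmtwo$ and $\tmtwo \leqtype \tm$, i.e., $\tm \equivtype \tmtwo$, as required.

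For an arbitrary mirror $\equivx$ the strategy would instead combine strong commutation of $\equivx$ with the subject reduction and expansion clause of Theorem~\ref{thm:mtypes-charac}: iterated strong commutation shows that if $\tm \tovsc^n \ntm$ with $\ntm$ normal then $\tmtwo \tovsc^n \ntmtwo$ with $\ntmtwo$ normal and $\ntm \equivx \ntmtwo$, and subject reduction/expansion transfers typing derivations between $\tm$ and $\ntm$, and between $\tmtwo$ and $\ntmtwo$; the case where neither term is typable is handled by the characterization of termination, as both sides then diverge and both interpretations are empty. The main obstacle in this general setting is the remaining step, namely proving typing invariance on normal forms related by $\equivx$: for $\streq$ this holds by direct inspection of the four structural axioms (each axiom rearranges explicit substitutions and applications in a way that preserves the typing derivation up to reorganization of its subtrees), but for an arbitrary mirror this invariance on normal forms is not implied by the mirror axioms alone, so either one restricts attention to $\streq$ (our case) or one augments the mirror axioms with an explicit typing-invariance assumption.
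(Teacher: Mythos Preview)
Your proposal is correct and matches the paper's treatment. The paper does not attempt to prove the proposition for an arbitrary mirror; rather, it presents the proposition as a \emph{requirement} on $\equivx$ needed for the inclusion $\leqvscx \subseteq \leqtype$, and then observes that for the concrete case $\equivx = \streq$ the requirement holds by the invariance clause of Corollary~\ref{thm:invariance-and-adequacy}, exactly as you do. Your additional discussion of the general case, and the observation that typing invariance on normal forms is not a consequence of the mirror axioms alone, is a sound and useful clarification that goes beyond what the paper spells out.
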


For $\leqtype$, this proposition is true, see Theorem \ref{thm:invariance-and-adequacy}.

\gettoappendix{l:bisimulation-preserves-typeder}

\begin{proof}
	
	\begin{enumerate}
		\item By induction on the size of the derivation $\typeder: \typectx \types \tm \hastype \mtype$.

	The term $\tm$ is typable by a derivation $\typeder: \typectx \types \tm \hastype \mtype$ therefore it is normalizable by \refthm{invariance-and-adequacy}. Hence we have $\tm\tovsc^k\ntm$ and therefore (since $\relsym$ is a bisimulation) $\tmp\tovsc^*\ntmtwo$ with $\ntm \relnafex \ntmtwo$. Instead of looking for a derivation $\typederp$  of $\tmp$, we can look for a derivation $\typederp_1$ of $\ntmtwo$ and conclude by (typability) expansion for the $\tovsc$ reduction.
	
	There is a derivation $\typeder_1: \typectx \types \ntm \hastype \mtype$ whose size is at most the size of $\typeder$.
	
	By case analysis on the last rule of the derivation $\typeder_1$.
	
	\begin{enumerate}
		\item \emph{Axiom rule.} \[\typeder_1 :~~~~~ \infer[\typingruleAx]{\var \hastype [\ltype] \types \ntm = \var \hastype \ltype}{}\]
		
		Then by $\ntm=\var\relnafex \ntmtwo$, $\ntmtwo = \var$ and $\typederp_1 \defeq \typeder_1$ types $\ntmtwo$ accordingly.
		\item \emph{Abstraction rule.} \[\typeder_1 :~~~~~ \infer[\typingruleAbs]{\typectx \types \ntm = \la\var\tmtwo \hastype \mtype \multimap \mtypetwo}{\typectx, \var \hastype \mtype \types  \tmtwo \hastype \mtypetwo}\]
		
		Then by $\ntm=\la\var\tmtwo\relnafex \ntmtwo$, $\ntmtwo = \la\var\tmtwop$ with $\tmtwo\rel\tmtwop$.
		
		The derivation $\typeder_2 : \typectx, \var \hastype \mtype \types  \tmtwo \hastype \mtypetwo$ is of a strictly smaller size than $\typeder$. By induction, since $\tmtwo\rel\tmtwop$, there is a derivation $\typederp_2 : \typectx, \var \hastype \mtype \types  \tmtwop \hastype \mtypetwo$.
		
		Then, \[\typederp_1 :~~~~~ \infer[\typingruleAbs]{\typectx \types \ntmtwo = \la\var\tmtwop \hastype \mtype \multimap \mtypetwo}{\infer*{\typectx, \var \hastype \mtype \types  \tmtwop \hastype \mtypetwo}{\typederp_2}}\]
		
		\item \emph{Many rule.} \[\typeder_1 : ~~~~~
		\infer[\typingruleMany]{\biguplus_{i\in I} \typectx_i \types \ntm = \val \hastype \biguplus_{i\in I} \ltype_i}{\infer*{(\typectx_i \types \ntm= \val \hastype \ltype_i)_{i\in I}}{\typedertwo_i}  & I~ \text{finite} } \]
		
		Then by $\ntm=\val\relnafex \ntmtwo$, $\ntmtwo = \valtwo$ with $\val\relnafex\valtwo$.
		
		Two sub-cases depending on the value nature of $\val$:
		\begin{itemize}
			\item \emph{Variable}. If $\val=\var$ then, $\valtwo=\var$ as well.
			Then, $\typederp_1\defeq\typeder_1$ is a correct derivation for $\valtwo$ and concludes the proof in this case.  
			\item \emph{Abstract}. If $\val=\la\var\tmtwo$ then, $\valtwo=\la\var\tmtwop$ with $\tmtwo\rel\tmtwop$.
			
			Suppose there is at least a $\typedertwo_i$ derivation (if there are none the result is trivial).
			
			Since $\ltype_i$ is a linear type the only possibility for the last rule of $\typedertwo_i$ is a ($\typingruleAbs$) rule. 
			
			Suppose $\ltype_i = \mtype_i\multimap \mtypetwo_i$.
			\[{\typedertwo_i} :~~~~~ \infer[\typingruleAbs]{\typectx \types \la\var\tmtwo \hastype \mtype_i\multimap \mtypetwo_i}{\infer*{\typectx, \var \hastype \mtype_i\types \tmtwo \hastype\mtypetwo_i}{\typederthree_i}}\]
			
			We know that $\tmtwo\rel\tmtwop$. By \ih on $\typederthree_i$ (whose size is strictly smaller than the size of $\typeder$), we get $\typederthreep_i : \typectx, \var \hastype \mtype_i\types \tmtwop \hastype\mtypetwo_i$. Hence we can reconstruct the appropriate $\typederp_1$ derivation.
			\[\typederp_1 : ~~~~~
			\infer[\typingruleMany]{\biguplus_{i\in I} \typectx_i \types \ntmtwo = \valtwo \hastype \biguplus_{i\in I} \ltype_i}{(\infer{\typectx_i \types \valtwo = \la\var\tmtwop \hastype \ltype_i}{{\infer*{\typectx, \var \hastype \mtype_i\types \tmtwo \hastype\mtypetwo_i}{\typederthreep_i}}})_{i\in I}  & I~ \text{finite} } \]
		\end{itemize}

		\item \emph{Application rule.} \[	\infer[\typingruleApp]{\typectx \uplus \typectxtwo \types \ntm=\ntmONE\ntmTWO \hastype \mtypetwo}{ \typectx \types \ntmONE \hastype [\mtype \multimap \mtypetwo] & \typectxtwo \types \ntmTWO \hastype \mtype }
		\]
		
		Then by $\ntm=\ntmONE\ntmTWO\relnafex \ntmtwo$, $\ntmtwo \equivx \ntmONEtwo\ntmTWOtwo$ with $\ntmONE\rel\ntmONEtwo$ and $\ntmTWO\rel\ntmTWOtwo$.
		
		By \refprop{equivx-subseteq-equivtype}, $\ntmtwo$ is type equivalent with $\ntmONEtwo\ntmTWOtwo$. Hence it is enough to construct an appropriate $\typederp$ for $\ntmONEtwo\ntmTWOtwo$.
		
		By induction on $\ntmONE,\ntmONEtwo$ and $\ntmTWO,\ntmTWOtwo$, we get the appropriate derivation.
		
		\item \emph{Explicit Substitution rule.} \[
		\infer[\typingruleES]{\typectx \uplus \typectxtwo \types \ntm = \ntmONE\esub\var\ntmTWO \hastype \mtypetwo}{ \typectx, \var \hastype \mtype \types \ntmONE \hastype \mtypetwo & \typectxtwo \types \ntmTWO \hastype \mtype }\]
		
		Then by $\ntm=\ntmONE\esub\var\ntmTWO\relnafex \ntmtwo$, $\ntmtwo \equivx \ntmONEtwo\esub\var\ntmTWOtwo$ with $\ntmONE\rel\ntmONEtwo$ and $\ntmTWO\rel\ntmTWOtwo$.
		
		By \refprop{equivx-subseteq-equivtype}, $\ntmtwo$ is type equivalent with $\ntmONEtwo\esub\var\ntmTWOtwo$. Hence it is enough to construct an appropriate $\typederp$ for $\ntmONEtwo\esub\var\ntmTWOtwo$.
		
		By induction on $\ntmONE,\ntmONEtwo$ and $\ntmTWO,\ntmTWOtwo$, we get the appropriate derivation.
	\end{enumerate}
\item 
Let $\tm,\tmp$ terms such that $\tm \leqnafex\tmp$.
By the first part of the proposition, $\tm\leqtype\tmp$.\qedhere
	\end{enumerate}
\end{proof}

\subsection{Lassen's Enf similarity is included in the Type preorder}
For Lassen's bisimilarity, the proof is a little more intricated, but relies on the same reasoning. We consider Plotkin's normal form and type them using the VSC multi types. Since multi types are invariant by reduction, this does not affect type equivalence.

\gettoappendix{l:smaller-derivations-stuck}

\begin{proof}
	By induction on $\levctx$.
	\begin{itemize}
		\item $\levctx = \ctxhole$
		\[\typeder =~~~	\infer[\typingruleApp]{\typectx \uplus \typectxtwo \types \var\val \hastype \mtype}{ \typectx \types \var \hastype [\mtypetwo_1 \multimap \mtype] & \infer*{\typectxtwo \types \val \hastype \mtypetwo_1}{\typeder_{\val}} }
		\]
		
		Then, $\typeder_{\levctx}$ is of size $1$, \ie clearly smaller than $\typeder$, and it is easy to see that $\typeder_{\val}$ is of size $\size{\typeder}-2$.
		
		\item $\levctx= \valtwo\levctxtwo$
		\[\typeder =~~~	\infer[\typingruleApp]{\typectx \uplus \typectxtwo \types \levctxp{\var\val}=\valtwo\levctxtwop{\var\val} \hastype \mtype}{ \typectx \types \valtwo \hastype [\mtype_1 \multimap \mtype] & \infer*{\typectxtwo \types \levctxtwop{\var\val} \hastype \mtype_1}{\typeder_1} }
		\]
		
		By \ih, $\exists \typeder_{1\levctx} :  \typectxtwo_{a}, \varthree \hastype \mtypetwo \types  \levctxtwop{\varthree} \hastype \mtype_1$ and $\exists \typeder_{\val} : \typectx_{\val} \types \val : \mtypetwo_1$ with $\size{\typeder_{1\levctx}}<\size{\typeder_1}$ and $\size{\typeder_{\val}}<\size{\typeder_1}$.
		
		Therefore $\size{\typeder_{\val}}<\size{\typeder}$. We complete $\typeder_{\levctx}$ in the following way:
		\[\typeder_{\levctx} =~~~	\infer[\typingruleApp]{\typectx \uplus \typectxtwo_{\levctx},\varthree \hastype \mtypetwo \types \levctxp\varthree=\valtwo\levctxtwop{\varthree} \hastype \mtype}{ \typectx \types \valtwo \hastype [\mtype_1 \multimap \mtype] & \infer*{\typectxtwo_{\levctx},\varthree \hastype \mtypetwo \types \levctxtwop{\varthree} \hastype \mtype_1}{\typeder_{1\levctx}} }
		\]
		
		Then $\size{\typeder_{\levctx}}<\size\typeder$.
		
		\item $\levctx=\levctxtwo\tm$
		\[\typeder =~~~	\infer[\typingruleApp]{\typectx \uplus \typectxtwo \types \levctxtwop{\var\val}\tm \hastype \mtype}{    \infer*{\typectxtwo \types \levctxtwop{\var\val} \hastype [\mtype_1 \multimap \mtype]}{\typeder_1} &\typectx \types \tm \hastype \mtype_1}
		\]
		
		By \ih, $\exists \typeder_{1\levctx} :  \typectxtwo_{a}, \varthree \hastype \mtypetwo \types  \levctxtwop{\varthree} \hastype [\mtype_1 \multimap \mtype]$ and $\exists \typeder_{\val} : \typectx_{\val} \types \val : \mtypetwo_1$ with $\size{\typeder_{1\levctx}}<\size{\typeder_1}$ and $\size{\typeder_{\val}}<\size{\typeder_1}$.
		
		Therefore $\size{\typeder_{\val}}<\size\typeder$. We complete $\typeder_{\levctx}$ in the following way:
		\[\typeder_{\levctx} =~~~	\infer[\typingruleApp]{\typectx \uplus \typectxtwo_{\levctx},\varthree\hastype\mtypetwo \types \levctxtwop{\varthree}\tm=\levctxp\varthree \hastype \mtype}{    \infer*{\typectxtwo_{\levctx},\varthree\hastype\mtypetwo \types \levctxtwop{\varthree} \hastype [\mtype_1 \multimap \mtype]}{\typeder_{1\levctx}} &\typectx \types \tm \hastype \mtype_1}
		\]
		
		Then $\size{\typeder_{\levctx}}<\size\typeder$.\qedhere
	\end{itemize}
\end{proof}

\begin{proposition}
	\label{prop:stuck-typed-applicative-form}
	${\levctxp{\var\val}}\equivtype{(\la\varthree\levctxp{\varthree})(\var\val)}$ where $\varthree$ does not appear in $\levctx$.
\end{proposition}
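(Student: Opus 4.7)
The plan is to establish $\levctxp{\var\val} \equivtype (\la\varthree\levctxp{\varthree})(\var\val)$ via a chain of type equivalences that combine invariance of the multi types semantics under $\tovsc$ and $\streq$ (Corollary \ref{thm:invariance-and-adequacy}) with compatibility of $\equivtype$ (\refprop{type-preorder-is-compatible}). The goal is to reduce everything to the trivial identity $\varthree\esub\varthree{\var\val} \equivtype \var\val$.

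First, I observe that $(\la\varthree\levctxp{\varthree})(\var\val) \tom \levctxp{\varthree}\esub\varthree{\var\val}$ is a legitimate root $\rtom$-step, taking the substitution context $\sctx$ to be $\ctxhole$. By invariance, $(\la\varthree\levctxp{\varthree})(\var\val) \equivtype \levctxp{\varthree}\esub\varthree{\var\val}$.

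Second, I push the ES $\esub\varthree{\var\val}$ inside the left context $\levctx$ to reach $\levctxp{\varthree\esub\varthree{\var\val}}$. This is done by an easy induction on $\levctx$: the case $\levctx = \val_1 \levctxtwo$ uses $\equivexsthree$ and the case $\levctx = \levctxtwo \tm_2$ uses $\equivsone$, both applicable because the freshness of $\varthree$ for $\levctx$ ensures $\varthree\notin\fv{\val_1}$ and $\varthree\notin\fv{\tm_2}$ respectively. Hence $\levctxp{\varthree}\esub\varthree{\var\val} \streq \levctxp{\varthree\esub\varthree{\var\val}}$, and invariance yields type equivalence.

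Third, by compatibility of $\equivtype$ (using $\levctx$ as a context), it suffices to establish the base case $\varthree\esub\varthree{\var\val} \equivtype \var\val$. For this, I inspect the typing rules: the only possible last rule for $\varthree\esub\varthree{\var\val}\hastype \mtype$ is $\typingruleES$, which requires a derivation of $\varthree\hastype\mtype' \types \varthree\hastype\mtype$ for some $\mtype'$; since a derivation of a variable at a multi type, built via $\typingruleAx$ and $\typingruleMany$, forces the context multi-type to coincide exactly with the target, we necessarily have $\mtype' = \mtype$, and the ES-subderivation is exactly a derivation $\typectxtwo \types \var\val \hastype \mtype$. Conversely, any such derivation extends to $\typectxtwo \types \varthree\esub\varthree{\var\val}\hastype\mtype$ by combining it with the identity typing $\varthree\hastype\mtype \types \varthree\hastype\mtype$ via $\typingruleES$.

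The argument is essentially a verification with no real obstacle: the mildly technical steps are the induction on $\levctx$ in step two and the rule inspection establishing $\varthree\esub\varthree{\var\val} \equivtype \var\val$ in step three, both routine once the definitions are unfolded.
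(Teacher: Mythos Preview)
Your proof is correct and takes a genuinely different route from the paper. The paper proceeds by a direct induction on $\levctx$, mirroring the argument of the preceding lemma: it manipulates typing derivations explicitly, decomposing and recomposing them along the structure of the left context. Your approach instead leverages the already-established semantic invariance of multi types under $\tovsc$ and $\streq$ together with compatibility of $\equivtype$: one $\tom$-step followed by pushing the ES through $\levctx$ via $\equivsone/\equivexsthree$ reduces everything to the single equation $\varthree\esub\varthree{\var\val} \equivtype \var\val$, which you dispatch by rule inspection. Your argument is more modular and conceptually cleaner, since it reuses global results rather than replaying an induction on derivations; the paper's version is more self-contained and stays at the level of the type system. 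Both the $\streq$-induction on $\levctx$ and the base-case inspection (where typing $\varthree$ at $\mtype$ forces the context to be exactly $\varthree\hastype\mtype$, hence $\mtype'=\mtype$ and the $\typectx$ part is empty) are routine, as you note.
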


\begin{proof}
	Similar arguments apply, proof by induction on $\levctx$.
\end{proof}

\gettoappendix{l:enf-bisimulation-preserves-typeder}

\begin{proof}
\begin{enumerate}
	\item 
	By induction on the size of the derivation $\typeder: \typectx \types \tm \hastype \mtype$.
	
	\newcommand{\ntmleft}{\ntm_{\mathsf{left}}}
	\newcommand{\ntmlefttwo}{\ntmtwo_{\mathsf{left}}}
	The term $\tm$ is typable by a derivation $\typeder: \typectx \types \tm \hastype \mtype$ therefore it is normalizable by \refthm{invariance-and-adequacy}. Hence we have $\tm\tovsc^k\ntm$. By the fact that all VSC terminating terms are Plotkin's left terminating, we have that $\tm\tolw\ntmleft$ and therefore (since $\relsym$ is an enf bisimulation) $\tmp\tolw^*\ntmlefttwo$ with $\ntmleft \relenf\ntmlefttwo$. Instead of looking for a derivation $\typederp$  of $\tmp$, we can look for a derivation $\typederp_1$ of $\ntmlefttwo$ and conclude by (typability) expansion for the $\tolw$ reduction.
	
	There is a derivation $\typeder_1: \typectx \types \ntmleft \hastype \mtype$ whose size is at most the size of $\typeder$.
	
	By case analysis on the last rule of the derivation $\typeder_1$.
	
	\begin{enumerate}
		\item \emph{Axiom rule.} \[\typeder_1 :~~~~~ \infer[\typingruleAx]{\var \hastype [\ltype] \types \ntmleft = \var \hastype \ltype}{}\]
		
		Then by $\ntmleft=\var\relenf \ntmlefttwo$, $\ntmlefttwo = \var$ and $\typederp_1 \defeq \typeder_1$ types $\ntmlefttwo$ accordingly.
		\item \emph{Abstraction rule.} \[\typeder_1 :~~~~~ \infer[\typingruleAbs]{\typectx \types \ntmleft = \la\var\tmtwo \hastype \mtype \multimap \mtypetwo}{\typectx, \var \hastype \mtype \types  \tmtwo \hastype \mtypetwo}\]
		
		Then by $\ntmleft=\la\var\tmtwo\relenf \ntmtwo$, $\ntmlefttwo = \la\var\tmtwop$ with $\tmtwo\rel\tmtwop$.
		
		The derivation $\typeder_2 : \typectx, \var \hastype \mtype \types  \tmtwo \hastype \mtypetwo$ is of a strictly smaller size than $\typeder$. By induction, since $\tmtwo\rel\tmtwop$, there is a derivation $\typederp_2 : \typectx, \var \hastype \mtype \types  \tmtwop \hastype \mtypetwo$.
		
		Then, \[\typederp_1 :~~~~~ \infer[\typingruleAbs]{\typectx \types \ntmlefttwo = \la\var\tmtwop \hastype \mtype \multimap \mtypetwo}{\infer*{\typectx, \var \hastype \mtype \types  \tmtwop \hastype \mtypetwo}{\typederp_2}}\]
		
		\item \emph{Many rule.} \[\typeder_1 : ~~~~~
		\infer[\typingruleMany]{\biguplus_{i\in I} \typectx_i \types \ntmleft = \val \hastype \biguplus_{i\in I} \ltype_i}{\infer*{(\typectx_i \types \ntmleft= \val \hastype \ltype_i)_{i\in I}}{\typedertwo_i}  & I~ \text{finite} } \]
		
		Then by $\ntmleft=\val\relenf \ntmtwo$, $\ntmlefttwo = \valtwo$ with $\val\relenf\valtwo$.
		
		Two sub-cases depending on the value nature of $\val$:
		\begin{itemize}
			\item \emph{Variable}. If $\val=\var$ then, $\valtwo=\var$ as well.
			Then, $\typederp_1\defeq\typeder_1$ is a correct derivation for $\valtwo$ and concludes the proof in this case.  
			\item \emph{Abstract}. If $\val=\la\var\tmtwo$ then, $\valtwo=\la\var\tmtwop$ with $\tmtwo\rel\tmtwop$.
			
			Suppose there is at least a $\typedertwo_i$ derivation (if there are none the result is trivial).
			
			Since $\ltype_i$ is a linear type the only possibility for the last rule of $\typedertwo_i$ is a ($\typingruleAbs$) rule. 
			
			Suppose $\ltype_i = \mtype_i\multimap \mtypetwo_i$.
			\[{\typedertwo_i} :~~~~~ \infer[\typingruleAbs]{\typectx \types \la\var\tmtwo \hastype \mtype_i\multimap \mtypetwo_i}{\infer*{\typectx, \var \hastype \mtype_i\types \tmtwo \hastype\mtypetwo_i}{\typederthree_i}}\]
			
			We know that $\tmtwo\rel\tmtwop$. By \ih on $\typederthree_i$ (whose size is strictly smaller than the size of $\typeder$), we get $\typederthreep_i : \typectx, \var \hastype \mtype_i\types \tmtwop \hastype\mtypetwo_i$. Hence we can reconstruct the appropriate $\typederp_1$ derivation.
			\[\typederp_1 : ~~~~~
			\infer[\typingruleMany]{\biguplus_{i\in I} \typectx_i \types \ntmlefttwo = \valtwo \hastype \biguplus_{i\in I} \ltype_i}{(\infer{\typectx_i \types \valtwo = \la\var\tmtwop \hastype \ltype_i}{{\infer*{\typectx, \var \hastype \mtype_i\types \tmtwo \hastype\mtypetwo_i}{\typederthreep_i}}})_{i\in I}  & I~ \text{finite} } \]
		\end{itemize}

		\item \emph{Application rule.} \[	\infer[\typingruleApp]{\typectx \uplus \typectxtwo \types \ntmleft=\tmrone\tmrtwo \hastype \mtypetwo}{ \typectx \types \tmrone \hastype [\mtype \multimap \mtypetwo] & \typectxtwo \types \tmrtwo \hastype \mtype }
		\]
		
		Then by $\ntmleft=\tmrone\tmrtwo\relenf \ntmtwo$, $\ntmleft=\levctxp{\var\val}$ and $\ntmtwo = \levctxtwop{\var\valtwo}$ with $\levctxp\varthree\rel\levctxp\varthree$ and $\val\rel\valtwo$ with $\varthree$ fresh.
		
		By \refprop{stuck-typed-applicative-form}, typability of $\ntmleft$ is equivalent to typability of $(\la\varthree\levctxp{\varthree})(\var\val)$, which the derivation (for this type) is:
		\[	\infer[\typingruleApp]{\typectx \uplus \typectxtwo \types (\la\varthree\levctxp{\varthree})(\var\val) \hastype \mtypetwo}{ \infer[\typingruleAbs]{\typectx \types \la\varthree\levctxp{\varthree} \hastype [\mtype \multimap \mtypetwo]}{\infer*{\typectx, \varthree \hastype \mtype \types  \levctxp\varthree \hastype \mtypetwo}{\typeder_{\levctx}}} & \infer[\typingruleApp]{\typectxtwo',\var \hastype [\mtype_1 \multimap \mtype] \types \var\val \hastype \mtype}{\var \hastype [\mtype_1 \multimap \mtype] \types \var \hastype [\mtype_1 \multimap \mtype] & \infer*{\typectxtwo' \types \val \hastype \mtype_1}{\typeder_{\val}}} }
		\]
		By \reflemma{smaller-derivations-stuck}, $\typeder_{\levctx}$ and $\typeder_{\val}$ are strictly smaller than $\typeder$. Hence, by \ih, there exists $\typederp_{\levctx}$ and $\typederp_{\val}$ such that we can build the following derivation tree:
			\[\typederp=~~~	\infer[\typingruleApp]{\typectx \uplus \typectxtwo \types (\la\varthree\levctxtwop{\varthree})(\var\valtwo) \hastype \mtypetwo}{ \infer[\typingruleAbs]{\typectx \types \la\varthree\levctxtwop{\varthree} \hastype [\mtype \multimap \mtypetwo]}{\infer*{\typectx, \varthree \hastype \mtype \types  \levctxtwop\varthree \hastype \mtypetwo}{\typeder_{\levctx}}} & \infer[\typingruleApp]{\typectxtwo',\var \hastype [\mtype_1 \multimap \mtype] \types \var\valtwo \hastype \mtype}{\var \hastype [\mtype_1 \multimap \mtype] \types \var \hastype [\mtype_1 \multimap \mtype] & \infer*{\typectxtwo' \types \valtwo \hastype \mtype_1}{\typeder_{\val}}} }
		\]
		
		Which concludes the proof, since by \refprop{stuck-typed-applicative-form}, $(\la\varthree\levctxtwop{\varthree})(\var\valtwo)$ and $\levctxtwop{\var\valtwo}$ are type equivalent.
		
		\item \emph{Explicit Substitution rule.} \[
		\infer[\typingruleES]{\typectx \uplus \typectxtwo \types \ntmleft = \ntmONE\esub\var\ntmTWO \hastype \mtypetwo}{ \typectx, \var \hastype \mtype \types \ntmONE \hastype \mtypetwo & \typectxtwo \types \ntmTWO \hastype \mtype }\]
		This case is not possible: $\ntmleft$ is a term without explicit substitutions.
	\end{enumerate}
\item 
Let $\tm,\tmp$ terms such that $\tm \leqenf\tmp$.
By the first part, $\tm \leqtype\tmp$.\qedhere
\end{enumerate}
\end{proof}

\subsection{Multi types by value, regarding $\eta_v$ equivalence}
We now prove that multi types as they are defined in this paper, validate $\eta_v$ equivalence. We only need to prove the result for variables, as $\eta_v$ equivalence on abstractions is contained in $\equivbetav$.

\ignore{\subsubsection{Validating $\eta_v$ preorder}

\gettoappendix{prop:etav-for-leqtype}

\begin{proof}\begin{enumerate}
		\item \begin{itemize}
		\item 
	Let $(\typectx,\mtype)$ and $\typeder$ such that $\typeder : \typectx \types \la\vartwo\var\vartwo \hastype \mtype$.
	
	It is easy to type $\var$ with the multi type $\mtype$, with the context $\typectxtwo = \var \hastype \mtype$. We then need to prove that $\typectx = \typectxtwo$ to conclude.
	
	By unfolding the derivation $\typeder$, we get:

	Let $n$, $(\ltype_i)_{1\leq i\leq n}$ such that $\mtype = \multitype{n}{\ltype}$. Let $\mtypetwo_i$ and $\mtypetwo'_i$ such that $\ltype_i = \mtypetwo_i \multimap \mtypetwo'_i$.
	Let $m_i$, $((\ltypetwo_{i,j})_{1\leq j\leq m_i})_{0 \leq i \leq n}$ such that $\mtypetwo_i = \multitype{m_i}{\ltypetwo_{i,}}$.
	
	\[\infer[\typingruleMany]{\biguplus_{0 \leq i\leq n}\typectx_i \types \la\vartwo\var\vartwo \hastype \mtype}{\ldots & \infer[\typingruleAbs]{\typectx_i \types \la\vartwo\var\vartwo \hastype \ltype_i}{\infer[\typingruleApp]{\typectx_i,\vartwo \hastype \biguplus_{0 \leq j\leq m}[\ltypetwo_{i,j}] \types \var\vartwo \hastype \mtypetwo'_i}{\typectx_i \types \var \hastype [\biguplus_{0 \leq j\leq m}[\ltypetwo_{i,j}] \multimap \mtypetwo'_i] & \infer[\typingruleMany]{\vartwo \hastype \biguplus_{0 \leq j\leq m}[\ltypetwo_{i,j}] \types \vartwo \hastype \biguplus_{0 \leq j\leq m}\ltypetwo_{i,j}}{\left(\infer[\typingruleAx]{\vartwo \hastype [\ltypetwo_{i,j}] \types \vartwo \hastype \ltypetwo_{i,j}}{}\right)_{0\leq j \leq m}}}} & \ldots ~{0 \leq i \leq n}}\]
	
	Hence, if we keep unfolding on $\typectx_i \types \var \hastype [\biguplus_{0 \leq j\leq m}[\ltypetwo_{i,j}] \multimap \mtypetwo'_i]$, we get that $\biguplus_{0 \leq i\leq n}\typectx_i = \var \hastype \mtype$.
	\item $\var \hastype [\vartype] \types \var \hastype [\vartype]$ but $\var \hastype [\vartype] \not \types \la\vartwo\var\vartwo \hastype [\vartype]$, as an abstraction cannot be typed without a linear type of the shape $\mtype \multimap \mtypetwo$.
	\end{itemize}
\item For abstractions, the result is easier as $\eta_v$ equivalence amounts then to reducing under lambdas and $\alpha$-equivalence.
	\end{enumerate}
\end{proof}

\subsubsection{Removing ground types, validating $\eta_v$ equivalence}

Consider the same multi type system but with a new grammar of types, removing ground types (the empty multi type $\emptytype$ can be seen as acting as a ground type):

\begin{center}
	$\begin{array}{ccccc}
	\textsc{Linear Types} & \ltype, \ltypetwo &\grameq& \mtype \multimap \mtypetwo
	\\
	\textsc{Multi Types} & \mtype, \mtypetwo &\grameq& \multitype{n}{\ltype} & n\geq 0
	\end{array}$
\end{center}
}

\gettoappendix{prop:etav-for-leqtypetwo}

\begin{proof}
\hfill
\begin{enumerate}
		
		\item \emph{$\var \leqtype \la\vartwo\var\vartwo$}
		
		Let $\typeder$ be a type derivation such that $\typeder : \typectx \types \var \hastype \mtype$.
		
		We show that $\la\vartwo\var\vartwo$ can be typed in the same context and with the same type.
		
		Let $n$, $(\ltype_i)_{1\leq i\leq n}$ such that $\mtype = \multitype{n}{\ltype}$. Let $\mtypetwo_i$ and $\mtypetwo'_i$ such that $\ltype_i = \mtypetwo_i \multimap \mtypetwo'_i$.
		Let $m_i$, $((\ltypetwo_{i,j})_{1\leq j\leq m_i})_{0 \leq i \leq n}$ such that $\mtypetwo_i = \multitype{m_i}{\ltypetwo_{i,}}$.
		
		\[\infer[\typingruleMany]{\biguplus_{0 \leq i\leq n}\typectx_i \types \la\vartwo\var\vartwo \hastype \mtype}{\ldots & \infer[\typingruleAbs]{\typectx_i \types \la\vartwo\var\vartwo \hastype \ltype_i}{\infer[\typingruleApp]{\typectx_i,\vartwo \hastype \biguplus_{0 \leq j\leq m}[\ltypetwo_{i,j}] \types \var\vartwo \hastype \mtypetwo'_i}{\typectx_i \types \var \hastype [\biguplus_{0 \leq j\leq m}[\ltypetwo_{i,j}] \multimap \mtypetwo'_i] & \infer[\typingruleMany]{\vartwo \hastype \biguplus_{0 \leq j\leq m}[\ltypetwo_{i,j}] \types \vartwo \hastype \biguplus_{0 \leq j\leq m}\ltypetwo_{i,j}}{\left(\infer[\typingruleAx]{\vartwo \hastype [\ltypetwo_{i,j}] \types \vartwo \hastype \ltypetwo_{i,j}}{}\right)_{0\leq j \leq m}}}} & \ldots ~{0 \leq i \leq n}}\]
		
		It only remains to show that $\typectx = \var \hastype \biguplus_{0 \leq i\leq n} [\biguplus_{0 \leq j\leq m}[\ltypetwo_{i,j}] \multimap \mtypetwo'_i]$ and then we are done.
		
		Let's unfold the derivation $\typeder : \typectx \types \var \hastype \mtype$:
		\[\infer[\typingruleMany]{\biguplus_{0 \leq i\leq n}\typectxtwo_i \types \var \hastype \mtype}{(\infer[\typingruleAx]{\typectxtwo_i \types \var \hastype \mtypetwo_i \multimap \mtypetwo'_i}{})_{0 \leq i \leq n}}\]
		
		Hence $\typectxtwo_i = \var \hastype [\mtypetwo_i \multimap \mtypetwo'_i]$, which concludes the proof.
		
		\item \emph{$\la\vartwo\var\vartwo \leqtype \var$}, 
		
		Let $(\typectx,\mtype)$ and $\typeder$ such that $\typeder : \typectx \types \la\vartwo\var\vartwo \hastype \mtype$.
		
		It is easy to type $\var$ with the multi type $\mtype$, with the context $\typectxtwo = \var \hastype \mtype$. We then need to prove that $\typectx = \typectxtwo$ to conclude.
		
		By unfolding the derivation $\typeder$, we get:

		Let $n$, $(\ltype_i)_{1\leq i\leq n}$ such that $\mtype = \multitype{n}{\ltype}$. Let $\mtypetwo_i$ and $\mtypetwo'_i$ such that $\ltype_i = \mtypetwo_i \multimap \mtypetwo'_i$.
		Let $m_i$, $((\ltypetwo_{i,j})_{1\leq j\leq m_i})_{0 \leq i \leq n}$ such that $\mtypetwo_i = \multitype{m_i}{\ltypetwo_{i,}}$.
		\[\infer[\typingruleMany]{\biguplus_{0 \leq i\leq n}\typectx_i \types \la\vartwo\var\vartwo \hastype \mtype}{\ldots & \infer[\typingruleAbs]{\typectx_i \types \la\vartwo\var\vartwo \hastype \ltype_i}{\infer[\typingruleApp]{\typectx_i,\vartwo \hastype \biguplus_{0 \leq j\leq m}[\ltypetwo_{i,j}] \types \var\vartwo \hastype \mtypetwo'_i}{\typectx_i \types \var \hastype [\biguplus_{0 \leq j\leq m}[\ltypetwo_{i,j}] \multimap \mtypetwo'_i] & \infer[\typingruleMany]{\vartwo \hastype \biguplus_{0 \leq j\leq m}[\ltypetwo_{i,j}] \types \vartwo \hastype \biguplus_{0 \leq j\leq m}\ltypetwo_{i,j}}{\left(\infer[\typingruleAx]{\vartwo \hastype [\ltypetwo_{i,j}] \types \vartwo \hastype \ltypetwo_{i,j}}{}\right)_{0\leq j \leq m}}}} & \ldots ~{0 \leq i \leq n}}\]
		
		Hence, if we keep unfolding on $\typectx_i \types \var \hastype [\biguplus_{0 \leq j\leq m}[\ltypetwo_{i,j}] \multimap \mtypetwo'_i]$, we get that $\biguplus_{0 \leq i\leq n}\typectx_i = \var \hastype \mtype$.\qedhere
		\end{enumerate}
\end{proof}

\end{document}